\documentclass{article}

\usepackage[utf8]{inputenc}

\usepackage{amsmath, amsthm, amssymb, graphicx}
\usepackage{hyperref}
\usepackage{multicol}
\usepackage[top=2cm, bottom=2cm,left=1.5cm, right=1.5cm]{geometry}
\usepackage{cancel}
\usepackage{MnSymbol}
\usepackage{turnstile}
\usepackage{tikz}
\usepackage{mathtools} 
\usepackage{latexsym}
\usepackage{enumitem} 
\usepackage{caption}
\usepackage{varwidth}
\usepackage{colortbl} 
\usepackage{color}
\usepackage{setspace} 
\usepackage{makeidx} 
\usepackage{array,longtable} 
\usepackage{multicol}
\usepackage{mathpartir}
\usepackage[scr = pxtx]{mathalpha}
\usepackage{float}
\usepackage{bold-extra}
\usepackage{upgreek}
\usepackage{tasks}

\usepackage{verbatim}

\DeclareFontShape{OT1}{cmr}{m}{scit}{<->ssub * cmr/m/sc}{}

\usetikzlibrary{arrows,decorations.pathmorphing,backgrounds,positioning,fit,shapes,shapes.misc}
\usetikzlibrary{decorations.pathmorphing,positioning,decorations.pathreplacing,patterns}
\usetikzlibrary{patterns.meta}

\newcommand{\eqdef}{\overset{\text{def}}{=\joinrel=}}

\newcommand{\pmu}{p~\kern -0.35em_\mu}
\newcommand{\pmutilde}{\widetilde{p}~\kern -0.35em_\mu}
\newcommand{\pmuhat}{\widehat{p}~\kern -0.35em_\mu}

\newcommand{\traunotres}{  {T\kern -0.15emR}_1^3}
\newcommand{\tratresuno}{  {T\kern -0.15emR}_3^1}
\newcommand{\traunoseis}{  {T\kern -0.15emR}_1^6}
\newcommand{\traseisuno}{  {T\kern -0.15emR}_6^1}
\newcommand{\tratresseis}{ {T\kern -0.15emR}_3^6}
\newcommand{\traseistres}{ {T\kern -0.15emR}_6^3}

\newcommand{\forsigma}{ {F\kern -0.2em or}\pa{\Sigma}}
\newcommand{\DNF}{ {D\kern -0.15em N\kern -0.15em F}}
\newcommand{\fde}{{F\kern -0.1em D\kern -0.1em E}}
\newcommand{\CPL}{{C\kern -0.1em P\kern -0.1em L}}
\newcommand{\letkp}{\ensuremath{{L\kern -0.1em ET}_K^+}}
\newcommand{\LETK}{{L\kern -0.1em ET}_K}
\newcommand{\LETF}{{L\kern -0.1em ET}_F}

\newcommand{\Dashv}{%
  \mathrel{\text{\reflectbox{$\vDash$}}}%
}
\newtheorem{definition}{Definition}
\newtheorem{obs}{Observation}
\newtheorem{proposition}{Proposition}

\newtheorem{lemma}{Lemma}

\newtheorem{corolario}{Corollary} 
\newtheorem{example}{Example}

\newcommand{\D}[1]{D_{#1}}
\newcommand{\Mod}[1]{M\kern -0.15em od\pa{#1}}
\newcommand{\val}[1]{\overline{v}\pa{#1}}
\newcommand{\tland}{\mathbin{\tilde{\land}}}
\newcommand{\tlor}{\mathbin{\tilde{\lor}}}
\newcommand{\tto}{\mathbin{\tilde{\rightarrow}}}
\newcommand{\tneg}{\tilde{\neg}}
\newcommand{\tcirc}{\tilde{\circ}}

\newcommand{\pa}[1]{\mathopen{}\left( {#1}_{{}_{}}\,\negthickspace\right)\mathclose{}}
\newcommand{\lla}[1]{\mathopen{}\left\{ {#1}_{{}_{}}\,\negthickspace\right\}\mathclose{}}
\newcommand{\cor}[1]{\mathopen{}\left[ {#1}_{{}_{}}\,\negthickspace\right]\mathclose{}}

\title{Probabilities beyond Belnap-Dunn logic: dealing with gaps, gluts and reliability}
\author{Verónica Borja Macias, Marcelo E. Coniglio  and Alejandro Hernández-Tello}
\date{May 2026}

\begin{document}

\maketitle 

\begin{abstract}
In this paper, we introduce the study of probability functions based on the 6-valued paradefinite (i.e., paraconsistent and paracomplete) logic \letkp. This logic is a powerful and versatile Logic of Evidence and Truth (LET) which is a conservative expansion of both classical logic and \fde. The framework introduced here allowed us to consider gaps, gluts, and reliability (or classicality) of the events, extending the detailed proposal for \fde-based probabilities presented by Klein, Majer, and Rafiee~Rad. A distinctive feature of our proposal is the use of twist structure semantics, which gives rise to a natural interpretation of logical probabilities over \letkp\ in terms of the three or six regions associated with each formula by a valuation in such models. The \letkp-probability functions are defined axiomatically and semantically, obtaining soundness and completeness results, as one would expect. Finally, conditional probabilities based on $\letkp$ are also studied. Specifically, both a semantic and a syntactic characterization of Jeffrey's update over \letkp-based probabilities is proposed, showing their equivalence.
\end{abstract}

\textbf{Keywords:} Belnap-Dunn logic, First-Degree Entailment, Logics of Evidence and Truth, Paradefinite logics, Non-standard probability theory, Probability theory, Bayesian updating, Jeffrey updating. 


\section{Introduction}

The connection between logic and probability can be traced back to Keynes~\cite{keynes} and Jeffreys~\cite{jeffreys}, and, in a more rigorous form, to Carnap~\cite{carnap:50, carnap:52} --- specifically through his so-called {\em inductive logic}.  However, according to Hacking~\cite[Chapter~15]{hacking}, it can be argued that this philosophical programme was already anticipated by Leibniz.\footnote{The reader interested in the relationship between logic and probability from the historical perspective can consult, for instance, \cite[Chapter~2]{good} and~\cite[Section~8.1]{carnielli2017paraconsistent}.}

In Carnap's approach, probabilities are assigned to (propositional) sentences, instead of sets (of events) in a sample space. There is an obvious relationship between Carnap's logical probability and Kolmogorov's mathematical probability. On the one hand, Carnap assumes classical logic underlying his inductive logic. On the other hand, Kolmogorov's $\sigma$-algebras of events (in which probability functions are defined) are special cases of Boolean algebras of sets (i.e., fields of sets), the ones closed under denumerable unions. Clearly, $\sigma$-algebras constitute a sound and complete semantics for propositional classical logic (\CPL),\footnote{This follows from the following facts: (i)~every $\sigma$-algebra is a Boolean algebra; and  (ii)~the two-element Boolean algebra {\bf 2} is a $\sigma$-algebra.} in which conjunctions, disjunctions and negations are modeled by intersections, unions and complements, respectively. In this way, events in a $\sigma$-algebra correspond (or are described by) sentences in the language of propositional classical logic. 
It seems natural to expect that, by generalizing Carnap's idea, probability functions can be defined over sentences in the formal language of logics other than the classical one.

Indeed, Gaifman~\cite{gaifman} introduced probability functions over first-order classical logic. In turn, in his seminal paper~\cite{weatherson2003probability}, Weatherson introduced a notion of probability for logics having at least standard conjunction and disjunction connectives, by suitably adapting Carnap's postulates. This allows to keep the requirement encoded in the additivity axiom, namely $p(\varphi \vee \psi)=p(\varphi)+ p(\psi) -p(\varphi \wedge \psi)$, which plays the role of the classical Inclusion-Exclusion Principle. In particular, he considered probabilities over intuitionistic logic satisfying Carnap's postulates in a suitable form. The property $p(\varphi \wedge \neg\varphi)=0$ of classical Carnapian probabilities  is still valid in the intuitionistic framework, since the intuitionistic negation $\neg$, as the classical one, is explosive. However, and as one could expect, the property $p(\neg\varphi) = 1 - p(\varphi)$ does not hold in general, since the formula $(\varphi \vee \neg\varphi)$ is not valid in general in intuitionistic logic; thus, $p(\varphi \vee \neg\varphi)$ is not necessarily $1$ in this context. 

Weatherson justifies the use of an intuitionistic theory of probability based, among other arguments, on the following: if, as several authors argue, (logical) probabilities are degrees of belief (seen as dispositions to bet), then probability functions over intuitionistic logic provide a more flexible framework than classical ones. Indeed, in some situations one has little or no evidence for or against an event represented by a sentence (say, $\varphi$). In such cases, it should be reasonable to have low degrees of belief in both $\varphi$ and $\neg\varphi$. Recall that a logic with disjunction and negation is said to be {\em paracomplete} if the schema $\varphi \vee \neg\varphi$ is not valid. Clearly, intuitionistic logic is paracomplete.


Of course, a dual (i.e., {\em paraconsistent}) perspective could be considered, in which a contradiction is not necessarily an antithesis. That is, assuming a standard conjunction, one can assign a positive non-zero degree of belief to a contradiction, say $(\psi \land \neg \psi)$. Bueno-Soler and Carnielli investigated this possibility (namely, paraconsistent Carnapian-style probabilities) in~\cite{bueno2016paraconsistent, carnielli2017paraconsistent} specifically for paraconsistent logics known as {\em Logics of Formal Inconsistency} (LFIs).

In order to handle both kinds of events (contradictory and incomplete) simultaneously, it is necessary to adopt a {\em paradefinite} logic (i.e., paraconsistent and paracomplete) as the underlying logic system. The most natural candidate for a paradefinite logic is the well-known {\em logic of first-degree entailment} (\fde), also known as Belnap-Dunn four-valued logic.
\fde\ was designed to model the logic underlying a computer which manages information coming from (possibly) different sources that are not entirely reliable. The pieces of information are represented by sentences in a propositional language only containing conjunction $\land$, disjunction $\vee$ and negation $\neg$. Because of this, a given sentence, say $\varphi$, can be considered or marked as `true' from one source and `false' from another one (which, in this context, is equivalent to the fact that $\neg\varphi$ is marked or considered as `true'). Such a contradictory sentence $\varphi$ will receive the truth-value $\mathsf{b}$ (`Both'), which indicates precisely a contradiction. In case $\varphi$ is marked as `true' in all the sources, it receives the truth-value $\mathsf{t}$ (`Only True'). Dually, if $\varphi$ is marked as `false' in all the sources, it receives the truth-value $\mathsf{f}$ (`Only False'). Finally, if no information is provided for $\varphi$ (and so for $\neg\varphi$) by the sources, then  it receives the truth-value $\mathsf{n}$ (`None'). A contradiction (i.e., a sentence receiving the value $\mathsf{b}$) indicates conflicting (or excessive) information, which is known as a `glut'. Dually, a sentence receiving the value $\mathsf{n}$ indicates lack of information, which is known as a `gap'.

Mares~\cite{mares1997paraconsistent} was the first to consider  probabilities over a variant of \fde. Dunn introduced in~\cite{dunn2010contradictory} a formal framework to deal with  four-dimensional probabilities based on \fde. That   is, the probability functions assign to every sentence of \fde\ a value in $[0,1]^4$.

In~\cite{klein2021probabilities} Klein et al. also proposed four-dimensional probabilities based on \fde, but in this case they correspond  faithfully to the measure of the four regions (pure belief, pure disbelief, conflict and uncertainty)  determined by each sentence in that logic. They also introduced $[0,1]$-valued probability functions for \fde\ formulas that are induced by a standard measure, showing that they are in a one-to one correspondence with the four-dimensional ones via a translation function.
In addition, they introduced sound and complete axiomatizations for both kinds of probabilities -- the one-dimensional ones being axiomatized by Carnapian-style probability functions based on \fde. This constitutes an improvement over the proposal in~\cite{dunn2010contradictory}, which relied on the strong and problematic assumption that events are mutually probabilistically independent. By avoiding such an assumption, Klein et al. provide a framework in which conditional probabilities can be meaningfully defined. To this aim, the authors develop a theory of Jeffrey's conditioning update based on the two (equivalent) notions of \fde-probabilities they proposed. The Jeffrey conditioning is defined both syntactically and semantically, showing their equivalence. As a particular case, they define Bayesian updating for both kinds of probabilities over \fde. It is worth mentioning that  Bilkova et al.~\cite{bilkova2024, bilkova2025two}  present interesting further developments in probability theory based on \fde\ beyond those of~\cite{klein2021probabilities}.

In 2015, Carnielli and Rodrigues introduced the \emph{Logics of Evidence and Truth} (LETs) by presenting the system $\LETK$ in~\cite[Section~3.3.2]{carnielli2015}. LETs constitute an expansion of \fde\ by adding a unary connective $\circ$ (a recovery, classicality, or reliability operator) to the language. This operator allows the recovery of the classical behavior of the De Morgan negation $\neg$ in a local way. In the case of $\LETK$ and other LETs, an implication can also be added to \fde, in addition to $\circ$. The recovery of the explosion law by means of a unary operator is exactly the idea of LFIs which, in turn, are a generalization of da Costa's well-known $C$-systems. The difference between LFIs and LETs, at least in the most studied systems in each hierarchy, is that the former  satisfy the excluded middle for the paraconsistent negation. Then, the LFIs studied in the literature are only paraconsistent, and not paracomplete. In turn, LETs are paradefinite and feature a De Morgan negation, i.e., an involutive negation that satisfies the usual De Morgan rules. Thus, in general

\begin{itemize}
    \item $ \varphi, \neg \varphi \nvdash \psi$  \  \  \ [L is paraconsistent]
    \item $\nvdash \varphi \lor \neg \varphi$  \  \  \ [L is paracomplete]
  \end{itemize}

\noindent in a given LET L. On the other hand, the unary operator $\circ$ recovers in any LET the properties of explosion and excluded middle for sentences under its scope, by validating the following inferences:

\begin{itemize}
    \item $\circ \varphi, \varphi, \neg \varphi \vdash \psi$
    \item $\circ \varphi \vdash \varphi \lor \neg \varphi$.
  \end{itemize}

The meaning of a sentence $\circ\varphi$ is that the information conveyed by $\varphi$ --- whether positive or negative --- is reliable. In terms of evidence, asserting $\varphi$ is not saying that `$\varphi$ is true', but rather `there is positive evidence for $\varphi$'. Analogously, asserting $\neg\varphi$ is not saying that `$\varphi$ is false', but `there is negative evidence for $\varphi$'. In this sense, $\circ\varphi$ means that there is conclusive evidence for $\varphi$ (or for $\neg\varphi$). Thus, LETs enrich the perspective of \fde\ by allowing us to qualify, via the $\circ$ operator, the reliability of the information that the computer receives from its sources. 

The term `reliability' might be misleading, so an explanation is in order at this point. Instead of considering a computer managing a knowledge base of information items, let us consider the more general situation of a rational agent dealing with beliefs, as understood in the theory of belief change --- more specifically, the AGM model (see, for instance,~\cite{Gardenfors1988, hans:99}). The knowledge of the agent can be modeled by the total amount of accepted facts (formally represented by propositions in a formal language). New situations may lead the agent to revise some of its beliefs, giving rise to a process of revision, contraction, or expansion of the agent's epistemic state. Although most of the developments in AGM theory have been carried out within classical logic, Testa et al.~\cite{testa_etal, coniglio_etal} have proposed a robust AGM belief revision theory for paraconsistent logics in general, and for LFIs in particular. 

Within the paraconsistent belief framework AGM$\circ$ introduced in~\cite{testa_etal}, it is possible to move from accepting only $\varphi$ to also accepting $\neg\varphi$ (an expansion); or to move from a knowledge state $K$ that accepts $\varphi$ to another state in which $\varphi$ is no longer accepted (a contraction). However, the contraction (or revision) of $\varphi$ in $K$ is allowed only if $\circ\varphi$ does not belong to $K$. Otherwise, $\varphi$ (or $\neg\varphi$) cannot be removed, and if added, should this cause a contradiction, the resulting belief state  is the trivial one.

Taking this into consideration, and still using AGM as an analogy, in a paradefinite environment such as \fde, where the operator $\circ$ is not available, the standard epistemic attitudes of accepting or rejecting a sentence are revisable --- that is, they are flexible. In terms of truth-values, agents could move, for instance, from $\mathsf{t}$ to $\mathsf{f}$, or to $\mathsf{b}$ or $\mathsf{n}$, via revision, contraction, or expansion of their knowledge states. However, by incorporating the reliability operator $\circ$ into the picture (using LETs), it is possible to describe the agent's disposition to change its mind with respect to each piece of knowledge. Thus, accepting $\circ\varphi$ means that the agent is unwilling to revise its opinion regarding that specific sentence, even though it may have no settled view as to its truth or falsity. A paradigmatic example of this situation is a mathematical conjecture. One may ignore whether the result holds or not, but one certainly knows that the result is either true or false, and that it cannot be both simultaneously. Furthermore, once proved (or refuted), this situation will persist forever.

To summarize, $\circ\varphi$ may be understood as encoding (some of) the agent's reasons for not revising its opinion regarding that sentence. This reflects an {\em inertia of information} captured by $\circ$ in  LETs. Accordingly, `reliability of information' can be interpreted here as `unwillingness to change one's opinion'.

In this paper we will focus on the logic $\letkp$, a very strong LET introduced  by Coniglio and Rodrigues in~\cite{coniglio2024belnap}. This logic is an extension of $\LETK$ by adding rules for propagating classicality/reliability. Because of this, this $\letkp$ expands simultaneously \fde\ and \CPL, having a very strong expressive power which will be useful to our purposes. $LET_K^+$ has a six-valued semantics captured by a logic matrix based on the lattice  $\mathbf{L6}$, known as the crystal lattice (see Section~\ref{sect:LETK+}). 

The idea of this paper is to adapt the detailed study of probabilities and conditional probabilities over \fde, presented in~\cite{klein2021probabilities}, to $\letkp$, obtaining similar results in a richer logical context. Indeed, besides gap and gluts in the space of events (which is the distinctive feature of \fde), the logic $\letkp$ allows us to assess the reliability or classical behavior of the information conveyed by events in the sense discussed above. In analogy with~\cite{klein2021probabilities}, where equivalent one-dimensional and four-dimensional probability functions were defined for the four-valued logic \fde, we introduce analogous one-dimensional and six-dimensional versions for the six-valued logic $\letkp$. As expected, these two formulations turn out to be equivalent. A distinctive feature of our proposal is the introduction of three-dimensional probability functions for $\letkp$, which are also equivalent to the former ones. These functions are interpreted over the natural three-dimensional twist structures semantics for $\letkp$ given in~\cite{coniglio2024belnap}; thus, a given standard probability function measures the three areas of each event described above. These algebras of events can be seen as a kind of twist $\sigma$-algebras over $\letkp$, as we will discuss below.\footnote{Of course, the same kind of twist probabilistic models could be considered for \fde\ in the framework of~\cite{klein2021probabilities}. In that case, the standard two-dimensional twist structures for \fde\ defined over the $\sigma$-algebra $\wp(X)$ (for a finite set of states $X$) would correspond to $\sigma$-algebras based on that logic, and would give rise to a two-dimensional class of probability functions for it, equivalent to the other two. The details of such constructions are straightforward, as they follow by adapting the more general construction for $\letkp$ presented here. It suffices to observe that twist models for \fde\ are special cases of twist models for $\letkp$ in the reduct $\{\land,\vee,\neg\}$, obtained by taking valuations whose third coordinate is always the empty set $\emptyset$.}

The present paper combines, to a certain degree, the approaches of~\cite{rodrigues2021measuring} in the formal framework of~\cite{klein2021probabilities}, but based on a logic that, in our view, is particularly suitable for this purpose: $\letkp$. Its advantage is that, as a conservative expansion of both \CPL\ and \fde, it offers the best of both worlds: expressive power and an extremely versatile semantics, which allows us to approximate Carnap's logical probabilities to Kolmogorov's mathematical probabilities in a paradefinite setting.

The organization of this paper is as follows. Section~\ref{sect:LETK+} briefly discusses the main technical features of the logic $\letkp$. Section~\ref{sect:twist-models} introduces the twist models for $\letkp$. Section~\ref{sect:prob-functions} defines the three kinds of probabilities over $\letkp$ --- namely, the one-, three-, and six-dimensional ones --- both syntactically and semantically, and proves that they are all equivalent in Section~\ref{sec:completeness}. Section~\ref{sect:condit-prob} addresses conditional probabilities based on $\letkp$. More precisely, we provide both a semantic and a syntactic characterization of Jeffrey's update and prove their equivalence. We also consider the special case of Bayesian updating. Finally, Section~\ref{sect:final} offers a summary of the paper's main contributions and outlines some possibilities for future work.

\section{The logic  $\letkp$} \label{sect:LETK+}

Let $V=\{\mathsf{p}_1, \ldots, \mathsf{p}_m\}$ be a finite set of propositional variables, and $\Sigma = \{\land, \lor, \rightarrow, \neg, \circ\}$ a propositional signature.\footnote{As usual, $\land$, $\lor$, and $\rightarrow$ are binary connectives and denote conjunction, disjunction, and implication, respectively. The connectives $\neg$ and $\circ$ are unary and denote negation and consistency (or reliability, or classicality), respectively.} By $\forsigma$ we denote the algebra of formulas freely generated by $V$ over $\Sigma$.

In this section, we recall the natural deduction system for $\letkp$ presented in \cite{coniglio2024belnap}, as well as several sound and complete semantics for this logic.

\begin{definition}\label{def:ded_nat-LETk+}
\textbf{(Natural deduction system for $\letkp$)}
Let $\varphi,\psi,\gamma \in \forsigma$. Consider the following abbreviations: $\varphi^{T}\eqdef \varphi\land \circ\varphi$ and $\varphi^{F}\eqdef \neg\varphi\land \circ\varphi$. A natural deduction system over $\Sigma$ for the logic $\letkp$ is given by the following set of rules:

\begin{mathpar}
\inferrule*[Right=$\land I$]
{ \varphi \\  \psi}
{ \varphi \land \psi}

\inferrule*[Right=$\land E_l$]
{ \varphi \land \psi}
{ \varphi}

\inferrule*[Right=$\land E_r$]
{ \varphi \land \psi}
{ \psi}
\end{mathpar}
\begin{mathpar}
\inferrule*[Right=$\lor I_l$]
{ \varphi}
{ \varphi \lor \psi}

\inferrule*[Right=$\lor I_r$]
{ \psi}
{ \varphi \lor \psi}

\inferrule*[Right=$\lor E$]
{ \varphi \lor \psi \\ \inferrule*{[\varphi]\\[\psi] \\\\ \vdots \\ \hspace*{0.5cm}\vdots \\\\\gamma\\\hspace*{0.3cm}\gamma}{}}
{ \gamma}
\end{mathpar}
\begin{mathpar}
\inferrule*[Right=${\neg}{\land} I_l$]
{ \neg \varphi}
{ \neg(\varphi \land \psi)}

\inferrule*[Right=${\neg}{\land} I_r$]
{ \neg \psi}
{ \neg(\varphi \land \psi)}

\inferrule*[Right=${\neg}{\land} E$]
{\neg(\varphi \land \psi) \\ \inferrule*{[\neg \varphi]\\[\neg \psi] \\\\ \vdots \\ \hspace*{0.5cm}\vdots \\\\\gamma\\\hspace*{0.3cm}\gamma}{}}
{ \gamma}
\end{mathpar}
\begin{mathpar}
\inferrule*[Right=${\neg}{\lor} I$]
{ \neg \varphi \\  \neg \psi}
{\neg( \varphi \lor \psi)}

\inferrule*[Right=${\neg}{\lor} E_l$]
{\neg( \varphi \lor \psi)}
{\neg \varphi}

\inferrule*[Right=${\neg}{\lor} E_r$]
{\neg( \varphi \lor \psi)}
{\neg \psi}
\end{mathpar}
\begin{mathpar}
\inferrule*[Right=${\rightarrow}I$]
{\inferrule*{[\varphi] \\\\ \vdots\\\\\psi}{}}
{ \varphi \rightarrow \psi}

\inferrule*[Right=${\rightarrow} E$]
{ \varphi \rightarrow \psi \\  \varphi}
{ \psi}

\inferrule*[Right=${\rightarrow} CL$]
{~}
{ \varphi\lor (\varphi \rightarrow \psi)}
\end{mathpar}
\begin{mathpar}
\inferrule*[Right=${\neg}{\rightarrow} I$]
{ \varphi \\  \neg \psi}
{\neg( \varphi \rightarrow \psi)}

\inferrule*[Right=${\neg}{\rightarrow} E_l$]
{\neg( \varphi \rightarrow \psi)}
{ \varphi}

\inferrule*[Right=${\neg}{\rightarrow} E_r$]
{\neg( \varphi \rightarrow \psi)}
{\neg \psi}
\end{mathpar}
\begin{mathpar}
\inferrule*[Right=$DNI$]
{\varphi}
{ \neg\neg \varphi}

\inferrule*[Right=$DNE$]
{ \neg\neg \varphi}
{ \varphi}

\inferrule*[Right=$EXP^\circ$]
{\circ \varphi\\\varphi\\\neg \varphi}
{\psi}

\inferrule*[Right=$PEM^\circ$]
{\circ \varphi}
{\varphi\lor\neg \varphi}
\end{mathpar}
\begin{mathpar}
\inferrule*[Right=$\circ I$]
{~}
{{\circ\circ} \varphi}

\inferrule*[Right=$\circ\neg I$]
{\circ \varphi}
{{\circ\neg} \varphi}

\inferrule*[Right=$\circ\neg E$]
{{\circ\neg} \varphi}
{\circ \varphi}
\end{mathpar}
\begin{mathpar}
\inferrule*[Right=${T\land} I$]
{ \varphi^T \\   \psi^T}
{( \varphi \land \psi)^T}

\inferrule*[Right=${F\land} I_l$]
{\varphi^F}
{( \varphi \land \psi)^F}

\inferrule*[Right=${F\land} I_r$]
{\psi^F}
{( \varphi \land \psi)^F}
\end{mathpar}
\begin{mathpar}
\inferrule*[Right=${T\lor} I_l$]
{\varphi^T}
{( \varphi \lor \psi)^T}

\inferrule*[Right=${T\lor} I_r$]
{\psi^T}
{( \varphi \lor \psi)^T}

\inferrule*[Right=${F\lor} I$]
{ \varphi^F \\   \psi^F}
{( \varphi \lor \psi)^F}

\end{mathpar}
\begin{mathpar}
\inferrule*[Right=${T}{\rightarrow} I_l$]
{\varphi^F}
{( \varphi \rightarrow \psi)^T}

\inferrule*[Right=${T}{\rightarrow} I_r$]
{\psi^T}
{( \varphi \rightarrow \psi)^T}

\inferrule*[Right=${F}{\rightarrow} I$]
{ \varphi \\   \psi^F}
{( \varphi \rightarrow \psi)^F}

\end{mathpar}
\begin{mathpar}
\inferrule*[Right=${T}{\land} E_l$]
{ ( \varphi \land \psi)^T}
{ \varphi^T}

\inferrule*[Right=${T}{\land} E_r$]
{ ( \varphi \land \psi)^T}
{ \psi^T}

\inferrule*[Right=${F}{\land} E$]
{ ( \varphi \land \psi)^F \\ \inferrule*{[\varphi^F]\\[\psi^F] \\\\ \vdots \\ \hspace*{0.5cm}\vdots \\\\\gamma\\\hspace*{0.3cm}\gamma}{}}
{ \gamma}
\end{mathpar}
\begin{mathpar}
\inferrule*[Right=${T}{\lor} E$]
{ ( \varphi \lor \psi)^T \\ \inferrule*{[\varphi^T]\\[\psi^T] \\\\ \vdots \\ \hspace*{0.5cm}\vdots \\\\\gamma\\\hspace*{0.3cm}\gamma}{}}
{ \gamma}

\inferrule*[Right=${F}{\lor} E_l$]
{ ( \varphi \lor \psi)^F}
{ \varphi^F}

\inferrule*[Right=${F}{\lor} E_r$]
{ ( \varphi \lor \psi)^F}
{ \psi^F}
\end{mathpar}
\begin{mathpar}
\inferrule*[Right=${T}{\rightarrow} E$]
{ ( \varphi \rightarrow \psi)^T \\ \inferrule*{[\varphi^F]\\[\psi^T] \\\\ \vdots \\ \hspace*{0.5cm}\vdots \\\\\gamma\\\hspace*{0.3cm}\gamma}{}}
{ \gamma}

\inferrule*[Right=${F}{\rightarrow} E_l$]
{ ( \varphi \rightarrow \psi)}
{ \varphi}

\inferrule*[Right=${F}{\rightarrow} E_r$]
{ ( \varphi \rightarrow \psi)^F}
{ \psi^F}
\end{mathpar}

A deduction of $\varphi$ from a set of premises $\Gamma$ is defined as usual in natural deduction systems, and we write $\Gamma\vdash_{\letkp} \varphi$ to denote such deductions.
\end{definition}

%
%
%
%
%
%

\begin{obs} \label{obs:Hilbert}
In~\cite{CarnielliLETS}, a Hilbert-style axiomatization of $\letkp$ was also presented, which is equivalent to the original natural deduction presentation. This axiomatization is obtained by adding the following axioms to the well-known Hilbert calculus for positive classical logic (over the signature $\{\land, \vee, \to\}$), where $\alpha \leftrightarrow \beta \eqdef (\alpha \to \beta) \land (\beta \to \alpha)$:

$$\begin{array}{ll}
\neg\neg \varphi \leftrightarrow \varphi &  \hspace{2cm} \circ{\circ} \varphi\\
\neg(\varphi \land \psi) \leftrightarrow (\neg \varphi \lor \neg \psi) &  \hspace{2cm} \circ \neg \varphi \leftrightarrow \circ \varphi\\
\neg(\varphi \lor \psi) \leftrightarrow (\neg \varphi \land \neg \psi) &  \hspace{2cm} \circ(\varphi \land \psi) \leftrightarrow (\varphi^T \land \psi^T) \lor \varphi^F \lor \psi^F\\
\neg (\varphi \to \psi) \leftrightarrow (\varphi \land \neg \psi) & \hspace{2cm} \circ(\varphi \lor \psi) \leftrightarrow (\varphi^F \land \psi^F) \lor \varphi^T \lor \psi^T\\
{\circ} \varphi \to (\varphi \to (\neg \varphi \to \psi)) & \hspace{2cm} \circ(\varphi \to \psi) \leftrightarrow (\varphi \land \psi^F) \lor \varphi^F \lor \psi^T.\\
\circ \varphi \to (\varphi \lor \neg \varphi) & 
\end{array}$$
\end{obs}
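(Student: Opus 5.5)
The plan is to show that the two presentations have the same consequence relation. I would establish two inclusions: every axiom of Observation~\ref{obs:Hilbert} is a theorem of the natural deduction system of Definition~\ref{def:ded_nat-LETk+}, and every rule of that system is derivable in the Hilbert calculus. For the second inclusion I would rely on the Deduction Metatheorem, which holds in the Hilbert calculus because it extends positive classical logic and has modus ponens as its only rule. I would also handle the formula $\circ\circ\varphi$ with some care. Rule $\circ I$ makes it derivable from no premises, but the Hilbert system has it as an axiom schema, so no necessitation-like rule is needed. Since both systems derive the same theorems and both have modus ponens (from $\to E$, respectively as the Hilbert rule), the two consequence relations coincide.

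\textbf{From ND to Hilbert axioms.} The positive classical axioms follow from $\land I/E$, $\lor I/E$ and $\to I/E$. The only genuinely classical one is Peirce's law $((\varphi\to\psi)\to\varphi)\to\varphi$. For it I would assume $(\varphi\to\psi)\to\varphi$, apply $\lor E$ to the instance $\varphi\lor(\varphi\to\psi)$ of $\to CL$, and conclude $\varphi$ in both branches. The negation axioms are immediate from the $\neg\land$, $\neg\lor$ and $\neg\to$ rules together with $DNI/DNE$. The axioms $\circ\varphi\to(\varphi\to(\neg\varphi\to\psi))$ and $\circ\varphi\to(\varphi\lor\neg\varphi)$ come from $EXP^\circ$ and $PEM^\circ$ by repeated $\to I$, and $\circ\neg\varphi\leftrightarrow\circ\varphi$ comes from $\circ\neg I/E$.

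The propagation axioms are the substantive part, and I would prove each direction separately, taking $\circ(\varphi\land\psi)$ as the model case.
\begin{itemize}
\item For the right-to-left direction, I would use $\lor E$ on the three disjuncts, apply $T\land I$ or $F\land I_{l/r}$, and then extract $\circ(\varphi\land\psi)$ with $\land E_r$.
\item For the left-to-right direction, I would assume $\circ(\varphi\land\psi)$ and apply $PEM^\circ$ to split into cases. In the case $\varphi\land\psi$ we hold $(\varphi\land\psi)^T$, so $T\land E_{l,r}$ yields $\varphi^T\land\psi^T$. In the case $\neg(\varphi\land\psi)$ we hold $(\varphi\land\psi)^F$, so $F\land E$ yields $\varphi^F\lor\psi^F$.
\end{itemize}
The disjunction $\lor$ and implication $\to$ cases are analogous, using $T\lor$/$F\lor$ and $T{\to}$/$F{\to}$ respectively. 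For $\to$, the disjunct $\varphi\land\psi^F$ arises from $F{\to}E_l$. I read that rule, as printed, as having premise $(\varphi\to\psi)^F$; its conclusion $\varphi$ would otherwise be unsound.

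\textbf{From Hilbert to ND rules.} Introduction and elimination rules for the positive connectives are standard in positive classical logic. $\lor E$ and the hypothetical rules follow via the Deduction Metatheorem. $\to CL$ is derivable from Peirce's law together with $\varphi\to(\varphi\lor(\varphi\to\psi))$ and $(\varphi\to\psi)\to(\varphi\lor(\varphi\to\psi))$; the standard route is the classical derivation of $\alpha\lor(\alpha\to\beta)$ in positive logic extended with Peirce's law.

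The negation rules follow from the biconditional axioms. Each $T/F$ propagation rule is obtained by combining the relevant negation axiom, which supplies the $\varphi$- or $\neg\varphi$-component, with the relevant $\circ$-axiom, which supplies the $\circ$-component. For example, $F{\to}I$ goes as follows: from $\varphi$ and $\psi^F$, the $\neg\to$ axiom gives $\neg(\varphi\to\psi)$, and the $\circ\to$ axiom gives $\circ(\varphi\to\psi)$.

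The elimination rules $T\land E$, $F\lor E$ and $T{\to}E$ need an additional step. Here I would use the left-to-right direction of the $\circ$-axiom, then eliminate the unwanted disjuncts using $EXP^\circ$ against the available $\varphi\land\psi$ (respectively $\neg$-information). For instance, the disjunct $\varphi^F$ together with $\varphi\land\psi$ yields $\circ\varphi,\varphi,\neg\varphi$, and hence anything.

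I expect the main obstacle to be this bookkeeping in the propagation axioms and rules, especially the implication cases, where the two sides mix plain and $T/F$-marked formulas. This is where I would work the cases out in full, and it is also where the apparent typo in $F{\to}E_l$ must be resolved.
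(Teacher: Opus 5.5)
Your argument is correct, and it takes a genuinely different route from the paper. The paper gives no proof of this observation and simply defers to \cite{CarnielliLETS}.

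You establish the equivalence purely syntactically. Every Hilbert axiom is an ND theorem, with Peirce's law or $\alpha\lor(\alpha\to\beta)$ coming from ${\to}CL$. Every ND rule is derivable in the Hilbert calculus via the Deduction Metatheorem, and the explosion axiom does the real work of discarding the unwanted disjuncts of the $\circ$-propagation axioms.

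The natural alternative suggested by the paper's own material is semantic. The ND system is already known to be sound and complete for $\mathcal{M}_6$ (Theorems 17 and 19 of \cite{coniglio2024belnap}). One would check that the Hilbert axioms are valid in $\mathcal{M}_6$, which is easy since they mirror the third twist coordinate, as the footnote to Definition~\ref{def:estructura-twist} observes. One would then prove completeness of the Hilbert calculus for that matrix. That route avoids the case-by-case bookkeeping but needs a separate completeness proof. Yours is elementary and self-contained, and it shows exactly which axioms each rule depends on.

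A few details to tighten when you write it out:
\begin{itemize}
\item For $T{\to}I_l$, the component $\varphi\to\psi$ is obtained from $\varphi^F$ by the explosion axiom plus the Deduction Metatheorem, not by a negation axiom.
\item The disjunct-elimination step is also needed for $F{\land}E$, $T{\lor}E$ and $F{\to}E_r$, not only for the three rules you list.
\item The misprint in $F{\to}E_l$ is harmless for your purposes, since $\neg{\to}E_l$ already yields $\varphi$ from $\neg(\varphi\to\psi)$.
\end{itemize}
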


By analyzing the constraints that the propagation rules of the $\circ$ operator impose on $\LETK$ to obtain the logic $\letkp$, Coniglio and Rodrigues, in \cite{coniglio2024belnap} (Theorems 17 and 19), show that the logic $\letkp$ is sound and complete with respect to the semantics induced by the following.

\begin{definition}
\textbf{(Logical matrix $\mathcal{M}_6$)}
Let $\mathcal{M}_6$ be the six-valued logical matrix with domain $\mathbf{6}=\{\mathsf{T, t, b,} \allowbreak \mathsf{n, f, F}\}$, designated values $D_6=\{\mathsf{T, t, b}\}$, and with operations defined in the following tables.

$$
\begin{array}{ccccc}
\begin{array}{c|c}
  z & \tilde{\neg}z \\
  \hline
  {\mathsf{T}} & {\mathsf{F}} \\
  {\mathsf{t}} & {\mathsf{f}} \\
  {\mathsf{b}} & {\mathsf{b}} \\
  {\mathsf{f}} & {\mathsf{t}} \\
  {\mathsf{F}} & {\mathsf{T}} \\
  {\mathsf{n}} & {\mathsf{n}} \\
\end{array}
&&
\begin{array}{c|c}
  z & \tilde{\circ}z \\
  \hline
  {\mathsf{T}} & {\mathsf{T}}\\
  {\mathsf{t}} & {\mathsf{F}} \\
  {\mathsf{b}} & {\mathsf{F}} \\
  {\mathsf{f}} & {\mathsf{F}} \\
  {\mathsf{F}} & {\mathsf{T}} \\
  {\mathsf{n}} & {\mathsf{F}} \\
\end{array}
&&
\begin{array}{c|cccccc}
\tilde{\wedge} & {\mathsf{T}} & {\mathsf{t}} & {\mathsf{b}} & {\mathsf{f}} & {\mathsf{F}} & {\mathsf{n}} \\
\hline
  {\mathsf{T}} & {\mathsf{T}} & {\mathsf{t}} & {\mathsf{b}} & {\mathsf{f}} & {\mathsf{F}} & {\mathsf{n}} \\
  {\mathsf{t}} & {\mathsf{t}} & {\mathsf{t}} & {\mathsf{b}} & {\mathsf{f}} & {\mathsf{F}} & {\mathsf{n}} \\
  {\mathsf{b}} & {\mathsf{b}} & {\mathsf{b}} & {\mathsf{b}} & {\mathsf{f}} & {\mathsf{F}} & {\mathsf{f}} \\
  {\mathsf{f}} & {\mathsf{f}} & {\mathsf{f}} & {\mathsf{f}} & {\mathsf{f}} & {\mathsf{F}} & {\mathsf{f}} \\
  {\mathsf{F}} & {\mathsf{F}} & {\mathsf{F}} & {\mathsf{F}} & {\mathsf{F}} & {\mathsf{F}} & {\mathsf{F}} \\
  {\mathsf{n}} & {\mathsf{n}} & {\mathsf{n}} & {\mathsf{f}} & {\mathsf{f}} & {\mathsf{F}} & {\mathsf{n}} \\
\end{array}
\end{array}
$$

$$
\begin{array}{ccc}
\begin{array}{c|cccccc}
\tilde{\vee} & {\mathsf{T}} & {\mathsf{t}} & {\mathsf{b}} & {\mathsf{f}} & {\mathsf{F}} & {\mathsf{n}} \\
\hline
{\mathsf{T}} & {\mathsf{T}} & {\mathsf{T}} & {\mathsf{T}} & {\mathsf{T}} & {\mathsf{T}} & {\mathsf{T}} \\
{\mathsf{t}} & {\mathsf{T}} & {\mathsf{t}} & {\mathsf{t}} & {\mathsf{t}} & {\mathsf{t}} & {\mathsf{t}} \\
{\mathsf{b}} & {\mathsf{T}} & {\mathsf{t}} & {\mathsf{b}} & {\mathsf{b}} & {\mathsf{b}} & {\mathsf{t}} \\
{\mathsf{f}} & {\mathsf{T}} & {\mathsf{t}} & {\mathsf{b}} & {\mathsf{f}} & {\mathsf{f}} & {\mathsf{n}} \\
{\mathsf{F}} & {\mathsf{T}} & {\mathsf{t}} & {\mathsf{b}} & {\mathsf{f}} & {\mathsf{F}} & {\mathsf{n}} \\
{\mathsf{n}} & {\mathsf{T}} & {\mathsf{t}} & {\mathsf{t}} & {\mathsf{n}} & {\mathsf{n}} & {\mathsf{n}} \\
\end{array}
&&
\begin{array}{c|cccccc}
\tilde{\rightarrow} & {\mathsf{T}} & {\mathsf{t}} & {\mathsf{b}} & {\mathsf{f}} & {\mathsf{F}} & {\mathsf{n}} \\
\hline
{\mathsf{T}} & {\mathsf{T}} & {\mathsf{t}} & {\mathsf{b}} & {\mathsf{f}} & {\mathsf{F}} & {\mathsf{n}} \\
{\mathsf{t}} & {\mathsf{T}} & {\mathsf{t}} & {\mathsf{b}} & {\mathsf{f}} & {\mathsf{F}} & {\mathsf{n}} \\
{\mathsf{b}} & {\mathsf{T}} & {\mathsf{t}} & {\mathsf{b}} & {\mathsf{f}} & {\mathsf{F}} & {\mathsf{n}} \\
{\mathsf{f}} & {\mathsf{T}} & {\mathsf{t}} & {\mathsf{t}} & {\mathsf{t}} & {\mathsf{t}} & {\mathsf{t}} \\
{\mathsf{F}} & {\mathsf{T}} & {\mathsf{T}} & {\mathsf{T}} & {\mathsf{T}} & {\mathsf{T}} & {\mathsf{T}} \\
{\mathsf{n}} & {\mathsf{T}} & {\mathsf{t}} & {\mathsf{t}} & {\mathsf{t}} & {\mathsf{t}} & {\mathsf{t}} \\
\end{array}
\end{array}
$$
\end{definition}

As we can observe in this matrix there are six truth-values, the values $\mathsf{t, b, n, f}$ that correspond respectively to the true, both, neither, and false values of $\fde$ as well as the values $\mathsf{T}$ and $\mathsf{F}$ that behave as the classical true and false. As pointed out  in \cite{coniglio2024belnap} these values can be ordered according to the logical lattice $\mathbf{L6}$, the crystal lattice, below.
\begin{center}
\begin{tikzpicture}[scale=.80]
\tikzset{myroundednode/.style={fill=white, rounded corners=3mm, text centered}}
\draw (0,0) -- (-1,1) --(0,2) -- (1,1)-- cycle;
\draw (0,0) -- (0,-1.5);
\draw (0,2) -- (0,3.5);
\node[myroundednode] at (0,-1.5)  {$\mathsf{F}$};
\node[myroundednode] at (0,0)     {$\mathsf{f}$};
\node[myroundednode] at (-1,1)    {$\mathsf{n}$};
\node[myroundednode] at (1,1)     {$\mathsf{b}$};
\node[myroundednode] at (0,2)     {$\mathsf{t}$};
\node[myroundednode] at (0,3.5)   {$\mathsf{T}$};
\end{tikzpicture}
\end{center}

Indeed, authors in \cite{coniglio2024belnap} present several types of semantics for the logic $\letkp$, among them the semantics induced by twist structures. They are algebras over $\Sigma$ induced by Boolean algebras. Their domains are sets of triples $z=(z_1,z_2,z_3) \in B^3$, where $\mathcal{B}$ is the underlying Boolean algebra. For an arbitrary formula $\varphi$, the components $z_1$, $z_2$, and $z_3$ represent, respectively, the values of $\varphi$, $\neg\varphi$, and $\circ\varphi$ in $\mathcal{B}$.

\begin{definition}\label{def:estructura-twist}\textbf{(Twist structures for $\letkp$)} Let $\mathcal{B} = \langle B, \sqcup, \sqcap, \sim, 0, 1 \rangle$ be a Boolean algebra, where $\sqcup$ denotes the supremum, $\sqcap$ denotes the infimum, and $\sim$ denotes the Boolean complement. Let us also consider one operator $\Rightarrow$, called implication, defined as usual by $x \Rightarrow y \eqdef {\sim x} \sqcup y$. The twist structure for $\letkp$ induced by $\mathcal{B}$ is the algebraic structure $T_{\mathcal{B}} = \langle B^3_{\letkp}, \tilde{\wedge}, \tilde{\vee}, \tilde{\rightarrow}, \tilde{\neg}, \tilde{\circ} \rangle$, where its domain is the set $B^3_{\letkp} = \big\{(z_1,z_2,z_3) \in B^{3}: z_{3} \leq z_{1} \sqcup z_{2} \mbox{ and }  z_{1} \sqcap z_{2} \sqcap z_{3} = 0 \big\}$, where $x \leq y$ means that $x \sqcap y = x$, and the operations are defined as follows:\footnote{Observe the analogy between the third coordinate of the operations, which describes the value of $\circ\varphi$ for each kind of formula, and the axioms of $\letkp$ for $\circ$ given in Observation~\ref{obs:Hilbert}. Moreover, the restrictions on the values of the triples in $B^3_{\letkp}$ correspond to the two basic properties of $\circ$ in LETs.}

\begin{enumerate}
    \item $(z_{1}, z_{2}, z_{3}) \tilde{\wedge} (w_{1}, w_{2}, w_{3}) = \bigl(z_{1} \sqcap w_{1}, z_{2} \sqcup w_{2}, (z_{1} \sqcap z_{3} \sqcap w_{1} \sqcap w_{3}) \sqcup (z_{2} \sqcap z_{3}) \sqcup (w_{2} \sqcap w_{3})\bigr)$.
    \item $(z_1, z_2, z_3) \tilde{\vee} (w_1, w_2, w_3) = \bigl(z_1 \sqcup w_1, z_2 \sqcap w_2, (z_2 \sqcap z_3 \sqcap w_2 \sqcap w_3) \sqcup (z_1 \sqcap z_3) \sqcup (w_1 \sqcap w_3)\bigr)$.
    \item $(z_1, z_2, z_3) \tilde{\rightarrow} (w_1, w_2, w_3) = \bigl(z_1 \Rightarrow w_1, z_1 \sqcap w_2, (z_1 \sqcap w_2 \sqcap w_3) \sqcup (z_2 \sqcap z_3) \sqcup (w_1 \sqcap w_3)\bigr)$.
    \item $\tilde{\neg} (z_1, z_2, z_3) = (z_2, z_1, z_3)$.
    \item $\tilde{\circ} (z_1, z_2, z_3) = (z_3, {\sim}z_3, 1)$.
\end{enumerate}
\end{definition}

\begin{obs}\label{obs:twist-constantes-y-negación}
In the logic $\letkp$, the truth constants $\top$ and $\bot$ can be defined from any formula $\varphi$ by $\top \eqdef \circ{\circ}\varphi)$, $\bot \eqdef \varphi \land \neg\varphi \land \circ\varphi.$ The language also includes an implication $\rightarrow$, which yields a classical negation via ${\sim}\varphi \eqdef \varphi \rightarrow \bot$, restoring the laws of excluded middle and explosion. The truth table for classical negation is:
$$
\begin{array}{c|c}
z            & \tilde{\sim}z \\
\hline
{\mathsf{T}} & {\mathsf{F}} \\
{\mathsf{t}} & {\mathsf{F}} \\
{\mathsf{b}} & {\mathsf{F}} \\
{\mathsf{f}} & {\mathsf{t}} \\
{\mathsf{F}} & {\mathsf{T}} \\
{\mathsf{n}} & {\mathsf{t}} \\
\end{array}
$$
Turning to the twist‑structure semantics induced by a Boolean algebra $\mathcal{B}$, the interpretations for the constants $\top$ and $\bot$ are: $\tilde{\top} = (1,0,1)$, $\tilde{\bot} = (0,1,1),$ acting as the top and bottom elements, respectively. The classical negation is interpreted as $\tilde{\sim}\,(z_1,z_2,z_3) = \bigl({\sim}z_1,\; z_1,\; z_1 \sqcup (z_2 \sqcap z_3)\bigr)$.
\end{obs}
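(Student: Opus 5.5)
The plan is a direct coordinatewise computation in $T_{\mathcal{B}}$, using only Definition~\ref{def:estructura-twist} and the two defining constraints of $B^3_{\letkp}$, namely $z_3\leq z_1\sqcup z_2$ and $z_1\sqcap z_2\sqcap z_3=0$. Let $\varphi$ be interpreted as $z=(z_1,z_2,z_3)$. The checks are the interpretation of $\top$, then of $\bot$, then of ${\sim}$, then the extremality of $\tilde{\top},\tilde{\bot}$, and finally the six-valued truth table.

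\textbf{The constant $\top$.} We have $\tilde{\circ}z=(z_3,{\sim}z_3,1)$. Applying $\tilde{\circ}$ again gives $\tilde{\circ}\tilde{\circ}z=(1,{\sim}1,1)=(1,0,1)$, independently of $z$.

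\textbf{The constant $\bot$.} First compute $z\tland\tneg z$.
\begin{itemize}
\item Its first two coordinates are $z_1\sqcap z_2$ and $z_1\sqcup z_2$.
\item Its third coordinate is $(z_1\sqcap z_2\sqcap z_3)\sqcup(z_2\sqcap z_3)\sqcup(z_1\sqcap z_3)=z_3\sqcap(z_1\sqcup z_2)$.
\item Since $z_3\leq z_1\sqcup z_2$, this equals $z_3$.
\end{itemize}
So $z\tland\tneg z=(z_1\sqcap z_2,\,z_1\sqcup z_2,\,z_3)$. Now take the conjunction with $(z_3,{\sim}z_3,1)$.
\begin{itemize}
\item The first coordinate is $z_1\sqcap z_2\sqcap z_3=0$.
\item The second coordinate is $z_1\sqcup z_2\sqcup{\sim}z_3$. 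This is $1$, because $z_3\leq z_1\sqcup z_2$ gives ${\sim}(z_1\sqcup z_2)\leq{\sim}z_3$.
\item The third coordinate contains the summand $(z_1\sqcup z_2)\sqcap z_3=z_3$ and the summand ${\sim}z_3\sqcap 1={\sim}z_3$, so it is $1$.
\end{itemize}
Hence $\tilde{\bot}=(0,1,1)$, again independently of $z$.

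\textbf{Classical negation.} Evaluate $z\tto(0,1,1)$ with clause 3 of Definition~\ref{def:estructura-twist}.
\begin{itemize}
\item The first coordinate is $z_1\Rightarrow 0={\sim}z_1$.
\item The second coordinate is $z_1\sqcap 1=z_1$.
\item The third coordinate is $(z_1\sqcap 1\sqcap 1)\sqcup(z_2\sqcap z_3)\sqcup(0\sqcap 1)=z_1\sqcup(z_2\sqcap z_3)$.
\end{itemize}
This is the claimed formula. The result lies in $B^3_{\letkp}$: its third coordinate is $\leq{\sim}z_1\sqcup z_1=1$, and ${\sim}z_1\sqcap z_1=0$.

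\textbf{Top and bottom.} To justify that $\tilde{\top}$ and $\tilde{\bot}$ act as top and bottom, I check that they are the neutral and absorbing elements for $\tland$ and $\tlor$.
\begin{itemize}
\item $\tilde{\top}\tland z=(z_1,\,z_2,\,(z_1\sqcap z_3)\sqcup(z_2\sqcap z_3))$. This equals $z$, once more by $z_3\leq z_1\sqcup z_2$.
\item Symmetrically, $\tilde{\bot}\tlor z=z$.
\item Direct substitution gives $\tilde{\top}\tlor z=\tilde{\top}$ and $\tilde{\bot}\tland z=\tilde{\bot}$.
\end{itemize}

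\textbf{The truth table.} I use the two-element Boolean algebra $\mathbf{2}$. There $B^3_{\letkp}$ has exactly six elements, and the matrix $\mathcal{M}_6$ corresponds to $T_{\mathbf 2}$ under the identification
$$\mathsf{T}=(1,0,1),\quad \mathsf{t}=(1,0,0),\quad \mathsf{b}=(1,1,0),\quad \mathsf{n}=(0,0,0),\quad \mathsf{f}=(0,1,0),\quad \mathsf{F}=(0,1,1).$$
Plugging each triple into $({\sim}z_1,z_1,z_1\sqcup(z_2\sqcap z_3))$ gives:
\begin{itemize}
\item $\mathsf{T},\mathsf{t},\mathsf{b}\mapsto(0,1,1)=\mathsf{F}$;
\item $\mathsf{f},\mathsf{n}\mapsto(1,0,0)=\mathsf{t}$;
\item $\mathsf{F}\mapsto(1,0,1)=\mathsf{T}$.
\end{itemize}
These agree with the table in the statement.

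The only step needing care is the simplification of the third coordinates, where $z_3\leq z_1\sqcup z_2$ is used repeatedly. The other delicate point is fixing the correct identification of the six truth-values with the triples over $\mathbf 2$, taken from the correspondence in~\cite{coniglio2024belnap}.
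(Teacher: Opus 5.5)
Your computations are all correct: $\tilde{\circ}\tilde{\circ}z=(1,0,1)$; the simplification $z\tland\tneg z=(z_1\sqcap z_2,\,z_1\sqcup z_2,\,z_3)$ via $z_3\leq z_1\sqcup z_2$; $\tilde{\bot}=(0,1,1)$; the formula for $\tilde{\sim}$; the neutral/absorbing checks; and the six-row table under the identification of Observation~\ref{obs:ternas-valores}. The paper states this Observation without proof, and your coordinatewise verification is exactly the routine check it leaves implicit. Your remark that $\tilde{\top}$ and $\tilde{\bot}$ do not depend on $z$ is also what justifies ``from any formula $\varphi$''.

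The one clause you do not address is that ${\sim}$ ``restores the laws of excluded middle and explosion.'' It follows in one line from your own formula:
\begin{itemize}
\item The first coordinate of $z\tlor\tilde{\sim}z$ is $z_1\sqcup{\sim}z_1=1$, so $\varphi\lor{\sim}\varphi$ is designated under every valuation.
\item $v(\varphi)$ and $v({\sim}\varphi)$ can both be designated only if $1={\sim}1=0$, i.e.\ $\mathcal{B}$ is trivial. In that case every element is designated, so $\psi$ is designated too.
\end{itemize}
Completeness with respect to $\textit{Mat}(\letkp)$ then gives $\vdash_{\letkp}\varphi\lor{\sim}\varphi$ and $\varphi,{\sim}\varphi\vdash_{\letkp}\psi$.

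Alternatively, argue syntactically, as the paper does later in Proposition~\ref{prop:propiedades-funcion-probabilidad} (items 5 and 6). Excluded middle is the instance $\varphi\lor(\varphi\rightarrow\bot)$ of rule ${\rightarrow}CL$. Explosion follows from ${\rightarrow}E$, which yields $\bot$, then $\land E$ and $EXP^\circ$.
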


Since twist structures are algebraic structures composed of a domain (a set of elements) and a collection of operations, the construction of a logical matrix additionally requires the specification of a set of designated values. In the search for a suitable semantics for $\letkp$ , the following matrix arises.

\begin{definition}\label{def:matriz-M(B)}
\textbf{(Logical matrix induced by a twist structure)}
Given a Boolean algebra $\mathcal{B}$, the logical matrix induced by the twist structure $T_{\mathcal{B}}$ for $\letkp$ is   
\[
\mathcal{M}(\mathcal{B}) = \langle T_{\mathcal{B}}, D_{\mathcal{B}} \rangle,
\qquad
D_{\mathcal{B}} = \big\{(z_{1}, z_{2}, z_{3}) \in B^{3}_{\letkp} : z_{1} = 1\big\}.
\]
\end{definition}

Two particular instances of the Definition \ref{def:matriz-M(B)} will be of interest to us: first, the one arising from the use of the two-element Boolean algebra, and second, the one employing the algebra of subsets of a given set. 

The first case arises by letting $\mathcal{B}$ be the two-element Boolean algebra $\mathcal{B}_{2} = \langle \mathbf{2}, \sqcup, \sqcap, \sim, 0, 1 \rangle$, where $\mathbf{2}=\{0,1\}$. This construction produces the twist structure $T_{\mathcal{B}_{2}}$, and consequently the logical matrix

\[
\mathcal{M}(\mathcal{B}_{2}) = \langle T_{\mathcal{B}_{2}}, D_{\mathcal{B}_{2}} \rangle,
\qquad
D_{\mathcal{B}_{2}} = \big\{(z_{1}, z_{2}, z_{3}) \in \mathbf{2}^{3}_{\letkp} : z_{1} = 1\big\}.
\]

\begin{obs}\label{obs:ternas-valores}
The domain of $\mathcal{M}(\mathcal{B}_{2})$, namely $\mathbf{2}^{3}_{\letkp}$, consists of 6 triples and it holds that the matrices $\mathcal{M}_6$ and $\mathcal{M}(\mathcal{B}_{2})$ are the same by taking the following translation:

$$\begin{array}{rrr}
  \mathsf{T}=(1,0,1) & \quad \mathsf{t}=(1,0,0) & \quad \mathsf{b}=(1,1,0)\\
  \mathsf{f}=(0,1,0) & \quad \mathsf{F}=(0,1,1) & \quad \mathsf{n}=(0,0,0)
\end{array}$$ 
\end{obs}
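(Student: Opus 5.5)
The plan is to establish the observation in two parts: first identify the domain $\mathbf{2}^{3}_{\letkp}$ exactly, and then check that the translation is an isomorphism of matrices, i.e. a bijection that commutes with every operation of $\Sigma$ and maps $D_6$ onto $D_{\mathcal{B}_2}$.

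For the domain, there are only $8$ triples $(z_1,z_2,z_3)\in\mathbf{2}^3$, so they can be listed. The two constraints of Definition~\ref{def:estructura-twist} exclude exactly two of them. The condition $z_3\leq z_1\sqcup z_2$ excludes $(0,0,1)$, and the condition $z_1\sqcap z_2\sqcap z_3=0$ excludes $(1,1,1)$. The six triples that remain are precisely the ones listed in the statement, so the translation is a bijection $\mathbf{6}\to\mathbf{2}^{3}_{\letkp}$. The triples with $z_1=1$ are $(1,0,1),(1,0,0),(1,1,0)$, which correspond to $\mathsf{T},\mathsf{t},\mathsf{b}$. Hence $D_{\mathcal{B}_2}$ corresponds to $D_6$.

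For the unary operations I will verify directly. The operation $\tilde{\neg}$ swaps the first two coordinates. This sends $(1,0,1)\mapsto(0,1,1)$, i.e. $\mathsf{T}\mapsto\mathsf{F}$, and $(1,0,0)\mapsto(0,1,0)$, i.e. $\mathsf{t}\mapsto\mathsf{f}$. It fixes $(1,1,0)=\mathsf{b}$ and $(0,0,0)=\mathsf{n}$. This matches the table for $\tilde{\neg}$ in $\mathcal{M}_6$. The operation $\tilde{\circ}(z)=(z_3,{\sim}z_3,1)$ gives $(1,0,1)=\mathsf{T}$ when $z_3=1$ and $(0,1,1)=\mathsf{F}$ when $z_3=0$. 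The elements with $z_3=1$ are exactly $\mathsf{T}$ and $\mathsf{F}$, so this matches the $\tilde{\circ}$ table.

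For the binary operations I will reduce the work before checking cells. The first two coordinates of $\tilde\wedge$ and $\tilde\vee$ are the usual two-dimensional twist (FDE) operations, so on the four values $\mathsf{t},\mathsf{b},\mathsf{n},\mathsf{f}$ (those with $z_3=0$) they reproduce Belnap's lattice. In these cases the third coordinate is $0$ unless some argument has $z_3=1$. It then remains to treat the cases in which an argument is $\mathsf{T}$ or $\mathsf{F}$, using the third-coordinate formulas. For example, the third coordinate of $z\tilde\wedge w$ is $1$ iff both arguments are $\mathsf{T}$, or one argument is $\mathsf{F}$. This agrees with the table: the result is $\mathsf{F}$ whenever an argument is $\mathsf{F}$, and $\mathsf{T}\tilde\wedge\mathsf{T}=\mathsf{T}$. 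Disjunction is dual. For $\tilde\rightarrow$, the first coordinate is $z_1\Rightarrow w_1$ and the second is $z_1\sqcap w_2$. The third coordinate is $1$ iff $z=\mathsf{F}$, or $w=\mathsf{T}$, or ($z_1=1$ and $w=\mathsf{F}$). I will check these cases row by row against the table for $\tilde{\rightarrow}$. For instance, the rows $\mathsf{f}$ and $\mathsf{n}$ give first coordinate $1$, second $0$, and third $0$ unless $w=\mathsf{T}$, so the value is $\mathsf{t}$ except in the $\mathsf{T}$ column.

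The main obstacle is only bookkeeping in the $36$-cell tables for $\tilde\vee$ and $\tilde\rightarrow$, in particular the third coordinates. I would organize the check by the first two coordinates, which are determined by the FDE part, together with a one-line characterization of when the third coordinate equals $1$.
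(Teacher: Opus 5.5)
Your proposal is correct. The paper states this observation without proof, treating it as an immediate consequence of the definitions. The justification it leaves implicit is exactly your check: exclude $(0,0,1)$ and $(1,1,1)$, then compare the twist operations with the tables of $\mathcal{M}_6$. Your organization works: the first two coordinates come from the FDE operations on the projections, with $\mathsf{T}$ behaving like $\mathsf{t}$ and $\mathsf{F}$ like $\mathsf{f}$, and your characterizations of when the third coordinate is $1$ are right. Carrying the check out confirms every entry of the three binary tables.
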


The second case arises by letting $\mathcal{B}$ be the power-set algebra $\mathcal{B}_{X} = \langle 2^{X}, \cup, \cap, (\cdot)^{C}, \emptyset, X \rangle$, interpreting the inclusion relation $\subseteq$ as the implication and assuming $X$ is a non-empty set. This construction produces the twist structure $T_{\mathcal{B}_{X}}$, and consequently the logical matrix
\[
\mathcal{M}(\mathcal{B}_{X}) = \langle T_{\mathcal{B}_{X}}, D_{\mathcal{B}_{X}} \rangle,
\qquad
D_{\mathcal{B}_{X}} = \big\{(z_{1}, z_{2}, z_{3}) \in (2^{X})^{3}_{\letkp} : z_{1} = X\big\}.
\]

\begin{obs}\label{obs:twist-sigma-alg} \textbf{(Twist $\sigma$-algebras)} 
For any non-empty set $X$, the Boolean algebra  $\mathcal{B}_{X}=2^{X}$ is a $\sigma$-algebra. In fact, it is the archetypical  example of an algebra of events in classical probability theory. In this context, it will be useful to consider the twist  structure $T_{\mathcal{B}_{X}}$ as a {\em twist $\sigma$-algebra for  $\letkp$}, i.e., an algebra of events in a probability theory based on $\letkp$. Each event is now represented by a triple $(A,B,C)$ of subsets of $X$ (called {\em snapshot}, in the terminology of the more general swap structures --- see, for instance, \cite{coniglio2024belnap}). A snapshot  $(A,B,C)$ is an informational state describing an event. It  is composed by three sets: $A$, the positive extension of the event (i.e., the states or `reasons' supporting the event); $B$, the negative extension of the event (the states or `reasons' rejecting the event, or supporting its De Morgan negation); and $C$, the reliability  extension of the event (i.e., as discussed in the Introduction, 
some states or `reasons' supporting the reliability of the epistemic attitude towards the event, whether acceptance or rejection). The latter can be explained as follows. First, it may happen that $A \cup B \subsetneq X$, meaning that some states neither support nor reject the event (a gap). Dually, some states may both support and reject it, i.e., $A \cap B \neq \emptyset$ (a glut). The set $C$ is contained in $A \cup B$ and disjoint from $A \cap B$. Thus, $C = (A \cap C) \cup (B \cap C)$ is a partition of $C$ into states that firmly support the event (and thus do not reject it) and states that firmly reject it (and thus do not support it). Consequently, $C$ is a set of states in which the event is classical: there is no doubt about its truth or falsity, and no contradiction can arise in this regard. In region $C$, the epistemic attitude towards the event is unequivocal and unrevisable.

Besides the usual expressiveness of the power-set algebra $2^{X}$ (meets as conjunction of events and so on), twist $\sigma$-algebras can express by means of an operator the (De Morgan) complement of an event, which `reads' (by `swapping' to the first coordinate) the second component of the snapshot representing the  given event, as well as its reliability (by means of another operator), which `reads' (in that sense) the third coordinate of the snapshot. The twist $\sigma$-algebra for \fde\ over $X$ is obtained from $T_{\mathcal{B}_{X}}$ by `forgetting' (or deleting) the third coordinate. These structures, formed by snapshots of the form $(A,B)$ where $A$ and $B$ are as above, constitute  a well-known sound and complete class of models for \fde\, since they represent the variety of De Morgan lattices, as Dunn stated in his PhD thesis (see~\cite[Theorem~2]{dunn2010contradictory}).
\end{obs}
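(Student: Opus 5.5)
The plan is to split the Observation into its separate mathematical claims and check each directly from Definition~\ref{def:estructura-twist}. Only the final claim about \fde\ is cited rather than computed.

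\emph{Step 1: $2^X$ is a $\sigma$-algebra.} The power set contains $\emptyset$ and $X$. It is closed under complement and under arbitrary, hence countable, unions. So $\mathcal{B}_X$ is a $\sigma$-algebra and, in particular, a Boolean algebra. This means Definition~\ref{def:estructura-twist} applies to $\mathcal{B}_X$ and $T_{\mathcal{B}_X}$ is a twist structure for $\letkp$.

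\emph{Step 2: the partition of $C$.} Take a snapshot $(A,B,C)\in(2^X)^3_{\letkp}$.
\begin{itemize}
\item From $C\subseteq A\cup B$ we get $C=(A\cap C)\cup(B\cap C)$.
\item From $A\cap B\cap C=\emptyset$ the two pieces are disjoint.
\end{itemize}
So $C$ is partitioned into states supporting but not rejecting the event, and states rejecting but not supporting it. Together with the two-element case of Observation~\ref{obs:ternas-valores}, this yields the reading of $C$ as the classical region. Pointwise, each $x\in C$ behaves like $\mathsf{T}$ or $\mathsf{F}$.

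\emph{Step 3: well-definedness of the operations.} Here I must check that each operation maps $(2^X)^3_{\letkp}$ into itself, i.e.\ that the result $(u_1,u_2,u_3)$ satisfies $u_3\subseteq u_1\cup u_2$ and $u_1\cap u_2\cap u_3=\emptyset$.
\begin{itemize}
\item The cases $\tneg$ and $\tcirc$ are immediate. For $\tcirc$, $(z_3,{\sim}z_3,1)$ gives $1 = z_3\sqcup{\sim}z_3$ and $z_3\sqcap{\sim}z_3=0$.
\item For the binary operations, I expect a short Boolean computation term by term. For $\tland$: the term $z_1\sqcap z_3\sqcap w_1\sqcap w_3$ lies below $z_1\sqcap w_1=u_1$ and is disjoint from $z_2\sqcup w_2$, using $z_1\sqcap z_2\sqcap z_3=0$ and its analogue for $w$. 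The term $z_2\sqcap z_3$ lies below $u_2$ and is disjoint from $z_1\supseteq u_1$.
\end{itemize}
The same bookkeeping is expected to handle $\tlor$ and $\tto$. A cleaner alternative is to argue pointwise in $x\in X$: $(2^X)^3_{\letkp}\cong(\mathbf{2}^3_{\letkp})^X$ coordinatewise, so closure reduces to closure of the six-element algebra. That closure is exactly Observation~\ref{obs:ternas-valores}, which identifies $\mathcal{M}(\mathcal{B}_2)$ with $\mathcal{M}_6$.

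\emph{Step 4: the \fde\ reduct.} Deleting the third coordinate gives pairs $(A,B)$ with $(A,B)\tland(A',B')=(A\cap A',B\cup B')$, dually for $\tlor$, and $\tneg(A,B)=(B,A)$. These form a De Morgan lattice: it is the subalgebra, without any constraints, of $2^X\times(2^X)^{\partial}$ equipped with the swap. For soundness and completeness with respect to \fde\ I will not reprove anything. I will invoke Dunn's representation theorem, cited as~\cite[Theorem~2]{dunn2010contradictory}, that every De Morgan lattice embeds in such a pair algebra, together with the known completeness of \fde\ with respect to De Morgan lattices.

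I expect the only real work to be Step~3, the closure check for $\tto$ and $\tlor$. I would handle it by the pointwise reduction to $\mathbf{2}^3_{\letkp}$ rather than by direct Boolean manipulation.
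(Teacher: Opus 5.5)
Your proposal is correct and follows essentially the same reasoning the paper gives inline in the Observation. That reasoning has three parts: the $\sigma$-algebra property of $2^X$; the decomposition $C=(A\cap C)\cup(B\cap C)$, read off from the domain constraints $C\subseteq A\cup B$ and $A\cap B\cap C=\emptyset$; and an appeal to Dunn's representation theorem for the \fde\ reduct. Your Step~3 (closure of $(2^X)^3_{\letkp}$ under the operations, which does go through, including for $\tilde{\rightarrow}$) is extra work the paper does not do, since it takes the well-definedness of $T_{\mathcal{B}}$ as part of Definition~\ref{def:estructura-twist}, imported from \cite{coniglio2024belnap}.
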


Two fundamental concepts in matrix semantics are valuations and the consequence relations they induce. Based on the matrix $\mathcal{M}(\mathcal{B})$ introduced in Definition~\ref{def:matriz-M(B)}, the corresponding definitions are as follows.

\begin{definition}\label{def:valuación-en-M(B)}
        \textbf{(Valuation in $\mathcal{M}(\mathcal{B})$)} 
    Let $V$ be a set of propositional variables and $\Sigma$ a propositional signature. Given a Boolean algebra $\mathcal{B}$, a valuation in the logical matrix $\mathcal{M}(\mathcal{B})$ is a homomorphism $v$ from $\forsigma$ to $T_{\mathcal{B}}$
\end{definition}

\begin{definition}\label{def:relacion-de-consecuencia-en-M(B)}
    \textbf{(Consequence relation induced by $\mathcal{M}(\mathcal{B})$)}
    Given a set of formulas $\Gamma\cup\{\varphi\}$, we say that $\Gamma$ has $\varphi$ as a consequence in $\mathcal{M}(\mathcal{B})$, denoted by $\Gamma\models_{\mathcal{M}(\mathcal{B})}\varphi$, if for every valuation $v$ in $\mathcal{M}(\mathcal{B})$ it holds that $v(\varphi)\in D_{\mathcal{B}}$ whenever $v(\Gamma)\subset D_{\mathcal{B}}$, where $v(\Gamma)= \{v(\gamma):\gamma\in \Gamma\}$.
\end{definition}

In \cite{coniglio2024belnap} (Theorem 23) it is shown that if $\textit{Mat}(\letkp)$ is the class of logical matrices of the form $\mathcal{M}(\mathcal{B})$ and $\models_{\textit{Mat}(\letkp)}$ the associated consequence relation, then this semantics is sound and complete with respect to the proof system presented for $\letkp$. The logical matrix $\mathcal{M}(\mathcal{B}_X)$ is an element of $\textit{Mat}(\letkp)$ and therefore the consequence relation it induces is also sound and complete with respect to the proof system presented for $\letkp$.

\section{Twist Models} \label{sect:twist-models}

In \cite{klein2021probabilities}, the authors define non-standard models, grounded in a semantics for $\fde$ that separates positive and negative evidence. We extend this approach by employing the matrix semantics $\mathcal{M}(\mathcal{B}_{X})$ for $\letkp$ to introduce a new class of twist models. Like their counterparts, these models maintain a separation between positive and negative evidence; their distinctive contribution is the incorporation of reliable evidence as a fundamental component.

\begin{definition}\label{def:modelo_twist} \textbf{(Twist Model for $\letkp$)} Let $X$ be a non-empty set of elements, which we shall refer to as states. Let $T_{\mathcal{B}_X}$ denote the twist structure for $\letkp$ induced by $\mathcal{B}_X$, and let $v: \forsigma \longrightarrow T_{\mathcal{B}_X}$ be a valuation in $\mathcal{M}(\mathcal{B}_X)$, such that for every formula $\varphi$, $v(\varphi) = \pa{v_1(\varphi), v_2(\varphi), v_3(\varphi) }$. A twist model for $\letkp$ is then defined as the ordered pair $\mathsf{M} = \langle X, v \rangle$.
\end{definition}

\begin{definition}\label{def:extensiones:pos_neg_conf} \textbf{(Extension of a Formula in a Twist Model for $\letkp$ )}  
Let \(\varphi \in \forsigma\) be a formula and \(\mathsf{M} = \langle X, v \rangle\) a twist model for the logic $\letkp$.  Whenever \(v(\varphi) = \bigl( v_1(\varphi), v_2(\varphi), v_3(\varphi) \bigr)\), we define the following sets:
\begin{itemize}
\item $|\varphi|_\mathsf{M}^+ = v_1(\varphi)$, referred to as the positive extension of $\varphi$ in the model $\mathsf{M}$.
\item $|\varphi|_\mathsf{M}^- = v_2(\varphi)$, referred to as the negative extension of $\varphi$ in the model $\mathsf{M}$.
\item $|\varphi|_\mathsf{M}^\circ = v_3(\varphi)$, referred to as the reliable extension of $\varphi$ in the model $\mathsf{M}$.
\end{itemize}
\end{definition}

\begin{figure}[H]
\begin{center}
\begin{tikzpicture}[pencildraw/.style={decoration={random steps,segment length=10pt, amplitude=3pt}},scale=.55]
\tikzset{myroundednode/.style={fill=white, rounded corners=3mm, text centered }}
\draw (0,0) -- (8,0) --(8,8) -- (0,8)-- cycle;
\draw[pattern={Lines[angle=45, distance=7pt]},pattern color=red]   (0,0)--(4,0) decorate[pencildraw]{--(4,8)}--(0,8) --cycle;
\draw[pattern={Lines[angle=-45,distance=7pt]},pattern color=blue]  (0,4) decorate[pencildraw]{--(8,4)}--(8,8)--(0,8)   --cycle;
\draw[pattern={Lines[angle=0,  distance=7pt]},pattern color=green] (0,4) decorate[pencildraw]{--(4,0)}--(0,0)   --cycle;
\draw[pattern={Lines[angle=0,  distance=7pt]},pattern color=green] (4,8) decorate[pencildraw]{--(8,4)}--(8,8)   --cycle;
\draw[pattern={Lines[angle=-45,distance=7pt]},pattern color=blue]  (9,6)--(9,6.7)--(9.7,6.7)--(9.7,6) --cycle;
\draw[pattern={Lines[angle=45, distance=7pt]},pattern color=red]   (9,3.7)--(9,4.4)--(9.7,4.4)--(9.7,3.7) --cycle;
\draw[pattern={Lines[angle=0,  distance=7pt]},pattern color=green] (9,1.4)--(9,2.1)--(9.7,2.1)--(9.7,1.4) --cycle;
\node[fill=white] at (12,6.3)  (Pos) {$v_1(\varphi)=|\varphi|^+_\mathsf{M}$};
\node[fill=white] at (12,4)    (Neg) {$v_2(\varphi)=|\varphi|^-_\mathsf{M}$};
\node[fill=white] at (12,1.7)  (Conf) {$v_3(\varphi)=|\varphi|^{\circ}_\mathsf{M}$};
\end{tikzpicture}
\caption{Positive, negative, and reliable extensions of a formula $\varphi$ in a  twist model $\mathsf{M} = \langle X, v \rangle$ for the logic $\letkp$.}
\label{fig:extensiones}
\end{center}
\end{figure}

In Figure \ref{fig:extensiones}, one can observe an example of a distribution of the three extensions given in Definition \ref{def:extensiones:pos_neg_conf} for a particular formula $\varphi$. Given that $v$ constitutes a valuation from $\forsigma$ into $T_{\mathcal{B}_X}$, the following properties are satisfied:

\begin{proposition}\label{prop:propiedades-extensiones}
For any $\varphi, \psi \in \forsigma$ and for any  twist model $\mathsf{M} = \langle X, v \rangle$ for $\letkp$, it holds that:

\begin{enumerate}
\setlength{\columnsep}{1.5cm}
\begin{multicols}{2}
\item $|\varphi|_\mathsf{M}^\circ \subseteq |\varphi|_\mathsf{M}^+ \cup |\varphi|_\mathsf{M}^-$. 
\item $|\varphi|_\mathsf{M}^+ \cap |\varphi|_\mathsf{M}^- \cap |\varphi|_\mathsf{M}^\circ = \emptyset$.    
\end{multicols}

\begin{multicols}{2}
\item $|\varphi \land \psi|_\mathsf{M}^+     = |\varphi|_\mathsf{M}^+ \cap |\psi|_\mathsf{M}^+$.
\item $|\varphi \land \psi|_\mathsf{M}^-     = |\varphi|_\mathsf{M}^- \cup |\psi|_\mathsf{M}^-$.   
\end{multicols}

\item $|\varphi \land \psi|_\mathsf{M}^\circ = \pa{|\varphi|_\mathsf{M}^+ \cap |\varphi|_\mathsf{M}^\circ  \cap |\psi|_\mathsf{M}^+ \cap |\psi|_\mathsf{M}^\circ} \cup \pa{|\varphi|_\mathsf{M}^- \cap |\varphi|_\mathsf{M}^\circ} \cup \pa{|\psi|_\mathsf{M}^- \cap |\psi|_\mathsf{M}^\circ}$.

\begin{multicols}{2}
\item $|\varphi \lor \psi|_\mathsf{M}^+     = |\varphi|_\mathsf{M}^+ \cup |\psi|_\mathsf{M}^+$.
\item $|\varphi \lor \psi|_\mathsf{M}^-     = |\varphi|_\mathsf{M}^- \cap |\psi|_\mathsf{M}^-$.
\end{multicols}
\item $|\varphi \lor \psi|_\mathsf{M}^\circ = \pa{|\varphi|_\mathsf{M}^- \cap |\varphi|_\mathsf{M}^\circ  \cap |\psi|_\mathsf{M}^- \cap |\psi|_\mathsf{M}^\circ} \cup \pa{|\varphi|_\mathsf{M}^+ \cap |\varphi|_\mathsf{M}^\circ} \cup \pa{|\psi|_\mathsf{M}^+ \cap |\psi|_\mathsf{M}^\circ} $.

\begin{multicols}{2}
\item $|\varphi \rightarrow \psi|_\mathsf{M}^+  = \pa{|\varphi|_\mathsf{M}^+ \Rightarrow |\psi|_\mathsf{M}^+} = \pa{X\setminus|\varphi|_\mathsf{M}^+} \cup |\psi|_\mathsf{M}^+$.
\item $|\varphi \rightarrow \psi|_\mathsf{M}^-  = |\varphi|_\mathsf{M}^+ \cap |\psi|_\mathsf{M}^-$.
\end{multicols}
\item $|\varphi \rightarrow \psi|_\mathsf{M}^\circ = \pa{|\varphi|_\mathsf{M}^+ \cap |\psi|_\mathsf{M}^- \cap |\psi|_\mathsf{M}^\circ} \cup \pa{|\varphi|_\mathsf{M}^- \cap |\varphi|_\mathsf{M}^\circ} \cup \pa{|\psi|_\mathsf{M}^+ \cap |\psi|_\mathsf{M}^\circ}$.

\begin{multicols}{3}
\item $|\neg \varphi|_\mathsf{M}^+ = |\varphi|_\mathsf{M}^-$.
\item $|\neg \varphi|_\mathsf{M}^- = |\varphi|_\mathsf{M}^+$.
\item $|\neg \varphi|_\mathsf{M}^\circ = |\varphi|_\mathsf{M}^\circ.$
\end{multicols}

\begin{multicols}{3}
\item $| {\circ} \varphi|_\mathsf{M}^+ = |\varphi|_\mathsf{M}^\circ$. 
\item $| {\circ} \varphi|_\mathsf{M}^- = X \setminus |\varphi|_\mathsf{M}^\circ$.
\item $| {\circ} \varphi|_\mathsf{M}^\circ  = X$.
\end{multicols}

\begin{multicols}{3}
\item $|\neg \neg \varphi|_\mathsf{M}^+ = |\varphi|_\mathsf{M}^+$.
\item $|\neg \neg \varphi|_\mathsf{M}^- = |\varphi|_\mathsf{M}^-$.
\item $|\neg \neg \varphi|_\mathsf{M}^\circ = |\varphi|_\mathsf{M}^\circ$.
\end{multicols}

\begin{multicols}{3}
\item $|{\circ\circ}\varphi|_\mathsf{M}^+  = X$.
\item $|{\circ\circ}\varphi|_\mathsf{M}^-  = \emptyset$.
\item $|{\circ\circ}\varphi|_\mathsf{M}^\circ  = X$.
\end{multicols}

\begin{multicols}{3}
\item $|\varphi\land\neg\varphi\land\circ\varphi|_\mathsf{M}^+  = \emptyset$.
\item $|\varphi\land\neg\varphi\land\circ\varphi|_\mathsf{M}^-  = X$.
\item $|\varphi\land\neg\varphi\land\circ\varphi|_\mathsf{M}^\circ  = X$.
\end{multicols}

\begin{multicols}{3}
\item $|{\sim} \varphi|_\mathsf{M}^+ = X\setminus |\varphi|_\mathsf{M}^+ $.
\item $|{\sim} \varphi|_\mathsf{M}^- = |\varphi|_\mathsf{M}^+$.
\item $|{\sim} \varphi|_\mathsf{M}^\circ = |\varphi|_\mathsf{M}^+\cup \pa{|\varphi|_\mathsf{M}^-\cap |\varphi|_\mathsf{M}^\circ}.$
\end{multicols}
\end{enumerate}
\end{proposition}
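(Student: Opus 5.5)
The proof is a direct unfolding of definitions. By Definition~\ref{def:modelo_twist}, $v$ is a homomorphism from $\forsigma$ into $T_{\mathcal{B}_X}$, so for every formula $\varphi$ the triple $v(\varphi)=(v_1(\varphi),v_2(\varphi),v_3(\varphi))$ belongs to $(2^X)^3_{\letkp}$. By Definition~\ref{def:extensiones:pos_neg_conf}, its coordinates are exactly $|\varphi|^+_\mathsf{M}$, $|\varphi|^-_\mathsf{M}$ and $|\varphi|^\circ_\mathsf{M}$. Throughout, the Boolean operations of $\mathcal{B}_X$ are read as $\sqcup=\cup$, $\sqcap=\cap$, ${\sim}=X\setminus(\cdot)$, $0=\emptyset$ and $1=X$.

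Items 1 and 2 are the two defining constraints of the domain $B^3_{\letkp}$ in Definition~\ref{def:estructura-twist}, instantiated in $\mathcal{B}_X$. Items 3--16 then follow from the homomorphism identities $v(\varphi\land\psi)=v(\varphi)\tland v(\psi)$, and similarly for $\lor$, $\rightarrow$, $\neg$ and $\circ$. One reads off each coordinate of clauses 1--5 of Definition~\ref{def:estructura-twist} and regroups intersections. For instance, in the third coordinate of $\tland$ the term $z_1\sqcap z_3\sqcap w_1\sqcap w_3$ is rewritten as $|\varphi|^+\cap|\varphi|^\circ\cap|\psi|^+\cap|\psi|^\circ$. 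For the implication one also uses $x\Rightarrow y={\sim}x\sqcup y$.

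The remaining items are obtained by composing these clauses. The items for $\neg\neg\varphi$ come from applying the swap $\tneg$ twice. The items for ${\circ\circ}\varphi$ come from applying $\tcirc$ twice: first $(z_3,{\sim}z_3,1)$, then $(1,{\sim}1,1)=(X,\emptyset,X)$. For $\bot=\varphi\land\neg\varphi\land\circ\varphi$, I would compute the three coordinates in turn, using Items 1 and 2 as the key facts.
\begin{itemize}
\item The first coordinate is $|\varphi|^+\cap|\varphi|^-\cap|\varphi|^\circ=\emptyset$, by Item 2.
\item The second coordinate is $|\varphi|^-\cup|\varphi|^+\cup(X\setminus|\varphi|^\circ)$. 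This equals $X$, because $|\varphi|^\circ\subseteq|\varphi|^+\cup|\varphi|^-$ by Item 1.
\item The third coordinate needs a careful expansion of the $\tland$ formula twice, with $\neg\varphi$ having value $(z_2,z_1,z_3)$ and $\circ\varphi$ having value $(z_3,{\sim}z_3,1)$. Computing $\varphi\land\neg\varphi$ first gives $(z_1\sqcap z_2,\; z_1\sqcup z_2,\; (z_1\sqcap z_2\sqcap z_3)\sqcup(z_2\sqcap z_3)\sqcup(z_1\sqcap z_3))$. By Items 1 and 2, this third coordinate simplifies to $z_3$. Conjoining with $\tcirc$ then yields a third coordinate containing the term $({\sim}z_3\sqcap 1)$, and the whole collapses to $z_3\sqcup{\sim}z_3=X$.
\end{itemize}

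The items for ${\sim}\varphi=\varphi\rightarrow\bot$ follow by plugging $v(\bot)=(\emptyset,X,X)$ into the $\tto$ clause. The first coordinate is $(X\setminus|\varphi|^+)\cup\emptyset$. The second is $|\varphi|^+\cap X$. The third is $(|\varphi|^+\cap X\cap X)\cup(|\varphi|^-\cap|\varphi|^\circ)\cup(\emptyset\cap X)$, which matches Observation~\ref{obs:twist-constantes-y-negación}.

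The main obstacle is purely bookkeeping: the third coordinate of $\bot$, where the domain constraints must be invoked to simplify. Everything else is routine instantiation.
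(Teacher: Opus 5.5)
Your proposal is correct and follows the same route as the paper. Items 1--17 are read off from the domain constraints of $B^3_{\letkp}$ and the twist operations (your ``3--16'' is an off-by-one: $|{\circ}\varphi|^\circ_\mathsf{M}=X$ is item 17 and also comes straight from the $\tcirc$ clause), and the remaining items come from composing those clauses. The only difference is that the paper just cites the observation giving $\tilde{\bot}=(0,1,1)$ and $\tilde{\sim}(z_1,z_2,z_3)=({\sim}z_1,z_1,z_1\sqcup(z_2\sqcap z_3))$, whereas your explicit computation of $v(\varphi\land\neg\varphi\land\circ\varphi)$, using $z_3\le z_1\sqcup z_2$ and $z_1\sqcap z_2\sqcap z_3=0$, actually verifies those values.
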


\begin{proof}
Properties 1-17 follow directly from the definition of the domain and the operations in the twist structure $T_{\mathcal{B}_X}$. The remaining properties are readily obtained from the first seventeen properties together with the statements in Observation \ref{obs:twist-constantes-y-negación}.
\end{proof}

Properties 1 and 2 of Proposition \ref{prop:propiedades-extensiones} preclude the existence of elements in $|\varphi|_\mathsf{M}^\circ \setminus \pa{|\varphi|_\mathsf{M}^+ \cup |\varphi|_\mathsf{M}^-}$ and in $|\varphi|_\mathsf{M}^+ \cap |\varphi|_\mathsf{M}^- \cap |\varphi|_\mathsf{M}^\circ$. Consequently, the set $X$ is partitioned into exactly six disjoint regions. Employing the sets from Definition \ref{def:extensiones:pos_neg_conf}, we can label each of the six disjoint regions in Figure \ref{fig:extensiones} as indicated in the following definition:

\begin{definition}\label{def:particion6valores}
\textbf{(Sets Induced by a Formula $\varphi$ and a  Twist Model \(\mathsf{M} = \langle X, v \rangle\))}
Let \(\varphi \in \forsigma\) be a formula and \(\mathsf{M} = \langle X, v \rangle\) be a  twist model for $\letkp$. Whenever $v(\varphi) = \pa{|\varphi|_\mathsf{M}^+, |\varphi|_\mathsf{M}^-, |\varphi|_\mathsf{M}^\circ}$, we define the following subsets of $X$ induced by $\varphi$ and by $\mathsf{M}$: 
\begin{itemize}[nosep]
\begin{multicols}{2}
\item[] $B^\circ_\varphi = |\varphi|^+_\mathsf{M} \cap |\varphi|^\circ_\mathsf{M}$ reliable belief for $\varphi$ in $\mathsf{M}$.
\item[] $C_\varphi = |\varphi|^+_\mathsf{M} \cap |\varphi|^-_\mathsf{M}$  conflict for $\varphi$ in $\mathsf{M}$.
\item[] $D_\varphi^\circ = |\varphi|^-_\mathsf{M} \cap |\varphi|^\circ_\mathsf{M}$  reliable disbelief for $\varphi$ in $\mathsf{M}$.
\item[] $B_\varphi = |\varphi|^+_\mathsf{M} \setminus \pa{|\varphi|^-_\mathsf{M} \cup |\varphi|^\circ_\mathsf{M}}$ belief for $\varphi$ in $\mathsf{M}$.
\item[] $D_\varphi = |\varphi|^-_\mathsf{M} \setminus \pa{|\varphi|^+_\mathsf{M} \cup |\varphi|^\circ_\mathsf{M}}$ disbelief for $\varphi$ in $\mathsf{M}$.
\item[] $U_\varphi = X \setminus \pa{|\varphi|^+_\mathsf{M} \cup |\varphi|^-_\mathsf{M}}$ uncertainty for $\varphi$ in $\mathsf{M}$.
\end{multicols}
\end{itemize}
\end{definition}

In Figure~\ref{fig:nuevas_regiones}, one can identify the six regions from Definition \ref{def:particion6valores}.

\begin{figure}[H]
\begin{center}
\begin{tikzpicture}[pencildraw/.style={decoration={random steps,segment length=10pt, amplitude=3pt}},scale=.60]
\tikzset{myroundednode/.style={fill=white, rounded corners=3mm, text centered}}
\draw (0,0) -- (8,0) --(8,8) -- (0,8)-- cycle;
\draw[pattern={Lines[angle=45,distance=7pt]},pattern color=red] (0,0)--(4,0) decorate[pencildraw]{--(4,8)}--(0,8) --cycle;
\draw[pattern={Lines[angle=-45,distance=7pt]},pattern color=blue] (0,4) decorate[pencildraw]{--(8,4)}--(8,8)--(0,8)   --cycle;
\draw[pattern={Lines[angle=0,distance=7pt]},pattern color=green] (0,4) decorate[pencildraw]{--(4,0)}--(0,0)   --cycle;
\draw[pattern={Lines[angle=0,distance=7pt]},pattern color=green] (4,8) decorate[pencildraw]{--(8,4)}--(8,8)   --cycle;
\node[myroundednode] at (1.2,1.1)  (Do) {$D^{\circ}_\varphi$};
\node[myroundednode] at (2.8,2.8)  (D) {$D_\varphi$};
\node[myroundednode] at (2,6)      (C) {$C_\varphi$};
\node[myroundednode] at (6,2)      (U) {$U_\varphi$};
\node[myroundednode] at (5.2,5.1)  (B) {$B_\varphi$};
\node[myroundednode] at (6.8,6.8)  (Bo) {$B^{\circ}_\varphi$};
\draw[pattern={Lines[angle=-45,distance=7pt]},pattern color=blue] (9,6)--(9,6.7)--(9.7,6.7)--(9.7,6) --cycle;
\draw[pattern={Lines[angle=45,distance=7pt]},pattern color=red] (9,3.7)--(9,4.4)--(9.7,4.4)--(9.7,3.7) --cycle;
\draw[pattern={Lines[angle=0,distance=7pt]},pattern color=green] (9,1.4)--(9,2.1)--(9.7,2.1)--(9.7,1.4) --cycle;
\node[fill=white] at (12,6.3)  (Pos) {$v_1(\varphi)=|\varphi|^+_\mathsf{M}$};
\node[fill=white] at (12,4)  (Neg) {$v_2(\varphi)=|\varphi|^-_\mathsf{M}$};
\node[fill=white] at (12,1.7)  (Conf) {$v_3(\varphi)=|\varphi|^{\circ}_\mathsf{M}$};
\end{tikzpicture}
\caption{Subsets of $X$ induced by a  twist model and a formula $\varphi$ in $\letkp$.}
\label{fig:nuevas_regiones}
\end{center}
\end{figure}

\begin{proposition}\label{prop:6-particion}
Given a formula $\varphi \in \forsigma$ and a  twist model $\mathsf{M} = \langle X, v \rangle$ for $\letkp$, the set $\mathcal{P}_\varphi=\{B^{\circ}_\varphi, B_\varphi, C_\varphi, D_\varphi, D^{\circ}_\varphi,  U_\varphi\}$, induced by $\mathsf{M}$ and $\varphi$ is indeed a partition of $X$.
\end{proposition}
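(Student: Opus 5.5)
We abbreviate $P=|\varphi|^+_\mathsf{M}$, $N=|\varphi|^-_\mathsf{M}$ and $R=|\varphi|^\circ_\mathsf{M}$. The plan is to classify each state $x\in X$ by its membership pattern in these three sets, i.e.\ by the triple $(\chi_P(x),\chi_N(x),\chi_R(x))\in\{0,1\}^3$. This mirrors Observation~\ref{obs:ternas-valores}: the admissible patterns should be exactly the six triples of $\mathbf{2}^3_{\letkp}$.

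First we use properties 1 and 2 of Proposition~\ref{prop:propiedades-extensiones}, namely $R\subseteq P\cup N$ and $P\cap N\cap R=\emptyset$. Together they rule out the patterns $(0,0,1)$ and $(1,1,1)$. The remaining six patterns are exactly the six triples $\mathsf{T},\mathsf{t},\mathsf{b},\mathsf{f},\mathsf{F},\mathsf{n}$.

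Next we check that each of the six sets in Definition~\ref{def:particion6valores} is precisely the set of states with one of these admissible patterns.
\begin{itemize}
\item $B_\varphi$ corresponds to $(1,0,0)$ and $D_\varphi$ to $(0,1,0)$; this is immediate from their definitions.
\item $U_\varphi=X\setminus(P\cup N)$ consists of the states with patterns $(0,0,0)$ or $(0,0,1)$. The latter is excluded, so $U_\varphi$ corresponds to $(0,0,0)$.
\item $C_\varphi=P\cap N$ corresponds to $(1,1,0)$ or $(1,1,1)$. The latter is excluded by property 2, so $C_\varphi$ corresponds to $(1,1,0)$.
\item $B^\circ_\varphi=P\cap R$ corresponds to $(1,0,1)$ or $(1,1,1)$. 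Again property 2 removes $(1,1,1)$, leaving $(1,0,1)$.
\item $D^\circ_\varphi=N\cap R$ corresponds to $(0,1,1)$ by the same argument.
\end{itemize}

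Since every $x\in X$ has exactly one pattern, and the six sets correspond bijectively to the six admissible patterns, they are pairwise disjoint and their union is $X$. This already shows that $\mathcal{P}_\varphi$ is a partition of $X$.

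The only delicate point is the convention on partitions. Some of the six sets may be empty; for instance, $R=\emptyset$ whenever $v(\varphi)$ has third coordinate $\emptyset$. So the statement should be read in the sense of a family of pairwise disjoint sets covering $X$, with empty blocks allowed, as in~\cite{klein2021probabilities}. We would add a remark to that effect.

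Otherwise no step is a real obstacle. The key care point is invoking properties 1 and 2 to eliminate the two forbidden patterns; everything else is elementary Boolean set reasoning.
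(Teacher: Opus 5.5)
Your proof is correct and follows the same route as the paper, which only cites Definition~\ref{def:particion6valores}, elementary set algebra, and properties 1 and 2 of Proposition~\ref{prop:propiedades-extensiones}; your membership-pattern classification simply makes that set algebra explicit, and it is the same idea the paper later uses in Observation~\ref{obs:identificacion-regiones-valores}. Your caveat that blocks may be empty (for instance $B_{\circ\varphi}=\emptyset$) is well taken, since the paper tacitly relies on this relaxed notion of partition without saying so.
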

\begin{proof}
    The proof follows from Definition \ref{def:particion6valores} by means of set algebra properties and Proposition \ref{prop:propiedades-extensiones}.
\end{proof}

\begin{obs}\label{obs:particion-con-extensones-positivas}
Thanks to the properties of the positive, negative, and reliable extensions given in Proposition \ref{prop:propiedades-extensiones}, it can be proved that each of the six sets in Definition \ref{def:particion6valores} can be expressed as the positive extension of a suitable formula, as indicated in the following equalities.

$$\begin{array}{lll}
B^\circ_\varphi=|\varphi\land\circ\varphi|^+
&
\quad B_\varphi =|\varphi\land{\sim}(\neg\varphi\lor\circ\varphi)|^+
&
\quad C_\varphi = |\varphi\land\neg\varphi|^+\\
D_\varphi =|\neg\varphi\land{\sim}(\varphi\lor\circ\varphi)|^+
&
\quad D_\varphi^\circ =|\neg\varphi\land\circ\varphi|^+        
&
\quad U_\varphi =|{\sim}(\varphi\lor\neg\varphi)|^+ \\
\end{array}$$
For the sake of simplicity we will denote such suitable formulas as: 
\begin{align*}
\varphi_{B^\circ} &= \varphi \land\circ \varphi &  \varphi_B &= \varphi\land{\sim}\pa{\neg \varphi\lor {\circ} \varphi }&
\varphi_C &= \varphi\land\neg \varphi \\   
\varphi_D &= \neg\varphi\land {\sim}\pa{\varphi\lor{\circ} \varphi }&\varphi_{D^\circ} &= \neg \varphi\land\circ \varphi&  \varphi_U &= {\sim}\pa{\varphi\lor\neg \varphi}
\end{align*}
We will refer to these formulas as the region-identification formulas.
\end{obs}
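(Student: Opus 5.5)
The plan is to compute the positive extension of each region-identification formula directly, using only the clauses of Proposition~\ref{prop:propiedades-extensiones}, and compare the result with the corresponding set of Definition~\ref{def:particion6valores}. Throughout I write $P=|\varphi|^+_\mathsf{M}$, $N=|\varphi|^-_\mathsf{M}$ and $R=|\varphi|^\circ_\mathsf{M}$, so that $|\neg\varphi|^+=N$ and $|{\circ}\varphi|^+=R$ (items 12 and 15). I will also use $|{\sim}\psi|^+=X\setminus|\psi|^+$ (item 24), $|\psi\land\chi|^+=|\psi|^+\cap|\chi|^+$ (item 3) and $|\psi\lor\chi|^+=|\psi|^+\cup|\chi|^+$ (item 6).

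With these, four of the six equalities are immediate:
\begin{itemize}
\item $|\varphi\land\circ\varphi|^+=P\cap R=B^\circ_\varphi$;
\item $|\varphi\land\neg\varphi|^+=P\cap N=C_\varphi$;
\item $|\neg\varphi\land\circ\varphi|^+=N\cap R=D^\circ_\varphi$;
\item $|{\sim}(\varphi\lor\neg\varphi)|^+=X\setminus(P\cup N)=U_\varphi$.
\end{itemize}
For the remaining two I combine negation, disjunction and conjunction in the same way:
\begin{itemize}
\item $|\varphi\land{\sim}(\neg\varphi\lor\circ\varphi)|^+=P\cap(X\setminus(N\cup R))=P\setminus(N\cup R)=B_\varphi$;
\item symmetrically, $|\neg\varphi\land{\sim}(\varphi\lor\circ\varphi)|^+=N\setminus(P\cup R)=D_\varphi$.
\end{itemize}

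The only step needing care is the clause for ${\sim}$. The connective ${\sim}$ is defined in Observation~\ref{obs:twist-constantes-y-negación} as $\psi\to\bot$, not taken as primitive. I would therefore justify item 24 explicitly. By item 9, $|\psi\to\bot|^+=(X\setminus|\psi|^+)\cup|\bot|^+$. By item 21, $|\bot|^+=\emptyset$, so the positive extension of ${\sim}\psi$ is exactly the complement $X\setminus|\psi|^+$.

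Notably, properties 1 and 2 of Proposition~\ref{prop:propiedades-extensiones} are not needed for these identities. They matter only for Proposition~\ref{prop:6-particion}, not for the representation of each region as a positive extension. The whole argument is routine set algebra, and I expect no real obstacle.
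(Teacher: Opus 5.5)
Your proof is correct and takes the route the paper intends, since the observation is justified there only by appeal to Proposition~\ref{prop:propiedades-extensiones}: compute each positive extension clause by clause, with ${\sim}$ acting as complement because $|\bot|^+=\emptyset$. Two small slips. First, the clauses you need are item 27 (for ${\sim}$) and item 24 (for $\bot$), not items 24 and 21. Second, your closing remark overstates things: via items 3, 12 and 15, item 24 is exactly property 2 ($|\varphi|^+\cap|\varphi|^-\cap|\varphi|^\circ=\emptyset$), so property 2 is precisely what makes ${\sim}$ a complement on positive extensions; only property 1 is genuinely unused.
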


If $\varphi$ is a formula in $\letkp$, according to Definition \ref{def:particion6valores}, $\varphi$ partitions the space $X$ into six regions.
Before to continue with more properties for these regions, it will be useful to point out the relation between these regions and the truth values in $\mathcal{M}_6$. Given a  twist model $\mathsf{M} = \langle X, v \rangle$ and a formula $\varphi$, we have that $v(\varphi) = \pa{|\varphi|_\mathsf{M}^+, |\varphi|_\mathsf{M}^-, |\varphi|_\mathsf{M}^\circ}$. For an specific state $x \in X$ we define the function $v_x:For(\Sigma)\longrightarrow T_{\mathcal{B}_2}$ as $v_x\pa{\varphi}= \pa{\chi_{|\varphi|_\mathsf{M}^+}(x),\chi_{|\varphi|_\mathsf{M}^-}(x),\chi_{|\varphi|_\mathsf{M}^\circ}(x)}$, where $\chi_A$ denotes the characteristic (or indicator) function of the set $A$ (i.e. $\chi_{A}(x)=1$ iff $x\in A$ and $\chi_{A}(x)=0$ iff $x\notin A$). That is, $v_x$ assigns a triple of ones and zeros to $\varphi$, depending on whether $x$ belongs to the respective set in the triple $\pa{|\varphi|_\mathsf{M}^+, |\varphi|_\mathsf{M}^-, |\varphi|_\mathsf{M}^\circ}$. Due to Property 1 of Proposition \ref{prop:propiedades-extensiones} it holds that  $|\varphi|_\mathsf{M}^+ \cap |\varphi|_\mathsf{M}^- \cap |\varphi|_\mathsf{M}^\circ=\emptyset$, so $v_x\pa{\varphi}\neq(1,1,1)$. Moreover, by Property 2 of the same proposition it holds that $|\varphi|_\mathsf{M}^\circ \setminus \pa{|\varphi|_\mathsf{M}^+ \cup |\varphi|_\mathsf{M}^-}=\emptyset$, causing that $v_x\pa{\varphi}\neq(0,0,1)$. In fact $v_x$ is a valuation in $\mathcal{M}(\mathcal{B}_2)$ and given the identification among triples in $\mathcal{M}(\mathcal{B}_2)$ and values in $\mathcal{M}_6$ given in Observation \ref{obs:ternas-valores}, it is also a valuation in $\mathcal{M}_6$. 

\begin{obs}\label{obs:identificacion-regiones-valores}
Regarding the sets of Definition \ref{def:particion6valores} and the function $v_x$  we have that:
\begin{center}
\begin{multicols}{2}
$x\in B_\varphi^\circ$  iff $v_x\pa{\varphi}=(1,0,1)=\mathsf{T}$\\
$x\in C_\varphi$        iff $v_x\pa{\varphi}=(1,1,0)=\mathsf{b}$\\
$x\in D_\varphi^\circ$  iff $v_x\pa{\varphi}=(0,1,1)=\mathsf{F}$\\
$x\in B_\varphi$        iff $v_x\pa{\varphi}=(1,0,0)=\mathsf{t}$\\
$x\in D_\varphi$        iff $v_x\pa{\varphi}=(0,1,0)=\mathsf{f}$\\
$x\in U_\varphi$        iff $v_x\pa{\varphi}=(0,0,0)=\mathsf{n}$    
\end{multicols}
\end{center}

Hence each truth value in $\mathcal{M}_6$ can be identified with exactly one region and vice versa.    
\end{obs}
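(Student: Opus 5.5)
The plan is a pointwise argument. Fix a state $x\in X$ and read off membership of $x$ in the three sets $|\varphi|_\mathsf{M}^+$, $|\varphi|_\mathsf{M}^-$, $|\varphi|_\mathsf{M}^\circ$; this is exactly the triple $v_x(\varphi)=(a,b,c)\in\{0,1\}^3$. Each of the six regions of Definition~\ref{def:particion6valores} is a Boolean combination of these three sets, so membership of $x$ in a region is equivalent to a condition on $(a,b,c)$. The work is therefore to translate each region definition into a condition on bits and then simplify it using the two constraints from Proposition~\ref{prop:propiedades-extensiones}: $(a,b,c)\neq(1,1,1)$ by property~2, and $(a,b,c)\neq(0,0,1)$ by property~1.

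I will go region by region.
\begin{itemize}
\item $x\in B^\circ_\varphi$ iff $a=c=1$. Since $(1,1,1)$ is excluded, this forces $b=0$, giving $(1,0,1)=\mathsf{T}$.
\item $x\in C_\varphi$ iff $a=b=1$. Excluding $(1,1,1)$ forces $c=0$, giving $\mathsf{b}$.
\item $x\in D^\circ_\varphi$ iff $b=c=1$, which forces $a=0$ and gives $\mathsf{F}$.
\item $x\in B_\varphi$ iff $a=1$, $b=0$, $c=0$, which is $\mathsf{t}$ directly.
\item $x\in D_\varphi$ iff $(a,b,c)=(0,1,0)$, which is $\mathsf{f}$.
\item $x\in U_\varphi$ iff $a=b=0$. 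Excluding $(0,0,1)$ forces $c=0$, giving $\mathsf{n}$.
\end{itemize}
In each case both directions of the ``iff'' follow from the same bit equivalence, and the identification with the values of $\mathcal{M}_6$ comes from Observation~\ref{obs:ternas-valores}.

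For the final sentence, note that the six triples listed are pairwise distinct and are exactly the elements of $\mathbf{2}^3_{\letkp}$. Together with Proposition~\ref{prop:6-particion}, which says the regions partition $X$, this shows that the map sending each region to its value is a bijection. Hence every state lies in exactly one region and receives exactly one truth value.

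There is no real obstacle here. The only care needed is in the three cases where the region is defined by an intersection of just two extensions. There, the exclusion of $(1,1,1)$ and of $(0,0,1)$ must be invoked explicitly to pin down the remaining coordinate.
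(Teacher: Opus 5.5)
Your proof is correct and is essentially the paper's own argument made explicit: the paper likewise reads $v_x(\varphi)$ as the membership bits of $x$ in $|\varphi|^+_\mathsf{M}$, $|\varphi|^-_\mathsf{M}$ and $|\varphi|^\circ_\mathsf{M}$, and rules out $(1,1,1)$ and $(0,0,1)$; it then leaves the observation without further proof, relying on the region definitions to pick out the unique admissible triple, which is exactly the case check you carry out. Your citations are the accurate ones. By Proposition~\ref{prop:propiedades-extensiones}, $(1,1,1)$ is excluded by property~2 and $(0,0,1)$ by property~1, whereas the paragraph preceding the observation in the paper swaps these two property numbers.
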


\begin{proposition}\label{prop:propiedades-6-regiones}
For any $\varphi, \psi \in \forsigma$ and for any  twist model $\mathsf{M} = \langle X, v \rangle$ for $\letkp$, it holds that:

General properties
\begin{enumerate}
\begin{multicols}{2}
\item $B^\circ_\varphi  \cup B_\varphi  \cup C_\varphi          = |\varphi|^+_\mathsf{M}$.
\item $C_\varphi        \cup D_\varphi  \cup D^\circ_\varphi    = |\varphi|^-_\mathsf{M}$.
\end{multicols}

\begin{multicols}{2}
\item $B_{\varphi}^\circ \cup  D_{\varphi}^\circ  = |\varphi|_\mathsf{M}^\circ$.
\item $B^\circ_\varphi = B^\circ_{\varphi\land\circ\varphi}\cup B_{\varphi\land\circ\varphi}\cup C_{\varphi\land\circ\varphi}$.
\end{multicols}
\end{enumerate}

Negation properties
\begin{enumerate}
\setcounter{enumi}{4}
\begin{multicols}{3}
\item $B^\circ_{\neg \varphi}=D^\circ_{\varphi}$.
\item $B_{\neg\varphi} =D_\varphi$.
\item $C_{\neg \varphi}= C_{\varphi}$.

\end{multicols}
\begin{multicols}{3}
\item $D_{\neg\varphi} =B_\varphi$.
\item $D^\circ_{\neg \varphi}=B^\circ_{\varphi}$.
\item $U_{\neg \varphi}= U_{\varphi}$.
\end{multicols}
\end{enumerate}

Conjunction properties
\begin{enumerate}
\setcounter{enumi}{10}
\item $B_{\varphi \land \psi}^\circ = B_\varphi^\circ \cap B_\psi^\circ$.

\item $B_{\varphi \land \psi}       = \pa{B_\varphi \cap \pa{B_\psi^\circ \cup B_\psi}} \cup
                                      \pa{B_\psi \cap \pa{B_\varphi^\circ \cup B_\varphi}}$.
                                      
\item $C_{\varphi \land \psi}       = \pa{C_\varphi \cap \pa{B_\psi^\circ \cup B_\psi \cup C_\psi}} \cup
                                      \pa{C_\psi \cap \pa{B_\varphi^\circ \cup B_\varphi \cup C_\varphi}}$.

\item $D_{\varphi \land \psi}       = \pa{D_\varphi \setminus D_\psi^\circ} \cup
                                      \pa{D_\psi \setminus D_\varphi^\circ} \cup 
                                      \pa{C_\varphi \cap U_\psi} \cup
                                      \pa{C_\psi \cap U_\varphi}$.
\item $D_{\varphi \land \psi}^\circ = D_{\varphi}^\circ \cup D_{\psi}^\circ$.

\item $U_{\varphi \land \psi}       = \pa{U_\varphi \cap \pa{B_\psi^\circ \cup B_\psi \cup U_\psi}} \cup
                                      \pa{U_\psi \cap \pa{B_\varphi^\circ \cup B_\varphi \cup U_\varphi}}$ .
\end{enumerate}

Disjunction properties

\begin{enumerate}
\setcounter{enumi}{16}
\item $B_{\varphi \lor \psi}^\circ = B_\varphi^\circ \cup B_\psi^\circ$.

\item $B_{\varphi \lor \psi}       = \pa{B_\varphi \setminus B_\psi^\circ} \cup
                                     \pa{B_\psi \setminus B_\varphi^\circ} \cup 
                                     \pa{C_\varphi \cap U_\psi} \cup \pa{C_\psi \cap U_\varphi}$.

\item $C_{\varphi \lor \psi}       = \pa{C_\varphi \cap \pa{C_\psi \cup D_\psi \cup D_\psi^\circ}} \cup
                                     \pa{C_\psi    \cap \pa{C_\varphi \cup D_\varphi \cup D_\varphi^\circ}}$

\item $D_{\varphi \lor \psi}       = \pa{D_\varphi \cap \pa{D_\psi \cup D_\psi^\circ}} \cup
                                     \pa{D_\psi    \cap \pa{D_\varphi \cup D_\varphi^\circ}}$.    

\item $D_{\varphi \lor \psi}^\circ = D_{\varphi}^\circ \cap D_{\psi}^\circ$.

\item $U_{\varphi \lor \psi}       = \pa{U_\varphi \cap \pa{D_\psi \cup D_\psi^\circ \cup U_\psi}} \cup
                                     \pa{U_\psi    \cap \pa{D_\varphi \cup D_\varphi^\circ \cup U_\psi}}$.
\end{enumerate}

Implication properties

\begin{enumerate}
\setlength{\multicolsep}{8pt}
\setcounter{enumi}{22}
\begin{multicols}{2}
\item $B_{\varphi \rightarrow \psi}^\circ = D_{\varphi}^\circ \cup B_{\psi}^\circ$.
\item $B_{\varphi \rightarrow \psi}       = 
\pa{\pa{D_\varphi \cup U_\varphi} \setminus  B_\psi^\circ} \cup  \pa{B_\psi \setminus D_\varphi^\circ }$.
\end{multicols}

\begin{multicols}{2}
\item $C_{\varphi \rightarrow \psi}       = \pa{B_\varphi^\circ \cup B_\varphi \cup C_\varphi} \cap C_\psi$.
\item $D_{\varphi \rightarrow \psi}       = \pa{B_\varphi^\circ \cup B_\varphi \cup C_\varphi} \cap D_\psi$.   
\end{multicols}

\begin{multicols}{2}
\item $D_{\varphi \rightarrow \psi}^\circ = \pa{B_\varphi^\circ \cup B_\varphi \cup C_\varphi} \cap D_\psi^\circ$.
\item $U_{\varphi \rightarrow \psi}       = \pa{B_\varphi^\circ \cup B_\varphi \cup C_\varphi} \cap U_\psi$.
\end{multicols}
\end{enumerate}

Reliability properties

\begin{enumerate}
\setlength{\multicolsep}{8pt}
\setcounter{enumi}{28}
\begin{multicols}{3}
\item $B_{\circ\varphi}^\circ = B_\varphi^\circ \cup D_\varphi^\circ$.
\item $B_{\circ\varphi} = \emptyset$.
\item $C_{\circ\varphi} = \emptyset$.
\end{multicols}
\begin{multicols}{3}
\item $D_{\circ\varphi} = \emptyset$.
\item $D_{\circ\varphi}^\circ = X \setminus \pa{B_\varphi^\circ \cup D_\varphi^\circ}$.
\item $U_{\circ\varphi} = \emptyset$.
\end{multicols}
\end{enumerate}

Classic negation properties

\begin{enumerate}
\setlength{\multicolsep}{8pt}
\setcounter{enumi}{34}
\begin{multicols}{3}
\item $B_{\sim\varphi}^\circ= D_{\varphi}^\circ$.
\item $B_{\sim\varphi} = D_{\varphi} \cup U_{\varphi}$.
\item $C_{\sim\varphi} = \emptyset$.
\end{multicols}
\begin{multicols}{3}
\item $D_{\sim\varphi} = \emptyset$.
\item $D_{\sim\varphi}^\circ = B_{\varphi}^\circ \cup B_{\varphi} \cup C_{\varphi}$.
\item $U_{\sim\varphi} = \emptyset$.
\end{multicols}
\end{enumerate}

Contradiction properties

\begin{enumerate}
\setlength{\multicolsep}{8pt}
\setcounter{enumi}{40}
\begin{multicols}{3}
\item $B^\circ_{\varphi \land \neg \varphi} = \emptyset$.
\item $B_{\varphi \land \neg \varphi} = \emptyset$.
\item $C_{\varphi \land \neg \varphi}= C_\varphi$.

\end{multicols}
\begin{multicols}{3}
\item $D_{\varphi \land \neg \varphi}= B_\varphi \cup D_\varphi$.
\item $D_{\varphi \land \neg \varphi}^\circ= B_{\varphi}^\circ \cup D^\circ _{\varphi}$.
\item $U_{\varphi \land \neg \varphi}= U_\varphi$.
\end{multicols}
\end{enumerate}

Tautology properties

\begin{enumerate}
\setlength{\multicolsep}{8pt}
\setcounter{enumi}{46}
\begin{multicols}{3}
\item $B_{{{\circ\circ} } \varphi}^\circ= X$.
\item $B_{{{\circ\circ} } \varphi} = \emptyset$.
\item $C_{{{\circ\circ} } \varphi} = \emptyset$.
\end{multicols}
\begin{multicols}{3}
\item $D_{{{\circ\circ} } \varphi} = \emptyset$.
\item $D_{{{\circ\circ} } \varphi}^\circ = \emptyset$. 
\item $U_{{{\circ\circ} } \varphi} = \emptyset$.
\end{multicols} 
\end{enumerate}
\end{proposition}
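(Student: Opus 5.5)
The plan is to reduce every item to a pointwise statement about the six-valued matrix $\mathcal{M}_6$, using Observation~\ref{obs:identificacion-regiones-valores}. For each state $x\in X$, the map $v_x$ is a valuation in $\mathcal{M}(\mathcal{B}_2)\cong\mathcal{M}_6$. Moreover, $x$ lies in $B^\circ_\varphi, B_\varphi, C_\varphi, D_\varphi, D^\circ_\varphi, U_\varphi$ exactly when $v_x(\varphi)$ is $\mathsf{T},\mathsf{t},\mathsf{b},\mathsf{f},\mathsf{F},\mathsf{n}$, respectively. Each claimed set equality is therefore equivalent to saying that, for every $x$, membership on both sides agrees. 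Since $v_x$ is a homomorphism, we have $v_x(\varphi\land\psi)=v_x(\varphi)\tland v_x(\psi)$, and similarly for the other connectives. So every item becomes a finite check on the tables of $\tneg,\tcirc,\tland,\tlor,\tto$ and the derived $\tilde{\sim}$.

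The general properties 1--3 do not even need the tables. By Definition~\ref{def:particion6valores} and Proposition~\ref{prop:6-particion}, $|\varphi|^+$ is the union of the regions whose value has first coordinate $1$, namely $\mathsf{T},\mathsf{t},\mathsf{b}$. Likewise, $|\varphi|^-$ collects $\mathsf{b},\mathsf{f},\mathsf{F}$, and $|\varphi|^\circ$ collects $\mathsf{T},\mathsf{F}$; the latter uses properties 1--2 of Proposition~\ref{prop:propiedades-extensiones}, which exclude $(1,1,1)$ and $(0,0,1)$. For item 4, compute $v_x(\varphi\land\circ\varphi)$. If $v_x(\varphi)=\mathsf{T}$, this is $\mathsf{T}\tland\mathsf{T}=\mathsf{T}$. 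If $v_x(\varphi)=\mathsf{F}$, we get $\mathsf{F}$. Otherwise $\tcirc$ gives $\mathsf{F}$, so the conjunction is $\mathsf{F}$. Hence the designated values of $\varphi\land\circ\varphi$, i.e.\ $\mathsf{T},\mathsf{t},\mathsf{b}$, occur exactly on $B^\circ_\varphi$.

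For the connective groups, I will proceed uniformly. For each output value $w$, read off from the corresponding table the set of input pairs $(a,c)$ with $a\circ c=w$, and check that the stated union of intersections of regions is exactly that set of pairs. For example, item 12 requires that $a\tland c=\mathsf{t}$ iff one argument is $\mathsf{t}$ and the other is in $\{\mathsf{T},\mathsf{t}\}$. Items 14 and 18 are the ones with set differences and the mixed $\mathsf{b}/\mathsf{n}$ pairs; they come from $\mathsf{b}\tland\mathsf{n}=\mathsf{f}$ and $\mathsf{b}\tlor\mathsf{n}=\mathsf{t}$. The negation, reliability, classical negation and contradiction groups are unary, so they only need a scan of the six values. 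For classical negation I use the table in Observation~\ref{obs:twist-constantes-y-negación}. For the contradiction group I compute $a\tland\tneg a$ for each $a$. The tautology group follows since $\tcirc\tcirc a=\mathsf{T}$ for all $a$.

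The only real obstacle is bookkeeping. In the binary cases one must make sure that the listed unions cover every pair in the table's preimage and nothing else. Care is also needed with the set-difference formulations: for instance, $D_\varphi\setminus D^\circ_\psi$ in item 14 must contain exactly the pairs $(\mathsf{f},c)$ with $c\neq\mathsf{F}$. I would organise each binary check as a $6\times 6$ grid of regions $R_a\cap R_c$, using that these $36$ intersections partition $X$. I would then verify row by row that each cell is assigned to the correct region of the compound formula.

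While checking, I expect possible typographical slips in the statement. For example, the last term of item 22 reads $U_\psi$ where symmetry suggests $U_\varphi$; this is harmless, since the term is intersected with $U_\psi$ anyway. Such slips would be flagged and corrected against the tables rather than treated as genuine difficulties.
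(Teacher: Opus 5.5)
Your approach is essentially the paper's. Both reduce every item to a pointwise check on the $\mathcal{M}_6$ tables via the region/value identification of Observation~\ref{obs:identificacion-regiones-valores}. The paper does items 1--4 by direct set algebra, with item 4 obtained from $|\varphi\land\circ\varphi|^+=|\varphi|^+\cap|\varphi|^\circ$ and item 1; your pointwise version of these is equivalent.

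One correction: your remark on item 22 is wrong.

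\begin{itemize}
\item As printed, the second disjunct is $U_\psi\cap(D_\varphi\cup D^\circ_\varphi\cup U_\psi)$. Since $U_\psi$ is contained in that union, this disjunct equals all of $U_\psi$, not $U_\psi\cap U_\varphi$.
\item So the printed right-hand side contains every state where $\psi$ takes $\mathsf{n}$. That includes states where $\varphi$ takes $\mathsf{T}$, $\mathsf{t}$ or $\mathsf{b}$, at which $\varphi\lor\psi$ takes $\mathsf{T}$, $\mathsf{t}$, $\mathsf{t}$ respectively.
\item Such states exist. Take $X=\{x\}$, $v(\mathsf{p}_1)=(X,\emptyset,X)$ and $v(\mathsf{p}_2)=(\emptyset,\emptyset,\emptyset)$. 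Then $x\in U_{\mathsf{p}_2}$, but $x\in B^\circ_{\mathsf{p}_1\lor\mathsf{p}_2}$, so $x\notin U_{\mathsf{p}_1\lor\mathsf{p}_2}$.
\end{itemize}

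Item 22 as stated is therefore false, not a harmless typo. It has to be read with $U_\varphi$ in the last slot. Your table check then proves the corrected version, since the preimage of $\mathsf{n}$ under $\tlor$ is exactly the set of pairs where one argument is $\mathsf{n}$ and the other lies in $\{\mathsf{f},\mathsf{F},\mathsf{n}\}$.

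All the other items check out against the tables as written. The paper's own proof also just appeals to the tables and does not catch this slip.
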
\bigskip 

\begin{proof}~
\begin{enumerate}

\item $\begin{aligned}[t]
B^\circ_\varphi\cup B_\varphi\cup C_\varphi 
& = 
\pa{|\varphi|^+_\mathsf{M} \cap |\varphi|^\circ_\mathsf{M} }\cup
\pa{|\varphi|^+_\mathsf{M} \setminus \pa{ |\varphi|^-_\mathsf{M} \cup |\varphi|^\circ_\mathsf{M}} }\cup 
\pa{|\varphi|^+_\mathsf{M} \cap |\varphi|^-_\mathsf{M} }\\
& = 
\pa{|\varphi|^+_\mathsf{M} \cap |\varphi|^\circ_\mathsf{M} }\cup
\pa{|\varphi|^+_\mathsf{M} \cap |\varphi|^-_\mathsf{M} } \cup
\pa{|\varphi|^+_\mathsf{M} \setminus \pa{ |\varphi|^-_\mathsf{M} \cup |\varphi|^\circ_\mathsf{M}} } 
\\
& = 
\pa{|\varphi|^+_\mathsf{M} \cap \pa{|\varphi|^-_\mathsf{M} \cup |\varphi|^\circ_\mathsf{M}}}\cup
\pa{|\varphi|^+_\mathsf{M} \cap \pa{X \setminus \pa{|\varphi|^-_\mathsf{M} \cup |\varphi|^\circ_\mathsf{M}}}}\\
&=
|\varphi|^+_\mathsf{M} \cap 
\pa{
    \pa{|\varphi|^-_\mathsf{M} \cup |\varphi|^\circ_\mathsf{M}} \cup
    \pa{ X \setminus \pa{|\varphi|^-_\mathsf{M} \cup |\varphi|^\circ_\mathsf{M}}}
    }\\
&=
|\varphi|^+_\mathsf{M} \cap X\\
&=
|\varphi|^+_\mathsf{M}.
\end{aligned}$    
   
\item $\begin{aligned}[t]
C_\varphi  \cup D_\varphi\cup D^\circ_\varphi
& = 
D^\circ_\varphi\cup D_\varphi\cup C_\varphi\\
& = 
\pa{|\varphi|^-_\mathsf{M} \cap |\varphi|^\circ_\mathsf{M} }\cup
\pa{|\varphi|^-_\mathsf{M} \setminus \pa{|\varphi|^+_\mathsf{M} \cup |\varphi|^\circ_\mathsf{M}} }\cup 
\pa{|\varphi|^+_\mathsf{M} \cap |\varphi|^-_\mathsf{M} } \\
& = 
\pa{|\varphi|^-_\mathsf{M} \cap \pa{|\varphi|^+_\mathsf{M} \cup |\varphi|^\circ_\mathsf{M}}   }\cup
\pa{|\varphi|^-_\mathsf{M} \cap \pa{X \setminus \pa{|\varphi|^+_\mathsf{M} \cup |\varphi|^\circ_\mathsf{M}}}}\\
& =
|\varphi|^-_\mathsf{M} \cap 
\pa{
    \pa{|\varphi|^+_\mathsf{M}\cup |\varphi|^\circ_\mathsf{M}} \cup
    \pa{X \setminus \pa{|\varphi|^+_\mathsf{M}\cup |\varphi|^\circ_\mathsf{M}}}
    }\\
& =
|\varphi|^-_\mathsf{M} \cap X\\
&=
|\varphi|^-_\mathsf{M}.  
\end{aligned}$

\item $\begin{aligned}[t]
B_{\varphi}^\circ \cup  D_{\varphi}^\circ
& = 
\pa{|{\varphi}|_\mathsf{M}^+ \cap |{\varphi}|_\mathsf{M}^\circ} \cup
\pa{|{\varphi}|_\mathsf{M}^-\cap|{\varphi}|_\mathsf{M}^\circ }\\
& = 
\pa{|{\varphi}|_\mathsf{M}^+ \cup |{\varphi}|_\mathsf{M}^-} \cap
|\varphi|_\mathsf{M}^\circ\\
& = 
|\varphi|_\mathsf{M}^\circ. 
\end{aligned}$

\item $\begin{aligned}[t]
B^\circ_\varphi
& = 
|\varphi|_\mathsf{M}^+\cap|\varphi|_\mathsf{M}^\circ\\
& = 
|\varphi|_\mathsf{M}^+\cap|{\circ}\varphi|_\mathsf{M}^+\\
& =
|\varphi\land\circ\varphi|_\mathsf{M}^+\\
& = 
B^\circ_{\varphi\land{\circ}\varphi}\cup 
B_{\varphi\land{\circ}\varphi}\cup 
C_{\varphi\land{\circ}\varphi}.
\end{aligned}$
\end{enumerate}

For the proof of the remaining properties, it suffices to consider the truth tables of the connectives and to observe that the expression $\mathcal{R}_\gamma$ represents the region where the formula $\gamma$ takes the truth value assigned for the region $\mathcal{R}$. Note that the letter $\mathcal{R}$ can be replaced by any of the regions into which the space is partitioned given any formula, namely $B^\circ, B, C, D, D^\circ, U$. Moreover, as seen in Observation \ref{obs:identificacion-regiones-valores}, each of these regions is identified with one and only one truth value: $B^\circ$ with $\mathsf{T}$, $B$ with $\mathsf{t}$, $C$ with $\mathsf{b}$, $D$ with $\mathsf{f}$, $D^\circ$ with $\mathsf{F}$, and $U$ with $\mathsf{n}$. Thus, for instance in the item 12, when we write $B_{\varphi \land \psi}$, we refer to the region in which the formula $\varphi \land \psi$ takes the truth value $\mathsf{t}$. Directly from the truth table, one can see that this occurs in two ways: either when $\varphi$ takes the value $\mathsf{t}$ and $\psi$ takes either $\mathsf{T}$ or $\mathsf{t}$, or when $\psi$ takes the value $\mathsf{t}$ and $\varphi$ takes either $\mathsf{T}$ or $\mathsf{t}$. This is expressed as:
\[
\bigl( B_\varphi \cap (B_\psi^\circ \cup B_\psi) \bigr) \cup \bigl( B_\psi \cap (B_\varphi^\circ \cup B_\varphi) \bigr).
\]

\end{proof}

\begin{example} \label{ex:relig} \textbf{[Theistic Belief]}
Let $X$ be a sample space consisting of the residents of a specific country, and consider the following event $\varphi:=$ ``An individual believes in the existence of God''. Let $\mathsf{M} = \langle X, v \rangle$ be  a  twist model such that $v(\varphi) = \pa{|\varphi|_\mathsf{M}^+, |\varphi|_\mathsf{M}^-, |\varphi|_\mathsf{M}^\circ}$. Then, recalling Definition~\ref{def:particion6valores} and Figure~\ref{fig:nuevas_regiones}:

\begin{itemize}

\item[-] $B_\varphi^\circ$ consists of all the  steadfast theists (i.e., they claim to have absolute certainty about God)

\item[-] $B_\varphi$ consists of all the theists without firm belief (i.e., they believe in God but they are not fully convinced)

\item[-] $C_\varphi$ consists of all the agnostics (i.e.: they have reasons both for and against belief in God, but remain uncertain)

\item[-] $D_\varphi$ consists of all the agnostics atheists (i.e.: they do not believe in God, but admit that they are not certain and could be wrong)

\item[-] $D_\varphi^\circ$ consists of all the  convinced atheists (i.e., they claim absolute certainty that God does not exist)

\item[-] $U_\varphi$ consists of all the persons who have no reasons for or against God's existence (for instance, ignostics or non-cognitivists).

\end{itemize}

Note that:

\begin{itemize}

\item[-]   $|\varphi|_\mathsf{M}^+=B_\varphi^\circ \cup B_\varphi \cup C_\varphi$  is the set of people who have reasons for believing in God

\item[-]   $|\varphi|_\mathsf{M}^-=D_\varphi^\circ \cup D_\varphi \cup C_\varphi$  is the set of people who have reasons not to believe in God

\item[-]  $|\varphi|_\mathsf{M}^\circ=B_\varphi^\circ \cup D_\varphi^\circ$  is the set of people with firm convictions about God's existence (both theists and atheists).

\end{itemize}

\end{example}

\section{Probability functions using \texorpdfstring{$\letkp$}{}}  \label{sect:prob-functions}

The previous sections were dedicated to describing the logical aspects of \letkp, with emphasis on the role of twist models. In this section, we show how to induce probability functions for reasoning based on \letkp. This will be done in two different, though equivalent, ways: first, a syntactical (or Carnapian) approach, based on axioms or principles that guide probability functions for formulas (which describe events) in \letkp; second, a semantical approach, in which probability functions over formulas in \letkp\ are induced by twist models. Each type of probability function will be defined in three variants: one-dimensional (assigning a probability value in the real interval $[0,1]$ to each formula); three-dimensional (assigning a value in $[0,1]^3$); and six-dimensional (assigning a value in $[0,1]^6$). From the semantical perspective, the latter two have a very natural interpretation, in terms of the (three or six) regions associated with each formula by a valuation in a twist model. In turn, the axiomatization of this kind of probability function is more complex, especially in the six-dimensional case, as will be shown below. As discussed in the Introduction, the present developments constitute a generalization of the proposal in~\cite{klein2021probabilities}, moving from the four-valued logic \fde\ to the six-valued logic $\letkp$. The main novelty is the introduction of a natural notion of three-dimensional probability functions based on twist structures, which makes the definition of six-valued probability functions more evident.

\subsection{Syntactical definition}
In \cite{weatherson2003probability}, Weatherson proposes a generalization of Kolmogorov's axioms for probability calculus to define probability functions for an arbitrary logic. When applied to classical logic in particular, the proposed axioms allow one to recover standard or classical probability functions.

Using the same axioms, probability functions have been studied for various non-classical logics. For instance, for intuitionistic logic in \cite{weatherson2003probability}, for the paraconsistent logics $Ci$ in \cite{bueno2016paraconsistent}, for $Cie$ in \cite{carnielli2017paraconsistent}, and more recently for $\fde$ in \cite{klein2021probabilities}, and for an extension of $\fde$ called $\LETF$ in \cite{rodrigues2021measuring}. Here we shall address the case of the logic $\letkp$.

\begin{definition}\label{def:funcion_de_probabilidad}
\textbf{(Probability function for $\letkp$)}
A probability function $p$ for the logic $\letkp$ is a function $p: \forsigma \longrightarrow [0,1]$ such that for any $\varphi, \psi \in \forsigma$:
\begin{description}
\item[(Ax1.1)] If $\vdash_{\letkp} \varphi$, then $p \pa{\varphi} = 1$.
\item[(Ax1.2)] If for every $\psi$ it holds that $\varphi \vdash_{\letkp} \psi$, then $p \pa{\varphi} = 0$.
\item[(Ax1.3)] $\varphi \vdash_{\letkp} \psi$ implies that $p \pa{\varphi} \leq p \pa{\psi}$.
\item[(Ax1.4)] $p \pa{\varphi} + p \pa{\psi} = p \pa{\varphi \land \psi} + p \pa{\varphi \lor \psi}$.
\end{description}
We denote by $\mathbb{P}_{1}$ the class of probability functions for $\letkp$. 
\end{definition}

\begin{proposition}\label{prop:propiedades-funcion-probabilidad}
Let $p$ be a probability function for $\letkp$.
Some immediate consequences of the previous definition are the following:
\begin{enumerate}
\item If $\varphi$ is a bottom particle in $\letkp$, then $p\pa{\varphi} = 0$.
\item If $\varphi \dashv \vdash_{\letkp} \psi$, then $p \pa{\varphi} = p \pa{\psi}$.\footnote{$\varphi \dashv \vdash_{L} \psi$ denotes that $\varphi \vdash_{L} \psi$ and $\psi \vdash_{L} \varphi$.}
\item $p \pa{\varphi \land \neg\varphi \land \circ\varphi} = 0$.
\item $p \pa{\circ\varphi} \leq p \pa{\varphi \lor \neg\varphi}$.
\item $p \pa{\varphi \lor {\sim}\varphi} = 1$.
\item $p \pa{\varphi \land {\sim}\varphi} = 0$.
\item $p\pa{{\sim}\varphi} = 1 - p \pa{\varphi}$.
\item If $\psi \vdash_{\letkp} \varphi$, then $p \pa{\varphi \land \psi} = p \pa{\psi}$.
\item In general, $p \pa{\varphi \land {\sim}\psi} = p \pa{\varphi} - p \pa{\varphi\land\psi}$, and in particular, if $\varphi$ is derivable from $\psi$ in $\letkp$, then $p \pa{\varphi \land {\sim}\psi} = p \pa{\varphi} - p \pa{\psi}$.
\item If $\psi_1 \vdash_{\letkp} \varphi$ and $\psi_2 \vdash_{\letkp} \varphi$, then $p \pa{\varphi \land {\sim}\pa{\psi_1 \lor \psi_2}} = p \pa{\varphi} - p\pa{\psi_1} - p\pa{\psi_2}+p\pa{\psi_1\land \psi_2}$.
\end{enumerate}
\end{proposition}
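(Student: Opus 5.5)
The plan is to derive each item directly from axioms (Ax1.1)--(Ax1.4), using a few derivability facts about $\letkp$. Each such fact will be checked semantically: by soundness and completeness with respect to $\mathcal{M}_6$ (equivalently, twist models), it suffices to inspect the truth tables or the extension properties of Proposition~\ref{prop:propiedades-extensiones}.

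\textbf{Items 1--4.} Item~1 is (Ax1.2) read for a bottom particle, i.e.\ a formula from which every $\psi$ follows. Item~2 follows from (Ax1.3) applied in both directions. For item~3, note that $\varphi\land\neg\varphi\land\circ\varphi$ explodes by $EXP^\circ$ (together with $\land E$), so it is a bottom particle and item~1 applies. Item~4 follows from (Ax1.3) and $PEM^\circ$: $\circ\varphi\vdash\varphi\lor\neg\varphi$.

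\textbf{Items 5--7.} For item~5, $\vdash \varphi\lor{\sim}\varphi$, where ${\sim}\varphi=\varphi\to\bot$; this is exactly rule ${\rightarrow}CL$ instantiated with $\psi=\bot$, so (Ax1.1) gives the value $1$. For item~6, $\varphi\land{\sim}\varphi\vdash\bot$ by $\land E$ and ${\rightarrow}E$. Since $\bot$ is a bottom particle, so is $\varphi\land{\sim}\varphi$, and item~1 gives the value $0$. Item~7 then comes from (Ax1.4) with $\psi={\sim}\varphi$:
\[
p(\varphi)+p({\sim}\varphi)=p(\varphi\land{\sim}\varphi)+p(\varphi\lor{\sim}\varphi)=0+1 .
\]

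\textbf{Items 8--10.} For item~8, if $\psi\vdash\varphi$ then $\varphi\land\psi\dashv\vdash\psi$, and item~2 applies. For item~9, apply (Ax1.4) to the pair $\varphi\land\psi$ and $\varphi\land{\sim}\psi$:
\begin{itemize}
\item their conjunction is a bottom particle (it contains $\psi\land{\sim}\psi$), so it has probability $0$;
\item their disjunction is interderivable with $\varphi$, which is checked semantically: the positive extension is $|\varphi|^+\cap(|\psi|^+\cup(X\setminus|\psi|^+))=|\varphi|^+$, and designation in the matrix depends only on the first coordinate.
\end{itemize}
This yields $p(\varphi\land{\sim}\psi)=p(\varphi)-p(\varphi\land\psi)$, and the special case follows from item~8. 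For item~10, apply item~9 with $\psi_1\lor\psi_2$ in place of $\psi$; since $\psi_1\lor\psi_2\vdash\varphi$ by $\lor E$, item~8 gives $p(\varphi\land(\psi_1\lor\psi_2))=p(\psi_1\lor\psi_2)$, and (Ax1.4) expands this into $p(\psi_1)+p(\psi_2)-p(\psi_1\land\psi_2)$.

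\textbf{Main obstacle.} The only delicate point is justifying the interderivabilities used in items~9 and~10, such as $(\varphi\land\psi)\lor(\varphi\land{\sim}\psi)\dashv\vdash\varphi$. These should not be read off from the $\circ$-coordinate equations, since the third coordinates differ. The justification is that consequence in $\mathcal{M}(\mathcal{B}_X)$ only looks at first coordinates, and first coordinates behave as a Boolean algebra with ${\sim}$ as complement. I would state this reduction once and reuse it for all items.
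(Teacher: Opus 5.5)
Your proposal is correct and follows the paper's proof essentially item by item. It uses the same rules ($EXP^\circ$, $PEM^\circ$, ${\rightarrow}CL$, ${\rightarrow}E$) and the same (Ax1.4) decomposition $\varphi \dashv\vdash (\varphi\land\psi)\lor(\varphi\land{\sim}\psi)$ for items 9--10. Your only addition is the semantic justification of that interderivability through first coordinates, which the paper simply asserts.
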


\begin{proof}
By employing the properties from Definition \ref{def:funcion_de_probabilidad} together with the following observations, each statement follows respectively:
\begin{enumerate}
\item As a consequence of $(Ax1.2)$.
\item As a consequence of $(Ax1.3)$.
\item As a consequence of rule $EXP^\circ$, $\varphi \land \neg\varphi \land \circ\varphi$ is a bottom particle.
\item As a consequence of rule $PEM^\circ$.
\item $\vdash_{\letkp} \varphi \lor {\sim}\varphi$ due to rule $\rightarrow CL$.
\item As a consequence of rule $\rightarrow E$, $\varphi \land {\sim}\varphi$ is a bottom particle.
\item Using items 5 and 6 of this proposition together with $(Ax1.4)$ of the definition.
\item In one hand we have that $\psi \vdash_{\letkp} \psi$, and by hypothesis $\psi \vdash_{\letkp} \varphi$, then $\psi \vdash_{\letkp} \varphi \land \psi$. On the other hand $\varphi \land \psi \vdash_{\letkp} \psi$. As a consequence ot this $\varphi \land \psi \dashv \vdash \psi$, therefore $p \pa{\varphi \land \psi} = p \pa{\psi}$.
\item It holds that $\varphi \dashv \vdash \pa{\varphi \land \psi} \lor \pa{\varphi \land {\sim}\psi}$, then:
$$\begin{aligned}[t]
p \pa{\varphi} 
& = 
p \pa{\pa{\varphi \land \psi} \lor \pa{\varphi \land {\sim}\psi} }\\
& = 
p \pa{\varphi \land \psi} + p \pa{\varphi \land {\sim}\psi}- p \pa{\pa{\varphi \land \psi} \land \pa{\varphi \land {\sim}\psi} }\\
& =
p \pa{\varphi \land \psi} + p \pa{\varphi \land {\sim}\psi}.
\end{aligned}$$     
Therefore $p \pa{\varphi \land {\sim}\psi} = p \pa{\varphi} - p \pa{\varphi\land\psi}$, and if $\psi \vdash_{\letkp} \varphi$, then $p \pa{\varphi \land \psi} = p \pa{\psi}$, so in this case $p \pa{\varphi \land {\sim}\psi} = p \pa{\varphi} - p \pa{\psi}$.
\item By hypothesis $\psi_1 \vdash_{\letkp} \varphi$ and $\psi_2 \vdash_{\letkp} \varphi$, then also holds that $\psi_1 \lor\psi_2\vdash_{\letkp} \varphi$. Using the previous item of this proposition as well as $Ax1.4$ we have the following:
$$\begin{aligned}[t]
p \pa{\varphi \land {\sim}\pa{\psi_1 \lor \psi_2}} 
& = 
p \pa{\varphi}-p\pa{\psi_1 \lor \psi_2} \\
& =
p \pa{\varphi} - p\pa{\psi_1} - p\pa{\psi_2}+p\pa{\psi_1\land \psi_2}.
\end{aligned}$$    
\end{enumerate}
\end{proof}

Although Definition \ref{def:funcion_de_probabilidad} allows us to have a notion of probability function using $\letkp$ logic, the fact is that in this case, unlike the classical case, having information about the probability of a statement does not provide information about the probability of its negation nor about the probability of its reliability. More detailed information enables a deeper understanding of the nature of the statement. To achieve a more comprehensive perspective, we can extend our function from one to three dimensions, thereby providing specific information for these two aspects: negation and reliability. This leads to the following definition.

\begin{definition}\label{def:funcion_de_probabilidad-twist}
\textbf{(Twist probability function for $\letkp$)}
A twist probability function $\widetilde{p}$ for $\letkp$ is a function $\widetilde{p}: \forsigma \longrightarrow [0,1]^3$, whose components, $\widetilde{p}_1,\ \widetilde{p}_2$ and $\widetilde{p}_3$, are simply functions with domain $\forsigma$ and codomain $[0,1]$. More explicitly $\widetilde{p}\pa{\varphi} = \pa{\widetilde{p}_1\pa{\varphi}, \widetilde{p}_2\pa{\varphi}, \widetilde{p}_3\pa{\varphi}}$, satisfying that for any $\varphi, \psi \in \forsigma$:
\begin{description}
    \item[(Ax3.1)] $\widetilde{p}_1\pa{\circ{\circ\varphi}} = 1$.
    \item[(Ax3.2)] $\widetilde{p}_1\pa{\varphi \land \neg \varphi \land \circ \varphi} = 0$.
    \item[(Ax3.3)] $\widetilde{p}_1\pa{\neg \varphi}  = \widetilde{p}_2\pa{\varphi}$.
    \item[(Ax3.4)] $\widetilde{p}_1\pa{\circ \varphi} = \widetilde{p}_3\pa{\varphi}$.
    \item[(Ax3.5)] $\varphi \vdash_{\letkp} \psi$ implies that $\widetilde{p}_1\pa{\varphi} \leq \widetilde{p}_1\pa{\psi}$.
    \item[(Ax3.6)] $\widetilde{p}_1\pa{\varphi} + \widetilde{p}_1\pa{\psi} = \widetilde{p}_1\pa{\varphi \land \psi} + \widetilde{p}_1\pa{ \varphi \lor \psi}$.
\end{description}
We denote by $\mathbb{P}_{3}$ the class of twist probability functions for $\letkp$.
\end{definition}

\begin{obs}\label{obs:propiedades de p-1}
From Definition \ref{def:funcion_de_probabilidad-twist}, it follows that $\vdash_{\letkp} \varphi$ implies $\widetilde{p}_1(\varphi) = 1$. Moreover, $\widetilde{p}_1$ satisfies all remaining axioms in Definition \ref{def:funcion_de_probabilidad} and therefore constitutes a probability function for $\letkp$. Consequently, $\widetilde{p}_1$ also fulfills all properties established in Proposition \ref{prop:propiedades-funcion-probabilidad}.
\end{obs}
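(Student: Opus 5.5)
To prove Observation~\ref{obs:propiedades de p-1}, I will show that $\widetilde{p}_1$ satisfies (Ax1.1)--(Ax1.4) of Definition~\ref{def:funcion_de_probabilidad}. Once that is done, the last claim is immediate, because Proposition~\ref{prop:propiedades-funcion-probabilidad} was derived solely from those four axioms.

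The two axioms that are already present need no work. (Ax1.3) is literally (Ax3.5), and (Ax1.4) is literally (Ax3.6).

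For (Ax1.1), assume $\vdash_{\letkp}\varphi$. Since a theorem is derivable from any premise, we have ${\circ\circ}\varphi\vdash_{\letkp}\varphi$. Then (Ax3.5) gives $\widetilde{p}_1({\circ\circ}\varphi)\le\widetilde{p}_1(\varphi)$, and (Ax3.1) makes the left side equal to $1$. Since the codomain is $[0,1]$, this forces $\widetilde{p}_1(\varphi)=1$. This is the first assertion of the observation.

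For (Ax1.2), assume $\varphi\vdash_{\letkp}\psi$ for every $\psi$. Take $\psi:=\varphi\land\neg\varphi\land\circ\varphi$. Then (Ax3.5) yields $\widetilde{p}_1(\varphi)\le\widetilde{p}_1(\varphi\land\neg\varphi\land\circ\varphi)$, and the right side is $0$ by (Ax3.2). Nonnegativity then gives $\widetilde{p}_1(\varphi)=0$.

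The only slightly delicate point is the use of a specific witness formula in these two steps. It is justified because (Ax3.1) and (Ax3.2) are schemas over arbitrary $\varphi$, so we may instantiate them at whichever formula is convenient. We also use the structural fact that a theorem follows from any premise (weakening in natural deduction), which holds for $\vdash_{\letkp}$. With (Ax1.1)--(Ax1.4) established, $\widetilde{p}_1\in\mathbb{P}_1$, and all items of Proposition~\ref{prop:propiedades-funcion-probabilidad} hold for it.
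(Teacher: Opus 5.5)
Your proof is correct and is essentially the paper's own argument, which appears as the proof of Lemma~\ref{lema:APtwist_es_AP}. There, too, (Ax1.1) and (Ax1.2) come from (Ax3.5) combined with (Ax3.1) and (Ax3.2), using the same witnesses ${\circ\circ}\varphi$ and $\varphi\land\neg\varphi\land\circ\varphi$, and (Ax1.3) and (Ax1.4) are read off directly from (Ax3.5) and (Ax3.6).
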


\begin{definition}\label{def:funcion_probabilidad_6V}
\textbf{(6-valued probability function for $\letkp$)}
A 6-valued probability function for $\letkp$ is a function $\widehat{p}: \forsigma \longrightarrow [0,1]^6$, whose components, $b^\circ_\varphi$, $b^{~}_\varphi$, $c^{~}_\varphi$, $d^{~}_\varphi$, $d^\circ_\varphi$, and $u^{~}_\varphi$, are simply functions with domain $\forsigma$ and codomain $[0,1]$, more explicitly 
$\widehat{p}\pa{\varphi} = \pa{b^\circ_\varphi, b^{~}_\varphi,c^{~}_\varphi, d^{~}_\varphi, d^\circ_\varphi,  u^{~}_\varphi}$, satisfying that for any $\varphi, \psi \in \forsigma$ the following conditions hold:
 
\begin{description}
\item[(Ax6.1)] $b^\circ_{\circ\circ\varphi} = 1$.
\item[(Ax6.2)] $d^\circ_{\varphi \land \neg\varphi \land \circ\varphi} = 1$.
\item[(Ax6.3)] $b^\circ_{\varphi} + b^{~}_{\varphi} + c^{~}_{\varphi} + d^{~}_{\varphi} + d^\circ_{\varphi}  + u^{~}_{\varphi} = 1$.
\item[(Ax6.4)] $b^\circ_{\varphi \land \circ\varphi} + b^{~}_{\varphi \land \circ\varphi} + c^{~}_{\varphi \land \circ\varphi} = b^\circ_{\varphi}$.
\item[(Ax6.5)] $b^\circ_{\neg \varphi} = d^\circ_\varphi$, $b^{~}_{\neg \varphi} = d^{~}_\varphi$, and $c^{~}_{\neg \varphi} = c^{~}_\varphi$.
\item[(Ax6.6)] $b^\circ_{\varphi \land \neg\varphi} = 0$, $b^{~}_{\varphi \land \neg\varphi} = 0$, and $c^{~}_{\varphi \land \neg\varphi} = c^{~}_{\varphi}$.
\item[(Ax6.7)] $b^\circ_{\circ \varphi} = b^\circ_\varphi + d^\circ_\varphi$, $b^{~}_{\circ \varphi} = 0$, and $c^{~}_{\circ \varphi} = 0$.
\item[(Ax6.8)] If $\varphi \vdash_{\letkp} \psi$, then $b^\circ_\varphi + b^{~}_\varphi + c^{~}_\varphi \leq b^\circ_\psi + b^{~}_\psi + c^{~}_\psi$.
\item[(Ax6.9)] $\pa{b^\circ_\varphi + b^{~}_\varphi + c^{~}_\varphi} + \pa{b^\circ_\psi + b^{~}_\psi + c^{~}_\psi} = \pa{b^\circ_{\varphi \land \psi} + b^{~}_{\varphi \land \psi} + c^{~}_{\varphi \land \psi}} + \pa{b^\circ_{\varphi \lor \psi} + b^{~}_{\varphi \lor \psi} + c^{~}_{\varphi \lor \psi}}$.
\end{description}

We denote by $\mathbb{P}_{6}$ the class of 6-valued probability functions for $\letkp$.
\end{definition}

Up to this point we have defined three different probability functions for $\letkp$ of 1, 3 and 6 dimensions. At first glance they have similarities but in fact they are equivalent. In order to prove this it is necessary to define bijective translations among them. We first prove in Lemmas \ref{lema:AP_es_APtwist}-\ref{lema:AP6V-induce-AP} that the six translations to be proposed are well defined. 
\begin{lemma}\label{lema:AP_es_APtwist}
    Let $p$ be a probability function for $\letkp$. Then $\widetilde{p}: \forsigma \longrightarrow [0,1]^3$, given by $\widetilde{p}\pa{\varphi} = \pa{p \pa{\varphi}, p(\neg \varphi), p(\circ \varphi)}$, is a twist probability function for $\letkp$.
\end{lemma}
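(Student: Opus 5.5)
The plan is to check the six axioms (Ax3.1)--(Ax3.6) of Definition~\ref{def:funcion_de_probabilidad-twist} one at a time for $\widetilde{p}(\varphi)=\pa{p(\varphi),p(\neg\varphi),p(\circ\varphi)}$. Throughout I will use only that $\widetilde{p}_1=p$ satisfies (Ax1.1)--(Ax1.4) of Definition~\ref{def:funcion_de_probabilidad}, together with the elementary derivabilities in the natural deduction system of Definition~\ref{def:ded_nat-LETk+}.

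The two axioms that relate coordinates hold by construction. For (Ax3.3), $\widetilde{p}_1(\neg\varphi)=p(\neg\varphi)=\widetilde{p}_2(\varphi)$. For (Ax3.4), $\widetilde{p}_1(\circ\varphi)=p(\circ\varphi)=\widetilde{p}_3(\varphi)$. The axioms about a single coordinate transfer directly from $p$. (Ax3.5) is exactly (Ax1.3), and (Ax3.6) is exactly (Ax1.4), since both concern only the first coordinate, which is $p$ itself. Finally, $\widetilde{p}$ maps into $[0,1]^3$ because each coordinate is a value of $p$.

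The two remaining axioms need specific derivations. For (Ax3.1), rule $\circ I$ gives $\vdash_{\letkp}\circ\circ\varphi$ with no premises, so (Ax1.1) yields $p(\circ\circ\varphi)=1$. For (Ax3.2), the formula $\varphi\land\neg\varphi\land\circ\varphi$ is a bottom particle. Using $\land E$ to extract $\varphi$, $\neg\varphi$ and $\circ\varphi$ and then applying $EXP^\circ$, it derives every $\psi$. Hence (Ax1.2) gives the value $0$; this is already recorded as item~3 of Proposition~\ref{prop:propiedades-funcion-probabilidad}.

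None of these steps is a real obstacle, since the lemma is essentially a bookkeeping verification. The only points that need care are the two syntactic facts: that $\circ\circ\varphi$ is a theorem and that $\varphi\land\neg\varphi\land\circ\varphi$ explodes. Both are immediate from the rules $\circ I$ and $EXP^\circ$. Soundness with respect to the twist semantics (items 16--21 of Proposition~\ref{prop:propiedades-extensiones}) gives a semantic double-check of both.
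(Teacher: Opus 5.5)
Your proposal is correct and follows the paper's own proof: it checks (Ax3.1)--(Ax3.6) directly from (Ax1.1)--(Ax1.4), using $\vdash\circ\circ\varphi$ and the explosiveness of $\varphi\land\neg\varphi\land\circ\varphi$, with (Ax3.3)--(Ax3.4) holding by construction. One small citation slip in your aside: the semantic facts about $\circ\circ\varphi$ and $\varphi\land\neg\varphi\land\circ\varphi$ are items 21--26 of Proposition~\ref{prop:propiedades-extensiones}, not 16--21, though that remark plays no role in the argument.
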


\begin{proof}
Since $p$ is a probability function for $\letkp$, it follows that for any $\psi \in \forsigma$, $p \pa{\psi} \in [0,1]$. Consequently, $p \pa{\varphi}, p \pa{\neg\varphi}, p \pa{\circ\varphi} \in [0,1]$. Let us verify that $\widetilde{p}$ satisfies the six points of Definition \ref{def:funcion_de_probabilidad-twist}.
\begin{enumerate}
    \item We have that $\vdash_{\letkp} {\circ{\circ\varphi}}$. Therefore, by Definition \ref{def:funcion_de_probabilidad}, $p\pa{{\circ{\circ\varphi}}} = 1$, and thus $\widetilde{p}_1\pa{{\circ{\circ\varphi}}} = 1$.
    \item We have that $\varphi \land \neg\varphi \land \circ \varphi \vdash_{\letkp} \psi$ for any $\psi$. Hence, from Definition \ref{def:funcion_de_probabilidad}, it follows that $p \pa{\varphi \land \neg\varphi \land \circ \varphi} = 0$, and therefore $\widetilde{p}_1\pa{\varphi \land \neg\varphi \land \circ \varphi} = 0$.
    \item We have that $\widetilde{p}\pa{\varphi} = \pa{p \pa{\varphi}, p \pa{\neg\varphi}, p \pa{\circ \varphi}}$ and also that $\widetilde{p}\pa{\neg\varphi} = \pa{p \pa{\neg\varphi}, p \pa{\neg\neg\varphi}, p \pa{\circ \neg\varphi}}$. Consequently, $\widetilde{p}_1\pa{\neg\varphi} = p \pa{\neg\varphi} = \widetilde{p}_2\pa{\varphi}$.
    \item We have that $\widetilde{p}\pa{\varphi} = \pa{p \pa{\varphi}, p \pa{\neg\varphi}, p \pa{\circ \varphi}}$ and also that $\widetilde{p}\pa{\circ\varphi} = \pa{p \pa{\circ\varphi}, p \pa{\neg\circ\varphi}, p \pa{ {\circ{\circ\varphi}}}}$. Therefore, $\widetilde{p}_1 \pa{\circ \varphi} = p \pa{\circ \varphi} = \widetilde{p}_3\pa{\varphi}$.
    \item Assume that $\varphi \vdash_{\letkp} \psi$. Then, by Definition \ref{def:funcion_de_probabilidad}, we have that $p \pa{\varphi} \leq p \pa{\psi}$, and thus $\widetilde{p}_1\pa{\varphi} \leq \widetilde{p}_1\pa{\psi}$.
    \item From Definition \ref{def:funcion_de_probabilidad}, we have that $p \pa{\varphi} + p \pa{\psi} = p \pa{\varphi \land \psi} + p \pa{\varphi \lor \psi}$, and hence $\widetilde{p}_1 \pa{\varphi} + \widetilde{p}_1\pa{\psi} = \widetilde{p}_1\pa{\varphi \land \psi} + \widetilde{p}_1\pa{\varphi \lor \psi}$.
\end{enumerate}
\end{proof}

\begin{lemma}\label{lema:APNETwist-induce-AP6V}
Let $\widetilde{p}$ be a twist probability function for $\letkp$. Then $\widehat{p}: \forsigma \longrightarrow [0,1]^6$ defined by $\widehat{p}\pa{\varphi}=\pa{\widehat{p}_1\pa{\varphi}, \widehat{p}_2\pa{\varphi},\ldots, \widehat{p}_6\pa{\varphi}}$, where:
\begin{center}
\begin{itemize}[nosep]
\begin{multicols}{2}
\item[] $\widehat{p}_1\pa{\varphi} = \widetilde{p}_1\pa{\varphi \land \circ \varphi}$
\item[] $\widehat{p}_3\pa{\varphi} = \widetilde{p}_1\pa{\varphi \land \neg \varphi}$
\item[] $\widehat{p}_5\pa{\varphi} = \widetilde{p}_1\pa{\neg \varphi \land \circ \varphi}$
\item[] $\widehat{p}_2\pa{\varphi} = \widetilde{p}_1\pa{\varphi} - \widetilde{p}_1\pa{\varphi \land \circ \varphi} - \widetilde{p}_1\pa{\varphi \land \neg \varphi}$
\item[] $\widehat{p}_4\pa{\varphi} = \widetilde{p}_1\pa{\neg \varphi} - \widetilde{p}_1\pa{\neg \varphi \land \circ \varphi} - \widetilde{p}_1\pa{\varphi \land \neg \varphi}$
\item[] $\widehat{p}_6\pa{\varphi} = 1 - \widetilde{p}_1\pa{\varphi \lor \neg \varphi}$
\end{multicols}
\end{itemize}
\end{center}
is a 6-valued probability function for $\letkp$. Hence $\widehat{p}\pa{\varphi} = \pa{b_\varphi^\circ,b_\varphi, c_\varphi, d_\varphi, d_\varphi^\circ, u_\varphi}$.
\end{lemma}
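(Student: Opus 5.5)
Write $q=\widetilde{p}_1$. By Observation~\ref{obs:propiedades de p-1}, $q$ is a probability function for $\letkp$, so Proposition~\ref{prop:propiedades-funcion-probabilidad} is available for it: invariance under interderivability, monotonicity, $q(\top)=1$, $q(\bot)=0$, and inclusion--exclusion. The plan is to check each axiom (Ax6.1)--(Ax6.9) by reducing it to a statement about $q$. This in turn reduces to exhibiting suitable $\letkp$-derivabilities or interderivabilities. Whenever a needed equivalence is not obvious from the natural deduction rules, we will verify it semantically, by checking it in the matrix $\mathcal{M}_6$ (equivalently in twist models via Observation~\ref{obs:identificacion-regiones-valores}) and invoking completeness.

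First we show that each component lies in $[0,1]$ and that (Ax6.3) holds. Components 1, 3 and 5 are values of $q$, so they lie in $[0,1]$. For component 6, $\varphi\lor\neg\varphi$ has $q$-value at most $1$, so $u_\varphi\in[0,1]$.

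For component 2, apply (Ax1.4) to $\varphi\land\circ\varphi$ and $\varphi\land\neg\varphi$. Their conjunction is $\varphi\land\neg\varphi\land\circ\varphi$, a bottom particle, so
\[q(\varphi\land\circ\varphi)+q(\varphi\land\neg\varphi)=q\bigl((\varphi\land\circ\varphi)\lor(\varphi\land\neg\varphi)\bigr)\le q(\varphi)\]
by monotonicity. Hence $b_\varphi\ge 0$, and the symmetric argument gives $d_\varphi\ge0$.

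For (Ax6.3), summing the six components gives
\[1+q(\varphi)+q(\neg\varphi)-q(\varphi\land\neg\varphi)-q(\varphi\lor\neg\varphi),\]
which equals $1$ by (Ax1.4) (that is, (Ax3.6)).

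Most of the remaining axioms only need the observation that $b^\circ_\varphi+b_\varphi+c_\varphi=q(\varphi)$. Indeed, (Ax6.8) and (Ax6.9) are then exactly (Ax3.5) and (Ax3.6). (Ax6.4) becomes $q(\varphi\land\circ\varphi)=q(\varphi\land\circ\varphi)$, which is trivial. The individual axioms then go as follows.
\begin{itemize}
\item (Ax6.1): $b^\circ_{\circ\circ\varphi}=q(\circ\circ\varphi\land\circ\circ\circ\varphi)=1$, since this formula is a theorem by rule $\circ I$.
\item (Ax6.2): $d^\circ_\bot=q(\neg\bot\land\circ\bot)$ with $\bot=\varphi\land\neg\varphi\land\circ\varphi$. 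Semantically $\bot$ always takes the value $\mathsf{F}$, so $\neg\bot\land\circ\bot$ is a theorem, and (Ax3.1) together with monotonicity gives $1$.
\item (Ax6.5): this uses $\neg\neg\varphi\dashv\vdash\varphi$ and $\circ\neg\varphi\dashv\vdash\circ\varphi$ (rules $DNI$, $DNE$, $\circ\neg I$, $\circ\neg E$), together with commutativity of $\land$.
\item (Ax6.6): $(\varphi\land\neg\varphi)\land\circ(\varphi\land\neg\varphi)$ is a bottom particle, since in $\mathcal{M}_6$ the value of $\varphi\land\neg\varphi$ is never $\mathsf{T}$. Next, $b_{\varphi\land\neg\varphi}=0$ reduces, via $(\varphi\land\neg\varphi)\land\neg(\varphi\land\neg\varphi)\dashv\vdash\varphi\land\neg\varphi$, to $q(\varphi\land\neg\varphi)-0-q(\varphi\land\neg\varphi)=0$. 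That same interderivability also gives $c_{\varphi\land\neg\varphi}=c_\varphi$.
\item (Ax6.7): $\circ\varphi\land\circ\circ\varphi\dashv\vdash\circ\varphi$. The formula $\circ\varphi\land\neg\circ\varphi$ is a bottom particle, because $\circ\varphi$ only takes the values $\mathsf{T}$ and $\mathsf{F}$. Hence $b^\circ_{\circ\varphi}=q(\circ\varphi)$, $c_{\circ\varphi}=0$ and $b_{\circ\varphi}=0$. It remains to show $q(\circ\varphi)=q(\varphi\land\circ\varphi)+q(\neg\varphi\land\circ\varphi)$. This follows from (Ax1.4), using that $\circ\varphi\dashv\vdash(\varphi\land\circ\varphi)\lor(\neg\varphi\land\circ\varphi)$ (by $PEM^\circ$ and distributivity) and that their conjunction is a bottom particle (by $EXP^\circ$).
\end{itemize}

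The main obstacle is bookkeeping rather than depth: certifying the handful of interderivabilities and bottom particles used in (Ax6.2), (Ax6.6) and (Ax6.7). For these we intend to rely on the six-valued tables and completeness, rather than constructing natural deduction derivations.
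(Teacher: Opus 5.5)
Your proposal is correct and follows the paper's proof essentially step by step. It bounds the components, sums them for (Ax6.3), reduces (Ax6.4), (Ax6.8) and (Ax6.9) to the identity $b^\circ_\varphi+b_\varphi+c_\varphi=\widetilde{p}_1(\varphi)$, and handles the remaining axioms through the same interderivabilities and bottom particles. The only differences are cosmetic. You certify those facts semantically in $\mathcal{M}_6$ (legitimate by completeness), whereas the paper appeals to natural deduction rules. For (Ax6.7) you use $\circ\varphi\dashv\vdash(\varphi\land\circ\varphi)\lor(\neg\varphi\land\circ\varphi)$ directly, instead of the paper's detour through $\varphi\lor\neg\varphi\dashv\vdash\varphi\lor\neg\varphi\lor\circ\varphi$.
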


\begin{proof}
First, let us show that each component of $\widehat{p}$ takes values in $[0,1]$. Since $\widetilde{p}_1$ is the first component of $\widetilde{p}$, for any formula $\psi \in \forsigma$ we have $\widetilde{p}_1\pa{\psi}\in [0,1]$ and $-\widetilde{p}_1\pa{\psi}\in [-1,0]$. In particular:

\begin{center}
\begin{itemize}[nosep]
\begin{multicols}{2}
\item[] $\widehat{p}_1\pa{\varphi} =  \widetilde{p}_1\pa{\varphi \land \circ \varphi} \in [0,1] $
\item[] $\widehat{p}_5\pa{\varphi} =  \widetilde{p}_1\pa{\neg \varphi \land \circ \varphi}\in [0,1] $
\item[] $\widehat{p}_3\pa{\varphi} =  \widetilde{p}_1\pa{\varphi \land \neg \varphi}\in [0,1] $
\item[] $\widehat{p}_6\pa{\varphi} =  1 - \widetilde{p}_1\pa{\varphi \lor \neg \varphi}\in [0,1]$
\end{multicols}
\end{itemize}
\end{center}

On the other hand, using Observation \ref{obs:propiedades de p-1} and the facts that $\varphi\land\circ\varphi\vdash_{\letkp}\varphi$, $\varphi\land\neg\varphi\vdash_{\letkp}\varphi$, $\neg\varphi\land\circ\varphi\vdash_{\letkp}\neg\varphi$, and $\varphi\land\neg\varphi\vdash_{\letkp}\neg\varphi$, we have:

\begin{center}
\begin{itemize}[nosep]
\item[] $\widehat{p}_2\pa{\varphi} = \widetilde{p}_1\pa{\varphi} - \widetilde{p}_1\pa{\varphi \land \circ \varphi} - \widetilde{p}_1\pa{\varphi \land \neg \varphi}=\widetilde{p}_1\pa{\varphi \land \sim\pa{\pa{\varphi\land\circ\varphi}\lor\pa{\varphi\land\neg\varphi}}}\in [0,1]$
\item[] $\widehat{p}_4\pa{\varphi} = \widetilde{p}_1\pa{\neg\varphi} - \widetilde{p}_1\pa{\neg\varphi \land \circ \varphi} - \widetilde{p}_1\pa{\varphi \land \neg \varphi}=\widetilde{p}_1\pa{\neg\varphi \land \sim\pa{\pa{\neg\varphi\land\circ\varphi}\lor\pa{\varphi\land\neg\varphi}}}\in [0,1]$
\end{itemize}
\end{center}

Now we verify that $\widehat{p}$ satisfies the 9 conditions of Definition \ref{def:funcion_probabilidad_6V}.
\begin{enumerate}
\item By definition, $\widehat{p}_1\pa{\circ{\circ{\varphi}}}= \widetilde{p}_1\pa{\circ{\circ{\varphi}}\land \circ{\circ{\circ{\varphi}}}}$. Moreover, we have $\vdash_{\letkp} \circ{\circ{\varphi}}$ and $\vdash_{\letkp} \circ{\circ\pa{{\circ{\varphi}}}}$, hence $\vdash_{\letkp} \circ{\circ{\varphi}}\land \circ{\circ{\circ{\varphi}}}$. Using Observation \ref{obs:propiedades de p-1}, $\widetilde{p}_1\pa{\circ{\circ{\varphi}}\land \circ{\circ{\circ{\varphi}}}}=1$, so $\widehat{p}_1\pa{\circ{\circ{\varphi}}}=1$.

\item Let $\psi=\varphi\land \neg\varphi\land\circ\varphi$. Then $\widehat{p}_5\pa{\psi}= \widetilde{p}_1\pa{\neg\psi\land\circ\psi}$. Since $\vdash_{\letkp} \neg\psi\land\circ\psi$, again by Observation \ref{obs:propiedades de p-1} we have $\widetilde{p}_1\pa{\neg\psi\land\circ\psi}=1$, and thus $\widehat{p}_5\pa{\psi}=\widehat{p}_5\pa{\varphi\land \neg\varphi\land\circ\varphi}=1$.

\item We show that the sum of the six components equals 1.

$\begin{aligned}[t]
\widehat{p}_1\pa{\varphi}+\widehat{p}_2\pa{\varphi}+\widehat{p}_3\pa{\varphi}+\widehat{p}_4\pa{\varphi}+\widehat{p}_5\pa{\varphi}+\widehat{p}_6\pa{\varphi}
& =
\cancel{\widetilde{p}_1\pa{\varphi \land \circ \varphi}} +
\pa{
\widetilde{p}_1\pa{\varphi} -
\cancel{\widetilde{p}_1\pa{\varphi \land \circ \varphi}} -
\cancel{\widetilde{p}_1\pa{\varphi \land \neg \varphi}}
}+ 
\\
& \quad 
\cancel{\widetilde{p}_1\pa{\varphi \land \neg \varphi}} +
\pa{
\widetilde{p}_1\pa{\neg \varphi} -
\cancel{\widetilde{p}_1\pa{\neg \varphi \land \circ \varphi}} -
\widetilde{p}_1\pa{\varphi \land \neg \varphi}}+
\\
& \quad
\cancel{\widetilde{p}_1\pa{\neg \varphi \land \circ \varphi}} +
\pa{1 - \widetilde{p}_1(\varphi \lor \neg \varphi)}\\
& =
\widetilde{p}_1\pa{\varphi} +
\widetilde{p}_1\pa{\neg\varphi}-
\widetilde{p}_1\pa{\varphi \land \neg \varphi} +
1-\widetilde{p}_1\pa{\varphi \lor \neg \varphi}
\\
& =
\cancel{\widetilde{p}_1(\varphi \lor \neg \varphi)}+1-\cancel{\widetilde{p}_1(\varphi \lor \neg \varphi)}
\\
&=1.
\end{aligned}$

\item We have:

{\small
$\begin{aligned}[t] 
\widehat{p}_1\pa{\varphi\land\circ\varphi}+ \widehat{p}_2\pa{\varphi\land\circ\varphi}+ \widehat{p}_3\pa{\varphi\land\circ\varphi}
& = 
\widetilde{p}_1\pa{\pa{\varphi \land \circ \varphi} \land \circ\pa{\varphi \land \circ \varphi}} +\\
& 
\pa{
\widetilde{p}_1\pa{\varphi \land \circ \varphi}-
\widetilde{p}_1\pa{\pa{\varphi \land \circ \varphi} \land \circ \pa{\varphi \land \circ \varphi}}-
\widetilde{p}_1\pa{\pa{\varphi \land \circ \varphi} \land \neg \pa{\varphi \land \circ \varphi} }
} \\
& +
\widetilde{p}_1\pa{\pa{\varphi \land \circ \varphi} \land \neg\pa{\varphi \land \circ \varphi}}\\
& =
\cancel{\widetilde{p}_1\pa{\pa{\varphi \land \circ \varphi} \land \circ\pa{\varphi \land \circ \varphi}}}+\\
& 
\pa{
\widetilde{p}_1\pa{\varphi \land \circ \varphi} -
\cancel{\widetilde{p}_1\pa{\pa{\varphi \land \circ \varphi} \land \circ \pa{\varphi \land \circ \varphi}}} -
\cancel{\widetilde{p}_1\pa{\pa{\varphi \land \circ \varphi} \land \neg  \pa{\varphi \land \circ \varphi}}}
}\\
& +
\cancel{\widetilde{p}_1\pa{\pa{\varphi \land \circ \varphi} \land \neg \pa{\varphi \land \circ \varphi}}}\\
& =
\widetilde{p}_1\pa{\varphi \land \circ \varphi}\\
& =
\widehat{p}_1\pa{\varphi}.
\end{aligned}$}

\item Three claims must be proved: 
$\widehat{p}_1\pa{\neg\varphi}=\widehat{p}_5\pa{\varphi}$, $\widehat{p}_2\pa{\neg\varphi}=\widehat{p}_4\pa{\varphi}$, and $\widehat{p}_3\pa{\neg\varphi}=\widehat{p}_3\pa{\varphi}$.
\begin{itemize}
\item We have $\widehat{p}_1\pa{\neg\varphi}= \widetilde{p}_1\pa{\neg\varphi\land\circ\neg\varphi}$ and $\widehat{p}_5\pa{\varphi}= \widetilde{p}_1\pa{\neg\varphi\land\circ\varphi}$. Since $\circ\varphi \dashv \vdash_{\letkp} \circ \neg\varphi$, it follows that $\neg\varphi\land\circ\varphi \dashv \vdash_{\letkp} \neg\varphi\land\circ \neg\varphi$, hence $\widetilde{p}_1\pa{\neg\varphi\land\circ\neg\varphi}= \widetilde{p}_1\pa{\neg\varphi\land\circ\varphi}$ and thus $\widehat{p}_1\pa{\neg\varphi}=\widehat{p}_5\pa{\varphi}$.

\item We have $\widehat{p}_2\pa{\neg\varphi}= \widetilde{p}_1\pa{\neg\varphi} - \widetilde{p}_1\pa{\neg\varphi \land \circ \neg\varphi} - \widetilde{p}_1\pa{\neg\varphi \land \neg \neg\varphi}$ and $\widehat{p}_4\pa{\varphi} = \widetilde{p}_1\pa{\neg \varphi} - \widetilde{p}_1\pa{\neg \varphi \land \circ \varphi} - \widetilde{p}_1\pa{\varphi \land \neg \varphi}$. From the previous point, $\widetilde{p}_1\pa{\neg\varphi\land\circ\neg\varphi}= \widetilde{p}_1\pa{\neg\varphi\land\circ\varphi}$. Moreover, $\neg\neg\varphi \dashv \vdash_{\letkp}\varphi$, hence $\neg\varphi\land\neg\neg\varphi \dashv \vdash_{\letkp} \neg\varphi\land\varphi$, so $\widetilde{p}_1\pa{\neg\varphi\land\neg\neg\varphi}= \widetilde{p}_1\pa{\neg\varphi\land\varphi}$. Therefore $\widehat{p}_2\pa{\neg\varphi}=\widehat{p}_4\pa{\varphi}$.

\item We have $\widehat{p}_3\pa{\neg\varphi}=\widetilde{p}_1\pa{\neg\varphi \land \neg \neg\varphi}$ and $\widehat{p}_3\pa{\varphi} =\widetilde{p}_1\pa{\varphi \land \neg \varphi}$. From the previous point, $\widetilde{p}_1\pa{\neg\varphi\land\varphi}$ $=\widetilde{p}_1\pa{\neg\varphi \land\neg\neg\varphi}$, hence $\widehat{p}_3\pa{\neg\varphi}=\widehat{p}_3\pa{\varphi}$.
\end{itemize}

\item Three claims must be proved: $\widehat{p}_1\pa{\varphi\land\neg\varphi}=0$, $\widehat{p}_2\pa{\varphi\land\neg\varphi}=0$, and $\widehat{p}_3\pa{\varphi\land\neg\varphi}=\widehat{p}_3\pa{\varphi}$.

\begin{itemize}

\item As a consequence of the rule $\cor{{T}{\land}E_l}$ we have $\pa{\varphi\land\neg\varphi}^T \vdash_{\letkp} \varphi^T$, this written without abbreviations becomes $\pa{\varphi\land\neg\varphi}\land\circ\pa{\varphi\land\neg\varphi} \vdash_{\letkp} \varphi\land\circ\varphi$. Starting from this we can prove $\pa{\varphi\land\neg\varphi}\land\circ\pa{\varphi\land\neg\varphi} \vdash_{\letkp} \varphi\land\neg\varphi\land\circ\varphi$. As a consequence of Definition \ref{def:funcion_de_probabilidad-twist} it follows that $\widetilde{p}_1\pa{\pa{\varphi\land\neg\varphi}\land\circ\pa{\varphi\land\neg\varphi}} \leq \widetilde{p}_1\pa{\varphi\land\neg\varphi\land\circ\varphi} = 0$. But $\widehat{p}_1\pa{\varphi\land\neg\varphi} = \widetilde{p}_1\pa{\pa{\varphi\land\neg\varphi}\land\circ\pa{\varphi\land\neg\varphi}}$, then $\widehat{p}_1\pa{\varphi\land\neg\varphi} =0$.

\item Using De Morgan's laws, double‑negation elimination, and simplifying, one can see that $\pa{\varphi\land\neg\varphi}\land\neg\pa{\varphi\land\neg\varphi}$ $\dashv \vdash_{\letkp}\varphi\land\neg\varphi$, hence $\widetilde{p}_1\pa{\pa{\varphi\land\neg\varphi}\land\neg\pa{\varphi\land\neg\varphi}}=\widetilde{p}_1\pa{\varphi\land\neg\varphi}$. Then

$\begin{aligned}[t]
\widehat{p}_2\pa{\varphi\land\neg\varphi}
& = 
\widetilde{p}_1\pa{\varphi\land\neg\varphi}-
\widetilde{p}_1\pa{\pa{\varphi\land\neg\varphi}\land\circ\pa{\varphi\land\neg\varphi}}-
\widetilde{p}_1\pa{\pa{\varphi\land\neg\varphi}\land\neg\pa{\varphi\land\neg\varphi}}\\
& = 
\cancel{\widetilde{p}_1\pa{\varphi\land\neg\varphi}}-\widehat{p}_1\pa{\varphi\land\neg\varphi}-\cancel{\widetilde{p}_1\pa{\varphi\land\neg\varphi}}\\
& =
-\widehat{p}_1\pa{\varphi\land\neg\varphi}\\
& =
0.
\end{aligned}$

\item Since $\widetilde{p}_1\pa{\pa{\varphi\land\neg\varphi}\land\neg\pa{\varphi\land\neg\varphi}}=\widetilde{p}_1\pa{\varphi\land\neg\varphi}$, hence:

$\begin{aligned}[t]
\widehat{p}_3\pa{\varphi\land\neg\varphi}
& = 
\widetilde{p}_1\pa{\pa{\varphi\land\neg\varphi}\land\neg\pa{\varphi\land\neg\varphi}}\\
& = 
\widetilde{p}_1\pa{\varphi\land\neg\varphi}\\
& =
\widehat{p}_3\pa{\varphi}.
\end{aligned}$
\end{itemize}

\item Three claims must be proved: $\widehat{p}_1\pa{\circ\varphi}= \widehat{p}_1\pa{\varphi}+\widehat{p}_5\pa{\varphi}$, $\widehat{p}_2\pa{\circ\varphi}=0$, and $\widehat{p}_3\pa{\circ\varphi}=0$.
\begin{itemize}
\item Since $\vdash_{\letkp}\circ{\circ\varphi}$ and $\vdash_{\letkp}\circ\varphi \lor\circ{\circ\varphi}$, we have $\widetilde{p}_1\pa{\circ{\circ\varphi}}=1$ and $\widetilde{p}_1\pa{\circ\varphi \lor\circ{\circ\varphi}}=1$, so:

$\begin{aligned}[t]
\widehat{p}_1\pa{\circ\varphi}
& =
\widetilde{p}_1\pa{\circ\varphi \land \circ{\circ\varphi}}\\
& =
\widetilde{p}_1\pa{\circ\varphi}+\widetilde{p}_1\pa{\circ{\circ\varphi}}-\widetilde{p}_1(\circ\varphi \lor\circ{\circ\varphi})\\
& =
\widetilde{p}_1\pa{\circ\varphi}+1-1\\
& =
\widetilde{p}_1\pa{\circ\varphi}.
\end{aligned}$

It can be shown that $\varphi\lor\neg\varphi \dashv \vdash_{\letkp} \varphi\lor\neg\varphi\lor{\circ{\varphi}}$, hence $\widetilde{p}_1(\varphi\lor\neg\varphi)= \widetilde{p}_1(\varphi\lor\neg\varphi\lor{\circ{\varphi}})$. 
Therefore:

$\begin{aligned}[t]
\widehat{p}_1\pa{\varphi}+\widehat{p}_5\pa{\varphi}
&=
\widetilde{p}_1\pa{{\varphi}\land {\circ{\varphi}}}+\widetilde{p}_1\pa{\neg\varphi \land {\circ\varphi}}\\
&=
\widetilde{p}_1\pa{\pa{{\varphi}\land {\circ{\varphi}}}\land \pa{\neg\varphi \land {\circ\varphi}}} +
\widetilde{p}_1\pa{\pa{{\varphi}\land {\circ{\varphi}}}\lor \pa{\neg\varphi \land {\circ\varphi}}}\\
&=
\widetilde{p}_1\pa{{\varphi}\land\neg\varphi \land {\circ\varphi}}+\widetilde{p}_1\pa{{\circ\varphi}\land \pa{\varphi \lor {\neg\varphi}}}\\
&=
0+ \widetilde{p}_1\pa{{\circ\varphi}}+\cancel{\widetilde{p}_1\pa{\varphi\lor\neg\varphi}}-\cancel{\widetilde{p}_1\pa{{\circ\varphi}\lor \pa{\varphi \lor {\neg\varphi}}}}\\
&=
\widetilde{p}_1\pa{{\circ\varphi}}.
\end{aligned}$

Thus $\widehat{p}_1\pa{\circ\varphi}= \widetilde{p}_1\pa{\circ\varphi}=\widehat{p}_1\pa{\varphi}+\widehat{p}_5\pa{\varphi}$.

\item In the first part of this item we implicitly proved that $\widetilde{p}_1\pa{\circ\varphi \land \circ{\circ\varphi}}= \widetilde{p}_1\pa{\circ\varphi}$. Moreover, it can be shown that $\circ\varphi \land \neg\circ\varphi \dashv \vdash_{\letkp}\circ\varphi \land \neg{\circ}\varphi\land\circ{\circ\varphi}$; note that the right-hand side of this equivalence is a bottom particle, hence it must be that $\widetilde{p}_1\pa{\circ\varphi \land \neg \circ\varphi} = 0$. Therefore:

$\begin{aligned}[t]
\widehat{p}_2\pa{\circ\varphi}
&=
\widetilde{p}_1\pa{\circ\varphi} - \widetilde{p}_1\pa{\circ\varphi \land \circ{\circ\varphi}} - \widetilde{p}_1\pa{\circ\varphi \land \neg \circ\varphi}\\
&=
\cancel{\widetilde{p}_1\pa{\circ\varphi}} - \cancel{\widetilde{p}_1\pa{\circ\varphi}} - 0\\
&=
0.
\end{aligned}$

\item Finally, $\widehat{p}_3\pa{\circ\varphi} = \widetilde{p}_1\pa{\circ\varphi \land \neg \circ\varphi}= \widetilde{p}_1\pa{\circ\varphi \land \neg \circ\varphi\land\circ{\circ\varphi}}=0$.
\end{itemize}

\item If $\varphi\vdash_{\letkp} \psi$, then $\widetilde{p}_1\pa{\varphi}\leq \widetilde{p}_1\pa{\psi}$. Moreover, for any formula $\varphi$ we have:
$$
\widehat{p}_1\pa{\varphi}+ \widehat{p}_2\pa{\varphi}+ \widehat{p}_3\pa{\varphi}=\pa{\cancel{\widetilde{p}_1\pa{\varphi \land \circ \varphi}}}+ \pa{\widetilde{p}_1\pa{\varphi} - \cancel{\widetilde{p}_1(\varphi \land \circ \varphi)} - \cancel{\widetilde{p}_1(\varphi \land \neg \varphi)}}+\pa{\cancel{\widetilde{p}_1(\varphi \land \neg \varphi)}}=\widetilde{p}_1\pa{\varphi},
$$
so $\widehat{p}_1\pa{\varphi}+ \widehat{p}_2\pa{\varphi}+ \widehat{p}_3\pa{\varphi}= \widetilde{p}_1\pa{\varphi}\leq \widetilde{p}_1\pa{\psi}=\widehat{p}_1\pa{\psi}+ \widehat{p}_2\pa{\psi}+ \widehat{p}_3\pa{\psi}$.

\item As implicitly shown in the previous point, for any formula $\varphi$, we have that $\widehat{p}_1\pa{\varphi}+ \widehat{p}_2\pa{\varphi}+ \widehat{p}_3\pa{\varphi}=\widetilde{p}_1\pa{\varphi}$. By hypothesis, $\widetilde{p}_1\pa{\varphi} + \widetilde{p}_1\pa{\psi} = \widetilde{p}_1\pa{\varphi \land \psi} + \widetilde{p}_1\pa{ \varphi \lor \psi}$. Hence it follows immediately that:

$\pa{\widehat{p}_1\pa{\varphi}+ \widehat{p}_2\pa{\varphi}+ \widehat{p}_3\pa{\varphi}} + \pa{\widehat{p}_1\pa{\psi}+ \widehat{p}_2\pa{\psi}+ \widehat{p}_3\pa{\psi}} =\\
\pa{\widehat{p}_1\pa{\varphi\land\psi} + \widehat{p}_2\pa{\varphi\land\psi}+ \widehat{p}_3\pa{\varphi\land\psi}} + \pa{\widehat{p}_1\pa{\varphi\lor\psi}+ \widehat{p}_2\pa{\varphi\lor\psi}+ \widehat{p}_3\pa{\varphi\lor\psi}}$.
\end{enumerate}
\end{proof}

\begin{lemma}\label{lema:APNE-induce-AP6V}
Let $p$ be a probability function for $\letkp$. Then $\widehat{p}: \forsigma \longrightarrow [0,1]^6$ defined by $\widehat{p}\pa{\varphi}= \pa{\widehat{p}_1\pa{\varphi}, \widehat{p}_2\pa{\varphi},\ldots, \widehat{p}_6\pa{\varphi}}$, where:\\

$
\begin{array}{ll}
\widehat{p}_1\pa{\varphi} = p\pa{\varphi \land \circ \varphi} & \hspace{2cm} \widehat{p}_2\pa{\varphi} = p\pa{\varphi} - p\pa{\varphi \land \circ \varphi} - p\pa{\varphi \land \neg \varphi}\\
\widehat{p}_3\pa{\varphi} = p\pa{\varphi \land \neg \varphi} & \hspace{2cm} \widehat{p}_4\pa{\varphi} = p\pa{\neg \varphi} - p\pa{\neg \varphi \land \circ \varphi} - p\pa{\varphi \land \neg \varphi}\\
\widehat{p}_5\pa{\varphi} = p\pa{\neg \varphi \land \circ \varphi} & \hspace{2cm} \widehat{p}_6\pa{\varphi} = 1 - p\pa{\varphi \lor \neg \varphi}
\end{array}$

\ \\

\noindent
is a 6-valued probability function for $\letkp$. Hence $\widehat{p}\pa{\varphi} = \pa{b_\varphi^\circ,b_\varphi,c_\varphi,d_\varphi, d_\varphi^\circ, u_\varphi}$.
\end{lemma}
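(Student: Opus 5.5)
The natural route is to compose the two translations already established, rather than re-verify the nine conditions of Definition~\ref{def:funcion_probabilidad_6V} directly. Given a probability function $p$ for $\letkp$, Lemma~\ref{lema:AP_es_APtwist} yields a twist probability function $\widetilde{p}$ with $\widetilde{p}(\varphi)=\pa{p(\varphi),p(\neg\varphi),p(\circ\varphi)}$. In particular, its first component satisfies $\widetilde{p}_1=p$ identically on $\forsigma$. Then Lemma~\ref{lema:APNETwist-induce-AP6V}, applied to this $\widetilde{p}$, produces a 6-valued probability function whose six components are written purely in terms of $\widetilde{p}_1$.

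The key step is to substitute $\widetilde{p}_1=p$ into the six formulas of Lemma~\ref{lema:APNETwist-induce-AP6V}. Doing so, each one becomes exactly the corresponding component in the present statement:
\begin{itemize}
\item $\widetilde{p}_1(\varphi\land\circ\varphi)=p(\varphi\land\circ\varphi)$;
\item $\widetilde{p}_1(\varphi)-\widetilde{p}_1(\varphi\land\circ\varphi)-\widetilde{p}_1(\varphi\land\neg\varphi)=p(\varphi)-p(\varphi\land\circ\varphi)-p(\varphi\land\neg\varphi)$;
\item and likewise for the remaining four, including $1-\widetilde{p}_1(\varphi\lor\neg\varphi)=1-p(\varphi\lor\neg\varphi)$.
\end{itemize}
So the function $\widehat{p}$ defined in the statement coincides pointwise with the function produced by Lemma~\ref{lema:APNETwist-induce-AP6V}. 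It is therefore a 6-valued probability function for $\letkp$, with values in $[0,1]^6$ and satisfying (Ax6.1)–(Ax6.9).

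There is essentially no obstacle here. The only thing to be careful about is that Lemma~\ref{lema:APNETwist-induce-AP6V} uses nothing about $\widetilde{p}$ beyond its first component and the axioms (Ax3.1)–(Ax3.6) governing it. The proof of that lemma appeals only to $\widetilde{p}_1$ and Observation~\ref{obs:propiedades de p-1}, so the composition is legitimate.

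If a self-contained argument were preferred, one could instead repeat the proof of Lemma~\ref{lema:APNETwist-induce-AP6V} verbatim with $p$ in place of $\widetilde{p}_1$, using Proposition~\ref{prop:propiedades-funcion-probabilidad} in place of Observation~\ref{obs:propiedades de p-1}. The most delicate items in that version would be (Ax6.6) and (Ax6.7), which need the interderivabilities $(\varphi\land\neg\varphi)\land\circ(\varphi\land\neg\varphi)\vdash\varphi\land\neg\varphi\land\circ\varphi$ and $\circ\varphi\land\neg\circ\varphi\dashv\vdash\circ\varphi\land\neg\circ\varphi\land\circ\circ\varphi$.
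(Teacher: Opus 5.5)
Your proposal is correct and is exactly the paper's argument: the paper proves this lemma as a direct consequence of Lemmas~\ref{lema:AP_es_APtwist} and~\ref{lema:APNETwist-induce-AP6V}. It composes the 1-to-3 translation with the 3-to-6 translation, and since $\widetilde{p}_1=p$, the six components coincide with those in the statement.
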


\begin{proof}
    Direct consequence of Lemmas \ref{lema:AP_es_APtwist} and \ref{lema:APNETwist-induce-AP6V}.
\end{proof}

\begin{lemma}\label{lema:AP6V-induce-APNETwist}
Let $\widehat{p}$ be a 6-valued probability function for $\letkp$. If $\widehat{p}\pa{\varphi} = \pa{b_\varphi^\circ,b_\varphi,c_\varphi,d_\varphi, d_\varphi^\circ, u_\varphi}$, then the function $\widetilde{p}:\forsigma \longrightarrow [0,1]^3$ satisfying $\widetilde{p}\pa{\varphi}=\pa{\widetilde{p}_1\pa{\varphi},\widetilde{p}_2\pa{\varphi},\widetilde{p}_3\pa{\varphi}}$, where:
\begin{itemize}
    \item[] $\widetilde{p}_1\pa{\varphi}=b_\varphi^\circ + b_\varphi + c_\varphi$
    \item[] $\widetilde{p}_2\pa{\varphi}=d_\varphi^\circ + d_\varphi + c_\varphi$
    \item[] $\widetilde{p}_3\pa{\varphi}=b_\varphi^\circ + d_\varphi^\circ$
\end{itemize}
is a twist probability function for $\letkp$.
\end{lemma}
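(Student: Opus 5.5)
The plan is to check the six axioms (Ax3.1)--(Ax3.6) of Definition~\ref{def:funcion_de_probabilidad-twist} one at a time, using only (Ax6.1)--(Ax6.9). Write $\pi(\varphi)=b^\circ_\varphi+b_\varphi+c_\varphi$ for the ``positive mass'', so that $\widetilde{p}_1=\pi$.

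First, the values lie in $[0,1]$. Each of $\widetilde{p}_1,\widetilde{p}_2,\widetilde{p}_3$ is a sum of some of the six nonnegative components. By (Ax6.3) these six components sum to $1$, so every partial sum is at most $1$.

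Next come the axioms that follow immediately.
\begin{itemize}
\item (Ax3.5) is (Ax6.8) verbatim, and (Ax3.6) is (Ax6.9) verbatim, since both are stated in terms of $\pi$.
\item For (Ax3.1), (Ax6.1) gives $b^\circ_{\circ\circ\varphi}=1$. By (Ax6.3) and nonnegativity, all other components of $\circ\circ\varphi$ vanish, so $\pi(\circ\circ\varphi)=1$.
\item For (Ax3.2), let $\psi=\varphi\land\neg\varphi\land\circ\varphi$. (Ax6.2) gives $d^\circ_\psi=1$, and the same argument forces $b^\circ_\psi=b_\psi=c_\psi=0$, hence $\pi(\psi)=0$.
\item For (Ax3.3), (Ax6.5) gives $\pi(\neg\varphi)=b^\circ_{\neg\varphi}+b_{\neg\varphi}+c_{\neg\varphi}=d^\circ_\varphi+d_\varphi+c_\varphi=\widetilde{p}_2(\varphi)$.
\end{itemize}

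The remaining axiom is (Ax3.4), namely $\pi(\circ\varphi)=b^\circ_\varphi+d^\circ_\varphi$. By (Ax6.7), $\pi(\circ\varphi)=b^\circ_{\circ\varphi}+0+0=b^\circ_\varphi+d^\circ_\varphi$, which is exactly $\widetilde{p}_3(\varphi)$.

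No step is a real obstacle; every axiom is matched almost literally. The only point needing care is (Ax3.1)/(Ax3.2). There one must observe that a single component equal to $1$, together with (Ax6.3), forces the other five components to be $0$; the nonnegativity used here comes from the codomain $[0,1]^6$. Axioms (Ax6.4) and (Ax6.6) are not needed for this direction.
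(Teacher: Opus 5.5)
Your proposal is correct and follows the paper's proof essentially step for step. The paper likewise gets the bounds from (Ax6.3) and nonnegativity, obtains (Ax3.1) and (Ax3.2) by forcing the remaining components to $0$, and reads off (Ax3.3), (Ax3.4), (Ax3.5) and (Ax3.6) directly from (Ax6.5), (Ax6.7), (Ax6.8) and (Ax6.9).
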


\begin{proof}
Since $\widehat{p}$ is a 6-valued probability function for $\letkp$, for any $\varphi\in \forsigma$ we have $b_\varphi^\circ,$ $b_\varphi,$ $c_\varphi,$ $d_\varphi,$ $d_\varphi^\circ,$  $u_\varphi\in[0,1]$ and $b_\varphi^\circ + b_\varphi + c_\varphi + d_\varphi + d_\varphi^\circ+ u_\varphi=1$. From this we can conclude that $0\leq b_\varphi^\circ + b_\varphi +c_\varphi\leq 1$, $0\leq d_\varphi^\circ + d_\varphi +c_\varphi\leq 1$, and $0\leq b_\varphi^\circ + d_\varphi^\circ\leq 1$. Therefore $\widetilde{p}_1\pa{\varphi},\widetilde{p}_2\pa{\varphi},\widetilde{p}_3\pa{\varphi}\in[0,1]$. We now verify that $\widetilde{p}$ satisfies the 6 conditions of Definition \ref{def:funcion_de_probabilidad-twist}. 
\begin{enumerate}
\item By Definition \ref{def:funcion_probabilidad_6V}, we have $b_{\circ{\circ \varphi}}^\circ=1$. Moreover, each of the values $b_{\circ{\circ \varphi}}$, $c_{\circ{\circ \varphi}}$, $d_{\circ{\circ \varphi}}$, $d_{\circ{\circ \varphi}}^\circ$,  $u_{\circ{\circ \varphi}}$ is non-negative and the sum of all of the six components equals 1. It follows that $b_{\circ{\circ \varphi}}=0$, $c_{\circ{\circ \varphi}}=0$, and hence $\widetilde{p}_1\pa{{\circ{\circ \varphi}}}=b_{\circ{\circ \varphi}}^\circ + b_{\circ{\circ \varphi}} + c_{\circ{\circ \varphi}}=1$.

\item By Definition \ref{def:funcion_probabilidad_6V}, if $\psi = \varphi \land \neg \varphi \land \circ \varphi$, then $d_{\psi}^\circ=1$. Also, each of the values $b_{\psi}^\circ, b_{\psi}, c_{\psi}, d_{\psi}, u_{\psi}$ is non-negative and the sum of all the six components equals 1. Thus $b_{\psi}^\circ=0$, $b_{\psi}=0$, $c_{\psi}=0$, and therefore $\widetilde{p}_1\pa{\psi}=\widetilde{p}_1(\varphi \land \neg \varphi \land \circ \varphi)=0$.

\item We have $\widetilde{p}_1\pa{\neg \varphi} = b_{\neg\varphi}^\circ + b_{\neg\varphi} + c_{\neg\varphi}= d_{\varphi}^\circ + d_{\varphi} + c_{\varphi}= \widetilde{p}_2\pa{\varphi}$.

\item We have $\widetilde{p}_1\pa{\circ \varphi} = b_{\circ\varphi}^\circ + b_{\circ\varphi} + c_{\circ\varphi} = b_{\varphi}^\circ + d_{\varphi}^\circ + b_{\circ\varphi} + c_{\circ\varphi} = b_{\varphi}^\circ + d_{\varphi}^\circ = \widetilde{p}_3\pa{\varphi}$.

\item Assume that $\varphi\vdash_{\letkp} \psi$. Then by Definition \ref{def:funcion_probabilidad_6V} we have $b_\varphi^\circ + b_\varphi +c_\varphi \leq b_\psi^\circ + b_\psi +c_\psi$, and hence $\widetilde{p}_1\pa{\varphi} \leq \widetilde{p}_1\pa{\psi}$.

\item By Definition \ref{def:funcion_probabilidad_6V} we have:
\begin{align*}
    \pa{b_\varphi^\circ + b_\varphi +c_\varphi} + \pa{b_\psi^\circ + b_\psi +c_\psi}  
    &= \pa{b_{\varphi \land \psi}^\circ + b_{\varphi \land \psi} +c_{\varphi \land \psi}} + \pa{b_{\varphi \lor \psi}^\circ + b_{\varphi \lor \psi} +c_{\varphi \lor \psi}},
\end{align*} 
that is, $\widetilde{p}_1\pa{\varphi} + \widetilde{p}_1\pa{\psi} = \widetilde{p}_1\pa{\varphi \land \psi} + \widetilde{p}_1\pa{\varphi \lor \psi}$.
\end{enumerate}
\end{proof}

\begin{lemma}\label{lema:APtwist_es_AP}
    Let $\widetilde{p}$ be a twist probability function for $\letkp$. Then $p: \forsigma \longrightarrow [0,1]$ defined by $p \pa{\varphi}=\widetilde{p}_1\pa{\varphi}$ is a probability function for $\letkp$.
\end{lemma}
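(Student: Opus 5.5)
The plan is to check the four conditions of Definition~\ref{def:funcion_de_probabilidad} directly for $p=\widetilde{p}_1$. Most of the work is already contained in Observation~\ref{obs:propiedades de p-1}, and the proof amounts to making that observation explicit. First, since $\widetilde{p}$ takes values in $[0,1]^3$, its first component $\widetilde{p}_1$ maps $\forsigma$ into $[0,1]$, so $p$ has the correct codomain. Conditions (Ax1.3) and (Ax1.4) are literally (Ax3.5) and (Ax3.6) with $p$ in place of $\widetilde{p}_1$, so they hold immediately.

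For (Ax1.1), suppose $\vdash_{\letkp}\varphi$. Then in particular ${\circ\circ}\varphi\vdash_{\letkp}\varphi$. By (Ax3.5) we get $\widetilde{p}_1({\circ\circ}\varphi)\le\widetilde{p}_1(\varphi)$, and (Ax3.1) gives $\widetilde{p}_1({\circ\circ}\varphi)=1$. Since $\widetilde{p}_1(\varphi)\le 1$, it follows that $p(\varphi)=1$.

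For (Ax1.2), suppose $\varphi\vdash_{\letkp}\psi$ for every $\psi$. Take $\psi:=\varphi\land\neg\varphi\land\circ\varphi$. Then (Ax3.5) yields $\widetilde{p}_1(\varphi)\le\widetilde{p}_1(\varphi\land\neg\varphi\land\circ\varphi)$, and the right-hand side is $0$ by (Ax3.2). Since $\widetilde{p}_1(\varphi)\ge 0$, we conclude $p(\varphi)=0$.

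The only step that needs any idea is choosing the right witness formulas in (Ax1.1) and (Ax1.2): the theorem ${\circ\circ}\varphi$ and the bottom particle $\varphi\land\neg\varphi\land\circ\varphi$. These are exactly the formulas that (Ax3.1) and (Ax3.2) control. Once they are chosen, the argument reduces to monotonicity (Ax3.5) together with the bounds $0\le\widetilde{p}_1\le 1$. The components $\widetilde{p}_2$ and $\widetilde{p}_3$, and axioms (Ax3.3)--(Ax3.4), play no role here.
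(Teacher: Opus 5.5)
Your proposal is correct and follows the paper's own proof essentially step by step. You get (Ax1.3) and (Ax1.4) directly from (Ax3.5) and (Ax3.6), and you obtain (Ax1.1) and (Ax1.2) by applying monotonicity (Ax3.5) to ${\circ\circ}\varphi \vdash_{\letkp} \varphi$ and $\varphi \vdash_{\letkp} \varphi\land\neg\varphi\land\circ\varphi$, combined with (Ax3.1), (Ax3.2) and the bounds $0\le\widetilde{p}_1\le 1$, exactly as the paper does.
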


\begin{proof}
Since $\widetilde{p}$ is a twist probability function for $\letkp$, for any $\varphi\in \forsigma$ we have $\widetilde{p}_1\pa{\varphi}\in[0,1]$ and therefore $p \pa{\varphi}\in[0,1]$. We now verify that $p$ satisfies the 4 conditions of Definition \ref{def:funcion_de_probabilidad}.
\begin{enumerate}
    \item Assume that $\vdash_{\letkp}\varphi$. Then we have ${\circ{\circ\varphi}}\vdash_{\letkp}\varphi$. By Definition \ref{def:funcion_de_probabilidad-twist}, $\widetilde{p}_1\pa{{\circ{\circ\varphi}}}=1$ and $\widetilde{p}_1\pa{{\circ{\circ\varphi}}}\leq \widetilde{p}_1\pa{\varphi}$. Hence $\widetilde{p}_1\pa{\varphi}=1$ and $p \pa{\varphi}=1$.
    
    \item Assume that $\varphi\vdash_{\letkp} \psi$ for every $\psi$. In particular, $\varphi\vdash_{\letkp} \varphi \land \neg\varphi \land \circ \varphi$. By Definition \ref{def:funcion_de_probabilidad-twist}, $\widetilde{p}_1\pa{\varphi} \leq \widetilde{p}_1\pa{\varphi \land \neg\varphi \land \circ \varphi}$ and $\widetilde{p}_1\pa{\varphi \land \neg\varphi \land \circ \varphi}=0$. Therefore $\widetilde{p}_1\pa{\varphi}=0$ and thus $p \pa{\varphi}=0$.
    
    \item Assume that $\varphi\vdash_{\letkp} \psi$. Then by Definition \ref{def:funcion_de_probabilidad-twist}, $\widetilde{p}_1\pa{\varphi} \leq \widetilde{p}_1\pa{\psi}$ and hence $p \pa{\varphi} \leq p \pa{\psi}$.
    
    \item From Definition \ref{def:funcion_de_probabilidad-twist}, $\widetilde{p}_1\pa{\varphi} + \widetilde{p}_1\pa{\psi} = \widetilde{p}_1\pa{\varphi \land \psi} + \widetilde{p}_1\pa{\varphi \lor \psi}$. Consequently, $p \pa{\varphi} + p \pa{\psi} = p\pa{\varphi \land \psi} + p\pa{\varphi \lor \psi}$.
\end{enumerate}
\end{proof}

\begin{lemma}\label{lema:AP6V-induce-AP}
Let $\widehat{p}$ be a 6-valued probability function for $\letkp$. If $\widehat{p}\pa{\varphi}=\pa{b_\varphi^\circ,b_\varphi,c_\varphi,d_\varphi, d_\varphi^\circ, u_\varphi}$, then the function $p:\forsigma \longrightarrow [0,1]$ defined by:
\[
p \pa{\varphi}=b_\varphi^\circ + b_\varphi + c_\varphi
\]
is a probability function for $\letkp$.
\end{lemma}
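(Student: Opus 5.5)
The proof is a composition of two earlier lemmas. Given the 6-valued probability function $\widehat{p}$, Lemma~\ref{lema:AP6V-induce-APNETwist} says that the map $\widetilde{p}$ with first component $\widetilde{p}_1\pa{\varphi}=b_\varphi^\circ + b_\varphi + c_\varphi$ (together with $\widetilde{p}_2$ and $\widetilde{p}_3$ as defined there) is a twist probability function for $\letkp$. Lemma~\ref{lema:APtwist_es_AP}, applied to this $\widetilde{p}$, then says that $\varphi \mapsto \widetilde{p}_1\pa{\varphi}$ is a probability function for $\letkp$.

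This map is exactly the function $p\pa{\varphi}=b_\varphi^\circ + b_\varphi + c_\varphi$ of the statement. So the proof reduces to observing that $p=\widetilde{p}_1$ and chaining the two lemmas, in the same way Lemma~\ref{lema:APNE-induce-AP6V} was derived from Lemmas~\ref{lema:AP_es_APtwist} and~\ref{lema:APNETwist-induce-AP6V}.

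A direct check against Definition~\ref{def:funcion_de_probabilidad} would also work, and I would include it as a sanity check.
\begin{itemize}
\item The value $p\pa{\varphi}$ lies in $[0,1]$ because all six components are nonnegative and sum to $1$ by (Ax6.3).
\item (Ax1.3) and (Ax1.4) are literally (Ax6.8) and (Ax6.9).
\item For (Ax1.1): if $\vdash_{\letkp}\varphi$, then ${\circ\circ}\varphi\vdash_{\letkp}\varphi$. By (Ax6.1) and (Ax6.3), $b^\circ_{{\circ\circ}\varphi}=1$ forces $p\pa{{\circ\circ}\varphi}=1$, and monotonicity gives $p\pa{\varphi}=1$.
\item For (Ax1.2): if $\varphi$ derives everything, then $\varphi\vdash_{\letkp}\varphi\land\neg\varphi\land\circ\varphi$. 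By (Ax6.2) and (Ax6.3), the latter formula has $p$-value $0$, so monotonicity gives $p\pa{\varphi}=0$.
\end{itemize}

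There is no real obstacle. The only step needing care is the use of (Ax6.3) together with nonnegativity: a component equal to $1$ forces all the other components to be $0$. This is exactly the reasoning already carried out in items 1 and 2 of the proof of Lemma~\ref{lema:AP6V-induce-APNETwist}.
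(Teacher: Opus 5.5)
Your proposal is correct and matches the paper's proof, which obtains the lemma directly by composing Lemma~\ref{lema:AP6V-induce-APNETwist} with Lemma~\ref{lema:APtwist_es_AP}. Your direct sanity check against (Ax1.1)--(Ax1.4) is also sound, but it is not needed.
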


\begin{proof}
    Direct consequence of Lemmas \ref{lema:AP6V-induce-APNETwist} and \ref{lema:APtwist_es_AP}.
\end{proof}

The preceding lemmas allow us to establish the following translations.

\begin{corolario}\label{coro:traduccionNST-NS}
    Let $\tratresuno:\mathbb{P}_{3} \longrightarrow \mathbb{P}_{1}$ be defined by $[\tratresuno \pa{\widetilde{p}}]\pa{\varphi}=\widetilde{p}_1\pa{\varphi}$ and $\traunotres:\mathbb{P}_{1} \longrightarrow \mathbb{P}_{3}$ by $[\traunotres\pa{p}]\pa{\varphi}=\pa{p \pa{\varphi},p \pa{\neg\varphi},p \pa{\circ\varphi}}$. Then:
    \begin{enumerate}
        \item $\traunotres\pa{\tratresuno\pa{\widetilde{p}}}=\widetilde{p}$.
        \item $\tratresuno\pa{\traunotres(p)}=p$.
    \end{enumerate}
    That is, $\tratresuno$ and $\traunotres$ are inverses of each other, and their composition (depending on the order) yields the identity on $\mathbb{P}_{3}$ or on $\mathbb{P}_{1}$.
\end{corolario}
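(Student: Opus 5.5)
First, both maps are well defined with the stated codomains. By Lemma~\ref{lema:APtwist_es_AP}, $\tratresuno(\widetilde{p})$ is a probability function for $\letkp$ whenever $\widetilde{p}\in\mathbb{P}_3$, so $\tratresuno:\mathbb{P}_3\to\mathbb{P}_1$. By Lemma~\ref{lema:AP_es_APtwist}, $\traunotres(p)$ is a twist probability function whenever $p\in\mathbb{P}_1$, so $\traunotres:\mathbb{P}_1\to\mathbb{P}_3$. It then remains to verify the two compositions pointwise on an arbitrary formula $\varphi$.

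Item 2 is immediate. We have $[\tratresuno(\traunotres(p))](\varphi)$ equal to the first component of $(p(\varphi),p(\neg\varphi),p(\circ\varphi))$, which is $p(\varphi)$.

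For item 1, write $q=\tratresuno(\widetilde{p})$, so $q(\psi)=\widetilde{p}_1(\psi)$ for every $\psi$. Then
\[
[\traunotres(q)](\varphi)=\bigl(\widetilde{p}_1(\varphi),\widetilde{p}_1(\neg\varphi),\widetilde{p}_1(\circ\varphi)\bigr).
\]
The first coordinate agrees with $\widetilde{p}_1(\varphi)$ trivially. For the second and third coordinates we use the twist axioms directly: (Ax3.3) gives $\widetilde{p}_1(\neg\varphi)=\widetilde{p}_2(\varphi)$, and (Ax3.4) gives $\widetilde{p}_1(\circ\varphi)=\widetilde{p}_3(\varphi)$. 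Hence $[\traunotres(q)](\varphi)=\widetilde{p}(\varphi)$ for all $\varphi$, and the two functions are equal.

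There is no real obstacle here. The only point worth stressing is that item 1 depends entirely on (Ax3.3) and (Ax3.4): these axioms force the second and third components of a twist probability to be determined by the first. This is exactly why the translation $\tratresuno$ loses no information and the two maps are mutually inverse bijections.
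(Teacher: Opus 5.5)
Your proof is correct and follows essentially the same route as the paper: well-definedness comes from Lemmas~\ref{lema:APtwist_es_AP} and~\ref{lema:AP_es_APtwist}, item 2 is read off from the first coordinate, and item 1 uses (Ax3.3) and (Ax3.4) to identify $\widetilde{p}_1(\neg\varphi)$ and $\widetilde{p}_1(\circ\varphi)$ with $\widetilde{p}_2(\varphi)$ and $\widetilde{p}_3(\varphi)$. You state the appeal to these two axioms explicitly, whereas the paper's computation uses them only tacitly.
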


\begin{proof}
By Lemmas \ref{lema:APtwist_es_AP} and \ref{lema:AP_es_APtwist}, the functions $\tratresuno$ and $\traunotres$ are well-defined. Moreover:

\begin{enumerate}
\item Given $\widetilde{p}\in \mathbb{P}_{3}$, let $p=\tratresuno(\widetilde{p})$. Then $p \pa{\varphi}=\widetilde{p}_1\pa{\varphi}$, and thus:
\begin{align*}
    \traunotres(p)\pa{\varphi} &= \pa{ p \pa{\varphi}, p(\neg \varphi), p(\circ \varphi) }\\
    &= \pa{ \widetilde{p}_1\pa{\varphi}, \widetilde{p}_1(\neg \varphi),\widetilde{p}_1(\circ \varphi) }\\
    &= \pa{ \widetilde{p}_1\pa{\varphi}, \widetilde{p}_2\pa{\varphi}, \widetilde{p}_3\pa{\varphi} }\\
    &= \widetilde{p}\pa{\varphi}.
\end{align*}
Therefore $\traunotres\pa{\tratresuno\pa{\widetilde{p}}}=\widetilde{p}$.

\item Given $p\in\mathbb{P}_{1}$, let $\widetilde{p}=\traunotres\pa{p}$. Then $\widetilde{p}\pa{\varphi} = \pa{p \pa{\varphi}, p\pa{\neg \varphi}, p\pa{\circ \varphi}}$, and so:
\begin{align*}
    \tratresuno\pa{\widetilde{p}}\pa{\varphi} &= \widetilde{p}_1\pa{\varphi}=p \pa{\varphi}.
\end{align*}
Therefore $\tratresuno\pa{\traunotres\pa{p}}=p$.
\end{enumerate}
\end{proof}

\begin{corolario}\label{coro:traduccionNST-6}
Define $\traseistres:\mathbb{P}_{6} \longrightarrow \mathbb{P}_{3}$ as follows: for $\widehat{p}\in \mathbb{P}_6$ with $\widehat{p}\pa{\varphi} = \pa{b_\varphi^\circ,b_\varphi,c_\varphi,d_\varphi, d_\varphi^\circ, u_\varphi}$, set
$$
[\traseistres\pa{\widehat{p}}]\pa{\varphi}=\pa{b_\varphi^\circ + b_\varphi +c_\varphi,\; d_\varphi^\circ + d_\varphi +c_\varphi,\; b_\varphi^\circ + d_\varphi^\circ}.
$$
Define $\tratresseis: \mathbb{P}_{3} \longrightarrow \mathbb{P}_{6}$ by $[\tratresseis\pa{\widetilde{p}}]\pa{\varphi}=\widehat{p}\pa{\varphi} = \pa{\widehat{p}_1\pa{\varphi}, \widehat{p}_2\pa{\varphi},\ldots, \widehat{p}_6\pa{\varphi}}$, where:
\begin{center}
\begin{itemize}[nosep]
\begin{multicols}{2}
\item[] $\widehat{p}_1\pa{\varphi} = \widetilde{p}_1\pa{\varphi \land \circ \varphi}$
\item[] $\widehat{p}_3\pa{\varphi} = \widetilde{p}_1\pa{\varphi \land \neg \varphi}$
\item[] $\widehat{p}_5\pa{\varphi} = \widetilde{p}_1\pa{\neg \varphi \land \circ \varphi}$
\item[] $\widehat{p}_2\pa{\varphi} = \widetilde{p}_1\pa{\varphi} - \widetilde{p}_1\pa{\varphi \land \circ \varphi} - \widetilde{p}_1\pa{\varphi \land \neg \varphi}$
\item[] $\widehat{p}_4\pa{\varphi} = \widetilde{p}_1\pa{\neg \varphi} - \widetilde{p}_1\pa{\neg \varphi \land \circ \varphi} - \widetilde{p}_1\pa{\varphi \land \neg \varphi}$
\item[] $\widehat{p}_6\pa{\varphi} = 1 - \widetilde{p}_1\pa{\varphi \lor \neg \varphi}$
\end{multicols}
\end{itemize}
\end{center}

Then:
\begin{enumerate}
    \item $\tratresseis\pa{\traseistres\pa{\widehat{p}}}   =\widehat{p}$.
    \item $\traseistres\pa{\tratresseis\pa{\widetilde{p}}} =\widetilde{p}$.
\end{enumerate}
That is, $\traseistres$ and $\tratresseis$ are inverses of each other, and their composition (depending on the order) yields the identity on $\mathbb{P}_{6}$ or on $\mathbb{P}_{3}$.
\end{corolario}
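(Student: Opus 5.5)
The plan is to reuse the well-definedness lemmas already proved and then check the two composition identities component by component. Lemma~\ref{lema:AP6V-induce-APNETwist} shows that $\traseistres$ maps $\mathbb{P}_6$ into $\mathbb{P}_3$, and Lemma~\ref{lema:APNETwist-induce-AP6V} shows that $\tratresseis$ maps $\mathbb{P}_3$ into $\mathbb{P}_6$. So only the equalities of functions remain. In both directions the single tool is the identity $\widetilde{p}_1(\psi)=b^\circ_\psi+b_\psi+c_\psi$, which holds for every formula $\psi$.

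\emph{Item 2.} Start from $\widetilde{p}\in\mathbb{P}_3$ and put $\widehat{p}=\tratresseis(\widetilde{p})$.
\begin{itemize}
\item The first component of $\traseistres(\widehat{p})$ is $\widehat{p}_1+\widehat{p}_2+\widehat{p}_3=\widetilde{p}_1(\varphi)$. This cancellation is exactly the one carried out in item 8 of the proof of Lemma~\ref{lema:APNETwist-induce-AP6V}.
\item The second component is $\widehat{p}_5+\widehat{p}_4+\widehat{p}_3$. It telescopes to $\widetilde{p}_1(\neg\varphi)$, which equals $\widetilde{p}_2(\varphi)$ by (Ax3.3).
\item The third component is $\widetilde{p}_1(\varphi\land\circ\varphi)+\widetilde{p}_1(\neg\varphi\land\circ\varphi)$. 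Item 7 of that same proof showed this sum equals $\widetilde{p}_1(\circ\varphi)$, which equals $\widetilde{p}_3(\varphi)$ by (Ax3.4).
\end{itemize}

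\emph{Item 1.} Start from $\widehat{p}\in\mathbb{P}_6$, let $\widetilde{p}=\traseistres(\widehat{p})$, and verify the six components of $\tratresseis(\widetilde{p})$ in this order.
\begin{itemize}
\item First, $\widetilde{p}_1(\varphi\land\circ\varphi)=b^\circ_\varphi$ directly by (Ax6.4).
\item Second, $\widetilde{p}_1(\varphi\land\neg\varphi)=0+0+c_\varphi$ by (Ax6.6).
\item Third, for $\widetilde{p}_1(\neg\varphi\land\circ\varphi)$ I use the interderivability $\neg\varphi\land\circ\varphi\dashv\vdash_{\letkp}\neg\varphi\land\circ\neg\varphi$. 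Applying (Ax6.8) in both directions makes the two sums $b^\circ+b+c$ coincide. Then (Ax6.4) applied to $\neg\varphi$ gives $b^\circ_{\neg\varphi}$, which equals $d^\circ_\varphi$ by (Ax6.5).
\item With these three values fixed, the second coordinate of $\tratresseis(\widetilde{p})$ is $(b^\circ_\varphi+b_\varphi+c_\varphi)-b^\circ_\varphi-c_\varphi=b_\varphi$.
\item The fourth coordinate is $(b^\circ_{\neg\varphi}+b_{\neg\varphi}+c_{\neg\varphi})-d^\circ_\varphi-c_\varphi=d_\varphi$, using (Ax6.5).
\item For the sixth coordinate, (Ax6.9) with $\psi=\neg\varphi$ gives
\[
\widetilde{p}_1(\varphi\lor\neg\varphi)=\widetilde{p}_1(\varphi)+\widetilde{p}_1(\neg\varphi)-\widetilde{p}_1(\varphi\land\neg\varphi)=b^\circ_\varphi+b_\varphi+c_\varphi+d_\varphi+d^\circ_\varphi.
\]
Hence $1-\widetilde{p}_1(\varphi\lor\neg\varphi)=u_\varphi$ by (Ax6.3).
\end{itemize}

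The main obstacle is the reliable-disbelief component $d^\circ_\varphi$. No axiom of Definition~\ref{def:funcion_probabilidad_6V} speaks directly about $\neg\varphi\land\circ\varphi$. The value therefore has to be moved across the provable equivalence $\circ\varphi\dashv\vdash\circ\neg\varphi$ (rules $\circ\neg I$ and $\circ\neg E$) using the monotonicity axiom (Ax6.8), before (Ax6.4) and (Ax6.5) can be combined. Once that step is secured, every other component is a short cancellation.
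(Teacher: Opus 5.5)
Your proposal is correct and follows essentially the same route as the paper. Well-definedness comes from Lemmas \ref{lema:AP6V-induce-APNETwist} and \ref{lema:APNETwist-induce-AP6V}; item 1 is a componentwise check via (Ax6.4), (Ax6.6), the equivalence $\neg\varphi\land\circ\varphi\dashv\vdash_{\letkp}\neg\varphi\land\circ\neg\varphi$, (Ax6.5), (Ax6.9) and (Ax6.3); item 2 uses the telescoping sums together with $\widetilde{p}_1(\varphi\land\circ\varphi)+\widetilde{p}_1(\neg\varphi\land\circ\varphi)=\widetilde{p}_1(\circ\varphi)$ from item 7 of Lemma \ref{lema:APNETwist-induce-AP6V}, and you make explicit two steps the paper leaves implicit (the appeal to (Ax6.8) and to that item 7 identity).
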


\begin{proof}
By Lemmas \ref{lema:AP6V-induce-APNETwist} and \ref{lema:APNETwist-induce-AP6V}, the functions $\traseistres$ and $\tratresseis$ are well-defined. Moreover:

\begin{enumerate}
\item Given $\widehat{p}\in \mathbb{P}_{6}$, let $\widetilde{p}=\traseistres\pa{\widehat{p}}$. Then $\widetilde{p}\pa{\varphi} = \pa{b_\varphi^\circ + b_\varphi +c_\varphi,\; d_\varphi^\circ + d_\varphi +c_\varphi,\; b_\varphi^\circ + d_\varphi^\circ}$ and the following equations hold:

\begin{enumerate}
\item By Ax6.4; $b_{\varphi\land\circ\varphi}^\circ + b_{\varphi\land\circ\varphi} +c_{\varphi\land\circ\varphi} = b_\varphi^\circ $.
\item By Ax6.6: $b^\circ_{\varphi \land \neg\varphi} = 0$, $b^{~}_{\varphi \land \neg\varphi} = 0$, and $c^{~}_{\varphi \land \neg\varphi} = c^{~}_{\varphi}$, therefore:
$$b_\varphi^\circ + b_\varphi +c_\varphi - \pa{b_{\varphi\land\circ\varphi}^\circ + b_{\varphi\land\circ\varphi} +c_{\varphi\land\circ\varphi}} - \pa{b_{\varphi\land\neg\varphi}^\circ + b_{\varphi\land\neg\varphi} +c_{\varphi\land\neg\varphi}}
= b_\varphi^\circ + b_\varphi +c_\varphi - b_\varphi^\circ - c_\varphi 
= b_\varphi.$$

\item Since $\neg\varphi\land\circ\varphi\dashv \vdash_{\letkp} \neg\varphi\land\circ\neg\varphi$: $
b_{\neg\varphi\land\circ\varphi}^\circ + b_{\neg\varphi\land\circ\varphi} +c_{\neg\varphi\land\circ\varphi}= b_{\neg\varphi\land\circ\neg\varphi}^\circ + b_{\neg\varphi\land\circ\neg\varphi} +c_{\neg\varphi\land\circ\neg\varphi}=b_{\neg\varphi}^\circ$, then it holds:

$\begin{aligned}[t]
b_{\neg\varphi}^\circ + b_{\neg\varphi} + c_{\neg\varphi} - \pa{b_{\neg\varphi\land\circ\varphi}^\circ + b_{\neg\varphi\land\circ\varphi} +c_{\neg\varphi\land\circ\varphi}} - \pa{b_{\varphi\land\neg\varphi}^\circ + b_{\varphi\land\neg\varphi} +c_{\varphi\land\neg\varphi}}  
&=
b_{\neg\varphi}^\circ + b_{\neg\varphi} + c_{\neg\varphi} - b_{\neg\varphi}^\circ - c_\varphi\\
&= b_{\neg\varphi} + c_{\neg\varphi} - c_\varphi\\
&= d_\varphi + c_\varphi - c_\varphi\\
&= d_\varphi
\end{aligned}$

\item  $b_{\neg\varphi\land\circ\neg\varphi}^\circ + b_{\neg\varphi\land\circ\neg\varphi} +c_{\neg\varphi\land\circ\neg\varphi}=b_{\neg\varphi}^\circ=d_{\varphi}^\circ
$.

\item By Ax6.6;  $b_{\varphi\land\neg\varphi}^\circ + b_{\varphi\land\neg\varphi} + c_{\varphi\land\neg\varphi} = c^{~}_{\varphi}$.

\item Finally:

$\begin{aligned}[t]
1- \pa{b_{\varphi\lor\neg\varphi}^\circ + b_{\varphi\lor\neg\varphi} +c_{\varphi\lor\neg\varphi}}
& = 1- \pa{b_{\varphi}^\circ + b_{\varphi} +c_{\varphi}}-\pa{b_{\neg\varphi}^\circ + b_{\neg\varphi} +c_{\neg\varphi}} + \pa{b_{\varphi\land\neg\varphi}^\circ + b_{\varphi\land\neg\varphi} +c_{\varphi\land\neg\varphi}}\\
& = 1- \pa{b_{\varphi}^\circ + b_{\varphi} +c_{\varphi}} - \pa{d_{\varphi}^\circ + d_{\varphi} +c_{\varphi}} + c_\varphi\\
& = 1- \pa{b_{\varphi}^\circ + b_{\varphi} +d_{\varphi}^\circ + d_{\varphi} +2c_{\varphi}} + c_{\varphi}\\
& = 1- \pa{b_{\varphi}^\circ + b_{\varphi} +d_{\varphi}^\circ + d_{\varphi} +c_{\varphi}}\\
& = u_{\varphi}.
\end{aligned}$
\end{enumerate}

From the above equations we obtain:
\begin{align*}
[\tratresseis(\widetilde{p})]\pa{\varphi} &= \bigl(
b_{\varphi\land\circ\varphi}^\circ + b_{\varphi\land\circ\varphi} +c_{\varphi\land\circ\varphi},\\
&\quad (b_\varphi^\circ + b_\varphi +c_\varphi) - (b_{\varphi\land\circ\varphi}^\circ + b_{\varphi\land\circ\varphi} +c_{\varphi\land\circ\varphi}) - (b_{\varphi\land\neg\varphi}^\circ + b_{\varphi\land\neg\varphi} +c_{\varphi\land\neg\varphi}),\\ 
&\quad b_{\varphi\land\neg\varphi}^\circ + b_{\varphi\land\neg\varphi} +c_{\varphi\land\neg\varphi},\\
&\quad (b_{\neg\varphi}^\circ + b_{\neg\varphi} +c_{\neg\varphi}) - (b_{\neg\varphi\land\circ\varphi}^\circ + b_{\neg\varphi\land\circ\varphi} +c_{\neg\varphi\land\circ\varphi}) - (b_{\varphi\land\neg\varphi}^\circ + b_{\varphi\land\neg\varphi} +c_{\varphi\land\neg\varphi}),\\
&\quad b_{\neg\varphi\land\circ\varphi}^\circ + b_{\neg\varphi\land\circ\varphi} +c_{\neg\varphi\land\circ\varphi},\\
&\quad 1-(b_{\varphi\lor\neg\varphi}^\circ + b_{\varphi\lor\neg\varphi} +c_{\varphi\lor\neg\varphi}) 
\bigr)\\
&= (b_\varphi^\circ, b_\varphi,c_\varphi, d_\varphi, d_\varphi^\circ, u_\varphi)\\
&= \widehat{p}\pa{\varphi}.
\end{align*}

Therefore $\tratresseis\pa{\traseistres\pa{\widehat{p}}}=\widehat{p}$.

\item Given $\widetilde{p}\in \mathbb{P}_{3}$, let $\widehat{p}=\tratresseis\pa{\widetilde{p}}$. Then $\widehat{p}\pa{\varphi} = \pa{\widehat{p}_1\pa{\varphi}, \widehat{p}_2\pa{\varphi},\widehat{p}_3\pa{\varphi}, \widehat{p}_4\pa{\varphi},\widehat{p}_5\pa{\varphi}, \widehat{p}_6\pa{\varphi}}$ with
\begin{center}
\begin{itemize}[nosep]
\begin{multicols}{2}
\item[] $\widehat{p}_1\pa{\varphi} = \widetilde{p}_1\pa{\varphi \land \circ \varphi}$
\item[] $\widehat{p}_3\pa{\varphi} = \widetilde{p}_1\pa{\varphi \land \neg \varphi}$
\item[] $\widehat{p}_5\pa{\varphi} = \widetilde{p}_1\pa{\neg \varphi \land \circ \varphi}$
\item[] $\widehat{p}_2\pa{\varphi} = \widetilde{p}_1\pa{\varphi} - \widetilde{p}_1\pa{\varphi \land \circ \varphi} - \widetilde{p}_1\pa{\varphi \land \neg \varphi}$
\item[] $\widehat{p}_4\pa{\varphi} = \widetilde{p}_1\pa{\neg \varphi} - \widetilde{p}_1\pa{\neg \varphi \land \circ \varphi} - \widetilde{p}_1\pa{\varphi \land \neg \varphi}$
\item[] $\widehat{p}_6\pa{\varphi} = 1 - \widetilde{p}_1\pa{\varphi \lor \neg \varphi}$
\end{multicols}
\end{itemize}
\end{center}

Consequently:

$\begin{aligned}[t]
[\traseistres\pa{\widehat{p}}]\pa{\varphi} 
&=
\pa{
\widehat{p}_1\pa{\varphi}+ \widehat{p}_2\pa{\varphi}+ \widehat{p}_3\pa{\varphi},
\widehat{p}_3\pa{\varphi}+ \widehat{p}_4\pa{\varphi}+ \widehat{p}_5\pa{\varphi},
\widehat{p}_1\pa{\varphi}+ \widehat{p}_5\pa{\varphi}}\\
&=
\Bigl(
\cancel{\widetilde{p}_1\pa{\varphi \land \circ \varphi}} +
\pa{\widetilde{p}_1\pa{\varphi} - \cancel{\widetilde{p}_1\pa{\varphi \land \circ \varphi}} - \cancel{\widetilde{p}_1\pa{\varphi \land \neg \varphi}}} +
\cancel{\widetilde{p}_1\pa{\varphi \land \neg \varphi}},\\
&\quad
\cancel{\widetilde{p}_1\pa{\varphi \land \neg \varphi}} +
\pa{\widetilde{p}_1\pa{\neg \varphi} - \cancel{\widetilde{p}_1\pa{\neg \varphi \land \circ \varphi}} - \cancel{\widetilde{p}_1\pa{\varphi \land \neg \varphi}}} +
\cancel{\widetilde{p}_1\pa{\neg \varphi \land \circ \varphi}},\\
&\quad
\widetilde{p}_1\pa{\varphi \land \circ \varphi} + \widetilde{p}_1\pa{\neg \varphi \land \circ \varphi}\Bigr)
\\
&=
\pa{\widetilde{p}_1\pa{\varphi}, \widetilde{p}_1\pa{\neg \varphi}, \widetilde{p}_1\pa{\circ \varphi}}\\
&=
\pa{\widetilde{p}_1\pa{\varphi}, \widetilde{p}_2\pa{\varphi}, \widetilde{p}_3\pa{\varphi}}\\
&=
\widetilde{p}\pa{\varphi}.
\end{aligned}$

Therefore $\traseistres\pa{\tratresseis\pa{\widetilde{p}}}=\widetilde{p}$.
\end{enumerate}
\end{proof}

\begin{corolario}\label{coro:traduccionNS-6}
Define $\traseisuno:\mathbb{P}_{6} \longrightarrow \mathbb{P}_{1}$ as follows: for $\widehat{p}\in \mathbb{P}_6$ with $\widehat{p}\pa{\varphi}=\pa{b_\varphi^\circ,b_\varphi, c_\varphi, d_\varphi, d_\varphi^\circ, u_\varphi}$, set
$$[\traseisuno\pa{\widehat{p}}]\pa{\varphi}=b_\varphi^\circ + b_\varphi + c_\varphi.$$
Define $\traunoseis: \mathbb{P}_{1} \longrightarrow \mathbb{P}_{6}$ by $[\traunoseis(p)]\pa{\varphi}=\widehat{p}\pa{\varphi} = \pa{\widehat{p}_1\pa{\varphi}, \widehat{p}_2\pa{\varphi},\ldots, \widehat{p}_6\pa{\varphi}}$, where:
\begin{center}
\begin{itemize}[nosep]
\begin{multicols}{2}
\item[] $\widehat{p}_1\pa{\varphi} = p\pa{\varphi \land \circ \varphi}$
\item[] $\widehat{p}_3\pa{\varphi} = p\pa{\varphi \land \neg \varphi}$
\item[] $\widehat{p}_5\pa{\varphi} = p\pa{\neg \varphi \land \circ \varphi}$
\item[] $\widehat{p}_2\pa{\varphi} = p\pa{\varphi} - p\pa{\varphi \land \circ \varphi} - p\pa{\varphi \land \neg \varphi}$
\item[] $\widehat{p}_4\pa{\varphi} = p\pa{\neg \varphi} - p\pa{\neg \varphi \land \circ \varphi} - p\pa{\varphi \land \neg \varphi}$
\item[] $\widehat{p}_6\pa{\varphi} = 1 - p\pa{\varphi \lor \neg \varphi}$
\end{multicols}
\end{itemize}
\end{center}

Then:
\begin{enumerate}
    \item $\traunoseis\pa{\traseisuno\pa{\widehat{p}}}=\widehat{p}$.
    \item $\traseisuno\pa{\traunoseis\pa{p}}=p$.
\end{enumerate}
That is, $\traseisuno$ and $\traunoseis$ are inverses of each other, and their composition (depending on the order) yields the identity on $\mathbb{P}_{6}$ or on $\mathbb{P}_{1}$.
\end{corolario}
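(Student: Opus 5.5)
The natural route is to factor both translations through the twist probability functions and reuse the two previous corollaries, rather than recomputing from scratch.

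First, well-definedness is already settled. By Lemma~\ref{lema:AP6V-induce-AP}, $\traseisuno(\widehat{p})\in\mathbb{P}_1$ for every $\widehat{p}\in\mathbb{P}_6$. By Lemma~\ref{lema:APNE-induce-AP6V}, $\traunoseis(p)\in\mathbb{P}_6$ for every $p\in\mathbb{P}_1$.

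Next, I would check the two compositional identities
\[
\traseisuno=\tratresuno\circ\traseistres
\qquad\text{and}\qquad
\traunoseis=\tratresseis\circ\traunotres .
\]
The first holds because the first component of $[\traseistres(\widehat{p})](\varphi)$ is exactly $b^\circ_\varphi+b_\varphi+c_\varphi$, and $\tratresuno$ simply extracts it. For the second, $\traunotres(p)=\widetilde{p}$ has first component $\widetilde{p}_1=p$. Substituting $p$ for $\widetilde{p}_1$ in the six formulas defining $\tratresseis$ gives literally the six formulas defining $\traunoseis$. Both identities are pointwise equalities of functions on $\forsigma$, so they need no logical reasoning beyond unfolding definitions.

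With these factorizations, both claims follow by chaining Corollaries~\ref{coro:traduccionNST-NS} and~\ref{coro:traduccionNST-6}:
\[
\traunoseis(\traseisuno(\widehat{p}))
=\tratresseis\bigl(\traunotres(\tratresuno(\traseistres(\widehat{p})))\bigr)
=\tratresseis(\traseistres(\widehat{p}))
=\widehat{p},
\]
\[
\traseisuno(\traunoseis(p))
=\tratresuno\bigl(\traseistres(\tratresseis(\traunotres(p)))\bigr)
=\tratresuno(\traunotres(p))
=p.
\]
Here the inner pairs cancel by item~1 of Corollary~\ref{coro:traduccionNST-NS} and item~2 of Corollary~\ref{coro:traduccionNST-6}, respectively.

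There is no real obstacle. The only point needing care is verifying the second factorization component by component, in particular that $\widetilde{p}_1$ applied to the compound formulas $\varphi\land\circ\varphi$, $\varphi\land\neg\varphi$, $\neg\varphi\land\circ\varphi$ and $\varphi\lor\neg\varphi$ equals $p$ of those formulas. This holds because $\widetilde{p}_1=p$ on all of $\forsigma$, not just at $\varphi$.

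If one preferred a direct proof instead, item~2 is immediate: $\widehat{p}_1+\widehat{p}_2+\widehat{p}_3$ telescopes to $p(\varphi)$. Item~1 would repeat computations (a)--(f) from the proof of Corollary~\ref{coro:traduccionNST-6}, using Ax6.4--Ax6.6, Ax6.9 and Ax6.3. The factorization avoids this repetition.
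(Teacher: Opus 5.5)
Your proposal is correct and follows the same route as the paper: well-definedness comes from Lemmas~\ref{lema:AP6V-induce-AP} and~\ref{lema:APNE-induce-AP6V}. The rest is the factorizations $\traseisuno=\tratresuno\circ\traseistres$ and $\traunoseis=\tratresseis\circ\traunotres$, combined with Corollaries~\ref{coro:traduccionNST-NS} and~\ref{coro:traduccionNST-6}.
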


\begin{proof}
By Lemmas \ref{lema:AP6V-induce-AP} and \ref{lema:APNE-induce-AP6V}, the functions $\traseisuno$ and $\traunoseis$ are well-defined. Moreover, since $\traseisuno=\tratresuno\circ \traseistres$ and $\traunoseis=\tratresseis\circ \traunotres$, using Corollaries \ref{coro:traduccionNST-NS} and \ref{coro:traduccionNST-6} we obtain the desired results.
\end{proof}

A consequence of Corollaries \ref{coro:traduccionNST-NS}, \ref{coro:traduccionNST-6}, and \ref{coro:traduccionNS-6} is that the diagram presented in Figure \ref{fig:equivalencia-enfoques} commutes. This allows the functions in $\mathbb{P}_{1}$, $\mathbb{P}_{3}$, and $\mathbb{P}_{6}$ to be translated into one another, enabling probability functions to be viewed with 1, 3, or 6 components, thereby gaining expressivity or losing simplicity, as the case may be.

\begin{figure}
\begin{center}
\begin{tikzpicture}
        [vertex_style/.style={circle,shading=ball,scale=0.6,ball color=red,draw=red!80!white,drop shadow={opacity=0.4}}]
        \tikzset{myroundednode/.style={
            fill=white, 
            rounded corners=3mm,
            text centered 
        }}
      \begin{scope}[shift={(0,0)}, rotate=0, scale=1]
            \path[draw,black,thick]  (0,0)to [out=90,in=150] (1.6,1.3)to [out=-30,in=90] (2.9,0.3)to [out=-90,in=-45] (2,-1.5)to [out=135,in=50] (1.5,-1.3)to [out=-130,in=-90] (0,0);
            \path[draw,black, thick]  (5,0)to [out=90,in=150] (7,1.2)to [out=-30,in=90] (8,0)to [out=-90,in=-45] (6,-1.7)to [out=135,in=50] (5.5,-1.3)to [out=-130,in=-90] (5,0);
            \path[draw,black,thick]  (10,0)to [out=90,in=150] (12,1.3)to [out=-30,in=90] (13,0)to [out=-90,in=-45] (10.9,-1.9)to [out=135,in=-50] (11,-1.3)to [out=130,in=-90] (10,0);
            
            \node[myroundednode] at (1.5,2) (PNS) {$\mathbb{P}_{1}$};
            \node[myroundednode] at (6.5,2) (PNST) {$\mathbb{P}_{3}$};
            \node[myroundednode] at (11.5,2) (P6) {$\mathbb{P}_{6}$};

            \node[myroundednode] at (1.5,4) (def-PNS) {Def. \ref{def:funcion_de_probabilidad}};
            \node[myroundednode] at (6.5,4) (def-PNST) {Def. \ref{def:funcion_de_probabilidad-twist}};
            \node[myroundednode] at (11.5,4) (def-P6) {Def. \ref{def:funcion_probabilidad_6V}};

            \node[myroundednode] at (1.5,0) (p) {$p$};
            \node[myroundednode] at (6.5,0) (ps) {$\widetilde{p}$};
            \node[myroundednode] at (11.5,0) (ph) {$\widehat{p}$};

            \node[myroundednode] at (1.5,3.5) (PNS) {$p:\forsigma \longrightarrow [0,1 ]$};
            \node[myroundednode] at (6.5,3.5) (PNST) {$\widetilde{p}:\forsigma \longrightarrow [0,1 ]^3$};
            \node[myroundednode] at (11.5,3.5) (P6) {$\widehat{p}:\forsigma\longrightarrow [0,1 ]^6$};

        \draw[<-,thick, red] (p) .. controls +(260:3cm) and +(280:3cm) .. node[below,sloped] {$\traseisuno$} (ph);
        \draw[->,thick, red] (p) .. controls +(100:3cm) and +(80:3cm) .. node[above,sloped] {$\traunoseis$} (ph);        
        \draw[<-,thick, red] (ps) .. controls +(280:1.7cm) and +(260:1.7cm) .. node[below,sloped] {$\traseistres$} (ph);
        \draw[<-,thick, red] (p) .. controls +(280:1.7cm) and +(260:1.7cm) .. node[below,sloped] {$\tratresuno$} (ps);
        \draw[->,thick, red] (ps) .. controls +(80:1.7cm) and +(100:1.7cm) .. node[above,sloped] {$\tratresseis$} (ph);
        \draw[->,thick, red] (p) .. controls +(80:1.7cm) and +(100:1.7cm) .. node[above,sloped] {$\traunotres$} (ps);
        \draw[->,thick, red] (ps) .. controls +(60:1.7cm) and +(-60:1.7cm) .. node[above,sloped] {$id_{\mathbb{P}_{3}}$} (ps);
        \draw[->,thick, red] (ph) .. controls +(60:1.7cm) and +(-60:1.7cm) .. node[above,sloped] {$id_{\mathbb{P}_{6}}$} (ph);
        \draw[->,thick, red] (p) .. controls +(60:1.7cm) and +(-60:1.7cm) .. node[above,sloped] {$id_{\mathbb{P}_{1}}$} (p);
        \end{scope}
    \end{tikzpicture}
    \caption{Equivalence between probability functions for $\letkp$ in $\mathbb{P}_{1}$, $\mathbb{P}_{3}$ and $\mathbb{P}_{6}$}
    \label{fig:equivalencia-enfoques}
\end{center}
\end{figure}

\subsection{Semantical definition}

Now, the semantical version of each kind of \letkp-probability function will be introduced. They will be based on the suitable notion of twist structures which, as discussed in Observation~\ref{obs:twist-sigma-alg}, can be seen as a transposition of the notion of $\sigma$-algebras, from classical logic to \letkp\ (or even to \fde).

\begin{definition}\label{def:modelo-de-probabilidad}
    \textbf{(Probability Model for $\letkp$)}
    Given a model $\mathsf{M} = \langle X, v \rangle$ for $\letkp$ and a probability measure $\mu$ on $\mathcal{B}_X$, the triple $\mathsf{M}_\mu = \langle X, \mu, v \rangle$ is called a probability model for $\letkp$ based on $\mathsf{M}$ and on $\mu$.
\end{definition}

\begin{definition}\label{def:funciones-probabilidad-inducidas}
    \textbf{(Probability Functions Induced by a Probability Model for $\letkp$)}
    Given a probability model $\mathsf{M}_\mu = \langle X, \mu, v \rangle$ for $\letkp$ based on $\mathsf{M}$ and on $\mu$, three distinct probability functions induced by $\mathsf{M}_\mu$ are defined:
    \begin{enumerate}
        \item Probability function induced by $\mathsf{M}_\mu$, $\pmu: \forsigma \longrightarrow [0,1]$ given by: $\pmu\pa{\varphi} = \mu\pa{|\varphi|_\mathsf{M}^+}$.
        \item Twist probability function induced by $\mathsf{M}_\mu$, $\pmutilde: \forsigma \longrightarrow [0,1]^3$ given by: 
        $$\pmutilde\pa{\varphi} = \pa{\mu\pa{|\varphi|_\mathsf{M}^+}, \mu\pa{|\varphi|_\mathsf{M}^-}, \mu\pa{|\varphi|_\mathsf{M}^\circ}}.$$
        \item 6-valued probability function induced by $\mathsf{M}_\mu$, $\pmuhat: \forsigma \longrightarrow [0,1]^6$ given by:
        $$\pmuhat\pa{\varphi} = \pa{\mu\pa{B^\circ_\varphi}, \mu\pa{B_\varphi}, \mu\pa{C_\varphi}, \mu\pa{D_\varphi}, \mu\pa{D_\varphi^\circ},  \mu\pa{U_\varphi}}.$$
    \end{enumerate}
\end{definition}

The previous definition gave rise to functions of 1, 3, and 6 dimensions by means of a probabilistic model. Although the origin of these functions differs from the syntactical approach (Definitions \ref{def:funcion_de_probabilidad}, \ref{def:funcion_de_probabilidad-twist} and \ref{def:funcion_probabilidad_6V}), they behave in a remarkably similar manner. In fact, the induced functions of 1, 3, and 6 dimensions from Definition \ref{def:funciones-probabilidad-inducidas} belong to $\mathbb{P}_1$, $\mathbb{P}_3$, and $\mathbb{P}_6$, respectively, as stated in Propositions \ref{prop:robustez-1}, \ref{prop:robustez-3} and \ref{prop:robustez-6}. 

\begin{proposition}\label{prop:robustez-1}
    Given $\mathsf{M}_\mu=\langle X, \mu, v \rangle$ a probabilistic model for $\letkp$ based on $\mathsf{M}=\langle X, v\rangle$ and $\mu$, then $\pmu$, (the probability function induced by $\mathsf{M}_\mu$)  is a probability function for $\letkp$, i.e. $\pmu\in\mathbb{P}_1$.
\end{proposition}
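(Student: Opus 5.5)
We verify (Ax1.1)--(Ax1.4) of Definition~\ref{def:funcion_de_probabilidad} for $\pmu(\varphi)=\mu(|\varphi|^+_\mathsf{M})$. Values lie in $[0,1]$ because $\mu$ is a probability measure on $\mathcal{B}_X=2^X$. The one semantic tool we need is soundness of $\vdash_{\letkp}$ with respect to the matrix $\mathcal{M}(\mathcal{B}_X)$, recalled from \cite{coniglio2024belnap} at the end of Section~\ref{sect:LETK+}. We use it in a slightly stronger, pointwise form. For each state $x\in X$, the map $v_x$ introduced before Observation~\ref{obs:identificacion-regiones-valores} is a valuation in $\mathcal{M}(\mathcal{B}_2)\cong\mathcal{M}_6$. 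Soundness with respect to $\mathcal{M}(\mathcal{B}_2)$ (again an element of $\textit{Mat}(\letkp)$) then gives: if $\varphi\vdash_{\letkp}\psi$ and $x\in|\varphi|^+_\mathsf{M}$, i.e.\ $v_x(\varphi)$ is designated, then $v_x(\psi)$ is designated, i.e.\ $x\in|\psi|^+_\mathsf{M}$. Hence
\[
\varphi\vdash_{\letkp}\psi \ \Longrightarrow\ |\varphi|^+_\mathsf{M}\subseteq|\psi|^+_\mathsf{M}.
\]
This inclusion lemma is the key step. The only care needed is to justify that $v_x$ really is a homomorphism into $T_{\mathcal{B}_2}$. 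That holds because every twist operation is computed coordinatewise by Boolean operations, and evaluation at $x$ (the map $A\mapsto\chi_A(x)$) is a Boolean homomorphism $2^X\to\mathbf{2}$.

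With the inclusion lemma, (Ax1.3) is immediate from monotonicity of $\mu$. For (Ax1.1), if $\vdash_{\letkp}\varphi$, soundness of $\mathcal{M}(\mathcal{B}_X)$ gives $v(\varphi)\in D_{\mathcal{B}_X}$, so $|\varphi|^+_\mathsf{M}=X$ and $\pmu(\varphi)=\mu(X)=1$. Alternatively, from $\circ\circ\varphi\vdash\varphi$ and Proposition~\ref{prop:propiedades-extensiones} (item $|\circ\circ\varphi|^+_\mathsf{M}=X$), the inclusion lemma yields the same conclusion.

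For (Ax1.2), suppose $\varphi\vdash_{\letkp}\psi$ for every $\psi$. Take $\psi=\varphi\land\neg\varphi\land\circ\varphi$, whose positive extension is $\emptyset$ by Proposition~\ref{prop:propiedades-extensiones}. The inclusion lemma gives $|\varphi|^+_\mathsf{M}=\emptyset$, so $\pmu(\varphi)=\mu(\emptyset)=0$.

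Finally, (Ax1.4) is the inclusion–exclusion identity
\[
\mu(A)+\mu(B)=\mu(A\cap B)+\mu(A\cup B)
\]
for the finitely additive measure $\mu$, applied with $A=|\varphi|^+_\mathsf{M}$ and $B=|\psi|^+_\mathsf{M}$. Here Proposition~\ref{prop:propiedades-extensiones} supplies $|\varphi\land\psi|^+_\mathsf{M}=A\cap B$ and $|\varphi\lor\psi|^+_\mathsf{M}=A\cup B$. This completes the verification that $\pmu\in\mathbb{P}_1$.
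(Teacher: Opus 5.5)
Your proof is correct, and it takes a different route from the paper's at the one step that needs more than set algebra: deriving $|\varphi|^+_\mathsf{M}\subseteq|\psi|^+_\mathsf{M}$ from $\varphi\vdash_{\letkp}\psi$, which you use for (Ax1.2) and (Ax1.3).

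The paper goes through the deduction metatheorem. From $\varphi\vdash_{\letkp}\psi$ it gets $\vdash_{\letkp}\varphi\rightarrow\psi$. Soundness for theorems in $\mathcal{M}(\mathcal{B}_X)$ then gives $(X\setminus|\varphi|^+_\mathsf{M})\cup|\psi|^+_\mathsf{M}=X$. For (Ax1.2) it does the same with target $\psi\land\neg\psi\land\circ\psi$.

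You instead localize. Each $v_x$ is a valuation in $\mathcal{M}(\mathcal{B}_2)\cong\mathcal{M}_6$, and you then apply soundness for consequence with premises in the six-valued matrix. Your justification that $v_x$ is a homomorphism is right. You could add explicitly that the two conditions defining $B^3_{\letkp}$ are Boolean equations, so evaluation at $x$ also preserves them; the paper notes this in the discussion before Observation~\ref{obs:identificacion-regiones-valores}.

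Both routes must get around the same obstacle. Truth-preservation in $\mathcal{M}(\mathcal{B}_X)$, where designated means $|\varphi|^+_\mathsf{M}=X$, only yields ``$|\varphi|^+_\mathsf{M}=X$ implies $|\psi|^+_\mathsf{M}=X$'', not the inclusion. The paper overcomes this with the internal Boolean implication; you overcome it by decomposing state by state.

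The paper's version is shorter, given $\rightarrow I$ and the Boolean first coordinate of $\tilde{\rightarrow}$. Yours never uses $\rightarrow$, so it would carry over unchanged to implication-free settings such as the \fde-based models of \cite{klein2021probabilities}. It also makes explicit that a twist model over $X$ is just a family of $\mathcal{M}_6$-valuations indexed by states, which is the viewpoint the paper itself relies on later when identifying regions with truth values.
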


\begin{proof}
    We show that $\pmu$ is a probability function for $\letkp$ by verifying that $\pmu$ satisfies the four conditions of Definition \ref{def:funcion_de_probabilidad}. Let $\varphi,\psi \in For(\Sigma)$:
    \begin{enumerate}
        \item Assume $\vdash_{\letkp}\varphi$. Since $\mathsf{M}=\langle X, v\rangle$ is a non-standard twist model for $\letkp$, $v$ is a valuation in $\mathcal{M}(\mathcal{B}_X)$. Given that the consequence relation induced by $\mathcal{M}(\mathcal{B}_X)$ is sound and complete with respect to $\vdash_{\letkp}$, we have $v(\varphi)\in D_{\mathcal{B}_X}$, which implies $v_1(\varphi)=|\varphi|^+=X$. As $\mu$ is a probability measure, $\mu(X)=1$, hence $\pmu(\varphi)=\mu(|\varphi|^+)=\mu(X)=1$.
        \item Suppose that for every $\psi$, it holds that $\varphi\vdash_{\letkp} \psi$. In particular, $\varphi\vdash_{\letkp} \psi\land\neg\psi\land\circ\psi$ and therefore $\vdash_{\letkp} \varphi\rightarrow\pa{\psi\land\neg\psi\land\circ\psi}$. This implies that $v_1\pa{\varphi\rightarrow\pa{\psi\land\neg\psi\land\circ\psi}}=X$, this entails $v_1(\varphi)\subseteq v_1(\psi\land\neg\psi\land\circ\psi)$, and thus $v_1(\varphi)=v_1(\psi\land\neg\psi\land\circ\psi)=\emptyset$. Since $\mu(\emptyset)=0$, we obtain $\pmu(\varphi)=0$.
        \item From $\varphi\vdash_{\letkp} \psi$, we obtain that $\vdash_{\letkp} \varphi\rightarrow\psi$ which implies that $v_1(\varphi)\subseteq v_1(\psi)$. Since $\mu$ is a probability measure, it is monotone, so $\mu(v_1(\varphi))\leq \mu(v_1(\psi))$, yielding $\pmu(\varphi)\leq \pmu(\psi)$.
        \item We have $v_1(\varphi\land\psi)=|\varphi|^+_\mathsf{M}\cap |\psi|^+_\mathsf{M}$ and $v_1(\varphi\lor\psi)=|\varphi|^+_\mathsf{M}\cup |\psi|^+_\mathsf{M}$. Moreover, $\mu$ being a probability measure implies $\mu(|\varphi|^+_\mathsf{M}\cup |\psi|^+_\mathsf{M})=\mu(|\varphi|^+_\mathsf{M})+\mu(|\psi|^+_\mathsf{M})-\mu(|\varphi|^+_\mathsf{M}\cap |\psi|^+_\mathsf{M})$, equivalently $\mu(v_1(\varphi\lor\psi))=\mu(v_1(\varphi))+\mu(v_1(\psi))-\mu(v_1(\varphi\land\psi))$, i.e., $\pmu(\varphi) + \pmu(\psi) = \pmu(\varphi \land \psi) + \pmu(\varphi \lor \psi)$.
    \end{enumerate}
    Therefore, $\pmu\in\mathbb{P}_1$.
\end{proof}

 To prove the three dimensional and the six dimensional cases we can use some of the translations among the classes $\mathbb{P}_1$, $\mathbb{P}_3$ and $\mathbb{P}_6$. So we have the following lemma.

\begin{lemma}\label{lema:trad-funciones-inducidas}
Given $\mathsf{M}_\mu $ a probability model for $\letkp$ and its induced functions $\pmu$, $\pmutilde$ and $\pmuhat$, it holds that: 

\begin{tasks}[label=\arabic*.](2)
\task $\traunotres \pa{\raisebox{0.13em}{$\pmu$}} =\pmutilde$ 
\task $\tratresuno \pa{\pmutilde}                 =\pmu$
\task $\traunoseis \pa{\pmu}                      =\pmuhat$
\task $\traseisuno \pa{\pmuhat}                   =\pmu$
\end{tasks}
\end{lemma}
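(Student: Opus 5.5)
The four identities are equalities of functions on $\forsigma$, so I will fix an arbitrary formula $\varphi$ and compare values componentwise. Each translation is given by an explicit formula, and $\pmu$ is well defined by Proposition~\ref{prop:robustez-1}. The whole argument should therefore reduce to facts about the extensions in Proposition~\ref{prop:propiedades-extensiones} and about the six regions in Proposition~\ref{prop:propiedades-6-regiones}, together with finite additivity of $\mu$.

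\textbf{Item 1.} By definition, $[\traunotres(\pmu)](\varphi)=(\mu(|\varphi|^+),\mu(|\neg\varphi|^+),\mu(|\circ\varphi|^+))$. Properties 11 and 14 of Proposition~\ref{prop:propiedades-extensiones} give $|\neg\varphi|^+=|\varphi|^-$ and $|\circ\varphi|^+=|\varphi|^\circ$. Hence this triple equals $\pmutilde(\varphi)$.

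\textbf{Item 2.} This is immediate, since the first component of $\pmutilde(\varphi)$ is $\mu(|\varphi|^+)=\pmu(\varphi)$.

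\textbf{Item 3.} This is the main computation. I check the six components of $\traunoseis(\pmu)$ one at a time, writing each set as an intersection or union of extensions.
\begin{itemize}
\item First component: $|\varphi\land\circ\varphi|^+=|\varphi|^+\cap|\varphi|^\circ=B^\circ_\varphi$, so it is $\mu(B^\circ_\varphi)$.
\item Third component: $|\varphi\land\neg\varphi|^+=|\varphi|^+\cap|\varphi|^-=C_\varphi$, so it is $\mu(C_\varphi)$.
\item Fifth component: $|\neg\varphi\land\circ\varphi|^+=|\varphi|^-\cap|\varphi|^\circ=D^\circ_\varphi$, so it is $\mu(D^\circ_\varphi)$.
\item Second component: I use the disjoint decomposition $|\varphi|^+=B^\circ_\varphi\cup B_\varphi\cup C_\varphi$ from Proposition~\ref{prop:propiedades-6-regiones}(1) with Proposition~\ref{prop:6-particion}. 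Additivity then gives $\mu(|\varphi|^+)-\mu(B^\circ_\varphi)-\mu(C_\varphi)=\mu(B_\varphi)$.
\item Fourth component: the same argument with $|\varphi|^-=C_\varphi\cup D_\varphi\cup D^\circ_\varphi$ yields $\mu(D_\varphi)$.
\item Sixth component: $|\varphi\lor\neg\varphi|^+=|\varphi|^+\cup|\varphi|^-$, whose complement is $U_\varphi$. Hence $1-\mu(|\varphi|^+\cup|\varphi|^-)=\mu(U_\varphi)$.
\end{itemize}
The only delicate point is to justify the subtractions. For this I need that the regions $B^\circ_\varphi$, $B_\varphi$, $C_\varphi$ (and likewise $C_\varphi$, $D_\varphi$, $D^\circ_\varphi$) are pairwise disjoint, which is exactly the partition property of Proposition~\ref{prop:6-particion}.

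\textbf{Item 4.} I expect two routes to work. The first is direct: $[\traseisuno(\pmuhat)](\varphi)=\mu(B^\circ_\varphi)+\mu(B_\varphi)+\mu(C_\varphi)=\mu(|\varphi|^+)$, again by Proposition~\ref{prop:propiedades-6-regiones}(1) and disjointness. The second is to derive it from item 3 together with Corollary~\ref{coro:traduccionNS-6}, since $\traseisuno$ is the inverse of $\traunoseis$.

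I expect no real obstacle in any of these steps. The only care needed is in item 3, in correctly identifying the positive extensions of the compound formulas $\varphi\land\circ\varphi$, $\varphi\land\neg\varphi$, $\neg\varphi\land\circ\varphi$ and $\varphi\lor\neg\varphi$ with the corresponding regions.
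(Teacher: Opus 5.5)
Your proof is correct. Items 1, 2 and 4, and the first, third, fifth and sixth components of item 3, coincide with the paper's argument.

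The real difference is in the second and fourth components of item 3. You get $\mu(B_\varphi)$ and $\mu(D_\varphi)$ directly from finite additivity of $\mu$ over the disjoint decompositions $|\varphi|^+_\mathsf{M}=B^\circ_\varphi\cup B_\varphi\cup C_\varphi$ and $|\varphi|^-_\mathsf{M}=C_\varphi\cup D_\varphi\cup D^\circ_\varphi$.

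The paper instead argues syntactically, in three steps:
\begin{enumerate}
\item It uses $p_\mu\in\mathbb{P}_1$ and item 10 of Proposition~\ref{prop:propiedades-funcion-probabilidad} to rewrite $p_\mu(\varphi)-p_\mu(\varphi\land\circ\varphi)-p_\mu(\varphi\land\neg\varphi)$ as $p_\mu$ of the single formula $\varphi\land{\sim}((\varphi\land\neg\varphi)\lor(\varphi\land\circ\varphi))$.
\item It reduces this to $p_\mu(\varphi_B)$ via $\letkp$-interderivabilities (De Morgan and distributivity for ${\sim}$, plus $p_\mu(\varphi\land{\sim}\varphi)=0$).
\item It uses $|\varphi_B|^+_\mathsf{M}=B_\varphi$ from Observation~\ref{obs:particion-con-extensones-positivas}.
\end{enumerate}

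Your route is shorter and more uniform, since it is the same additivity fact the paper itself uses for item 4. It rests only on Propositions~\ref{prop:6-particion} and~\ref{prop:propiedades-6-regiones}(1)--(2), not on the interderivabilities and region-identification formulas, which the paper only sketches. In exchange, the paper's route shows that these two components are themselves probabilities of formulas, namely $p_\mu(\varphi_B)$ and $p_\mu(\varphi_D)$.

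A minor slip: the identities $|\neg\varphi|^+_\mathsf{M}=|\varphi|^-_\mathsf{M}$ and $|{\circ}\varphi|^+_\mathsf{M}=|\varphi|^\circ_\mathsf{M}$ are items 12 and 15 of Proposition~\ref{prop:propiedades-extensiones}, not 11 and 14.
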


\begin{proof}
Using Definition \ref{def:funciones-probabilidad-inducidas} we have that:
\begin{enumerate}
\item Using items 12  and 15 from Proposition \ref{prop:propiedades-extensiones} it holds that:\\ $\traunotres\pa{\pmu\pa{\varphi}}=\pa{\pmu\pa{\varphi},\pmu\pa{\neg\varphi},\pmu\pa{\circ\varphi}}=$ $\pa{\mu\pa{|\varphi|_\mathsf{M}^+},\mu\pa{|\neg\varphi|_\mathsf{M}^+},\mu\pa{|{\circ}\varphi|_\mathsf{M}^+}}=$  $\pa{\mu\pa{|\varphi|_\mathsf{M}^+},\mu\pa{|\varphi|_\mathsf{M}^-},\mu\pa{|\varphi|_\mathsf{M}^\circ}}=\pmutilde\pa{\varphi}$.

\item $\tratresuno\pa{\pmutilde\pa{\varphi}}=\tratresuno\pa{\pa{\mu\pa{|\varphi|_\mathsf{M}^+},\mu\pa{|\varphi|_\mathsf{M}^-},\mu\pa{|\varphi|_\mathsf{M}^\circ}}}=\mu\pa{|\varphi|_\mathsf{M}^+}=\pmu\pa{\varphi}$.

\item $\traunoseis\pa{\pmu\pa{\varphi}}=$
$\Bigl(
{\pmu}\pa{\varphi\land\circ\varphi},\
{\pmu}\pa{\varphi}-{\pmu}\pa{\varphi\land\neg\varphi}-{\pmu}\pa{\varphi\land\circ\varphi},\ 
{\pmu}\pa{\varphi\land\neg\varphi},\\ \hspace*{2.8cm}
{\pmu}\pa{\neg\varphi}-{\pmu}\pa{\varphi\land\neg\varphi}-{\pmu}\pa{\varphi\land\circ\varphi},\
{\pmu}\pa{\neg\varphi\land\circ\varphi},\
1-{\pmu}\pa{\varphi\lor\neg\varphi}
\Bigr)$

let us check each component. First we have that ${\pmu}\pa{\varphi\land\circ\varphi}=\mu\pa{|\varphi\land\circ\varphi|^+}=\mu\pa{B^\circ_\varphi}$, analogously ${\pmu}\pa{\neg\varphi\land\circ\varphi}=\mu\pa{|\neg\varphi\land\circ\varphi|^+}=\mu\pa{D^\circ_\varphi}$. We also have that ${\pmu}\pa{\varphi\land\neg\varphi}=\mu\pa{|\varphi\land\neg\varphi|^+}=\mu\pa{C_\varphi}$ and $1-{\pmu}\pa{\varphi\lor\neg\varphi}=1-\mu\pa{|\varphi\lor\neg\varphi|^+}=\mu\pa{X\setminus|\varphi\lor\neg\varphi|^+}=\mu\pa{|{\sim}\pa{\varphi\lor\neg\varphi}|^+}=\mu\pa{U_\varphi}$. We have that ${\pmu}\in\mathbb{P}_1$ and since $\varphi\land\neg\varphi\vdash_{\letkp}\varphi$ and $\varphi\land\circ\varphi\vdash_{\letkp}\varphi$ using item 10 of Proposition \ref{prop:propiedades-funcion-probabilidad} we have that ${\pmu}\pa{\varphi}-{\pmu}\pa{\varphi\land\neg\varphi}-{\pmu}\pa{\varphi\land\circ\varphi}= {\pmu}\pa{\varphi\land{\sim}\pa{\pa{\varphi\land\neg\varphi}\lor\pa{\varphi\land\circ\varphi}}}$. Using interderivability of formulas in $\letkp$ as well as the fact that ${\pmu}\in\mathbb{P}_1$, the previous probability can be simplified as follows:

{\small
$\begin{aligned}[t]
    \pmu\pa{\varphi\land\sim\pa{\pa{\varphi\land\neg\varphi}\lor\pa{\varphi\land\circ\varphi}}}
    &=\pmu\pa{\varphi\land\sim\pa{\varphi\land\pa{\neg\varphi\lor\circ\varphi}}}\\
    &=\pmu\pa{\varphi\land\pa{{\sim}\varphi\lor{\sim}\pa{\neg\varphi\lor\circ\varphi}}}\\
    &=\pmu\pa{\pa{\varphi\land{\sim}\varphi}\lor\pa{\varphi\land{\sim}\pa{\neg\varphi\lor\circ\varphi}}}\\
    &=\cancel{\pmu\pa{\varphi\land{\sim}\varphi}}+\pmu\pa{\varphi\land{\sim}\pa{\neg\varphi\lor\circ\varphi}}-\cancel{\pmu\pa{\pa{\varphi\land{\sim}\varphi}\land\pa{\varphi\land{\sim}\pa{\neg\varphi\lor\circ\varphi}}}}\\
    &=\mu\pa{|\varphi\land{\sim}\pa{\neg\varphi\lor\circ\varphi}|^+}\\
    &=\mu\pa{B_\varphi}
\end{aligned}$}

Analogously $\pmu\pa{\varphi}-\pmu\pa{\varphi\land\neg\varphi}-\pmu\pa{\varphi\land\circ\varphi}= \pmu\pa{\neg\varphi\land{\sim}\pa{\varphi\lor\circ\varphi}}= \mu\pa{D_\varphi}$.
After analyzing the six components we can conclude that $\traunoseis\pa{\pmu\pa{\varphi}}=\pa{\mu\pa{B^\circ_\varphi},\mu\pa{B,_\varphi}, \mu\pa{C_\varphi} ,\mu\pa{D_\varphi},\mu\pa{D^\circ_\varphi},\mu\pa{U_\varphi}}=\pmuhat\pa{\varphi}$.

\item By Proposition \ref{prop:6-particion}, the fact that $\mu$  is a probability measure, as well as item 1 of Proposition \ref{prop:propiedades-6-regiones}, we have that $\traseisuno\pa{\pmuhat\pa{\varphi}}= \traseisuno\pa{\pa{\mu\pa{B^\circ_\varphi}, \mu\pa{B_\varphi}, \mu\pa{C_\varphi},  \mu\pa{D_\varphi}, \mu\pa{D_\varphi^\circ}, \mu\pa{U_\varphi}}}=\mu\pa{B^\circ_\varphi}+\mu\pa{B_\varphi}+\mu\pa{C_\varphi}=\mu\pa{B^\circ_\varphi\cup B_\varphi\cup C_\varphi}= \mu\pa{|\varphi|_\mathsf{M}^+}=\pmu\pa{\varphi}$.
\end{enumerate}
\end{proof}

We have confirmed that by applying an appropriate translation to an induced probability function we can get an induced twist probability function or an induced 6-valued probability function and viceversa. 

\begin{proposition}\label{prop:robustez-3}
    Given $\mathsf{M}_\mu=\langle X, \mu, v \rangle$ a probabilistic model for $\letkp$ based on $\mathsf{M}=\langle X, v\rangle$ and $\mu$, then $\pmutilde$ (the twist probability function induced by $\mathsf{M}_\mu$) is a twist probability function for $\letkp$, i.e. $\pmutilde\in\mathbb{P}_3$.
\end{proposition}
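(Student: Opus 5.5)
The plan is to avoid checking the six axioms of Definition~\ref{def:funcion_de_probabilidad-twist} directly and instead chain together results already established. By Proposition~\ref{prop:robustez-1}, the function $\pmu$ induced by $\mathsf{M}_\mu$ belongs to $\mathbb{P}_1$. Lemma~\ref{lema:AP_es_APtwist} (equivalently, the well-definedness of $\traunotres:\mathbb{P}_1\to\mathbb{P}_3$ in Corollary~\ref{coro:traduccionNST-NS}) then gives $\traunotres\pa{\pmu}\in\mathbb{P}_3$. Finally, item~1 of Lemma~\ref{lema:trad-funciones-inducidas} states that $\traunotres\pa{\pmu}=\pmutilde$, so $\pmutilde\in\mathbb{P}_3$.

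The key steps, in order, are therefore as follows.
\begin{enumerate}
\item Invoke Proposition~\ref{prop:robustez-1} to obtain $\pmu\in\mathbb{P}_1$.
\item Apply Lemma~\ref{lema:AP_es_APtwist}, which says that the triple $\pa{\pmu\pa{\varphi},\pmu\pa{\neg\varphi},\pmu\pa{\circ\varphi}}$ is a twist probability function.
\item Identify this triple with $\pmutilde\pa{\varphi}=\pa{\mu\pa{|\varphi|^+_\mathsf{M}},\mu\pa{|\varphi|^-_\mathsf{M}},\mu\pa{|\varphi|^\circ_\mathsf{M}}}$. This uses $|\neg\varphi|^+_\mathsf{M}=|\varphi|^-_\mathsf{M}$ and $|{\circ}\varphi|^+_\mathsf{M}=|\varphi|^\circ_\mathsf{M}$ from Proposition~\ref{prop:propiedades-extensiones}, which is exactly the content of Lemma~\ref{lema:trad-funciones-inducidas}(1).
\end{enumerate}

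There is no real obstacle, because all the work sits in the earlier lemmas. The only point needing care is that the identification in step~3 is an equality of functions on all of $\forsigma$, which holds since it holds formula by formula.

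As a sanity check, or as an alternative direct route, each axiom can also be verified semantically:
\begin{itemize}
\item (Ax3.1) and (Ax3.2) follow from items~20 and~23 of Proposition~\ref{prop:propiedades-extensiones}, which give $|{\circ\circ}\varphi|^+=X$ and $|\varphi\land\neg\varphi\land\circ\varphi|^+=\emptyset$.
\item (Ax3.3) and (Ax3.4) follow from items~12 and~15 of the same proposition.
\item (Ax3.5) and (Ax3.6) are items~3 and~4 of the proof of Proposition~\ref{prop:robustez-1}, via soundness and the monotonicity and inclusion–exclusion properties of $\mu$.
\end{itemize}
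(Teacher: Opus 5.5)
Your proposal is correct and follows the paper's proof exactly: Proposition~\ref{prop:robustez-1} gives $\pmu\in\mathbb{P}_1$, Lemma~\ref{lema:trad-funciones-inducidas}(1) gives $\traunotres\pa{\pmu}=\pmutilde$, and Lemma~\ref{lema:AP_es_APtwist} then yields $\pmutilde\in\mathbb{P}_3$. One small slip in your optional direct check: the items of Proposition~\ref{prop:propiedades-extensiones} that give $|{\circ\circ}\varphi|^+_\mathsf{M}=X$ and $|\varphi\land\neg\varphi\land\circ\varphi|^+_\mathsf{M}=\emptyset$ are items~21 and~24, not items~20 and~23.
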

\begin{proof}
     By Proposition \ref{prop:robustez-1} it holds that $\pmu\in\mathbb{P}_1$ and by Lemma \ref{lema:trad-funciones-inducidas}, $\traunotres\pa{\pmu\pa{\varphi}}=\pmutilde\pa{\varphi} $. Hence, using Lemma \ref{lema:AP_es_APtwist} we conclude that $\pmutilde \in \mathbb{P}_3$. 
\end{proof}
\begin{proposition}\label{prop:robustez-6}
    Given $\mathsf{M}_\mu=\langle X, \mu, v \rangle$ a probabilistic model for $\letkp$ based on $\mathsf{M}=\langle X, v\rangle$ and $\mu$, then $\pmuhat$ (the 6-valued probability function induced by $\mathsf{M}_\mu$) is a 6-valued probability function for $\letkp$, i.e. $\pmuhat\in\mathbb{P}_6$.
\end{proposition}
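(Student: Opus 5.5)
The plan mirrors the proof of Proposition~\ref{prop:robustez-3}: no axiom of Definition~\ref{def:funcion_probabilidad_6V} needs to be checked directly. By Proposition~\ref{prop:robustez-1}, the one-dimensional induced function $\pmu$ belongs to $\mathbb{P}_1$. Lemma~\ref{lema:APNE-induce-AP6V} then says that the translation $\traunoseis$ sends any element of $\mathbb{P}_1$ to a 6-valued probability function. This lemma is itself a consequence of Lemmas~\ref{lema:AP_es_APtwist} and~\ref{lema:APNETwist-induce-AP6V}. Hence $\traunoseis\pa{\pmu}\in\mathbb{P}_6$.

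The remaining step is to identify $\traunoseis\pa{\pmu}$ with $\pmuhat$, which is exactly item~3 of Lemma~\ref{lema:trad-funciones-inducidas}. That identification is the only substantive point, and it has already been carried out there, componentwise. It uses the region-identification formulas of Observation~\ref{obs:particion-con-extensones-positivas} to express $B^\circ_\varphi$, $C_\varphi$ and $D^\circ_\varphi$ as positive extensions of $\varphi\land\circ\varphi$, $\varphi\land\neg\varphi$ and $\neg\varphi\land\circ\varphi$. It obtains $U_\varphi$ from the classical negation of $\varphi\lor\neg\varphi$. It recovers $B_\varphi$ and $D_\varphi$ via item~10 of Proposition~\ref{prop:propiedades-funcion-probabilidad}. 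Combining the two facts gives $\pmuhat=\traunoseis\pa{\pmu}\in\mathbb{P}_6$.

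If one wanted to avoid relying on the somewhat terse treatment of the $D_\varphi$ component in Lemma~\ref{lema:trad-funciones-inducidas}, there is an alternative route through the three-dimensional case. Proposition~\ref{prop:robustez-3} gives $\pmutilde\in\mathbb{P}_3$, and Lemma~\ref{lema:APNETwist-induce-AP6V} then gives $\tratresseis\pa{\pmutilde}\in\mathbb{P}_6$. One checks $\tratresseis\pa{\pmutilde}=\pmuhat$ by computing each component as $\mu$ of a set, using additivity of $\mu$ over disjoint unions. For instance, $\mu\pa{|\neg\varphi|^+}-\mu\pa{D^\circ_\varphi}-\mu\pa{C_\varphi}=\mu\pa{D_\varphi}$, because $|\varphi|^-_\mathsf{M}=C_\varphi\cup D_\varphi\cup D^\circ_\varphi$ is a disjoint union by Proposition~\ref{prop:propiedades-6-regiones}(2) and Proposition~\ref{prop:6-particion}. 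The $B_\varphi$ component is handled the same way using item~1 of that proposition.

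The main (mild) obstacle is this component identification, particularly that the $\mathsf{b}$-region is subtracted exactly once in each of $B_\varphi$ and $D_\varphi$. The measure-theoretic version via the partition $\mathcal{P}_\varphi$ makes this transparent, so I would present that as the justification if needed.
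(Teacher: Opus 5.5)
Your main argument is exactly the paper's proof: $\pmu\in\mathbb{P}_1$ by Proposition~\ref{prop:robustez-1}, $\traunoseis\pa{\pmu}=\pmuhat$ by item~3 of Lemma~\ref{lema:trad-funciones-inducidas}, and hence $\pmuhat\in\mathbb{P}_6$ by Lemma~\ref{lema:APNE-induce-AP6V}. Your measure-theoretic check of the $D_\varphi$ component, via $\tratresseis\pa{\pmutilde}$ and the disjoint decomposition $|\varphi|^-_\mathsf{M}=C_\varphi\cup D_\varphi\cup D^\circ_\varphi$, is correct and worth including, because the paper's write-up of that component in Lemma~\ref{lema:trad-funciones-inducidas} has typos: it subtracts $\pmu\pa{\varphi\land\circ\varphi}$ where it should subtract $\pmu\pa{\neg\varphi\land\circ\varphi}$.
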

  \begin{proof}
     By Proposition \ref{prop:robustez-1} it holds that $\pmu\in\mathbb{P}_1$ and by Lemma \ref{lema:trad-funciones-inducidas}, $\traunoseis\pa{\pmu\pa{\varphi}}=\pmuhat\pa{\varphi} $. Hence, using Lemma \ref{lema:APNE-induce-AP6V} we conclude that $\pmuhat \in \mathbb{P}_6$. 

\end{proof}

\begin{example} \label{ex:relig1} \textbf{[Theistic Belief, cont.]}
Let $\mathsf{M}_\mu=\langle X, \mu, v \rangle$ a probabilistic model for $\letkp$ based on the twist model $\mathsf{M}=\langle X, v\rangle$ given in Example~\ref {ex:relig}. Then, the 6-valued probability function $\pmuhat$ of Definition~\ref{def:funciones-probabilidad-inducidas}(3) assigns a measure to the six disjoint regions into which the population $X$ of the country was divided according to theistic beliefs. Thus, in a country where, for instance, a large portion of the population holds strong religious beliefs, the set $B_\varphi^\circ$ will have a high probability measure. In turn, if a significant portion of the population consists of strong atheists, the probability measure of region $D_\varphi^\circ$ will be high. In either cases, the set $|\varphi|_\mathsf{M}^\circ=B_\varphi^\circ \cup D_\varphi^\circ$  of people with strong convictions about God's existence will have a  high probability measure.
\end{example}

\section{Completeness}\label{sec:completeness}
Two approaches have been presented so far to define probability functions for $\letkp$ of 1, 3 and 6 dimensions, first from a syntactic approach by means of Definitions~\ref{def:funcion_de_probabilidad}, \ref{def:funcion_de_probabilidad-twist} and \ref{def:funcion_probabilidad_6V};  and later through a semantic approach by means of Definition~\ref{def:funciones-probabilidad-inducidas}. Thanks to the Propositions \ref{prop:robustez-1}, \ref{prop:robustez-3} and \ref{prop:robustez-6} we have that the induced functions are sound respect to the set of axioms that define the clases $\mathbb{P}_1$, $\mathbb{P}_3$ and $\mathbb{P}_6$. To prove that the approaches are equivalent, it is necessary to prove completeness for each of the clases.

We proceed similarly to the soundness case, i.e. we first guarantee the one dimensional case and later using the appropriate translations we obtain the result for the higher dimensional cases. So we first prove that given a probability function $p$ for $\letkp$, it is possible to construct a probabilistic model $\mathsf{M}_\mu$ for $\letkp$ that induces such a function. To this end, it is necessary to build an appropriate non-standard twist model $\mathsf{M}$ for $\letkp$ and a measure $\mu$. In order to build $\mathsf{M}$ it is necessary to recall some basic definitions, as well as to generalize some known concepts.
\begin{definition}
\textbf{(Assignment in $\mathcal{M}_6$)}
    Let $V=\{\mathsf{p}_1, \mathsf{p}_2,\ldots, \mathsf{p}_m\}$ be the set of propositional variables and let $\mathbf{6}=\{\mathsf{T,t,b,f,F,n}\}$ be the set of truth values of $\mathcal{M}_6$. Any function $a:V\longrightarrow\mathbf{6}$ will be called an assignment from $V$ to $\mathbf{6}$.
\end{definition}

\begin{proposition}
    Given $V=\{\mathsf{p}_1, \mathsf{p}_2,\ldots, \mathsf{p}_m\}$ of cardinality $m$ and $\mathbf{6}=\{\mathsf{T,t,b,f,F,n}\}$ of cardinality 6, let $X=\{a \mid a:V\longrightarrow\mathbf{6}\}$ be the set of all assignments from $V$ to $\mathbf{6}$. Then the cardinality of $X$ is $6^m$, i.e., $X$ is a finite set.
\end{proposition}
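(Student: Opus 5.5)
The statement is the elementary counting fact that the set of all functions from an $m$-element set to a $6$-element set has exactly $6^m$ elements. I will argue by the multiplication principle, or equivalently by induction on $m$, the number of propositional variables.

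First, I identify an assignment $a\in X$ with the ordered $m$-tuple $(a(\mathsf{p}_1),\ldots,a(\mathsf{p}_m))\in\mathbf{6}^m$. This map $X\to\mathbf{6}^m$ is a bijection. It is injective because two functions with the same domain $V$ that agree on every $\mathsf{p}_i$ are equal. It is surjective because any tuple $(z_1,\ldots,z_m)$ defines the assignment $\mathsf{p}_i\mapsto z_i$. Here I use that $V$ is enumerated without repetition, so $|V|=m$. Hence $|X|=|\mathbf{6}^m|$.

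Second, I show $|\mathbf{6}^m|=6^m$ by induction on $m$.

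\begin{itemize}
\item For the base case, with $m=1$, there are exactly six tuples. If one wishes to include $m=0$, there is exactly one empty function, and $6^0=1$.
\item For the inductive step, $\mathbf{6}^{m+1}\cong\mathbf{6}^m\times\mathbf{6}$. A finite Cartesian product has cardinality equal to the product of the cardinalities, so $|\mathbf{6}^{m+1}|=6^m\cdot 6=6^{m+1}$.
\end{itemize}

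Finiteness of $X$ follows immediately, since $6^m$ is a natural number. This finiteness is what later allows a probability measure on $\mathcal{B}_X=2^X$ to be specified by weights on the points of $X$.

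There is no genuine obstacle. The only point needing care is that the $\mathsf{p}_i$ are pairwise distinct, so that the tuple identification is bijective. This holds by the paper's standing assumption that $V$ has cardinality $m$.
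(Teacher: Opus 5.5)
Your argument is correct: the bijection $a\mapsto(a(\mathsf{p}_1),\ldots,a(\mathsf{p}_m))$ onto $\mathbf{6}^m$, combined with the product rule, is exactly the standard counting fact at stake here. The paper states this proposition without any proof, treating it as immediate, so your proposal supplies the routine justification the paper takes for granted.
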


\begin{proposition}
    Given $V=\{\mathsf{p}_1, \mathsf{p}_2,\ldots,\mathsf{p}_m\}$ as the set of propositional variables, the signature $\Sigma=\{\land,\lor,\rightarrow,\neg,\circ\}$, and $X$ as the set of all assignments from $V$ to $\mathbf{6}$, then for any assignment $a\in X$ there exists a unique function $v_a:\forsigma\longrightarrow\langle \mathbf{6},\land,\lor,\rightarrow,\neg,\circ \rangle$ that extends $a$ and such that $v_a$ is a valuation in $\mathcal{M}_6$.
\end{proposition}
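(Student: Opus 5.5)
This is the standard unique homomorphic extension property of the absolutely free algebra $\forsigma$, and I would prove it by structural recursion on formulas. Fix $a\in X$. Since $\forsigma$ is freely generated by $V$ over $\Sigma$, every formula is either a variable $\mathsf{p}_i$ or has exactly one of the forms $\varphi\land\psi$, $\varphi\lor\psi$, $\varphi\rightarrow\psi$, $\neg\varphi$, $\circ\varphi$, with its immediate subformulas uniquely determined (unique readability). So I define $v_a$ by recursion on the complexity (number of connectives) of formulas:
\[
v_a(\mathsf{p}_i)=a(\mathsf{p}_i), \quad v_a(\varphi\,\#\,\psi)=v_a(\varphi)\,\tilde{\#}\,v_a(\psi)\ \text{ for } \#\in\{\land,\lor,\rightarrow\},
\]
\[
v_a(\neg\varphi)=\tneg\, v_a(\varphi), \quad v_a(\circ\varphi)=\tcirc\, v_a(\varphi),
\]
where the operations $\tland,\tlor,\tto,\tneg,\tcirc$ are those given by the truth tables of $\mathcal{M}_6$.

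\textbf{Existence.} Unique readability ensures that each clause applies to exactly one decomposition, so the recursion produces a well-defined total function $\forsigma\to\mathbf{6}$. Its values lie in $\mathbf{6}$ because every table of $\mathcal{M}_6$ is a total operation on $\mathbf{6}$. By construction $v_a$ agrees with $a$ on $V$ and commutes with every connective. Hence $v_a$ is a homomorphism from $\forsigma$ to $\langle \mathbf{6},\tland,\tlor,\tto,\tneg,\tcirc\rangle$, that is, a valuation in $\mathcal{M}_6$. Equivalently, via the translation of Observation~\ref{obs:ternas-valores}, it is a homomorphism into $T_{\mathcal{B}_2}$.

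\textbf{Uniqueness.} Suppose $w$ is another valuation in $\mathcal{M}_6$ extending $a$. I show $w(\varphi)=v_a(\varphi)$ by induction on the complexity of $\varphi$. In the base case $\varphi=\mathsf{p}_i$, both functions give $a(\mathsf{p}_i)$. In the inductive step, for example $\varphi=\psi\land\gamma$, the homomorphism property of both functions and the induction hypothesis give
\[
w(\psi\land\gamma)=w(\psi)\tland w(\gamma)=v_a(\psi)\tland v_a(\gamma)=v_a(\psi\land\gamma).
\]
The cases $\lor$, $\rightarrow$, $\neg$ and $\circ$ are handled the same way.

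There is no real obstacle. The only point needing care is unique readability of $\forsigma$, which is exactly what the paper means by calling it the algebra freely generated by $V$. One could simply invoke the universal property of free algebras: every map from the generators into any $\Sigma$-algebra, here $\mathcal{M}_6$, extends uniquely to a homomorphism.
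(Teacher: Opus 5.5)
Your proof is correct, and it matches the paper's approach: the paper states this proposition without proof and implicitly relies on $\forsigma$ being the algebra freely generated by $V$ over $\Sigma$. Your argument (structural recursion via unique readability for existence, induction on complexity for uniqueness, with all five tables of $\mathcal{M}_6$ being total operations on $\mathbf{6}$) is exactly the unpacking of that universal property.
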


\begin{definition}
\textbf{(Set of models of a formula)}    
    Let $\varphi\in\forsigma$ be a formula and let $X$ be the set of all assignments from $V$ to $\mathbf{6}$. The set of models of $\varphi$ is the set $\Mod{\varphi}=\{a\in X \mid v_a(\varphi)\in D_6\}$.
\end{definition}

Since each of the truth values in $\mathbf{6}$ can be identified with a triple in $T_{B_2}$ and the classical negation $\sim$ is definable in $\letkp$, it is possible to encode these truth values. Given a truth value $w\in \mathbf{6}$, let $(z_1,z_2,z_3)\in \mathbf{2}_{\letkp}^3$ be its associated triple. Then its encoding will be the conjunction $t_{z_1}\land t_{z_2} \land t_{z_3}$ where:
$$
\begin{array}{ccccc}
t_{z_1}= \left\lbrace 
\begin{array}{lc}
    \textsf{p} & \text{if } z_1=1 \\[1.5ex]
    {\sim}\textsf{p} & \text{if } z_1=0
\end{array} \right.
&\quad&
t_{z_2}= \left\lbrace 
\begin{array}{lc}
    \neg\textsf{p} & \text{if } z_2=1 \\[1.5ex]
    {\sim}\neg\textsf{p} & \text{if } z_2=0
\end{array} \right.
&\quad&
t_{z_3}= \left\lbrace 
\begin{array}{lc}
    \circ\textsf{p} & \text{if } z_3=1 \\[1.5ex]
    {\sim}{\circ}\textsf{p} & \text{if } z_3=0
\end{array} \right.  
\end{array}
$$

Thus, for example, the truth value $\mathsf{T}$ is identified with the triple $(1,0,1)$ and its associated formula would be $\textsf{p}\land{\sim}{\neg}\textsf{p}\land \circ \textsf{p}$. However, in several cases, due to the restrictions on the triples in $\mathbf{2}_{\letkp}^3$, the encoding can be simplified as established by the following lemma, which allows us to verify that each of the truth values in $\mathbf{6}$ can be identified by means of a formula.

\begin{lemma}\label{lema:identificacion-valores-formulas}
Given $V=\{\mathsf{p}_1, \mathsf{p}_2,\ldots, \mathsf{p}_m\}$, $\Sigma=\{\land,\lor,\rightarrow,\neg,\circ\}$ and an assignment $a$ from $V$ to $\mathbf{6}$, the functions $f_w:V\longrightarrow\forsigma$ for $w\in\mathbf{6}$ given by:
\begin{align*}
f_\mathsf{T}(\mathsf{p}) &= \mathsf{p} \land \circ \mathsf{p} &  
f_\mathsf{t}(\mathsf{p}) &= \mathsf{p}\land{\sim}\neg \mathsf{p}\land {\sim}{\circ} \mathsf{p} &
f_\mathsf{b}(\mathsf{p}) &= \mathsf{p}\land\neg \mathsf{p}   \\
f_\mathsf{f}(\mathsf{p}) &= {\sim}\mathsf{p}\land\neg\mathsf{p}\land {\sim}{\circ} \mathsf{p} &  
f_\mathsf{F}(\mathsf{p}) &= \neg \mathsf{p}\land\circ \mathsf{p} &
f_\mathsf{n}(\mathsf{p}) &= {\sim}\mathsf{p}\land{\sim}\neg \mathsf{p}
\end{align*}
satisfy that $v_a\pa{f_w\pa{\mathsf{p}}}\in D_6$ if and only if $a(\mathsf{p})=w$ for every $\mathsf{p}\in V$.
\end{lemma}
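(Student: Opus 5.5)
The plan is to work entirely inside the matrix $\mathcal{M}_6$, using its identification with $\mathcal{M}(\mathcal{B}_2)$ from Observation~\ref{obs:ternas-valores}. For a fixed variable $\mathsf{p}$, write $a(\mathsf{p})=(z_1,z_2,z_3)\in\mathbf{2}^3_{\letkp}$. Then $v_a(\mathsf{p})=(z_1,z_2,z_3)$, and by clauses 4 and 5 of Definition~\ref{def:estructura-twist} we get $v_a(\neg\mathsf{p})=(z_2,z_1,z_3)$ and $v_a(\circ\mathsf{p})=(z_3,{\sim}z_3,1)$. The designated set $D_6=\{\mathsf{T},\mathsf{t},\mathsf{b}\}$ is exactly the set of triples with first coordinate $1$. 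So for any formula $\gamma$, $v_a(\gamma)\in D_6$ iff the first coordinate of $v_a(\gamma)$ is $1$.

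Two facts will be used to read off designation.
\begin{itemize}
\item The first coordinate of a conjunction is the meet of the first coordinates (clause 1).
\item By Observation~\ref{obs:twist-constantes-y-negación}, the first coordinate of ${\sim}\gamma$ is the complement of the first coordinate of $\gamma$.
\end{itemize}
Hence a formula built as a conjunction of literals from $\mathsf{p},\neg\mathsf{p},\circ\mathsf{p}$ and their ${\sim}$-negations is designated iff each positive literal has first coordinate $1$ and each ${\sim}$-literal has first coordinate $0$. Concretely, $\mathsf{p}$, $\neg\mathsf{p}$ and $\circ\mathsf{p}$ contribute the conditions $z_1=1$, $z_2=1$ and $z_3=1$, and their ${\sim}$-versions contribute $z_1=0$, $z_2=0$ and $z_3=0$.

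This gives, for each $f_w(\mathsf{p})$, a set of constraints on $(z_1,z_2,z_3)$:
\begin{itemize}
\item $f_\mathsf{T}$: $z_1=z_3=1$;
\item $f_\mathsf{t}$: $z_1=1$, $z_2=z_3=0$;
\item $f_\mathsf{b}$: $z_1=z_2=1$;
\item $f_\mathsf{f}$: $z_1=0$, $z_2=1$, $z_3=0$;
\item $f_\mathsf{F}$: $z_2=z_3=1$;
\item $f_\mathsf{n}$: $z_1=z_2=0$.
\end{itemize}

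The remaining step, and the only one needing care, is to check that each constraint set, intersected with $\mathbf{2}^3_{\letkp}$, is exactly the singleton triple assigned to $w$ in Observation~\ref{obs:ternas-valores}. The two domain restrictions are $z_3\le z_1\sqcup z_2$ and $z_1\sqcap z_2\sqcap z_3=0$; they exclude only $(0,0,1)$ and $(1,1,1)$.
\begin{itemize}
\item $f_\mathsf{T}$: $z_1=z_3=1$ forces $z_2=0$ by the second restriction, leaving only $(1,0,1)=\mathsf{T}$.
\item $f_\mathsf{F}$: symmetrically, $z_2=z_3=1$ forces $z_1=0$, leaving $(0,1,1)=\mathsf{F}$.
\item $f_\mathsf{b}$: $z_1=z_2=1$ forces $z_3=0$, leaving $(1,1,0)=\mathsf{b}$.
\item $f_\mathsf{n}$: $z_1=z_2=0$ forces $z_3=0$ by the first restriction, leaving $(0,0,0)=\mathsf{n}$.
\item $f_\mathsf{t}$ and $f_\mathsf{f}$ fix all three coordinates, giving $(1,0,0)=\mathsf{t}$ and $(0,1,0)=\mathsf{f}$ directly.
\end{itemize}

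Both directions of the biconditional follow, since the constraint set of $f_w$ is satisfied precisely by the triple for $w$. The argument applies to each variable separately, so it holds for every $\mathsf{p}\in V$. I expect no real obstacle; the main point is to invoke the domain restrictions precisely where the encoding omits a literal.
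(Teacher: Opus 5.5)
Your proof is correct, but it takes a different route from the paper's. The paper just computes the full $\mathcal{M}_6$-value of each $f_w(\mathsf{p})$ for each of the six values of $\mathsf{p}$. This is a $6\times 6$ table built from the tables of conjunction, $\neg$, $\circ$ and ${\sim}$, and one then observes that the designated entries are exactly the diagonal ones.

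You instead pass to the twist representation over $\mathcal{B}_2$ and note three things:
\begin{itemize}
\item designation depends only on the first coordinate;
\item on first coordinates, conjunction acts as meet and ${\sim}$ as complement, so designation of $f_w(\mathsf{p})$ reduces to a Boolean constraint on $(z_1,z_2,z_3)$;
\item the domain conditions $z_3\le z_1\sqcup z_2$ and $z_1\sqcap z_2\sqcap z_3=0$ cut each constraint set down to the single triple for $w$.
\end{itemize}

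The table gives strictly more information, namely the exact undesignated values (e.g.\ $f_\mathsf{b}(\mathsf{t})=\mathsf{f}$), at the cost of 36 separate computations. Your argument explains \emph{why} literals may be dropped from the full encoding $t_{z_1}\land t_{z_2}\land t_{z_3}$: the two domain restrictions do the work. The paper asserts this just before the lemma but never justifies it. Moreover, your first-coordinate computation does not depend on the algebra being $\mathbf{2}$ or on the argument being a variable. It therefore transfers verbatim to any $\mathcal{M}(\mathcal{B}_X)$ and arbitrary $\varphi$, showing directly that $|f_w(\varphi)|^+_\mathsf{M}$ is the corresponding region of Definition~\ref{def:particion6valores}; for instance, $|f_\mathsf{t}(\varphi)|^+_\mathsf{M}=|\varphi|^+_\mathsf{M}\setminus\pa{|\varphi|^-_\mathsf{M}\cup|\varphi|^\circ_\mathsf{M}}=B_\varphi$.

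One small wording slip: when you say each ${\sim}$-literal ``has first coordinate $0$'', you mean the formula under the ${\sim}$ does. Your concrete constraints ($z_1=0$, etc.) are stated correctly.
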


\begin{proof}
It suffices to construct the truth table for each of the functions.
\[
\begin{array}{c|c|c|c|c|c|c}
\textsf{p} & f_\mathsf{T}(\textsf{p}) & f_{\mathsf{t}}(\textsf{p}) & f_\mathsf{b}(\textsf{p}) & f_{\mathsf{f}}(\textsf{p}) & f_\mathsf{F}(\textsf{p}) & f_n(\textsf{p}) \\
\hline
\mathsf{T} & \cellcolor{gray!25}\mathsf{T} & \mathsf{F} & \mathsf{F} & \mathsf{F} & \mathsf{F} & \mathsf{F} \\\hline
\mathsf{t} & \mathsf{F} & \cellcolor{gray!25}\mathsf{t} & \mathsf{f} & \mathsf{F} & \mathsf{F} & \mathsf{F} \\\hline
\mathsf{b} & \mathsf{F} & \mathsf{F} & \cellcolor{gray!25}\mathsf{b} & \mathsf{F} & \mathsf{F} & \mathsf{F} \\\hline
\mathsf{f} & \mathsf{F} & \mathsf{F} & \mathsf{f} & \cellcolor{gray!25}\mathsf{t} & \mathsf{F} & \mathsf{F} \\\hline
\mathsf{F} & \mathsf{F} & \mathsf{F} & \mathsf{F} & \mathsf{F} & \cellcolor{gray!25}\mathsf{T} & \mathsf{F} \\\hline
\mathsf{n} & \mathsf{F} & \mathsf{n} & \mathsf{n} & \mathsf{n} & \mathsf{F} & \cellcolor{gray!25}\mathsf{t} \\
\end{array}
\]
\end{proof}
\begin{obs}\label{obs:funciones-fw-identifican-regiones}
For our purposes, the functions $f_w$ are defined for propositional variables; however, they can be generalized to arbitrary formulas in $\forsigma$, for example $f_\mathsf{t}\pa{\varphi}= \varphi\land {\sim}\neg\varphi\land {\sim}\circ\varphi$. Given the region-identification formulas of Observation~\ref{obs:particion-con-extensones-positivas} we have that $\varphi_B = \varphi\land{\sim}\pa{\neg \varphi\lor {\circ} \varphi }$, then it can be easily proved that $\varphi\land {\sim}\neg\varphi\land {\sim}\circ\varphi\Dashv\vDash_{\mathcal{M}_6}\varphi\land{\sim}\pa{\neg \varphi\lor {\circ} \varphi }.$\footnote{$\varphi\Dashv\vDash\psi$ denotes that $f_\mathsf{f}\pa{\varphi}=$ $\varphi\vDash\psi$ and $\psi\vDash\varphi$.} In fact, we have that each of the following pair of formulas have the same models:\\

$\begin{array}{lll}
f_\mathsf{T}\pa{\varphi}\Dashv\vDash_{\mathcal{M}_6}\varphi_{B^\circ} & \hspace{1cm} f_\mathsf{t}\pa{\varphi}\Dashv\vDash_{\mathcal{M}_6}\varphi_{B} & \hspace{1cm} f_\mathsf{b}\pa{\varphi}\Dashv\vDash_{\mathcal{M}_6}\varphi_{C} \\[2mm]
f_\mathsf{f}\pa{\varphi}\Dashv\vDash_{\mathcal{M}_6}\varphi_{D} & \hspace{1cm} f_\mathsf{F}\pa{\varphi}\Dashv\vDash_{\mathcal{M}_6}\varphi_{D^\circ} & \hspace{1cm} f_\mathsf{n}\pa{\varphi}\Dashv\vDash_{\mathcal{M}_6}\varphi_{U^\circ}.
\end{array}
$
%

\

\noindent
Even more, each pair of formulas have exactly the same value under any assignment, then the formulas $f_w\pa{\varphi}$ also can be used to identify regions. 

\end{obs}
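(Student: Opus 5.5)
The plan is to reduce both claims, same models and, more strongly, identical values, to a finite truth-table check in $\mathcal{M}_6$. Every formula involved is built from $\varphi$ alone using $\land,\lor,\neg,\circ$ and ${\sim}$. By Observation~\ref{obs:twist-constantes-y-negación}, ${\sim}$ is itself truth-functional with the table given there. Hence for any valuation $v$ in $\mathcal{M}_6$, the values $v(f_w(\varphi))$ and $v(\varphi_R)$ depend only on $z=v(\varphi)\in\mathbf{6}$. It therefore suffices to show, for each $w$ and its corresponding region formula, that the two induced unary functions on $\mathbf{6}$ coincide on all six arguments. Equality of values then trivially gives equality of models (membership in $D_6$), i.e.\ the six $\Dashv\vDash_{\mathcal{M}_6}$ claims.

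Three of the six pairs are syntactically identical and need no computation:
\begin{itemize}
\item $f_\mathsf{T}(\varphi)=\varphi\land\circ\varphi=\varphi_{B^\circ}$;
\item $f_\mathsf{b}(\varphi)=\varphi\land\neg\varphi=\varphi_C$;
\item $f_\mathsf{F}(\varphi)=\neg\varphi\land\circ\varphi=\varphi_{D^\circ}$.
\end{itemize}
The $f_\mathsf{T},f_\mathsf{t},f_\mathsf{b},f_\mathsf{f},f_\mathsf{F},f_\mathsf{n}$ columns of the table in the proof of Lemma~\ref{lema:identificacion-valores-formulas} are computed purely from the connective tables. So they already give the value of each $f_w(\varphi)$ as a function of $z$, and I only need to compute the three remaining region formulas row by row.

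For $\varphi_B=\varphi\land{\sim}(\neg\varphi\lor\circ\varphi)$, I compute $\neg z$ and $\tcirc z$, then their disjunction, then $\tilde{\sim}$, then the conjunction with $z$. I expect this to give:
\begin{itemize}
\item $\mathsf{T}\mapsto\mathsf{F}$ and $\mathsf{t}\mapsto\mathsf{t}$ (since $\mathsf{f}\tlor\mathsf{F}=\mathsf{f}$ and $\tilde{\sim}\mathsf{f}=\mathsf{t}$);
\item $\mathsf{b}\mapsto\mathsf{F}$, $\mathsf{f}\mapsto\mathsf{F}$ and $\mathsf{F}\mapsto\mathsf{F}$;
\item $\mathsf{n}\mapsto\mathsf{n}$ (since $\mathsf{n}\tlor\mathsf{F}=\mathsf{n}$, $\tilde{\sim}\mathsf{n}=\mathsf{t}$ and $\mathsf{n}\tland\mathsf{t}=\mathsf{n}$).
\end{itemize}
This is exactly the $f_\mathsf{t}$ column.

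For $\varphi_U={\sim}(\varphi\lor\neg\varphi)$, the disjunction $z\tlor\tneg z$ is designated for every $z\neq\mathsf{n}$ and equals $\mathsf{n}$ at $\mathsf{n}$. Applying $\tilde{\sim}$ gives $\mathsf{F}$ everywhere except $\mathsf{t}$ at $\mathsf{n}$, matching the $f_\mathsf{n}$ column.

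For $\varphi_D=\neg\varphi\land{\sim}(\varphi\lor\circ\varphi)$, I either compute directly in the same way, or exploit the symmetry $\tneg$ exchanging $\mathsf{T}\leftrightarrow\mathsf{F}$ and $\mathsf{t}\leftrightarrow\mathsf{f}$ and fixing $\mathsf{b},\mathsf{n}$. The direct computation is only six rows and is the safer route; it should reproduce the $f_\mathsf{f}$ column ($\mathsf{t}$ at $\mathsf{f}$, $\mathsf{n}$ at $\mathsf{n}$, $\mathsf{F}$ elsewhere).

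The only real obstacle is bookkeeping: reading the asymmetric $\tlor$ and $\tto$ tables correctly, in particular the entries involving $\mathsf{n}$, $\mathsf{b}$ and $\mathsf{F}$, and the value of $\tilde{\sim}$ on non-classical values. I would double-check the $\mathsf{n}$ and $\mathsf{b}$ rows against the twist-triple presentation (Observation~\ref{obs:ternas-valores} and Definition~\ref{def:estructura-twist}), where $\tilde{\sim}(z_1,z_2,z_3)=({\sim}z_1,z_1,z_1\sqcup(z_2\sqcap z_3))$ makes those entries immediate.

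Finally, as the remark claims, the identical-value statement shows that $f_w(\varphi)$ may replace $\varphi_R$ in identifying regions. Via $v_x$ and Observation~\ref{obs:identificacion-regiones-valores}, the positive extensions $|f_w(\varphi)|^+$ and $|\varphi_R|^+$ coincide in every twist model.
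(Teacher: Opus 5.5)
Your proposal is correct and is essentially the paper's own argument. The paper leaves this as a routine check in $\mathcal{M}_6$, relying on the truth table in the proof of Lemma~\ref{lema:identificacion-valores-formulas} for the $f_w$ side; your row-by-row computations of $\varphi_B$, $\varphi_D$ and $\varphi_U$ do reproduce the $f_\mathsf{t}$, $f_\mathsf{f}$ and $f_\mathsf{n}$ columns, and the other three pairs are syntactically identical, so this supplies exactly that check (the $\varphi_{U^\circ}$ in the statement is a typo for $\varphi_U$, which you correctly used).
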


\begin{definition}
Let $V=\{\mathsf{p}_1,\mathsf{p}_2,\ldots,\mathsf{p}_m\}$. We define the following concepts:
\begin{enumerate}
    \item A \emph{literal} is a formula in the set: $$Lit:=V\cup\{\neg \mathsf{p}\mid \mathsf{p}\in V\}\cup\{\circ \mathsf{p}\mid \mathsf{p}\in V\}\cup\{{\sim} \mathsf{p}\mid \mathsf{p}\in V\}\cup\{{\sim}\neg \mathsf{p}\mid \mathsf{p}\in V\}\cup\{{\sim}{\circ} \mathsf{p}\mid \mathsf{p}\in V\}.$$
    \item A \emph{clause} is a conjunction of literals, i.e., it is of the form $\displaystyle C=\bigwedge_{i=1}^k l_i$ with $l_i\in Lit$.
    \item A formula is in \emph{Disjunctive Normal Form} if it is a disjunction of one or more clauses.
    \item The \emph{clause associated with an assignment} $a:V\longrightarrow\mathbf{6}$ is the clause $\displaystyle C(a)=\bigwedge_{i=1}^m f_{a(\mathsf{p}_i)}(\mathsf{p}_i)$.
\end{enumerate}
\end{definition}

\begin{lemma}\label{lema:DNF}
    Given $X=\{a \mid a:V\longrightarrow\mathbf{6}\}$ the set of assignments from $V$ to $\mathbf{6}$ and $A\subseteq X$, there exists a formula $\varphi_A$ in Disjunctive Normal Form such that $A=\Mod{\varphi_A}$.
\end{lemma}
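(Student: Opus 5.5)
The plan is the standard DNF construction, with Lemma~\ref{lema:identificacion-valores-formulas} playing the role of classical literals. For $A\neq\emptyset$ we take $\varphi_A=\bigvee_{a\in A} C(a)$, where $C(a)=\bigwedge_{i=1}^m f_{a(\mathsf{p}_i)}(\mathsf{p}_i)$ is the clause associated with $a$. We must check that each $f_w(\mathsf{p})$, with ${\sim}$ unfolded, is a conjunction of literals, so that $C(a)$ is a clause and $\varphi_A$ is in DNF. For $A=\emptyset$ we need a clause with no models, for instance $\mathsf{p}_1\land{\sim}\mathsf{p}_1$. Its value is never designated: if $v_b(\mathsf{p}_1)\in D_6$ then $v_b({\sim}\mathsf{p}_1)=\mathsf{F}$, and by the $\tland$ table any conjunction with $\mathsf{F}$ is $\mathsf{F}$. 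Hence $\Mod{\mathsf{p}_1\land{\sim}\mathsf{p}_1}=\emptyset$.

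The key step is to show $\Mod{C(a)}=\{a\}$. This needs two facts about the matrix $\mathcal{M}_6$, both read off the tables.
\begin{itemize}
\item $z_1\tland z_2\in D_6$ iff $z_1\in D_6$ and $z_2\in D_6$. From the twist representation, the first coordinate of a conjunction is $z_1\sqcap w_1$, and $D_6$ consists exactly of the triples with first coordinate $1$.
\item $z_1\tlor z_2\in D_6$ iff $z_1\in D_6$ or $z_2\in D_6$. Here the first coordinate is $z_1\sqcup w_1$.
\end{itemize}
With these, for any $b\in X$ we get $v_b(C(a))\in D_6$ iff $v_b(f_{a(\mathsf{p}_i)}(\mathsf{p}_i))\in D_6$ for all $i$. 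By Lemma~\ref{lema:identificacion-valores-formulas} this holds iff $b(\mathsf{p}_i)=a(\mathsf{p}_i)$ for all $i$, that is, iff $b=a$. Here we use that $v_b$ depends only on $b$, so $v_b(\mathsf{p}_i)=b(\mathsf{p}_i)$.

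The disjunction property then gives $\Mod{\varphi_A}=\bigcup_{a\in A}\Mod{C(a)}=\bigcup_{a\in A}\{a\}=A$. The construction is finite because $X$ has $6^m$ elements.

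The main obstacle is the bookkeeping in the DNF claim, namely checking that every $f_w(\mathsf{p})$ is a conjunction of members of $Lit$. This is immediate for $f_\mathsf{T},f_\mathsf{b},f_\mathsf{F},f_\mathsf{n}$ and for the ${\sim}$-literals in $f_\mathsf{t},f_\mathsf{f}$, since ${\sim}\neg\mathsf{p}$, ${\sim}{\circ}\mathsf{p}$ and ${\sim}\mathsf{p}$ are literals by definition. The nesting of the big conjunction is irrelevant, by associativity of $\tland$ with respect to designation. The semantic facts about $D_6$ are routine, via the twist representation of Observation~\ref{obs:ternas-valores}.
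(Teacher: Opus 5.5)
Your proof is correct and is essentially the paper's. You use the same formula $\varphi_A=\bigvee_{a\in A}C(a)$, with Lemma~\ref{lema:identificacion-valores-formulas} forcing any model of $C(a)$ to coincide with $a$. The differences are cosmetic: you take $\mathsf{p}_1\land{\sim}\mathsf{p}_1$ instead of $\mathsf{p}_1\land\neg\mathsf{p}_1\land\circ\mathsf{p}_1$ for $A=\emptyset$, and you make explicit the designation behaviour of $\tland$ and $\tlor$ and the check that each $f_w(\mathsf{p})$ is a conjunction of literals, both of which the paper leaves implicit.
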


\begin{proof}
Let $A\subseteq X$. We define the formula:
$$\varphi_A= \left\lbrace 
\begin{array}{lc}
    \displaystyle\bigvee_{a\in A}C(a) & \text{if } A\neq\emptyset \\[1.5ex]
    \textsf{p}_1\land\neg \textsf{p}_1\land \circ \textsf{p}_1 & \text{if } A=\emptyset 
\end{array} \right.$$
Note that the formula $\varphi_A$ is in Disjunctive Normal Form, then we consider whether $A=\emptyset$. If $A=\emptyset$, then we have $\varphi_\emptyset = \textsf{p}_1\land\neg \textsf{p}_1\land \circ \textsf{p}_1$, but $\Mod{\varphi_\emptyset}=\Mod{\textsf{p}_1\land\neg \textsf{p}_1\land \circ \textsf{p}_1}=\emptyset$. For the case where $A \neq \emptyset$, we show that $A\subseteq \Mod{\varphi_A}$ and $\Mod{\varphi_A}\subseteq A$ to guarantee equality.
Let $a'\in A$. Then, by Lemma \ref{lema:identificacion-valores-formulas}, $v_{a'}\pa{f_{a'\pa{\textsf{p}_j}}\pa{\textsf{p}_j}}\in D_6$ for every $j$ with $1\leq j\leq m$. Consequently, $v_{a'}\pa{C\pa{a'}}\in D_6$, and therefore $v_{a'}\pa{\bigvee_{a\in A}C\pa{a}}=v_{a'}\pa{\varphi_A}\in D_6$. Hence, $a'\in \Mod{\varphi_A}$. Now suppose that $a'\in \Mod{\varphi_A}$. Then $v_{a'}\pa{\bigvee_{a\in A}C\pa{a}}\in D_6$, so there exists an $a''\in A$ such that $v_{a'}\pa{C\pa{a''}}\in D_6$. This means that $v_{a'}\pa{f_{a''\pa{\textsf{p}_j}}\pa{\textsf{p}_j}}\in D_6$ for every $j$ with $1\leq j\leq m$, and by Lemma \ref{lema:identificacion-valores-formulas}, this occurs only if $a'\pa{\textsf{p}_j}=a''\pa{\textsf{p}_j}$ for all $j$ with $1\leq j\leq m$. That is, $a'=a''$, and therefore $a'\in A$.
\end{proof}

\begin{corolario}\label{coro:equivalencia-modelos}
Given $\psi \in \forsigma$ a formula, we can construct the set ${\Mod{\psi}}$. Let us denote by $\DNF\pa{\psi}$ the formula $\varphi_{\Mod{\psi}}$ constructed as in Lemma \ref{lema:DNF}.  We have that $\psi\Dashv\vDash_{\mathcal{M}_6} \DNF(\psi)$ and consequently also $\psi\dashv\vdash_{\letkp} \DNF(\psi)$.
\end{corolario}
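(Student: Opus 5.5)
The plan is to reduce the corollary to Lemma~\ref{lema:DNF} together with the completeness of $\letkp$ with respect to $\mathcal{M}_6$.

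First, the semantic equivalence. By Lemma~\ref{lema:DNF} applied to $A=\Mod{\psi}$, the formula $\DNF(\psi)=\varphi_{\Mod{\psi}}$ satisfies $\Mod{\DNF(\psi)}=\Mod{\psi}$. Since $V$ is finite and every valuation in $\mathcal{M}_6$ is uniquely determined by its restriction to $V$, each valuation has the form $v_a$ for some $a\in X$. So for every valuation $v$ in $\mathcal{M}_6$ we have $v(\psi)\in D_6$ iff $v(\DNF(\psi))\in D_6$. Because the consequence relation of a matrix with singleton premises is defined pointwise over valuations, this gives $\psi\vDash_{\mathcal{M}_6}\DNF(\psi)$ and $\DNF(\psi)\vDash_{\mathcal{M}_6}\psi$.

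Second, the syntactic equivalence. Here I will invoke the result recalled in Section~\ref{sect:LETK+} (Theorems 17 and 19 of \cite{coniglio2024belnap}) that $\letkp$ is sound and complete with respect to $\mathcal{M}_6$. By completeness, $\psi\vDash_{\mathcal{M}_6}\DNF(\psi)$ yields $\psi\vdash_{\letkp}\DNF(\psi)$, and likewise in the other direction. Alternatively, one can go through Observation~\ref{obs:ternas-valores} and the twist matrices $\mathcal{M}(\mathcal{B})$, but the direct route through $\mathcal{M}_6$ suffices.

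The only delicate point is the empty case, where $\DNF(\psi)=\mathsf{p}_1\land\neg\mathsf{p}_1\land\circ\mathsf{p}_1$. Lemma~\ref{lema:DNF} already covers it: this formula has no models, so $\psi$ has none either, and completeness then shows that both are bottom particles. Nothing here is a real obstacle; the argument is bookkeeping. The one thing to check is that the notion ``same models'' is exactly matrix equivalence for single formulas, which holds because assignments and valuations are in bijection.
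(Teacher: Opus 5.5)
Your proposal is correct and is exactly the argument the paper leaves implicit for this corollary (it gives no separate proof). Lemma~\ref{lema:DNF} with $A=\Mod{\psi}$ gives $\Mod{\DNF(\psi)}=\Mod{\psi}$; since valuations in $\mathcal{M}_6$ are in bijection with assignments, this is $\mathcal{M}_6$-equivalence, and completeness of \letkp\ with respect to $\mathcal{M}_6$ then gives interderivability. Your separate treatment of the empty case is harmless but unnecessary, because the general argument already covers it.
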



Now we have enough elements to define: first, the set of states $X$; second, the twist structure based on it $T_{\mathcal{B}_X}$; and finally, a valuation on the matrix induced by the twist structure $\mathcal{M}\pa{\mathcal{B}_X}$ as stated by Lemma \ref{lema:valuacion-formulas-a-modelos}.

\begin{lemma}\label{lema:valuacion-formulas-a-modelos}
Let $X = \{ a \mid a : V \longrightarrow \mathbf{6} \}$ be the set of assignments from $V$ to $\mathbf{6}$, let $\mathcal{B}_{X}$ be the algebra of subsets of $X$, and let $T_{\mathcal{B}_{X}}$ be the twist structure for $\letkp$ induced by $\mathcal{B}_{X}$. Then the function $\overline{v}: \forsigma \longrightarrow T_{\mathcal{B}_{X}}$ given by $$\val{\varphi} = \pa{\Mod{\varphi}, \Mod{\neg\varphi}, \Mod{\circ\varphi}}$$ is a valuation in $\mathcal{M}\pa{\mathcal{B}_{X}}$.
\end{lemma}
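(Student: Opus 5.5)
We have to show two things: that $\val{\varphi}$ always lies in the domain $(2^X)^3_{\letkp}$, and that $\overline{v}$ commutes with the five operations of Definition~\ref{def:estructura-twist}. The basic tool is the pointwise reading of $\Mod{\cdot}$. For every $a\in X$, the map $v_a$ is a valuation in $\mathcal{M}_6$. By Observation~\ref{obs:ternas-valores} we may view it as a valuation in $\mathcal{M}(\mathcal{B}_2)$, writing $v_a(\varphi)=(\pi_1,\pi_2,\pi_3)$. The designated values are exactly the triples with $\pi_1=1$, and the operations act as $\tilde\neg(z_1,z_2,z_3)=(z_2,z_1,z_3)$ and $\tilde\circ(z_1,z_2,z_3)=(z_3,1-z_3,1)$. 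Hence:
\begin{itemize}
\item $a\in\Mod{\varphi}$ iff $\pi_1(v_a(\varphi))=1$;
\item $a\in\Mod{\neg\varphi}$ iff $\pi_2(v_a(\varphi))=1$;
\item $a\in\Mod{\circ\varphi}$ iff $\pi_3(v_a(\varphi))=1$.
\end{itemize}
So the three components of $\val{\varphi}$ are the sets of states whose characteristic functions are the three coordinates of $v_a(\varphi)$. This is the analogue of the map $x\mapsto v_x$ discussed before Observation~\ref{obs:identificacion-regiones-valores}, run in the opposite direction.

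\textbf{Membership in the domain.} Since each $v_a(\varphi)\in\mathbf{2}^3_{\letkp}$, we have $\pi_3\le \pi_1\sqcup\pi_2$ and $\pi_1\sqcap\pi_2\sqcap\pi_3=0$ pointwise in $a$. Translating pointwise statements into set statements gives $\Mod{\circ\varphi}\subseteq\Mod{\varphi}\cup\Mod{\neg\varphi}$ and $\Mod{\varphi}\cap\Mod{\neg\varphi}\cap\Mod{\circ\varphi}=\emptyset$.

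\textbf{Homomorphism property.} For each connective we compare coordinates. The key observation is that every operation of $T_{\mathcal{B}}$ is given coordinatewise by Boolean polynomials, and $2^X$ is isomorphic to $\mathbf{2}^X$ with operations computed pointwise. More precisely:
\begin{itemize}
\item For a binary connective $\#\in\{\land,\lor,\to\}$ and the $i$-th coordinate, write the twist operation as $\tilde\#(z,w)_i=P_i(z,w)$ for a Boolean term $P_i$.
\item Then $a\in (\val{\varphi}\,\tilde\#\,\val{\psi})_i$ iff $P_i$ evaluated at the characteristic values of $a$ in the six sets equals $1$.
\item By the pointwise reading above, this is iff $P_i(v_a(\varphi),v_a(\psi))=1$.
\item Because $v_a$ is a homomorphism into $T_{\mathcal{B}_2}$, this is iff $\pi_i(v_a(\varphi\#\psi))=1$, that is, iff $a\in\val{\varphi\#\psi}_i$.
\end{itemize}
For $\neg$, the equation $\val{\neg\varphi}=(\Mod{\neg\varphi},\Mod{\neg\neg\varphi},\Mod{\circ\neg\varphi})=(\Mod{\neg\varphi},\Mod{\varphi},\Mod{\circ\varphi})$ follows either from the same argument or from $\neg\neg\varphi\Dashv\vDash\varphi$ and $\circ\neg\varphi\Dashv\vDash\circ\varphi$. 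For $\circ$, we need $\val{\circ\varphi}=(\Mod{\circ\varphi},X\setminus\Mod{\circ\varphi},X)$. This holds because $\Mod{\neg\circ\varphi}$ is the set of $a$ with $\pi_3(v_a(\varphi))=0$, and $\Mod{\circ\circ\varphi}=X$.

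\textbf{Main obstacle.} The only delicate point is stating cleanly the lemma that $\chi_{\Mod{\cdot}}$, $\chi_{\Mod{\neg\cdot}}$ and $\chi_{\Mod{\circ\cdot}}$ recover the three coordinates of $v_a$. This rests on the identification of $\mathcal{M}_6$ with $\mathcal{M}(\mathcal{B}_2)$, in particular that $\tilde\neg$ swaps and $\tilde\circ$ reads the third coordinate there. Once this is in place, everything else is the routine fact that Boolean operations on $2^X$ are computed pointwise. If a direct check is preferred instead, one can verify each of the fifteen coordinate equations against the $6\times6$ tables, as in the proof of Proposition~\ref{prop:propiedades-6-regiones}.
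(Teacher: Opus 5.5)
Your proof is correct, but it takes a different route from the paper. The paper checks the lemma connective by connective using the six-valued tables of $\mathcal{M}_6$. For example, it derives $\Mod{\neg(\varphi\land\psi)}=\Mod{\neg\varphi}\cup\Mod{\neg\psi}$ from $v_a(\neg(\varphi\land\psi))=v_a(\neg\varphi\lor\neg\psi)$. It computes $\Mod{\circ(\varphi\land\psi)}$ by reading off that $\varphi\land\psi$ takes $\mathsf{T}$ or $\mathsf{F}$ exactly when both conjuncts take $\mathsf{T}$ or one of them takes $\mathsf{F}$. It then repeats this for each coordinate equation.

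You instead isolate a single fact: $\chi_{\Mod{\varphi}}(a)$, $\chi_{\Mod{\neg\varphi}}(a)$ and $\chi_{\Mod{\circ\varphi}}(a)$ are exactly the three coordinates of $v_a(\varphi)$ under Observation~\ref{obs:ternas-valores}. You then use that every twist operation is coordinatewise a Boolean polynomial, and so is computed pointwise in $2^X\cong\mathbf{2}^X$. In effect you show $T_{\mathcal{B}_X}\cong (T_{\mathcal{B}_2})^X$, with $\overline{v}$ being just the family $(v_a)_{a\in X}$. This handles all connectives at once and shows the construction is the inverse of the passage $x\mapsto v_x$ used for the regions.

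Your version also explicitly verifies that $\val{\varphi}$ lies in $(2^X)^3_{\letkp}$. The paper's proof leaves this implicit, although it is required for $\overline{v}$ to map into $T_{\mathcal{B}_X}$ at all.

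Both arguments rest on the same correspondence between the six values and the designation pattern of $\varphi,\neg\varphi,\circ\varphi$. The paper uses it case by case inside table lookups. You use it once, via the coincidence of $\mathcal{M}_6$ and $\mathcal{M}(\mathcal{B}_2)$, which the paper asserts without proof. The paper's approach has the advantage of displaying the explicit set identities, e.g. for $\Mod{\circ(\varphi\to\psi)}$, which mirror the $\circ$-propagation axioms of $\letkp$.
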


\begin{proof}
We verify that each connective satisfies the operations in $T_{\mathcal{B}_{X}}$.

Let $\val{\varphi} = \pa{\Mod{\varphi}, \Mod{\neg\varphi}, \Mod{\circ\varphi}}$ and 
$\val{\psi} = \pa{\Mod{\psi}, \Mod{\neg\psi}, \Mod{\circ\psi}}$.

\begin{enumerate}
    \item $\val{\varphi \land \psi} = \pa{\Mod{\varphi \land \psi}, \Mod{\neg\pa{\varphi \land \psi}}, \Mod{\circ\pa{\varphi \land \psi}}}$.
    
    \begin{enumerate}[label=(\alph*)]
        \item 
        $\begin{aligned}[t]
            \Mod{\varphi \land \psi} 
            &= \lla{ a \in X \mid v_a\pa{\varphi \land \psi} \in \D{6} } \\
            &= \lla{ a \in X \mid v_a\pa{\varphi} \in \D{6} \text{ and } v_a\pa{\psi} \in \D{6} } \\
            &= \lla{ a \in X \mid v_a\pa{\varphi} \in \D{6} } \cap \lla{ a \in X \mid v_a\pa{\psi} \in \D{6} }\\
            &= \Mod{\varphi} \cap \Mod{\psi}.
        \end{aligned}$
        
        \item 
        $\begin{aligned}[t]
            \Mod{\neg\pa{\varphi \land \psi}} 
            &= \lla{ a \in X \mid v_a\pa{\neg\pa{\varphi \land \psi}} \in \D{6} } \\
            &= \lla{ a \in X \mid v_a\pa{\neg\varphi \lor \neg\psi} \in \D{6} } \\
            &= \lla{ a \in X \mid v_a\pa{\neg\varphi} \in \D{6} \text{ or } v_a\pa{\neg\psi} \in \D{6} } \\
            &= \lla{ a \in X \mid v_a\pa{\neg\varphi} \in \D{6} } \cup \lla{ a \in X \mid v_a\pa{\neg\psi} \in \D{6} } \\
            &= \Mod{\neg\varphi} \cup \Mod{\neg\psi}.
        \end{aligned}$
        
        \item 
        $\begin{aligned}[t]
            \Mod{\circ\pa{\varphi \land \psi}} 
            &= \lla{ a \in X \mid v_a\pa{\circ\pa{\varphi \land \psi}} \in \D{6} } \\
            &= \lla{ a \in X \mid v_a\pa{\varphi \land \psi} = \mathsf{T} \text{ or } v_a\pa{\varphi \land \psi} = \mathsf{F} } \\
            &= \lla{ a \in X \mid v_a\pa{\varphi}=\mathsf{T} \text{ and } v_a\pa{\psi}=\mathsf{T} } \\
            &\quad \cup 
               \lla{ a \in X \mid v_a\pa{\varphi}=\mathsf{F} \text{ or } v_a\pa{\psi}=\mathsf{F} } \\
            &= \pa{ \Mod{\varphi} \cap \Mod{\circ\varphi} \cap \Mod{\psi} \cap \Mod{\circ\psi} } \\
            &\quad \cup \pa{\Mod{\neg\varphi} \cap \Mod{\circ\varphi} } \cup \pa{\Mod{\neg\psi} \cap \Mod{\circ\psi}}.
        \end{aligned}$
    \end{enumerate}
    Therefore, $\val{\varphi \land \psi} = \val{\varphi} \tland \val{\psi}$.
    
    \item $\val{\varphi \lor \psi} = \pa{\Mod{\varphi \lor \psi}, \Mod{\neg\pa{\varphi \lor \psi}}, \Mod{\circ\pa{\varphi \lor \psi}}}$.
    
    \begin{enumerate}[label=(\alph*)]
        \item 
        $\begin{aligned}[t]
            \Mod{\varphi \lor \psi} 
            &= \lla{ a \in X \mid v_a\pa{\varphi \lor \psi} \in \D{6} } \\
            &= \lla{ a \in X \mid v_a\pa{\varphi} \in \D{6} \text{ or } v_a\pa{\psi} \in \D{6} } \\
            &= \lla{ a \in X \mid v_a\pa{\varphi} \in \D{6} } \cup \lla{ a \in X \mid v_a\pa{\psi} \in \D{6} } \\
            &= \Mod{\varphi} \cup \Mod{\psi}.
        \end{aligned}$
        
        \item 
        $\begin{aligned}[t]
            \Mod{\neg\pa{\varphi \lor \psi}} 
            &= \lla{ a \in X \mid v_a\pa{\neg\pa{\varphi \lor \psi}} \in \D{6} } \\
            &= \lla{ a \in X \mid v_a\pa{\neg\varphi \land \neg\psi} \in \D{6} } \\
            &= \lla{ a \in X \mid v_a\pa{\neg\varphi} \in \D{6} } \cap \lla{ a \in X \mid v_a\pa{\neg\psi} \in \D{6} } \\
            &= \Mod{\neg\varphi} \cap \Mod{\neg\psi}.
        \end{aligned}$
        
        \item 
        $\begin{aligned}[t]
            \Mod{\circ\pa{\varphi \lor \psi}} 
            &= \lla{ a \in X \mid v_a\pa{\circ\pa{\varphi \lor \psi}} \in \D{6} } \\
            &= \lla{ a \in X \mid v_a\pa{\varphi \lor \psi} = \mathsf{T} \text{ or } v_a\pa{\varphi \lor \psi} = \mathsf{F} } \\
            &= \lla{ a \in X \mid v_a\pa{\varphi} = \mathsf{T} \text{ or } v_a\pa{\psi} = \mathsf{T} } \\
            &\quad \cup 
               \lla{ a \in X \mid v_a\pa{\varphi} = \mathsf{F} \text{ and } v_a\pa{\psi} = \mathsf{F} } \\
            &= \pa{ \Mod{\varphi} \cap \Mod{\circ\varphi} } \cup \pa{ \Mod{\psi} \cap \Mod{\circ\psi} } \\
            &\quad \cup \pa{\Mod{\neg\varphi} \cap \Mod{\circ\varphi} \cap \Mod{\neg\psi} \cap \Mod{\circ\psi} }.
        \end{aligned}$
    \end{enumerate}
    Therefore, $\val{\varphi \lor \psi} = \val{\varphi} \tlor \val{\psi}$.
    
    \item $\val{\varphi \to \psi} = \pa{\Mod{\varphi \to \psi}, \Mod{\neg\pa{\varphi \to \psi}}, \Mod{\circ\pa{\varphi \to \psi}}}$.
    
    \begin{enumerate}[label=(\alph*)]
        \item 
        $\begin{aligned}[t]
            \Mod{\varphi \to \psi} 
            &= \lla{ a \in X \mid v_a\pa{\varphi \to \psi} \in \D{6} } \\
            &= \lla{ a \in X \mid v_a\pa{\varphi} \notin \D{6} \text{ or } v_a\pa{\psi} \in \D{6} } \\
            &= \lla{ a \in X \mid v_a\pa{\varphi} \notin \D{6} } \cup \lla{ a \in X \mid v_a\pa{\psi} \in \D{6} } \\
            &= \Mod{\varphi}^{\text{C}} \cup \Mod{\psi}.
        \end{aligned}$
        
        \item 
        $\begin{aligned}[t]
            \Mod{\neg\pa{\varphi \to \psi}} 
            &= \lla{ a \in X \mid v_a\pa{\neg\pa{\varphi \to \psi}} \in \D{6} } \\
            &= \lla{ a \in X \mid v_a\pa{\varphi} \in \D{6} \text{ and } v_a\pa{\neg\psi} \in \D{6} } \\
            &= \lla{ a \in X \mid v_a\pa{\varphi} \in \D{6} } \cap \lla{ a \in X \mid v_a\pa{\neg\psi} \in \D{6} } \\
            &= \Mod{\varphi} \cap \Mod{\neg\psi}.
        \end{aligned}$
        
        \item 
        $\begin{aligned}[t]
            \Mod{\circ\pa{\varphi \to \psi}} 
            &= \lla{ a \in X \mid v_a\pa{\circ\pa{\varphi \to \psi}} \in \D{6} } \\
            &= \lla{ a \in X \mid v_a\pa{\varphi \to \psi} = \mathsf{T} \text{ or } v_a\pa{\varphi \to \psi} = \mathsf{F} } \\
            &= \lla{ a \in X \mid v_a\pa{\varphi} = \mathsf{F} \text{ or } v_a\pa{\psi} = \mathsf{T} } \\
            &\quad \cup \lla{ a \in X \mid v_a\pa{\varphi} \in \D{6} \text{ and } v_a\pa{\psi} = \mathsf{F} } \\
            &= \pa{ \Mod{\neg \varphi} \cap \Mod{\circ\varphi} } \cup \pa{ \Mod{\psi} \cap \Mod{\circ\psi} } \\
            &\quad \cup \pa{ \Mod{\varphi} \cap \Mod{\neg\psi} \cap \Mod{\circ\psi} }.
        \end{aligned}$
    \end{enumerate}
    Therefore, $\val{\varphi \to \psi} = \val{\varphi} \tto \val{\psi}$.
    
    \item $\val{\neg\varphi} = \pa{\Mod{\neg\varphi}, \Mod{\neg\neg\varphi}, \Mod{\circ\neg\varphi}}$.
    
    \begin{enumerate}[label=(\alph*)]
        \item 
        $\begin{aligned}[t]
            \Mod{\neg\neg\varphi} 
            &= \lla{ a \in X \mid v_a\pa{\neg\neg\varphi} \in \D{6} } \\
            &= \lla{ a \in X \mid v_a\pa{\varphi} \in \D{6} } \\
            &= \Mod{\varphi}.
        \end{aligned}$
        
        \item 
        $\begin{aligned}[t]
            \Mod{\circ\neg\varphi} 
            &= \lla{ a \in X \mid v_a\pa{\circ\neg\varphi} \in \D{6} } \\
            &= \lla{ a \in X \mid v_a\pa{\neg\varphi} = \mathsf{T} \text{ or } v_a\pa{\neg\varphi} = \mathsf{F} } \\
            &= \lla{ a \in X \mid v_a\pa{\varphi} = \mathsf{F} \text{ or } v_a\pa{\varphi} = \mathsf{T} } \\
            &= \lla{ a \in X \mid v_a\pa{\circ\varphi} \in \D{6} } \\
            &= \Mod{\circ\varphi}.
        \end{aligned}$
    \end{enumerate}
    Therefore, $\val{\neg\varphi} = \tneg\val{\varphi}$.
    
    \item $\val{\circ\varphi} = \pa{\Mod{\circ\varphi}, \Mod{\neg{\circ}\varphi}, \Mod{\circ{\circ}\varphi}}$.
    
    \begin{enumerate}[label=(\alph*)]
        \item 
        $\begin{aligned}[t]
            \Mod{\neg{\circ}\varphi} 
            &= \lla{ a \in X \mid v_a\pa{\neg{\circ}\varphi} \in \D{6} } \\
            &= \lla{ a \in X \mid v_a\pa{\circ \varphi} \notin \D{6}   } \\
            &= \Mod{\circ \varphi}^{\text{C}}.
        \end{aligned}$
        
        \item 
        $\begin{aligned}[t]
            \Mod{{\circ}{\circ}\varphi} 
            &= \lla{ a \in X \mid v_a\pa{{\circ}{\circ}\varphi} \in \D{6} } = X.
        \end{aligned}$
    \end{enumerate}
    Therefore, $\val{\circ\varphi} = \tcirc\val{\varphi}$.
\end{enumerate}
\end{proof}
Thanks to Lemma \ref{lema:valuacion-formulas-a-modelos} we have the following corollary.

\begin{corolario}\label{coro:construccion-modelo}
 Let $X = \lla{ a \mid a : V \longrightarrow \mathbf{6} }$ be the set of assignments from $V$ to $\mathbf{6}$, let $\mathcal{B}_{X}$ be the algebra of subsets of $X$, and let $T_{\mathcal{B}_{X}}$ be the twist structure for $\letkp$ induced by $\mathcal{B}_{X}$. Let $\overline{v}: \forsigma \longrightarrow T_{\mathcal{B}_{X}}$ given by $\val{\varphi} = \pa{\Mod{\varphi}, \Mod{\neg\varphi}, \Mod{\circ\varphi}}$. Then the ordered pair $\mathsf{M} = \langle X, \overline{v} \rangle$ is a non-standard twist model for $\letkp$.
\end{corolario}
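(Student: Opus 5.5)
The plan is to check the two ingredients of Definition~\ref{def:modelo_twist} directly. First, $X$ is a non-empty set of states: it consists of all assignments $a:V\longrightarrow\mathbf{6}$, so it has cardinality $6^m\geq 1$ (already for $m=0$ the empty assignment is an element). Hence $\mathcal{B}_X=\langle 2^X,\cup,\cap,(\cdot)^C,\emptyset,X\rangle$ is a Boolean algebra (indeed a $\sigma$-algebra, see Observation~\ref{obs:twist-sigma-alg}), and $T_{\mathcal{B}_X}$ is a twist structure for $\letkp$ in the sense of Definition~\ref{def:estructura-twist}. Second, $\overline{v}$ must be a valuation in $\mathcal{M}(\mathcal{B}_X)$. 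This is exactly the content of Lemma~\ref{lema:valuacion-formulas-a-modelos}, which I will invoke directly.

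The one point worth making explicit, since the lemma verified only the operations, is that $\overline{v}$ actually lands in the domain $(2^X)^3_{\letkp}$. That is, for every $\varphi$ we need
\[
\Mod{\circ\varphi}\subseteq\Mod{\varphi}\cup\Mod{\neg\varphi}
\quad\text{and}\quad
\Mod{\varphi}\cap\Mod{\neg\varphi}\cap\Mod{\circ\varphi}=\emptyset .
\]
I would check this pointwise using the tables of $\mathcal{M}_6$. If $a\in\Mod{\circ\varphi}$, then $v_a(\varphi)\in\{\mathsf{T},\mathsf{F}\}$. In the first case $a\in\Mod{\varphi}$; in the second, $v_a(\neg\varphi)=\mathsf{T}$, so $a\in\Mod{\neg\varphi}$. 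Moreover, when $v_a(\varphi)\in\{\mathsf{T},\mathsf{F}\}$, exactly one of $v_a(\varphi)$ and $v_a(\neg\varphi)$ is designated, so $a$ cannot lie in all three sets. Equivalently, both conditions are the validity of $PEM^\circ$ and $EXP^\circ$ in $\mathcal{M}_6$.

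With these facts, $\overline{v}$ is a map $\forsigma\to T_{\mathcal{B}_X}$ that commutes with $\land,\lor,\to,\neg,\circ$ by Lemma~\ref{lema:valuacion-formulas-a-modelos}. Hence it is a homomorphism, i.e., a valuation in the sense of Definition~\ref{def:valuación-en-M(B)}, and $\mathsf{M}=\langle X,\overline{v}\rangle$ is a twist model by Definition~\ref{def:modelo_twist}.

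There is no real obstacle here; the only subtle step is the domain-membership check above, which the preceding lemma left implicit.
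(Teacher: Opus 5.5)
Your proposal is correct and follows the paper's route: the corollary comes straight from Lemma~\ref{lema:valuacion-formulas-a-modelos} together with Definition~\ref{def:modelo_twist}. Your two additional checks are sound and fill in details the paper leaves implicit: that $X\neq\emptyset$, and that each $\overline{v}(\varphi)$ lies in $(2^X)^3_{\letkp}$, which you verify pointwise from the $\mathcal{M}_6$ tables (i.e.\ the validity of $PEM^\circ$ and $EXP^\circ$).
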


To conclude the construction of the desired probability model for $\letkp$ we need to define a probability measure $\mu$ on $\mathcal{B}_X$.
\begin{lemma}\label{lema:construccion-medida}
Let $p:\forsigma \longrightarrow [0,1]$ be a probability function for $\letkp$, let $\mathsf{M} = \langle X, \overline{v} \rangle$ be the non-standard twist model for $\letkp$ given in Corollary \ref{coro:construccion-modelo}, and for any set $A\in \mathcal{B}_X $ let $\varphi_A$ be the formula defined in Lemma \ref{lema:DNF}. Then $\mu:\mathcal{B}_X \longrightarrow [0,1]$ given by:
$$\mu(A)= p\pa{\varphi_A}$$
is well-defined and is a probability measure on $\mathcal{B}_X$.
\end{lemma}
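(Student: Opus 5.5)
The proof splits into two parts: first that $\mu$ is well defined, and then that it satisfies the axioms of a (finitely additive, hence, $X$ being finite, $\sigma$-additive) probability measure on $\mathcal{B}_X=2^X$.

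\textbf{Well-definedness.} By Lemma~\ref{lema:DNF}, each $A\subseteq X$ determines a specific formula $\varphi_A$, so $\mu(A)=p(\varphi_A)$ is a single value, and it lies in $[0,1]$ because $p$ does. The only subtlety is the choice of the disjunction order in $\bigvee_{a\in A}C(a)$. I would handle it with one observation. By completeness of $\letkp$ with respect to $\mathcal{M}_6$ (as used in Corollary~\ref{coro:equivalencia-modelos}), if $\Mod{\psi}=\Mod{\chi}$ then $\psi\dashv\vdash_{\letkp}\chi$. Then, by item 2 of Proposition~\ref{prop:propiedades-funcion-probabilidad}, $p(\psi)=p(\chi)$. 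So $p(\varphi_A)$ depends only on $A=\Mod{\varphi_A}$, and any DNF with model set $A$ gives the same value.

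\textbf{Normalization and empty set.} For $\mu(\emptyset)=0$: $\varphi_\emptyset=\mathsf{p}_1\land\neg\mathsf{p}_1\land\circ\mathsf{p}_1$, so item 3 of Proposition~\ref{prop:propiedades-funcion-probabilidad} gives $p(\varphi_\emptyset)=0$. For $\mu(X)=1$: $\Mod{\varphi_X}=X$, so $v_a(\varphi_X)\in D_6$ for every assignment. Hence $\varphi_X$ is valid in $\mathcal{M}_6$, so $\vdash_{\letkp}\varphi_X$ by completeness, and (Ax1.1) gives $p(\varphi_X)=1$.

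\textbf{Additivity.} Let $A,B\subseteq X$ be disjoint. Since $\Mod{\cdot}$ commutes with $\lor$ and $\land$ (as in the proof of Lemma~\ref{lema:valuacion-formulas-a-modelos}, items 1(a) and 2(a)), we have
\[
\Mod{\varphi_A\lor\varphi_B}=A\cup B=\Mod{\varphi_{A\cup B}},\qquad \Mod{\varphi_A\land\varphi_B}=A\cap B=\emptyset .
\]
By the well-definedness observation, $p(\varphi_{A\cup B})=p(\varphi_A\lor\varphi_B)$ and $p(\varphi_A\land\varphi_B)=p(\varphi_\emptyset)=0$. Then (Ax1.4) yields
\[
\mu(A\cup B)=p(\varphi_A)+p(\varphi_B)-0=\mu(A)+\mu(B).
\]
Because $X$ has $6^m$ elements, $2^X$ is finite, so finite additivity already gives countable additivity.

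\textbf{Main obstacle.} The delicate step is the transfer from ``same set of models in $\mathcal{M}_6$'' to ``same $p$-value.'' This requires strong completeness of $\vdash_{\letkp}$ with respect to $\mathcal{M}_6$ for single-premise consequence ($\Mod{\psi}\subseteq\Mod{\chi}$ implies $\psi\vdash\chi$), which is exactly the Coniglio--Rodrigues result cited earlier. I would state this transfer explicitly once as an auxiliary claim and then reuse it for well-definedness, for $\mu(X)=1$, and for additivity.
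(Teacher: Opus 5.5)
Your proposal is correct and follows essentially the same route as the paper. You establish well-definedness through completeness (formulas with equal model sets are interderivable, hence receive equal $p$-values), then derive $\mu(\emptyset)=0$ and $\mu(X)=1$, and get pairwise additivity from (Ax1.4) once the conjunction of formulas for disjoint sets is shown to have probability $0$, with finite additivity sufficing because $X$ is finite. The differences are cosmetic: the paper obtains $\mu(X)=1$ via $\Mod{\circ\circ\varphi}=X$ instead of the validity of $\varphi_X$, and gets $p(\varphi\land\psi)=0$ from $\varphi,\psi\vdash_{\letkp}\bot$ instead of from $\Mod{\varphi_A\land\varphi_B}=\Mod{\varphi_\emptyset}$.
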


\begin{proof}
To see that $\mu$ is well-defined, take $A\in \mathcal{B}_X$ and suppose there exist $\varphi , \psi\in \forsigma$ such that $\Mod{\varphi}=A=\Mod{\psi}$. Then $\DNF(\varphi)=\DNF(\psi)$ and by Corollary \ref{coro:equivalencia-modelos} we have $\varphi\Dashv\vDash_{\mathcal{M}_6}\DNF(\varphi)$ and $\psi\Dashv\vDash_{\mathcal{M}_6}\DNF(\psi)$; hence $\varphi\Dashv\vDash_{\mathcal{M}_6}\psi$, and therefore $\varphi\dashv\vdash_{\letkp}\psi$. Since $p$ is a probability function for $\letkp$, it follows that $p(\varphi)=p(\psi)$.

To verify that $\mu$ is a probability measure on $\mathcal{B}_X$, we proceed as follows:
\begin{enumerate}
\item Let $A\in \mathcal{B}_X$. Then there exists $\varphi_A\in \forsigma$ such that $A=v_1\pa{\varphi_A}$, hence $\mu(A)=p\pa{\varphi_A}$. Since $p$ is a probability function for $\letkp$, we have $0\leq p(\varphi_A)\leq 1$, and thus $0\leq \mu(A)\leq 1$.

\item We have $\vdash_{\letkp}{\circ}{\circ}\varphi$, so $\Mod{{\circ}{\circ}\varphi}=X$. Since $p$ is a probability function for $\letkp$, we obtain $\mu\pa{X}=\mu\pa{\Mod{{\circ}{\circ}\varphi}}=p\pa{{\circ}{\circ}\varphi}=1$.

\item For any $\varphi\in \forsigma$, we have $\Mod{\varphi\land\neg\varphi\land\circ\varphi}=\Mod{\varphi}\cap \Mod{\neg\varphi}\cap \Mod{\circ\varphi}=\emptyset$. Moreover, $\varphi\land\neg\varphi\land\circ\varphi\vdash_{\letkp}\psi$ for any $\psi$, hence $p\pa{\varphi\land\neg\varphi\land\circ\varphi}=0$. Therefore, $\mu\pa{\emptyset}=\mu\pa{\Mod{\varphi\land\neg\varphi\land\circ\varphi}}=p\pa{\varphi\land\neg\varphi\land\circ\varphi}=0$.

\item Since $X$ is finite, it suffices to analyze additivity for pairs of sets. Let $A,B \in \mathcal{B}_X$ such that $A\cap B=\emptyset$. Then there exist $\varphi,\psi\in \forsigma$ with $A=\Mod{\varphi}$, $B=\Mod{\psi}$, and $\Mod{\varphi}\cap \Mod{\psi}=\emptyset$. Consequently, $\varphi,\psi \models_{\mathcal{M}_6}\bot$, and therefore $\varphi,\psi \vdash_{\letkp}\bot$, from which we obtain $0 \leq p\pa{\varphi\land\psi} \leq p\pa{\bot}=0$, so $p\pa{\varphi\land\psi}=0$. On the other hand, we have:
\[
\begin{aligned}
\mu(A\cup B) &=\mu\pa{\Mod{\varphi}\cup \Mod{\psi}}
               =\mu\pa{\Mod{\varphi\lor\psi}}
               =p\pa{\varphi\lor\psi}  \\
             &=p\pa{\varphi}+p\pa{\psi}-\cancel{p\pa{\varphi\land\psi}}
               =p\pa{\varphi}+p\pa{\psi}  \\
             &=\mu\pa{\Mod{\varphi}}+\mu\pa{\Mod{\psi}}
               =\mu(A)+\mu(B).
\end{aligned}
\]
\end{enumerate}
\end{proof}
   
\begin{proposition}\label{prop:completez-1}
Given a probability function $p:\forsigma \longrightarrow [0,1]$ for $\letkp$, there exists $\mathsf{M}_\mu=\langle X, \mu, v \rangle$, a probabilistic model for $\letkp$ and ${\pmu}$ the probability function induced by $\mathsf{M}_\mu$, such that $p=\pmu$.
\end{proposition}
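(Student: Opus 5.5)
We take the canonical construction already assembled: let $X=\{a\mid a:V\longrightarrow\mathbf{6}\}$, let $\mathsf{M}=\langle X,\overline{v}\rangle$ be the twist model of Corollary~\ref{coro:construccion-modelo}, so that $\val{\varphi}=\pa{\Mod{\varphi},\Mod{\neg\varphi},\Mod{\circ\varphi}}$, and let $\mu$ be the measure of Lemma~\ref{lema:construccion-medida}, $\mu(A)=p\pa{\varphi_A}$. By Definition~\ref{def:modelo-de-probabilidad}, $\mathsf{M}_\mu=\langle X,\mu,\overline{v}\rangle$ is then a probability model for $\letkp$. Nothing else has to be built; the whole proof consists in checking that the induced function $\pmu$ coincides with $p$.

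For an arbitrary $\varphi\in\forsigma$ we compute
\[
\pmu\pa{\varphi}=\mu\pa{|\varphi|^+_\mathsf{M}}=\mu\pa{\Mod{\varphi}}=p\pa{\varphi_{\Mod{\varphi}}}=p\pa{\DNF(\varphi)},
\]
using Definition~\ref{def:funciones-probabilidad-inducidas}, the first coordinate of $\overline{v}$, the definition of $\mu$, and the notation of Corollary~\ref{coro:equivalencia-modelos}. By that same corollary, $\varphi\dashv\vdash_{\letkp}\DNF(\varphi)$. Item~2 of Proposition~\ref{prop:propiedades-funcion-probabilidad} (a consequence of (Ax1.3)) then gives $p\pa{\DNF(\varphi)}=p\pa{\varphi}$. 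Hence $\pmu=p$.

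The only delicate point is the step $\mu\pa{\Mod{\varphi}}=p\pa{\DNF(\varphi)}$. It relies on $\mu$ being well defined, namely that any two formulas with the same set of models receive the same $p$-value. This was settled in Lemma~\ref{lema:construccion-medida} via the completeness of $\letkp$ with respect to $\mathcal{M}_6$. The empty case is harmless: there $\varphi_\emptyset=\mathsf{p}_1\land\neg\mathsf{p}_1\land\circ\mathsf{p}_1$, which is a bottom particle. Consequently $\Mod{\varphi}=\emptyset$ forces $\varphi\vdash_{\letkp}\psi$ for every $\psi$, and both sides equal $0$ by (Ax1.2).

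Through Lemma~\ref{lema:trad-funciones-inducidas} and the translations, the same construction will immediately give the three- and six-dimensional completeness results.
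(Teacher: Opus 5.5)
Your proposal is correct and follows the paper's own proof. The paper builds the same model $\langle X,\mu,\overline{v}\rangle$ from Corollary~\ref{coro:construccion-modelo} and Lemma~\ref{lema:construccion-medida} and then just asserts $p=\pmu$ ``by construction''; your chain $\pmu(\varphi)=\mu(\Mod{\varphi})=p(\DNF(\varphi))=p(\varphi)$ spells out that verification. The step $\mu(\Mod{\varphi})=p(\DNF(\varphi))$ holds by the definition of $\mu$, since $\varphi_A$ is a fixed formula, so no well-definedness issue arises there; the substantive step is the interderivability $\varphi\dashv\vdash_{\letkp}\DNF(\varphi)$ from Corollary~\ref{coro:equivalencia-modelos}, which rests on completeness of $\letkp$ with respect to $\mathcal{M}_6$.
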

\begin{proof}
    In the first place we have that $\mathsf{M}=\langle X, \overline{v}\rangle$ defined as in Corollary \ref{coro:construccion-modelo} is a non-standard twist model for $\letkp$, in the second place $\mu$ defined as in Lemma \ref{lema:construccion-medida} is a measure in $\mathcal{B}_X$, then $\mathsf{M}_\mu=\langle X, \mu, \overline{v}\rangle$ is a probabilistic model for $\letkp$ based on $\mathsf{M}$ and $\mu$, and finally by construction $p=\pmu$.
\end{proof}
Up to this point from Propositions \ref{prop:robustez-1} and \ref{prop:completez-1} we can conclude that:

\begin{corolario}\label{coro:robustez-completitud-1}
 Axioms $Ax1.1-Ax1.4$ of Definition \ref{def:funcion_de_probabilidad} are sound and complete with respect to the class of probability functions induced by probabilistic models for $\letkp$.
\end{corolario}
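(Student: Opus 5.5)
The corollary has two halves, and both are already in place, so the plan is simply to combine Proposition~\ref{prop:robustez-1} and Proposition~\ref{prop:completez-1}. Write $\mathbb{P}^{sem}_1$ for the class of functions $\pmu$ induced by probabilistic models $\mathsf{M}_\mu=\langle X,\mu,v\rangle$ for $\letkp$. The claim amounts to the equality $\mathbb{P}_1=\mathbb{P}^{sem}_1$, which we prove by establishing both inclusions.

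\emph{Soundness} ($\mathbb{P}^{sem}_1\subseteq\mathbb{P}_1$). We take an arbitrary probabilistic model and apply Proposition~\ref{prop:robustez-1}. It tells us that $\pmu$ satisfies (Ax1.1)--(Ax1.4). The essential inputs there are two facts:
\begin{itemize}
\item the soundness and completeness of $\mathcal{M}(\mathcal{B}_X)$ with respect to $\vdash_{\letkp}$, which sends theorems to $X$ and bottom particles to $\emptyset$, and turns derivability $\varphi\vdash_{\letkp}\psi$ into the inclusion $v_1(\varphi)\subseteq v_1(\psi)$;
\item inclusion--exclusion for the measure $\mu$.
\end{itemize}

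\emph{Completeness} ($\mathbb{P}_1\subseteq\mathbb{P}^{sem}_1$). Given any $p\in\mathbb{P}_1$, we apply Proposition~\ref{prop:completez-1}. The model has as states the finite set $X$ of all assignments $V\to\mathbf{6}$, with the canonical valuation $\val{\varphi}=\pa{\Mod{\varphi},\Mod{\neg\varphi},\Mod{\circ\varphi}}$ of Corollary~\ref{coro:construccion-modelo}. The measure is $\mu(A)=p(\varphi_A)$ from Lemma~\ref{lema:construccion-medida}.

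The only point that deserves a remark is the final identity $p=\pmu$. Since $|\varphi|^+=\Mod{\varphi}$, we have $\pmu(\varphi)=\mu(\Mod{\varphi})=p(\DNF(\varphi))$. Corollary~\ref{coro:equivalencia-modelos} gives $\varphi\dashv\vdash_{\letkp}\DNF(\varphi)$, and item~2 of Proposition~\ref{prop:propiedades-funcion-probabilidad} then gives $p(\DNF(\varphi))=p(\varphi)$.

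This is the step where the main weight lies, because it depends on the completeness of $\letkp$ with respect to $\mathcal{M}_6$. That completeness is what converts equality of model sets into interderivability, and it is also used in the well-definedness and additivity parts of Lemma~\ref{lema:construccion-medida}. We take it as given from \cite{coniglio2024belnap}.

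With both inclusions established, the two classes coincide, and this is exactly the claimed soundness and completeness of (Ax1.1)--(Ax1.4).
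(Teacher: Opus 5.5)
Your proposal is correct and matches the paper, which obtains the corollary by combining Proposition~\ref{prop:robustez-1} (soundness) with Proposition~\ref{prop:completez-1} (completeness). Your explicit check that $p=p_\mu$ via $\DNF(\varphi)$ and interderivability just spells out what the paper leaves as ``by construction''.
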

In order to conclude the completeness of the 3 and 6 dimensional probabilities we need to prove the following propositions:

\begin{proposition}\label{prop:completez-3}
Given a twist probability function $\widetilde{p}:\forsigma \longrightarrow [0,1]^3$ for $\letkp$, there exists $\mathsf{M}_\mu=\langle X, \mu, v \rangle$, a probabilistic model for $\letkp$ and ${\pmutilde}$ the twist probability function induced by $\mathsf{M}_\mu$ such that $\widetilde{p}=\pmutilde$.
\end{proposition}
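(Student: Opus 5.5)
The plan is to reduce the three-dimensional case to the one-dimensional completeness result, Proposition~\ref{prop:completez-1}, by means of the translations of Corollary~\ref{coro:traduccionNST-NS} and the compatibility of those translations with induced functions, Lemma~\ref{lema:trad-funciones-inducidas}.

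\textbf{Step 1.} Let $\widetilde{p}\in\mathbb{P}_3$ be given. Define $p=\tratresuno(\widetilde{p})$, that is, $p(\varphi)=\widetilde{p}_1(\varphi)$. By Lemma~\ref{lema:APtwist_es_AP}, $p\in\mathbb{P}_1$.

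\textbf{Step 2.} Apply Proposition~\ref{prop:completez-1} to $p$. This yields a probabilistic model $\mathsf{M}_\mu=\langle X,\mu,\overline{v}\rangle$ with $p=\pmu$. Concretely, $X$ is the set of assignments, $\overline{v}(\varphi)=\pa{\Mod{\varphi},\Mod{\neg\varphi},\Mod{\circ\varphi}}$ as in Corollary~\ref{coro:construccion-modelo}, and $\mu$ is the measure of Lemma~\ref{lema:construccion-medida}. We take this same model for the statement.

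\textbf{Step 3.} Show that $\widetilde{p}=\pmutilde$ by the chain
\[
\widetilde{p}=\traunotres\pa{\tratresuno\pa{\widetilde{p}}}=\traunotres\pa{p}=\traunotres\pa{\pmu}=\pmutilde .
\]
The first equality is item~1 of Corollary~\ref{coro:traduccionNST-NS}. The second is the definition of $p$. The third is Step~2. The last is item~1 of Lemma~\ref{lema:trad-funciones-inducidas}.

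The only point needing care is the first equality. Its content is that $\widetilde{p}$ is already determined by its first component, since $\widetilde{p}_2(\varphi)=\widetilde{p}_1(\neg\varphi)$ and $\widetilde{p}_3(\varphi)=\widetilde{p}_1(\circ\varphi)$ by (Ax3.3) and (Ax3.4). Without these axioms the reduction would fail, because two twist functions with the same first component could then differ.

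Read directly, the final equality says $\mu(|\varphi|^-)=\mu(|\neg\varphi|^+)$ and $\mu(|\varphi|^\circ)=\mu(|\circ\varphi|^+)$. These follow from items 12 and 15 of Proposition~\ref{prop:propiedades-extensiones}. Equally, in the concrete model, $\overline{v}_2(\varphi)=\Mod{\neg\varphi}=\overline{v}_1(\neg\varphi)$ and $\overline{v}_3(\varphi)=\Mod{\circ\varphi}=\overline{v}_1(\circ\varphi)$.

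I expect no genuine obstacle: once (Ax3.3) and (Ax3.4) are in hand, the whole argument is bookkeeping with these identities.
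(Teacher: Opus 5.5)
Your proposal is correct and is essentially the paper's own proof. Both translate $\widetilde{p}$ to $p=\tratresuno(\widetilde{p})$, apply Proposition~\ref{prop:completez-1} to obtain a model with $p=\pmu$, and then conclude $\pmutilde=\traunotres(\pmu)=\traunotres(\tratresuno(\widetilde{p}))=\widetilde{p}$ by Lemma~\ref{lema:trad-funciones-inducidas} and Corollary~\ref{coro:traduccionNST-NS}.
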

\begin{proof}
Let $\widetilde{p}$ be a twist probability function for $\letkp$, using Corollary \ref{coro:traduccionNST-NS} it can be translated into a one dimensional probability function $p=\tratresuno\pa{\widetilde{p}}$. Then construct the appropriate probability model for $\letkp$ such that $p=p_\mu$ using Proposition \ref{prop:completez-1} and later using Lemma \ref{lema:trad-funciones-inducidas} translate $p_\mu$ into a induced probability function of 3 dimensions $\pmutilde=\traunotres\pa{\pmu}$. So we have that $\pmutilde=\traunotres\pa{\pmu}=\traunotres\pa{p}=\traunotres\pa{\tratresuno\pa{\widetilde{p}}}$ and by Corollary \ref{coro:traduccionNST-NS} it holds that $\traunotres\pa{\tratresuno\pa{\widetilde{p}}}=\widetilde{p}$, therefore $\widetilde{p}=\pmutilde$ as desired.
\end{proof}

\begin{proposition}\label{prop:completez-6}
Given a 6-valued probability function $\widehat{p}:\forsigma \longrightarrow [0,1]^6$ for $\letkp$, there exists $\mathsf{M}_\mu=\langle X, \mu, v \rangle$, a probabilistic model for $\letkp$ and ${\pmuhat}$ the 6-valued probability function induced by $\mathsf{M}_\mu$ such that $\widehat{p}=\pmuhat$.
\end{proposition}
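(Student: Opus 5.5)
The plan mirrors the proof of Proposition~\ref{prop:completez-3}, now routed through the six-dimensional translations. Given $\widehat{p}\in\mathbb{P}_6$, I would first apply the translation $\traseisuno$ of Corollary~\ref{coro:traduccionNS-6}. Lemma~\ref{lema:AP6V-induce-AP} guarantees that this produces a one-dimensional probability function $p=\traseisuno\pa{\widehat{p}}\in\mathbb{P}_1$, given explicitly by $p\pa{\varphi}=b^\circ_\varphi+b_\varphi+c_\varphi$.

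Next, I would invoke Proposition~\ref{prop:completez-1} on $p$. This yields the canonical probabilistic model $\mathsf{M}_\mu=\langle X,\mu,\overline{v}\rangle$, where $X$ is the set of assignments $V\to\mathbf{6}$, $\overline{v}$ is the valuation of Corollary~\ref{coro:construccion-modelo}, and $\mu(A)=p\pa{\varphi_A}$. It satisfies $\pmu=p$. I would then take the 6-valued probability function $\pmuhat$ induced by this same model, as in Definition~\ref{def:funciones-probabilidad-inducidas}; by Proposition~\ref{prop:robustez-6} it lies in $\mathbb{P}_6$.

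The remaining task is to identify $\pmuhat$ with $\widehat{p}$. By item~3 of Lemma~\ref{lema:trad-funciones-inducidas}, $\pmuhat=\traunoseis\pa{\pmu}$. Chaining the equalities gives
$\pmuhat=\traunoseis\pa{\pmu}=\traunoseis\pa{p}=\traunoseis\pa{\traseisuno\pa{\widehat{p}}}$.
By item~1 of Corollary~\ref{coro:traduccionNS-6} the last expression equals $\widehat{p}$, so $\widehat{p}=\pmuhat$.

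There is no real obstacle, since all the ingredients are already in place. The only point needing care is that Lemma~\ref{lema:trad-funciones-inducidas}(3) is applied to the very model built in Proposition~\ref{prop:completez-1}. That lemma holds for an arbitrary probability model, and in particular uses only that $\pmu\in\mathbb{P}_1$ together with the region-identification formulas of Observation~\ref{obs:particion-con-extensones-positivas}, so this is legitimate. Alternatively, one could pass through $\mathbb{P}_3$, applying $\traseistres$ and then Proposition~\ref{prop:completez-3}, using $\traseisuno=\tratresuno\circ\traseistres$; the direct route above is shorter.
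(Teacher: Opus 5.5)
Your proposal is correct and follows the paper's own argument. The paper proves Proposition~\ref{prop:completez-6} exactly as it proves Proposition~\ref{prop:completez-3}, with Corollary~\ref{coro:traduccionNS-6} in place of Corollary~\ref{coro:traduccionNST-NS}; that is, on the canonical model of Proposition~\ref{prop:completez-1} it uses the chain $\pmuhat=\traunoseis\pa{\pmu}=\traunoseis\pa{\traseisuno\pa{\widehat{p}}}=\widehat{p}$, and your appeal to Proposition~\ref{prop:robustez-6} is harmless but not needed.
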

The proof of Proposition \ref{prop:completez-6} is analogous to the proof of Proposition \ref{prop:completez-3} but using Corollary  \ref{coro:traduccionNS-6} instead of Corollary \ref{coro:traduccionNST-NS}. As an immediate consequence of Propositions \ref{prop:robustez-3} and \ref{prop:completez-3} we obtain that:

\begin{corolario}\label{coro:robustez-completitud-3}
Axioms $Ax3.1-Ax3.6$ of Definition \ref{def:funcion_de_probabilidad-twist} are sound and complete with respect to the class of twist probability functions induced by  probabilistic models for $\letkp$.
\end{corolario}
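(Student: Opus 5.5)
Corollary~\ref{coro:robustez-completitud-3} has two halves, and both are already established; the plan is simply to combine them. For soundness, given any probabilistic model $\mathsf{M}_\mu=\langle X,\mu,v\rangle$ for $\letkp$, Proposition~\ref{prop:robustez-3} shows that the induced function $\pmutilde$ satisfies Ax3.1--Ax3.6, i.e.\ $\pmutilde\in\mathbb{P}_3$. So every twist probability function induced by a probabilistic model satisfies the axioms of Definition~\ref{def:funcion_de_probabilidad-twist}.

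For completeness, let $\widetilde{p}\in\mathbb{P}_3$ be arbitrary. Proposition~\ref{prop:completez-3} gives a probabilistic model $\mathsf{M}_\mu$ with $\widetilde{p}=\pmutilde$. Recall how that model arises. One translates $\widetilde{p}$ to $p=\tratresuno(\widetilde{p})\in\mathbb{P}_1$ using Corollary~\ref{coro:traduccionNST-NS}. Then one builds the canonical model on the finite set $X$ of assignments $V\to\mathbf{6}$, with valuation $\overline{v}(\varphi)=(\Mod{\varphi},\Mod{\neg\varphi},\Mod{\circ\varphi})$ and measure $\mu(A)=p(\varphi_A)$ (Corollary~\ref{coro:construccion-modelo} and Lemma~\ref{lema:construccion-medida}). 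Finally, Lemma~\ref{lema:trad-funciones-inducidas} and the fact that $\traunotres$ and $\tratresuno$ are mutual inverses give $\pmutilde=\traunotres(\pmu)=\traunotres(\tratresuno(\widetilde{p}))=\widetilde{p}$.

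Taken together, the class $\mathbb{P}_3$ coincides exactly with the class of twist probability functions induced by probabilistic models. No step here is a real obstacle, since all the work sits in the cited results. The only point to check is that the definitions match: ``induced by a probabilistic model'' must refer to Definition~\ref{def:funciones-probabilidad-inducidas}(2), and the model produced in Proposition~\ref{prop:completez-3} must be a genuine probabilistic model in the sense of Definition~\ref{def:modelo-de-probabilidad}. The latter holds because $\overline{v}$ is a valuation into $T_{\mathcal{B}_X}$ by Lemma~\ref{lema:valuacion-formulas-a-modelos}, and $\mu$ is a probability measure on $\mathcal{B}_X$.
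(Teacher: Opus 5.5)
Your proposal is correct and matches the paper: the corollary is obtained by combining Proposition~\ref{prop:robustez-3} for soundness with Proposition~\ref{prop:completez-3} for completeness. Your recap of the completeness argument also follows the paper's own proof, which translates via $\tratresuno$, builds the canonical model of Corollary~\ref{coro:construccion-modelo} and Lemma~\ref{lema:construccion-medida}, and translates back with Lemma~\ref{lema:trad-funciones-inducidas}.
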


Analogously from Propositions \ref{prop:robustez-6} and \ref{prop:completez-6} we obtain that:
\begin{corolario}\label{coro:robustez-completitud-6}
    Axioms $Ax6.1-Ax6.9$ of Definition \ref{def:funcion_probabilidad_6V} are sound and complete with respect to the class of 6-valued probability functions induced by  probabilistic models for $\letkp$.

\end{corolario}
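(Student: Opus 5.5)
The statement is Corollary~\ref{coro:robustez-completitud-6}. It has two directions, soundness and completeness, and I would treat them separately.

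For soundness, I would invoke Proposition~\ref{prop:robustez-6}. That proposition says that for every probabilistic model $\mathsf{M}_\mu$, the induced function $\pmuhat$ satisfies Ax6.1--Ax6.9, i.e.\ lies in $\mathbb{P}_6$. So every 6-valued probability function induced by a probabilistic model obeys the axioms.

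For completeness, I must show that every $\widehat{p}\in\mathbb{P}_6$ arises from some model; this is Proposition~\ref{prop:completez-6}. Its proof is only sketched in the paper, so I would write it out along the lines of Proposition~\ref{prop:completez-3}.

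\begin{enumerate}
\item Given $\widehat{p}\in\mathbb{P}_6$, set $p=\traseisuno(\widehat{p})$. By Lemma~\ref{lema:AP6V-induce-AP}, $p\in\mathbb{P}_1$.
\item Apply Proposition~\ref{prop:completez-1} to $p$. This gives the canonical model $\mathsf{M}_\mu=\langle X,\mu,\overline{v}\rangle$, where $X$ is the set of assignments $V\to\mathbf{6}$ and $\mu(A)=p(\varphi_A)$. It satisfies $\pmu=p$.
\item By Lemma~\ref{lema:trad-funciones-inducidas}(3), $\pmuhat=\traunoseis(\pmu)=\traunoseis(p)=\traunoseis(\traseisuno(\widehat{p}))$.
\item By Corollary~\ref{coro:traduccionNS-6}(1), $\traunoseis(\traseisuno(\widehat{p}))=\widehat{p}$. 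Hence $\widehat{p}=\pmuhat$.
\end{enumerate}

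Putting the two directions together, the class $\mathbb{P}_6$ axiomatized by Ax6.1--Ax6.9 coincides exactly with the class of 6-valued probability functions induced by probabilistic models for $\letkp$.

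The step that needs most care is the identity $\traunoseis\circ\traseisuno=\mathrm{id}_{\mathbb{P}_6}$. The argument factors it through $\mathbb{P}_3$ using Corollaries~\ref{coro:traduccionNST-NS} and~\ref{coro:traduccionNST-6}. Those rely on Ax6.4--Ax6.6 together with the interderivability $\neg\varphi\land\circ\varphi\dashv\vdash_{\letkp}\neg\varphi\land\circ\neg\varphi$, all of which are established earlier in the paper. Everything else in the argument is a direct citation.
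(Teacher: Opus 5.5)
Your proposal is correct and follows the paper's route. Soundness is Proposition~\ref{prop:robustez-6}, and completeness is Proposition~\ref{prop:completez-6}, whose proof the paper says is that of Proposition~\ref{prop:completez-3} with Corollary~\ref{coro:traduccionNS-6} replacing Corollary~\ref{coro:traduccionNST-NS}. Your four steps (set $p=\traseisuno(\widehat{p})$, take the canonical model from Proposition~\ref{prop:completez-1}, apply Lemma~\ref{lema:trad-funciones-inducidas}(3), then use $\traunoseis\circ\traseisuno=\mathrm{id}_{\mathbb{P}_6}$) are exactly that argument written out.
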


\section{Conditional Probabilities}  \label{sect:condit-prob}

Jeffrey's update rule (also known as Jeffrey's conditionalization) is a generalization of Bayes' rule for updating beliefs when the new evidence is not known with total certainty (i.e., the new evidence is not an event with probability 1). Therefore, Jeffrey's update rule is a method for adjusting probabilities when evidence is uncertain yet quantifiable, while certain conditionals remain fixed. It serves as a bridge between classical Bayesian updating and more general approaches to belief revision in artificial intelligence and philosophy.

In the classical context, Bayes' rule updates a prior probability $p\pa{\psi}$ to a posterior or updated probability $\overline{p}\pa{\psi}=p\pa{\psi|\varphi}$ once it is observed that the probability of $\varphi$ is 1 (i.e., the updated probability of formula $\varphi$ is 1, $\overline{p}\pa{\varphi}=1$). The original idea of Jeffrey (in his book \textit{The Logic of Decision}, 1965) is that sometimes evidence is uncertain, leading to an updated probability where $\varphi$ may attain a probability degree $  \uplambda \in [0,1]$, so that $\overline{p}\pa{\varphi}= \uplambda$.

Assuming that there is a finite partition of the sample space: $\{\varphi_1, \varphi_2, \ldots ,\varphi_k\}$, according to Jeffrey's rule, if the new information changes the probabilities of the partition elements from $p\pa{\varphi_i}$ to $\overline{p}\pa{\varphi_i}=  \uplambda_i$, under the assumption of rigidity or invariance of the conditionals which prevents altering the conditional probabilities given each $\varphi_i$ (i.e. $\overline{p}\pa{\psi|\varphi_i}=p\pa{\psi|\varphi_i}$ for $1\leq i\leq k$), then the updated probability is created from a linear combination that expands or contracts the original probability measure over the partition:

$$\displaystyle \overline{p}\pa{\psi}=\sum_{i=1}^{k} {p\pa{\psi|\varphi_i}  \uplambda_i}$$
where $\displaystyle p\pa{\psi|\varphi_i}=\frac{p\pa{\psi\land \varphi_i}}{p\pa{\varphi_i}}$. Notice that using the rigidity assumption we can replace $p\pa{\psi |\varphi_i}$ by $\overline{p}\pa{\psi|\varphi_i}$ in the definition of the updated probability, which becomes simply an instance of the law of total probability.

The standard Jeffrey formula assumes the use of classical logic and probability functions governed by Kolmogorov's axioms. Therefore, in the classic approach, given evidence $\varphi$, we can partition the space into two: $\{\varphi,\neg \varphi\}$, and thus Jeffrey's update takes the form:
$$\overline{p}\pa{\psi}=p\pa{\psi|\varphi}  \uplambda+p\pa{\psi|\neg \varphi}\pa{1-  \uplambda}=\frac{p\pa{\psi\land\varphi}}{p\pa{\varphi}}  \uplambda+\frac{p\pa{\psi\land\neg\varphi}}{p\pa{\neg\varphi}}\pa{1-  \uplambda}$$

In contexts of non-classical probability, Jeffrey's rule can be generalized or reinterpreted, but its essence remains the same: it allows for a soft update based on a change in the probabilities of a partition, while keeping the conditional probabilities given each element of the partition invariant.

In \cite{klein2021probabilities}, for example, using $\fde$ logic, two distinct generalizations of Jeffrey updating are proposed. The first is named as non-standard Jeffrey updating and the second simply as four-valued Jeffrey updating. Both are defined from a semantic approach using the probabilistic models that the authors define using $\fde$ logic, and are then identified with their syntactic counterparts.

As we will see below, if we move from $\fde$ logic to $\letkp$ logic, we obtain a framework in which the results of the generalizations from \cite{klein2021probabilities} can be recovered. Moreover, thanks to the expressive power of $\letkp$, we can, on one hand, simplify the presentation for updating when information is only about $\varphi$, or on the other hand, make a finer partition by identifying those cases where the reliability of the information can be distinguished.

\subsection{Single-Valued Jeffrey Updating}

In this case, we start from the assumption that the information provided for the update only gives us information about the probability of $\varphi$, establishing that the new value is $  \uplambda\in[0,1]$. This information is independent of the information one may have for $\neg\varphi$ and even of the information related to $\circ\varphi$. It is important to emphasize that, unlike the definition of non-standard Jeffrey updating in \cite{klein2021probabilities} based on $\fde$, here it is possible to define the complement of the set $|\varphi|_{\mathsf{M}}^+$ as the extension of some other formula in the language, in particular $|{\sim}\varphi|_{\mathsf{M}}^+$. Therefore, it is possible to create a partition of the space into two: $\{\varphi,{\sim}\varphi\}$. Thus, despite the limited information provided, it is possible to perform a Jeffrey update for that partition according to the following definition.

\begin{definition}\label{def:semantic-update-1}
\textbf{(Semantic single-valued Jeffrey update)}
Let $\mathsf{M}_\mu=\langle X, \mu,  v \rangle$ be a probability model for $\letkp$ based on $\mathsf{M}$ and $\mu$, let $  \uplambda\in[0,1]$, and let $\varphi\in \forsigma$ such that $\mu\pa{|\varphi|_{\mathsf{M}}^+}\in (0,1)$. The semantic single-valued Jeffrey update is a new probability model for $\letkp$, $\mathsf{M}_{\mu^{\varphi,  \uplambda}}=\langle X, \mu^{\varphi,  \uplambda},  v \rangle$, based on $\mathsf{M}$ and $\mu^{\varphi,  \uplambda}$, where:
$$\mu^{\varphi,  \uplambda}\pa{\{x\}}=\left\lbrace 
\begin{tabular}{ll}
$\displaystyle \mu\pa{\{x\}}\frac{  \uplambda}{\mu\pa{|\varphi|_{\mathsf{M}}^+}}$ & if $x\in |\varphi|_{\mathsf{M}}^+$ \\ 
&\\
$\displaystyle \mu\pa{\{x\}}\frac{1-  \uplambda}{\mu\pa{|{\sim}\varphi|_{\mathsf{M}}^+}}$ & if $x\in |{\sim}\varphi|_{\mathsf{M}}^+$\\ 
\end{tabular} \right.$$
\end{definition}

As expected, for $\varphi \in \forsigma$ such that $\mu\pa{|\varphi|_{\mathsf{M}}^+}\in (0,1)$, the update is well-defined and satisfies $\mu^{\varphi,  \uplambda}\pa{|\varphi|_{\mathsf{M}}^+}=  \uplambda$. Note that Definition~\ref{def:semantic-update-1} is a semantic approach for single-valued Jeffrey update. Using Definition~\ref{def:funciones-probabilidad-inducidas}, we can arrive to a purely syntactic approach for single-valued Jeffrey update.

\begin{definition}\label{def:sintactic_one_update}
\textbf{(Syntactic single-valued Jeffrey update)} Let $p:\forsigma \longrightarrow [0,1]$ be a probability function for $\letkp$, let $\varphi\in \forsigma$, and let $  \uplambda\in[0,1]$ such that $p\pa{\varphi}\in(0,1)$. The syntactic single-valued Jeffrey update is a new probability function for $\letkp$, $p^{\varphi,  \uplambda}:\forsigma \longrightarrow [0,1]$, given by:
$$p^{\varphi,  \uplambda}\pa{\psi}=p\pa{\psi\land\varphi}\frac{  \uplambda}{p\pa{\varphi}}+p\pa{\psi\land{\sim}\varphi}\frac{1-  \uplambda}{p\pa{{\sim}\varphi}}$$

\noindent As particular cases we have $\displaystyle p^{\varphi,1}\pa{\psi}=\frac{p\pa{\psi\land\varphi}}{p\pa{\varphi}}$ and $\displaystyle p^{\varphi,0}\pa{\psi}=\frac{p\pa{\psi\land{\sim}\varphi}}{p\pa{{\sim}\varphi}}$.
\end{definition}

\begin{obs}\label{obs:dot-prod}
The syntactic single-valued Jeffrey update can be seen as a special linear combination of the joint probabilities $p\pa{\psi\land\varphi}$ and $p\pa{\psi\land{\sim}\varphi}$. If we consider a vector with the join probabilities $\pa{p\pa{\psi\land\varphi},p\pa{\psi\land{\sim}\varphi}}$ and a vector with the contraction or expansion factors $\pa{\frac{  \uplambda}{p\pa{\varphi}},\frac{1-  \uplambda}{p\pa{{\sim}\varphi}}}$, then the syntactic single-valued Jeffrey update can be expressed as the dot product\footnote{Given two vectors of the same dimension $\vec{a}=\pa{a_1,a_2,\ldots,a_k}$  and $\vec{b}=\pa{b_1,b_2,\ldots,b_k}$ the dot product of $\vec{a}$ and $\vec{b}$ is defined as $\displaystyle \vec{a}\boldsymbol{\cdot}\vec{b}=\Sigma_{i=1}^k a_i b_i$.} of these vectors, i.e. $p^{\varphi,  \uplambda}\pa{\psi}=\pa{p\pa{\psi\land\varphi},p\pa{\psi\land{\sim}\varphi}}\boldsymbol{\cdot}\pa{\frac{  \uplambda}{p\pa{\varphi}},\frac{1-  \uplambda}{p\pa{{\sim}\varphi}}}$.
\end{obs}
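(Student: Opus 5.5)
The identity in Observation~\ref{obs:dot-prod} is an unfolding of definitions, so I plan to argue it directly from Definition~\ref{def:sintactic_one_update} together with the footnote defining the dot product. The only nontrivial point is that the expansion and contraction factors are well-defined real numbers, i.e.\ that neither denominator vanishes.

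First, I will check the denominators. The hypothesis of Definition~\ref{def:sintactic_one_update} gives $p\pa{\varphi}\in(0,1)$. By item~7 of Proposition~\ref{prop:propiedades-funcion-probabilidad}, $p\pa{{\sim}\varphi}=1-p\pa{\varphi}$, so $p\pa{{\sim}\varphi}\in(0,1)$ as well. Hence both $\frac{\uplambda}{p\pa{\varphi}}$ and $\frac{1-\uplambda}{p\pa{{\sim}\varphi}}$ are well-defined nonnegative reals. The two vectors $\pa{p\pa{\psi\land\varphi},p\pa{\psi\land{\sim}\varphi}}$ and $\pa{\frac{\uplambda}{p\pa{\varphi}},\frac{1-\uplambda}{p\pa{{\sim}\varphi}}}$ are therefore elements of $\mathbb{R}^2$, and their dot product makes sense.

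Second, I will apply the footnote's definition with $k=2$:
\[
\pa{p\pa{\psi\land\varphi},p\pa{\psi\land{\sim}\varphi}}\boldsymbol{\cdot}\pa{\tfrac{\uplambda}{p\pa{\varphi}},\tfrac{1-\uplambda}{p\pa{{\sim}\varphi}}}
=p\pa{\psi\land\varphi}\tfrac{\uplambda}{p\pa{\varphi}}+p\pa{\psi\land{\sim}\varphi}\tfrac{1-\uplambda}{p\pa{{\sim}\varphi}} .
\]
The right-hand side is literally the defining expression of $p^{\varphi,\uplambda}\pa{\psi}$, which closes the argument.

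I do not expect a real obstacle here; the only thing to be careful about is the well-definedness of the second factor, which item~7 of Proposition~\ref{prop:propiedades-funcion-probabilidad} handles. As a side remark, this reading also makes visible why the update is the analogue of the law of total probability. By item~9 of Proposition~\ref{prop:propiedades-funcion-probabilidad}, $p\pa{\psi\land\varphi}+p\pa{\psi\land{\sim}\varphi}=p\pa{\psi}$. Consequently the dot product with the all-ones factor vector $(1,1)$, which arises when $\uplambda=p\pa{\varphi}$, returns $p\pa{\psi}$ itself, so no update takes place.
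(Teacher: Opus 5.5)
Your proposal is correct and is essentially the paper's approach. The paper gives no separate proof, because the observation is just the defining formula of Definition~\ref{def:sintactic_one_update} rewritten using the footnote's dot product with $k=2$. Your check that $p({\sim}\varphi)=1-p(\varphi)\in(0,1)$ (item~7 of Proposition~\ref{prop:propiedades-funcion-probabilidad}) makes explicit a well-definedness point the paper leaves implicit, and your side remark via item~9 is also correct: $\uplambda=p(\varphi)$ gives the factor vector $(1,1)$, so no update takes place.
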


Semantic and syntactic single-valued Jeffrey update's definitions coincide. Given a probability model, we obtain the same function if we first obtain the semantic single-valued Jeffrey update and then induce the probability function for the new model, or if, conversely, starting from the original probability model, we first induce the probability function and then perform the syntactic single-valued Jeffrey update, that is to say, Proposition~\ref{prop:semantic=sintactic-1} holds.

\begin{proposition}\label{prop:semantic=sintactic-1}
  Let $\mathsf{M}_\mu=\langle X, \mu,  v \rangle$ be a probability model for $\letkp$ based on $\mathsf{M}$ and $\mu$, let $  \uplambda\in[0,1]$, and let $\varphi\in \forsigma$ such that $\mu\pa{|\varphi|_{\mathsf{M}}^+}\in (0,1)$, then  $p_{\mu^{\varphi,  \uplambda}}=p_\mu^{\varphi,  \uplambda}$.
\end{proposition}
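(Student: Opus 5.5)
The plan is to prove the equality pointwise: fix an arbitrary $\psi\in\forsigma$ and show that $p_{\mu^{\varphi,\uplambda}}(\psi)=\mu^{\varphi,\uplambda}\pa{|\psi|_\mathsf{M}^+}$ coincides with $p_\mu^{\varphi,\uplambda}(\psi)$ as given by Definition~\ref{def:sintactic_one_update}.

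The first step is to check that the syntactic update is applicable and to rewrite its ingredients semantically. By Definition~\ref{def:funciones-probabilidad-inducidas}, $\pmu(\varphi)=\mu\pa{|\varphi|_\mathsf{M}^+}\in(0,1)$, so $p_\mu^{\varphi,\uplambda}$ is defined. By Proposition~\ref{prop:propiedades-extensiones} (the item for ${\sim}$), $|{\sim}\varphi|_\mathsf{M}^+=X\setminus|\varphi|_\mathsf{M}^+$, hence $\pmu({\sim}\varphi)=1-\mu\pa{|\varphi|_\mathsf{M}^+}>0$. Likewise, by the conjunction item, $\pmu(\psi\land\varphi)=\mu\pa{|\psi|_\mathsf{M}^+\cap|\varphi|_\mathsf{M}^+}$ and $\pmu(\psi\land{\sim}\varphi)=\mu\pa{|\psi|_\mathsf{M}^+\cap|{\sim}\varphi|_\mathsf{M}^+}$. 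This turns the right-hand side of Definition~\ref{def:sintactic_one_update} into the expression
\[
\mu\pa{|\psi|^+_\mathsf{M}\cap|\varphi|^+_\mathsf{M}}\frac{\uplambda}{\mu\pa{|\varphi|^+_\mathsf{M}}}+\mu\pa{|\psi|^+_\mathsf{M}\cap|{\sim}\varphi|^+_\mathsf{M}}\frac{1-\uplambda}{\mu\pa{|{\sim}\varphi|^+_\mathsf{M}}}.
\]

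The second step computes the left-hand side. The update in Definition~\ref{def:semantic-update-1} is specified on singletons, so it is extended additively to all of $\mathcal{B}_X$. Here I take $X$ to be finite, or at least countable, as in the completeness construction, so that $\mu$ is determined by its values on points. Since $|\varphi|^+_\mathsf{M}$ and $|{\sim}\varphi|^+_\mathsf{M}$ partition $X$, the set $|\psi|^+_\mathsf{M}$ splits into the disjoint pieces $|\psi|^+_\mathsf{M}\cap|\varphi|^+_\mathsf{M}$ and $|\psi|^+_\mathsf{M}\cap|{\sim}\varphi|^+_\mathsf{M}$. On each piece one sums $\mu^{\varphi,\uplambda}(\{x\})$ and factors out the constant ratio. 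This yields exactly the displayed expression, which proves the claim.

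The main obstacle is the justification that $\mu^{\varphi,\uplambda}$ is a genuine probability measure determined by its singleton values. Its total mass is $\uplambda+(1-\uplambda)=1$ by the same partition argument. The point needing care is the implicit finiteness (or discreteness) of $X$. If $X$ is arbitrary, I would instead define $\mu^{\varphi,\uplambda}(A)=\frac{\uplambda}{\mu(|\varphi|^+)}\mu(A\cap|\varphi|^+)+\frac{1-\uplambda}{\mu(|{\sim}\varphi|^+)}\mu(A\cap|{\sim}\varphi|^+)$, which agrees with Definition~\ref{def:semantic-update-1} on singletons, and then run the same computation with $A=|\psi|^+_\mathsf{M}$.
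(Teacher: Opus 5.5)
Your proof is correct and follows the paper's route. Like the paper, you split $|\psi|^+_\mathsf{M}$ along the partition $\{|\varphi|^+_\mathsf{M},|{\sim}\varphi|^+_\mathsf{M}\}$, sum the singleton values of $\mu^{\varphi,\uplambda}$ on each piece while factoring out the constant ratios, and rewrite the result via $|\psi\land\varphi|^+_\mathsf{M}=|\psi|^+_\mathsf{M}\cap|\varphi|^+_\mathsf{M}$ and $|{\sim}\varphi|^+_\mathsf{M}=X\setminus|\varphi|^+_\mathsf{M}$. Two of your additions go beyond the paper, which leaves both points tacit: your check that the syntactic update is well defined, and your remark that defining the update on singletons presupposes a discrete (e.g.\ finite) $X$, together with the measure-level definition of $\mu^{\varphi,\uplambda}$ you give to remove that assumption.
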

\begin{proof}
    Let $\mu^{\varphi,  \uplambda}$ the probability measure in the semantic single-valued Jeffrey update according to Definition~\ref{def:semantic-update-1}, then: 
    
    $\begin{aligned}[t]
            p_{\mu^{\varphi,  \uplambda}}\pa{\psi} & = \mu^{\varphi,  \uplambda}\pa{|\psi|^+}\\ 
                                               & =\sum_{x\in|\psi|^+}\mu^{\varphi,  \uplambda}\pa{\{x\}}\\
                                               &= \sum_{x\in|\psi|^+\cap|\varphi|^+}\mu^{\varphi,  \uplambda}\pa{\{x\}}+ \sum_{x\in|\psi|^+\cap|\sim\varphi|^+}\mu^{\varphi,  \uplambda}\pa{\{x\}}\\
            &=\sum_{x\in|\psi|^+\cap|\varphi|^+}\mu\pa{\{x\}}\frac{  \uplambda}{\mu\pa{|\varphi|^+}}+ \sum_{x\in|\psi|^+\cap|\sim\varphi|^+}\mu\pa{\{x\}}\frac{1-  \uplambda}{\mu\pa{|{\sim}\varphi|^+}}\\
            &= \mu\pa{|\psi|^+\cap|\varphi|^+}\frac{  \uplambda}{\mu\pa{|\varphi|^+}}+ \mu\pa{|\psi|^+\cap|{\sim}\varphi|^+}\frac{1-  \uplambda}{\mu\pa{|{\sim}\varphi|^+}}\\
            &= \mu\pa{|\psi\land\varphi|^+}\frac{  \uplambda}{\mu\pa{|\varphi|^+}}+ \mu\pa{|\psi\land{\sim}\varphi|^+}\frac{1-  \uplambda}{\mu\pa{|{\sim}\varphi|^+}}\\
            &= p_\mu\pa{\psi\land\varphi}\frac{  \uplambda}{p_\mu\pa{\varphi}}+ p_\mu\pa{\psi\land{\sim}\varphi}\frac{1-  \uplambda}{p_\mu\pa{{\sim}\varphi|^+}}= p_\mu^{\varphi,  \uplambda}\pa{\psi}.
        \end{aligned}$
    
\end{proof}

Thanks to this equivalence we can simply refer to the single-valued Jeffrey update without regard to whether it was obtained from the semantic or syntactic approach.

\begin{obs}
The non-standard Jeffrey update proposed in \cite{klein2021probabilities}, given by:
$$p^{\varphi,  \uplambda}\pa{\psi}=p\pa{\psi\land\varphi}\frac{  \uplambda}{p\pa{\varphi}}+\pa{p\pa{\psi}-p\pa{\psi\land\varphi}}\frac{1-  \uplambda}{1-p\pa{\varphi}},$$

coincides with the single-valued Jeffrey update proposed here, since if $p$ is a probability function for $\letkp$, then $\displaystyle \frac{p\pa{\psi\land{\sim}\varphi}}{p\pa{{\sim}\varphi}}=\frac{p\pa{\psi}-p\pa{\psi\land \varphi}}{1- p\pa{\varphi}}$.
\end{obs}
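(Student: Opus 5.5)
The plan is to show that the two formulas agree term by term. The first summand $p\pa{\psi\land\varphi}\frac{\uplambda}{p\pa{\varphi}}$ is literally the same in both. So it suffices to establish the identity
$$\frac{p\pa{\psi\land{\sim}\varphi}}{p\pa{{\sim}\varphi}}=\frac{p\pa{\psi}-p\pa{\psi\land \varphi}}{1- p\pa{\varphi}}$$
for every probability function $p\in\mathbb{P}_1$ and every $\varphi$ with $p\pa{\varphi}\in(0,1)$. Multiplying both sides by $1-\uplambda$ and adding the common first term then gives equality of the two update rules. The hypothesis $p\pa{\varphi}\in(0,1)$ is exactly the one in Definition~\ref{def:sintactic_one_update}, so both denominators are nonzero.

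The denominators are handled by item 7 of Proposition~\ref{prop:propiedades-funcion-probabilidad}, which gives $p\pa{{\sim}\varphi}=1-p\pa{\varphi}$. This also shows $p\pa{{\sim}\varphi}\in(0,1)$, so the quotient is defined.

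The numerators are handled by item 9 of the same proposition, applied with the roles of the formulas chosen so as to produce $p\pa{\psi\land{\sim}\varphi}=p\pa{\psi}-p\pa{\psi\land\varphi}$. Concretely, I will repeat its argument. One uses $\psi\dashv\vdash_{\letkp}\pa{\psi\land\varphi}\lor\pa{\psi\land{\sim}\varphi}$, which holds via $\vdash_{\letkp}\varphi\lor{\sim}\varphi$ (rule $\rightarrow CL$) and distributivity of the positive classical fragment. One then applies (Ax1.4) to this disjunction. The conjunction term $p\pa{\psi\land\varphi\land{\sim}\varphi}$ vanishes, because $\varphi\land{\sim}\varphi$ is a bottom particle (item 6) and (Ax1.3) gives monotonicity. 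Interderivable formulas get equal probability by item 2.

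The only point needing care is matching the variable names in item 9 to the present setting, which is stated as $p\pa{\varphi\land{\sim}\psi}=p\pa{\varphi}-p\pa{\varphi\land\psi}$ with the roles of $\varphi,\psi$ swapped. I also need the commutativity $\psi\land\varphi\dashv\vdash_{\letkp}\varphi\land\psi$, which is immediate from the conjunction rules. I expect no genuine obstacle beyond this bookkeeping.
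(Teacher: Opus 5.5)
Your proposal is correct and follows the same route as the paper, which justifies the observation by two facts: item~7 of Proposition~\ref{prop:propiedades-funcion-probabilidad} for the denominators, $p({\sim}\varphi)=1-p(\varphi)$, and item~9 with $\varphi$ and $\psi$ interchanged for the numerators, $p(\psi\land{\sim}\varphi)=p(\psi)-p(\psi\land\varphi)$. Since item~9 holds for arbitrary formulas, the extra commutativity step you mention is not needed.
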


\subsection{Bayesian Updating}

In the classic approach, Bayesian updating is a particular case of Jeffrey updating when it is assumed that the evidence $\varphi$ attains a probability of 1 in positive Bayesian updating, or 0 in negative Bayesian updating. Using the single-valued Jeffrey update proposed here, we have that for every formula $\varphi$ such that $p\pa{\varphi}>0$, positive Bayesian updating coincides with the classical case:
$$p^{\varphi,pos}\pa{\psi}=p^{\varphi,1}\pa{\psi}$$

On the other hand, negative Bayesian updating corresponds to the case where the probability of $\varphi$ is 0. Therefore, using the single-valued Jeffrey update, negative Bayesian updating for every formula $\varphi$ such that $p\pa{\varphi}<1$ again coincides with the classic case using negation ${\sim}$:

$$p^{\varphi,neg}\pa{\psi}=p^{\varphi,0}\pa{\psi}$$

Evidently, $p^{{\sim}\varphi,pos}=p^{\varphi,neg}$. But given the independence between $\varphi$ and $\neg\varphi$, it holds $p^{\varphi,neg}\neq  p^{\neg\varphi,pos}$ as expected, this leads us to treat evidence independently; therefore, there will be positive and negative Bayesian updates for evidence $\neg\varphi$ or for evidence $\circ\varphi$.

Finally, we have that the order of positive and negative Bayesian updates is independent; that is:

\begin{lemma}
    Let $p:\forsigma \longrightarrow [0,1]$ be a probability function for $\letkp$, and let $\varphi,\psi\in \forsigma$ such that $p\pa{\varphi}$, $p\pa{\psi}$, $p^{\varphi,\uplambda}\pa{\psi}$, $ p^{\psi,\uplambda'}\pa{\varphi}\in (0,1)$ with $\uplambda,\uplambda' \in\{0,1\}$. Then $\pa{p^{\varphi, \uplambda}}^{\psi,\uplambda'}=\pa{p^{\psi,\uplambda'}}^{\varphi,\uplambda}$.
\end{lemma}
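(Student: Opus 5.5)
The idea is to reduce both iterated Bayesian updates to classical conditioning on a conjunction and then use commutativity of $\land$. For $\uplambda\in\{0,1\}$, define the \emph{evidence formula} $\varphi^{\uplambda}$ by $\varphi^{1}:=\varphi$ and $\varphi^{0}:={\sim}\varphi$. Define $\psi^{\uplambda'}$ in the same way.

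By Definition~\ref{def:sintactic_one_update}, in both cases $\uplambda\in\{0,1\}$ we have, for every $\gamma$,
\[
p^{\varphi,\uplambda}\pa{\gamma}=\frac{p\pa{\gamma\land\varphi^{\uplambda}}}{p\pa{\varphi^{\uplambda}}}.
\]
One of the two terms vanishes: its coefficient is $\uplambda=0$ or $1-\uplambda=0$. The denominator is nonzero because $p\pa{\varphi}\in(0,1)$ and, by Proposition~\ref{prop:propiedades-funcion-probabilidad}(7), $p\pa{{\sim}\varphi}=1-p\pa{\varphi}$.

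The first step is to check that $p^{\varphi,\uplambda}$ is again a probability function for $\letkp$. This is needed so that the second update is legitimate and can use the same formula. It holds because $\gamma\mapsto p\pa{\gamma\land\varphi^{\uplambda}}/p\pa{\varphi^{\uplambda}}$ inherits Ax1.1--Ax1.4, by monotonicity (Ax1.3) and distributivity, $\pa{\gamma\lor\delta}\land\chi\dashv\vdash\pa{\gamma\land\chi}\lor\pa{\delta\land\chi}$. Alternatively, semantically, Proposition~\ref{prop:semantic=sintactic-1} gives that it is induced by the updated measure $\mu^{\varphi,\uplambda}$, so Proposition~\ref{prop:robustez-1} applies after Proposition~\ref{prop:completez-1}.

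The second update requires $p^{\varphi,\uplambda}\pa{\psi}\in(0,1)$, which is exactly a hypothesis of the lemma. So we may apply the same formula with $\psi^{\uplambda'}$. We need $p^{\varphi,\uplambda}\pa{{\sim}\psi}=1-p^{\varphi,\uplambda}\pa{\psi}$, which again follows from Proposition~\ref{prop:propiedades-funcion-probabilidad}(7), since $p^{\varphi,\uplambda}\in\mathbb{P}_1$. This gives
\[
\pa{p^{\varphi,\uplambda}}^{\psi,\uplambda'}\pa{\gamma}
=\frac{p^{\varphi,\uplambda}\pa{\gamma\land\psi^{\uplambda'}}}{p^{\varphi,\uplambda}\pa{\psi^{\uplambda'}}}
=\frac{p\pa{\gamma\land\psi^{\uplambda'}\land\varphi^{\uplambda}}/p\pa{\varphi^{\uplambda}}}{p\pa{\psi^{\uplambda'}\land\varphi^{\uplambda}}/p\pa{\varphi^{\uplambda}}}
=\frac{p\pa{\gamma\land\psi^{\uplambda'}\land\varphi^{\uplambda}}}{p\pa{\psi^{\uplambda'}\land\varphi^{\uplambda}}}.
\]
Here $p\pa{\psi^{\uplambda'}\land\varphi^{\uplambda}}>0$ because $p^{\varphi,\uplambda}\pa{\psi^{\uplambda'}}>0$. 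Symmetrically, using the hypothesis $p^{\psi,\uplambda'}\pa{\varphi}\in(0,1)$, the other order gives $p\pa{\gamma\land\varphi^{\uplambda}\land\psi^{\uplambda'}}/p\pa{\varphi^{\uplambda}\land\psi^{\uplambda'}}$.

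To conclude, we use that $\land$ is associative and commutative up to $\dashv\vdash_{\letkp}$ (rules $\land I$, $\land E$). Then Proposition~\ref{prop:propiedades-funcion-probabilidad}(2) makes the numerators equal and the denominators equal, so the two iterated updates agree on every $\gamma$.

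The main obstacle is bookkeeping rather than depth. The delicate points are verifying that the once-updated function lies in $\mathbb{P}_1$, and that the nondegeneracy hypotheses indeed give nonzero denominators in both orders. The latter needs the translation $p^{\varphi,\uplambda}\pa{{\sim}\psi}\in(0,1)\iff p^{\varphi,\uplambda}\pa{\psi}\in(0,1)$, which follows from item~(7) of Proposition~\ref{prop:propiedades-funcion-probabilidad}.
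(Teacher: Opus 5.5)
Your proof is correct and follows essentially the same route as the paper. The paper expands Definition~\ref{def:sintactic_one_update} in the sample case $(\uplambda,\uplambda')=(1,0)$, cancels the common factor, and commutes the conjunctions to recognize the update in the other order; your $\varphi^{\uplambda}$ notation handles all four cases at once, and your explicit check that $p^{\varphi,\uplambda}\in\mathbb{P}_1$ (used for the second update and for $p^{\varphi,\uplambda}({\sim}\psi)=1-p^{\varphi,\uplambda}(\psi)$) fills in a point the paper takes for granted in its definition.
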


\begin{proof}
We have to prove $\pa{p^{\varphi, \uplambda}}^{\psi,\uplambda'}=\pa{p^{\psi,\uplambda'}}^{\varphi,\uplambda}$ when $(\uplambda, \uplambda') \in \lla{(0,0),(0,1), (1,0), (1,1)}$. The proof is straightforward from Definition~\ref{def:sintactic_one_update}, just as an example we can consider the case when $(\uplambda, \uplambda') = (1,0)$.

$$\displaystyle \pa{p^{\varphi, 1}}^{\psi,0}\pa{\gamma} 
= \dfrac{p^{\varphi,1}\pa{\gamma \land {\sim}\psi}}{p^{\varphi,1}\pa{{\sim}\psi}}
= \dfrac{\dfrac{p\pa{\gamma \land {\sim}\psi \land \varphi}}{p\pa{\varphi}}}{\dfrac{p\pa{{\sim} \psi \land \varphi}}{p\pa{\varphi}}}
= \dfrac{\dfrac{p\pa{\gamma \land \varphi \land {\sim} \psi}}{p\pa{{\sim} \psi}}}{\dfrac{p\pa{\varphi \land {\sim} \psi}}{p\pa{{\sim} \psi}}}
= \frac{p^{\psi,0}\pa{\gamma \land \varphi}}{p^{\psi,0}\pa{\varphi}}
= \pa{p^{\psi,0}}^{\varphi, 1}\pa{\gamma}$$
    
\end{proof}

\subsection{6-Valued Jeffrey Updating}

If we now start from the assumption that the information provided for the update is complete for a finer partition, for example using the partition $\mathcal{P}_\varphi=\{B^\circ_\varphi, B_\varphi, C_\varphi, D_\varphi, D^\circ_\varphi, U_\varphi\}$ for a specific formula $\varphi$ (see Proposition~\ref{prop:6-particion}), then Jeffrey updating will expand or contract each of the regions according to the provided evidence. It is important to emphasize that expansion or contraction only makes sense in those cases where the region has positive measure; therefore, those regions with measure zero must remain with measure zero, which leads us to the following definition analogous to Klein's admissibility definition in \cite{klein2021probabilities}.

\begin{definition}
\textbf{(Admissible vector)}
    Let $\varphi\in \forsigma$, and let $\widehat{p}$ be a 6-valued probability function (either semantically induced by a model $\mathsf{M}_\mu$ or syntactically defined) with $\widehat{p}\pa{\varphi}=\pa{b^\circ_\varphi,b_\varphi,c_\varphi,d_\varphi,d^\circ_\varphi,u_\varphi}$. We say that the vector $\vec{  \uplambda}=\pa{b^\circ,b,c,d,d^\circ,u}\in[0,1]^6$ is admissible for $\varphi$ given $\widehat{p}$ if it satisfies that $b^\circ=0$ if $b^\circ_\varphi=0$, $b=0$ if $b_\varphi=0$, $c=0$ if $c_\varphi=0$, $d=0$ if $d_\varphi=0$, $d^\circ=0$ if $d^\circ_\varphi=0$,  and $u=0$ if $u_\varphi=0$.
\end{definition}

Once the notion of admissibility for a vector is established, we can define the semantic 6-valued Jeffrey update. As we did for the semantic single-valued update, we define the actualized probability measure of unitary sets $\{x\}$ as the original measure multiplied by a expansion or contraction factor depending on which element of the partition $x$ belongs to.

\begin{definition}\label{def:semantic-update-6}
\textbf{(Semantic 6-valued Jeffrey update)}
Let $\mathsf{M}_\mu = \langle X, \mu, v \rangle$ be a probability model for $\letkp$ based on $\mathsf{M}$ and $\mu$, let $\varphi \in \forsigma$ with $\widehat{p}_\mu(\varphi) = (b^\circ_\varphi, b_\varphi, c_\varphi, d_\varphi, d^\circ_\varphi, u_\varphi)$, and let $\vec{  \uplambda}=(b^\circ, b, c, d, d^\circ, u) \in [0,1]^6$ be a vector admissible for $\varphi$ given $\pmuhat$. The semantic 6-valued Jeffrey update is a new probability model for $\letkp$,
$$
\mathsf{M}_{\mu^{\varphi,\vec{  \uplambda}}} = \langle X, \mu^{\varphi,\vec{  \uplambda}}, v \rangle,
$$
based on $\mathsf{M}$ and $\mu^{\varphi,\vec{  \uplambda}}$, where:
$$\displaystyle \mu^{\varphi,\vec{  \uplambda}}\pa{\{x\}}=\left\lbrace 
\begin{tabular}{cl}
$ \mu\pa{\{x\}}\frac{b^\circ}   {\mu\pa{B^\circ_\varphi}}$ & if $x\in B^\circ_\varphi$ \\ 
$ \mu\pa{\{x\}}\frac{b}         {\mu\pa{B_\varphi}}$ & if $x\in B_\varphi$ \\ 
$ \mu\pa{\{x\}}\frac{c}         {\mu\pa{C_\varphi}}$ & if $x\in C_\varphi$ \\
$ \mu\pa{\{x\}}\frac{d}         {\mu\pa{D_\varphi}}$ & if $x\in D_\varphi$\\ 
$ \mu\pa{\{x\}}\frac{d^\circ}   {\mu\pa{D^\circ_\varphi}}$ & if $x\in D^\circ_\varphi$ \\ 
$ \mu\pa{\{x\}}\frac{u}        {\mu\pa{U_\varphi}}$ & if $x\in U_\varphi$\\ 
\end{tabular} \right.$$
using the convention that $\frac{0}{0}=0$.
\end{definition}

As in the case of the single-valued Jeffrey update, a syntactic definition can also be given for the six-valued version. However, before presenting the definition, it is necessary to identify the components involved. In will be convenient to analyze first from the sematical perspective. It is clear that it is necessary to handle two formulas simultaneously: the formula whose information is provided to perform the update, say, $\varphi$, and the formula for which one wishes to compute the new probability, say, $\psi$. 

Consider a finite set $X$ and a probability model $\mathsf{M}_\mu$ for $\letkp$ over $X$. From Proposition~\ref{prop:6-particion} we know that  $\mathsf{M}$ induces a partition $\mathcal{P}_\varphi=\{B^{\circ}_\varphi, B_\varphi, C_\varphi, D_\varphi, D^{\circ}_\varphi,  U_\varphi\}$ for every formula $\varphi$. A graphical representation of the individual situation for each formula is shown in Figure~\ref{fig:new-regions}. To treat $\varphi$ and $\psi$ simultaneously, the induced partitions  $\mathcal{P}_\varphi$ and  $\mathcal{P}_\psi$ divide the set $X$ into 36 regions based on the intersection of these partitions, as illustrated in Figure~\ref{fig:36-partition}.

\begin{figure}[H]
\begin{center}
\includegraphics[scale=0.3]{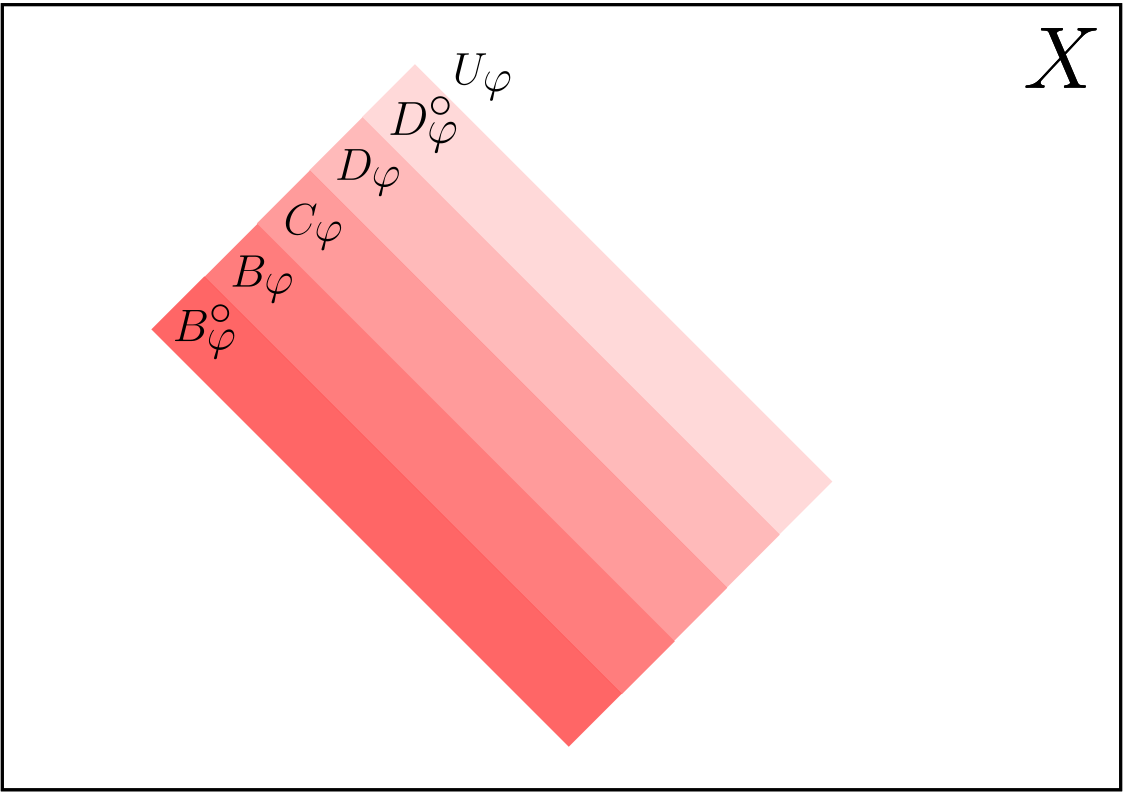}
\includegraphics[scale=0.3]{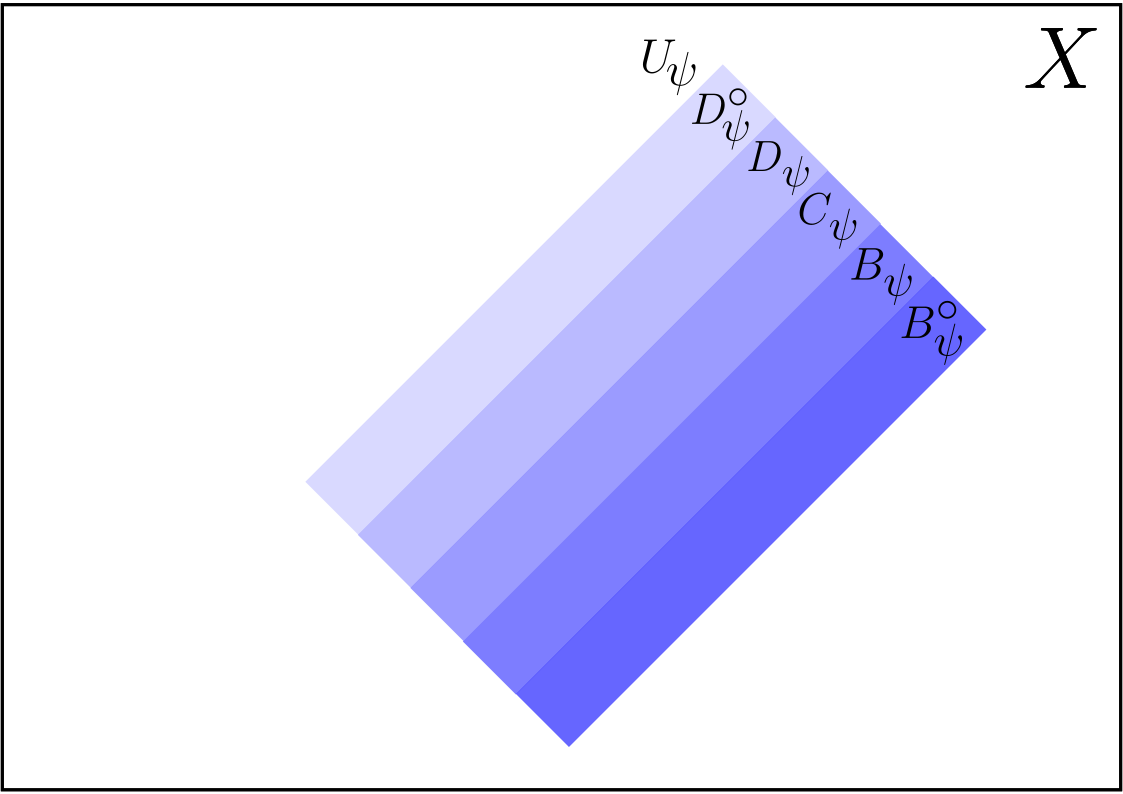}

\caption{Partition of the set $X$ induced by a twist model for $\letkp$ for formulas $\varphi$ (left) and $\psi$ (right).}
\label{fig:new-regions}
\end{center}
\end{figure}

\begin{figure}[H]
\begin{center}
\includegraphics[scale=0.3]{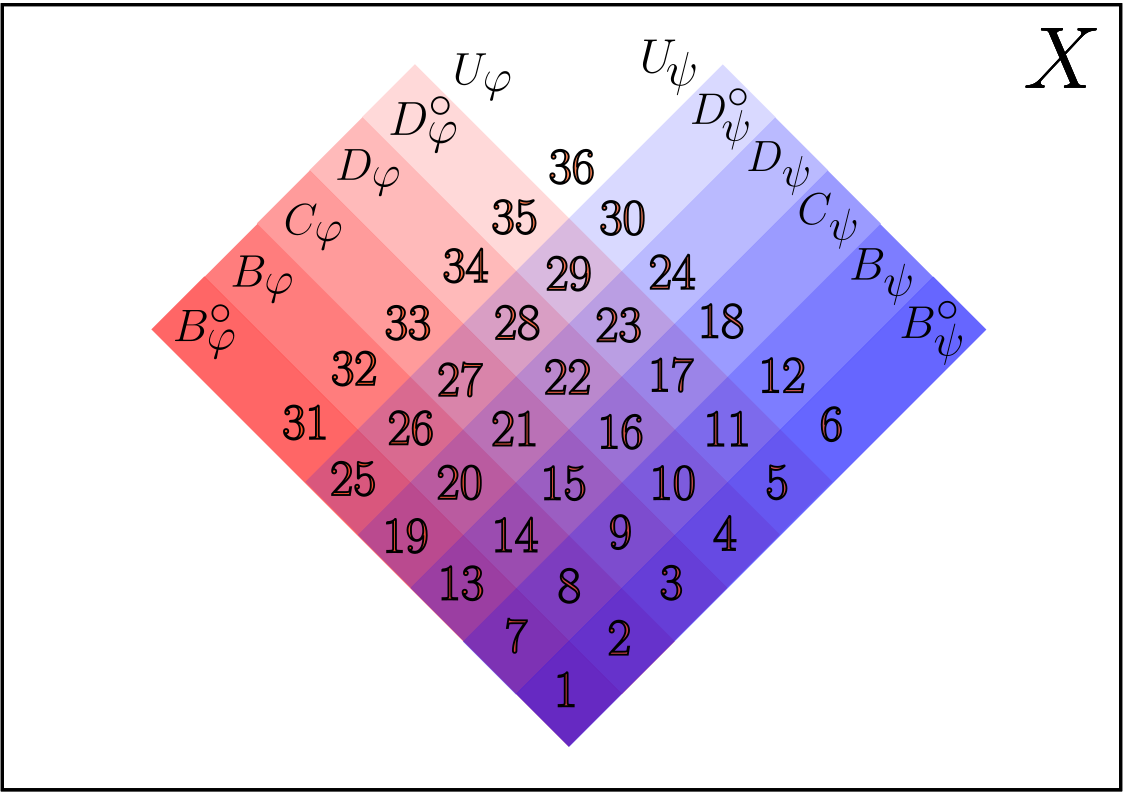}
\end{center}
\caption{Partition of the set $X$ into 36 regions generated by the intersection of the partitions induced by two formulas $\varphi$ and $\psi$ given a twist model for $\letkp$.}
\label{fig:36-partition}
\end{figure}

In particular, each element of the partitions for $\varphi$ and for $\psi$ consists of the union of six regions, as indicated in the following lists.

\[
\begin{array}{lclrlcl}
B^\circ_\varphi &=&\bigcup \{1,7,13,19,25,31\}  &\quad \quad& B^\circ_\psi &=&\bigcup \{1,2,3,4,5,6\}\\
B_\varphi       &=&\bigcup \{2,8,14,20,26,32\}  &&            B_\psi       &=&\bigcup \{7,8,9,10,11,12\}\\
C_\varphi       &=&\bigcup \{3,9,15,21,27,33\}  &&            C_\psi       &=&\bigcup \{13,14,15,16,17,18\}\\
D_\varphi       &=&\bigcup \{4,10,16,22,28,34\} &&            D_\psi       &=&\bigcup \{19,20,21,22,23,24\}\\
D^\circ_\varphi &=&\bigcup \{5,11,17,23,29,35\} &&            D^\circ_\psi &=&\bigcup \{25,26,27,28,29,30\}\\
U_\varphi       &=&\bigcup \{6,12,18,24,30,36\} &&            U_\psi       &=&\bigcup \{31,32,33,34,35,36\}\\
\end{array}
\]

In order to get a compact definition for the syntactic 6-valued Jeffrey update in its six-valued version it will be useful to treat information from a vectorial perspective. As pointed out in Observation~\ref{obs:dot-prod} the single-valued Jeffrey actualization can be think of as the dot product of two vectors, one containing the joint probabilities of a formula in conjunction with the formulas that determine the partition of the space, and a second vector that has the contraction or expansion factors for each element in the partition. In order to do this we introduce the following notions.

\begin{definition} \label{def:vector}
    \textbf{(Vector of contraction-expansion factors)}
    Let $\widehat{p}:\forsigma \longrightarrow [0,1]^6$ be a 6-valued probability function for $\letkp$, let $\varphi\in \forsigma$, with $\widehat{p}(\varphi) = (b^\circ_\varphi, b_\varphi, c_\varphi, d_\varphi, d^\circ_\varphi, u_\varphi)$, and let $\vec{  \uplambda}=(b^\circ,b,c,d,d^\circ,u)\in[0,1]^6$ be admissible for $\varphi$ given $\widehat{p}$. The vector of contraction-expansion factors for $\varphi$ given $\vec{  \uplambda}$ is the vector: $$\vec{  \uplambda} \oslash \widehat{p}\pa{\varphi}=\pa{\frac{b^\circ}{b^\circ_\varphi},\frac{b}{b_\varphi},\frac{c}{c_\varphi},\frac{d}{d_\varphi},\frac{d^\circ}{d^\circ_\varphi},\frac{u}{u_\varphi}} $$
   where the operator $\oslash$ denotes the Hadamard division\footnote{Hadamard division, also known as element wise division is defined for matrices. Given $A=(a_{ij})$ and $B=(b_{ij})$ two matrices of the same size, then  $A\oslash B=C$ where $C=(c_{ij})$ and $c_{ij}=\frac{a_{ij}}{b_{ij}}$.} and using the convention that $\frac{0}{0}=0$.
\end{definition}
Let $X$ and $\mathsf{M}_\mu$ are such that  $\widehat{p}_{\mu}=\widehat{p}$ (recall Proposition~\ref{prop:completez-6}).
To define the vectors of joint probabilities, we need to compute the measure of each of the 36 regions shown in Figure~\ref{fig:36-partition} using the syntactic 6-valued probability function $\widehat{p}$. To this end, we identify each region using a specific conjunction $\psi_J \wedge \varphi_K$ of two formulas, where $J,K \in \{B^\circ,B,C,D,D^\circ,U\}$, recalling the notation introduced in Observation~\ref{obs:particion-con-extensones-positivas}. This will allow us to transition from region measures to formula probabilities. We need to state first a general result.

\begin{proposition}\label{prop:region-intersection} 
Let $\mathsf{M}$ be twist model for $\letkp$. Let $\varphi,\psi\in \forsigma$, $R_1\in \mathcal{P}_\psi$ and $R_2\in \mathcal{P}_\varphi$. Then, their intersection can be expressed as the region $B^\circ_\alpha$, $B_\alpha$ or $C_\alpha$ of a specific formula $\alpha$ according to Table~\ref{table:36-equations}.

\begin{table}[H]
$$
\begin{array}{c|llllll}
\cap & \hspace{.5cm}B^\circ_\varphi & \hspace{.5cm} B_\varphi & \hspace{.5cm} C_\varphi & \hspace{.5cm} D_\varphi & \hspace{.5cm} D^\circ_\varphi & \hspace{.5cm} U_\varphi \\ \hline

B^\circ_\psi  & B^\circ_{\psi_{B^\circ}\land\varphi_{B^\circ}} & B_{\psi_{B^\circ}\land\varphi_B} & C_{\psi_{B^\circ}\land\varphi_C} & B_{\psi_{B^\circ}\land\varphi_D} & B^\circ_{\psi_{B^\circ}\land\varphi_{D^\circ}} & B_{\psi_{B^\circ}\land\varphi_U} \\

B_\psi        & B_{\psi_{B}\land\varphi_{B^\circ}} & B_{\psi_{B}\land\varphi_B} & C_{\psi_{B}\land\varphi_C} & B_{\psi_{B}\land\varphi_D} & B_{\psi_{B}\land\varphi_{D^\circ}} & B_{\psi_{B}\land\varphi_U} \\

C_\psi        & C_{\psi_{C}\land\varphi_{B^\circ}} & C_{\psi_{C}\land\varphi_B} & C_{\psi_{C}\land\varphi_C} & C_{\psi_{C}\land\varphi_D} & C_{\psi_{C}\land\varphi_{D^\circ}} & C_{\psi_{C}\land\varphi_U} \\

D_\psi        & B_{\psi_{D}\land\varphi_{B^\circ}} & B_{\psi_{D}\land\varphi_B} & C_{\psi_{D}\land\varphi_C} & B_{\psi_{D}\land\varphi_D} & B_{\psi_{D}\land\varphi_{D^\circ}} & B_{\psi_{D}\land\varphi_U} \\

D^\circ_\psi  & B^\circ_{\psi_{D^\circ}\land\varphi_{B^\circ}} & B_{\psi_{D^\circ}\land\varphi_B} & C_{\psi_{D^\circ}\land\varphi_C} & B_{\psi_{D^\circ}\land\varphi_D} & B^\circ_{\psi_{D^\circ}\land\varphi_{D^\circ}} & B_{\psi_{D^\circ}\land\varphi_U} \\

U_\psi        & B_{\psi_{U}\land\varphi_{B^\circ}} & B_{\psi_{U}\land\varphi_B} & C_{\psi_{U}\land\varphi_C} & B_{\psi_{U}\land\varphi_D} & B_{\psi_{U}\land\varphi_{D^\circ}} & B_{\psi_{U}\land\varphi_U} \\
\end{array}
$$
\caption{Intersection of the 36 regions in the partition of space $X$ from Figure~\ref{fig:36-partition}}\label{table:36-equations}
\end{table}
\end{proposition}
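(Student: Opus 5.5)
The plan is to reduce everything to a pointwise computation in $\mathcal{M}_6$ via the local valuations $v_x$ and Observation~\ref{obs:identificacion-regiones-valores}.

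First, by Observation~\ref{obs:particion-con-extensones-positivas}, for $R_1=J_\psi$ and $R_2=K_\varphi$ we have $R_1=|\psi_J|^+_\mathsf{M}$ and $R_2=|\varphi_K|^+_\mathsf{M}$. Hence, by Proposition~\ref{prop:propiedades-extensiones}(3),
\[
R_1\cap R_2=|\psi_J\land\varphi_K|^+_\mathsf{M} .
\]
Writing $\alpha=\psi_J\land\varphi_K$, item~1 of Proposition~\ref{prop:propiedades-6-regiones} gives $|\alpha|^+_\mathsf{M}=B^\circ_\alpha\cup B_\alpha\cup C_\alpha$. It therefore remains to show that, for each of the 36 pairs $(J,K)$, exactly one of these three regions is all of $R_1\cap R_2$ and the other two are empty. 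Equivalently, for every $x\in R_1\cap R_2$, the value $v_x(\alpha)$ is the single value among $\mathsf{T},\mathsf{t},\mathsf{b}$ prescribed by Table~\ref{table:36-equations}.

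Second, I compute the value of each region-identification formula at a state lying in its own region, using the $\mathcal{M}_6$ tables and the table for ${\sim}$ in Observation~\ref{obs:twist-constantes-y-negación}.
\begin{itemize}
\item If $v_x(\psi)=\mathsf{T}$, then $\psi_{B^\circ}=\psi\land\circ\psi$ gets $\mathsf{T}\tland\mathsf{T}=\mathsf{T}$.
\item If $v_x(\psi)=\mathsf{t}$, then $\psi_B$ gets $\mathsf{t}\tland{\sim}(\mathsf{f}\tlor\mathsf{F})=\mathsf{t}\tland\mathsf{t}=\mathsf{t}$.
\item If $v_x(\psi)=\mathsf{b}$, then $\psi_C$ gets $\mathsf{b}\tland\mathsf{b}=\mathsf{b}$.
\item If $v_x(\psi)=\mathsf{f}$, then $\psi_D$ gets $\mathsf{t}\tland{\sim}(\mathsf{f}\tlor\mathsf{F})=\mathsf{t}$.
\item If $v_x(\psi)=\mathsf{F}$, then $\psi_{D^\circ}$ gets $\mathsf{T}\tland\mathsf{T}=\mathsf{T}$.
\item If $v_x(\psi)=\mathsf{n}$, then $\psi_U$ gets ${\sim}(\mathsf{n}\tlor\mathsf{n})={\sim}\mathsf{n}=\mathsf{t}$.
\end{itemize}
So on its own region each identification formula takes the constant value $\mathsf{T}$ (for $B^\circ,D^\circ$), $\mathsf{t}$ (for $B,D,U$) or $\mathsf{b}$ (for $C$). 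The same holds for $\varphi_K$.

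Third, I read the value of $\alpha$ off the conjunction table. For $x\in R_1\cap R_2$ the conjuncts take values in $\{\mathsf{T},\mathsf{t},\mathsf{b}\}$, and
\[
\mathsf{T}\tland\mathsf{T}=\mathsf{T},\quad \mathsf{T}\tland\mathsf{t}=\mathsf{t}\tland\mathsf{t}=\mathsf{t},\quad \text{and any conjunction with a $\mathsf{b}$ and the other conjunct in } \{\mathsf{T},\mathsf{t},\mathsf{b}\} \text{ equals } \mathsf{b}.
\]
This yields exactly the pattern of the table: $B^\circ$ when $J,K\in\{B^\circ,D^\circ\}$; $C$ when $J=C$ or $K=C$; $B$ otherwise. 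For $x\notin R_1\cap R_2$, $\alpha$ is undesignated, so $x$ lies in none of $B^\circ_\alpha,B_\alpha,C_\alpha$. Combining both directions gives $R_1\cap R_2=\mathcal{R}_\alpha$ for the listed region $\mathcal{R}$.

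The only delicate point is the second step: one must make sure that ${\sim}$ is evaluated via its derived table, e.g.\ ${\sim}\mathsf{f}=\mathsf{t}$ rather than $\mathsf{T}$. This is exactly what forces the value $\mathsf{t}$ (rather than $\mathsf{T}$) for $\psi_B$, $\psi_D$ and $\psi_U$, and hence the entries $B_\alpha$ in the table. The remaining verification is routine table lookup.
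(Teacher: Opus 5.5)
Your proof is correct and follows essentially the same route as the paper: both evaluate pointwise through $v_x$ and the $\mathcal{M}_6$ tables. Your reverse inclusion via $R_1\cap R_2=|\psi_J\land\varphi_K|^+_\mathsf{M}=B^\circ_\alpha\cup B_\alpha\cup C_\alpha$ is a set-level phrasing of the paper's argument that a designated conjunction forces both region-identification conjuncts to be designated, and your explicit values ($\mathsf{T}$ for $B^\circ,D^\circ$; $\mathsf{t}$ for $B,D,U$; $\mathsf{b}$ for $C$) handle uniformly the remaining cases that the paper only calls ``analogous''.
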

\begin{proof}
To prove each case, we can profit from the identification between regions and values pointed out in Observation~\ref{obs:particion-con-extensones-positivas}. Let us first analyze one of these regions, to clarify ideas. Suppose that $x\in B^\circ_\psi\cap B^\circ_\varphi$; then, $x\in B^\circ_\psi$ and $x\in B^\circ_\varphi$.  Using Observation~\ref{obs:identificacion-regiones-valores} these can be thought of as $v_x\pa{\psi}=\mathsf{T}$ and $v_x\pa{\varphi}=\mathsf{T}$. Then, the value for the region-identification formulas are $v_x\pa{\psi_{B^\circ}}=\mathsf{T}$ and $v_x\pa{\varphi_{B^\circ}}=\mathsf{T}$. By Observation~\ref{obs:funciones-fw-identifican-regiones} and Lemma~\ref{lema:identificacion-valores-formulas} this is the only valuation in which $\psi_{B^\circ}$ and $\varphi_{B^\circ}$ are designated, even more $v_x\pa{\psi_{B^\circ}\land \varphi_{B^\circ}}=\mathsf{T}$ which means that $x\in B^\circ_{\psi_{B^\circ}\land \varphi_{B^\circ}}$. For the converse, if $x\in B^\circ_{\psi_{B^\circ}\land \varphi_{B^\circ}}$, then $v_x\pa{\psi_{B^\circ}\land \varphi_{B^\circ}}=\mathsf{T}$ which is a designated value. Then $\psi_{B^\circ}$ and $\varphi_{B^\circ}$ must be designated, but according to Observation~\ref{obs:funciones-fw-identifican-regiones} and Lemma~\ref{lema:identificacion-valores-formulas} this only happens if $v_x\pa{\psi}=\mathsf{T}$ and $v_x\pa{\varphi}=\mathsf{T}$ respectively. These means that  $x\in B^\circ_\psi$ and $x\in B^\circ_\varphi$, so $x\in B^\circ_\psi\cap B^\circ_\varphi$. Therefore  $B^\circ_\psi\cap B^\circ_\varphi=B^\circ_{\psi_{B^\circ}\land \varphi_{B^\circ}}$. The proof for the rest of the cases is analogous only considering that $R_1\cap R_2 =B^\circ_{\psi_{R_1}\land\varphi_{R_2}}$ if $v_x\pa{\psi_{R_1}\land \varphi_{R_2}}=\mathsf{T}$, $R_1\cap R_2 =B_{\psi_{R_1}\land\varphi_{R_2}}$ if $v_x\pa{\psi_{R_1}\land \varphi_{R_2}}=\mathsf{t}$ and $R_1\cap R_2 =C_{\psi_{R_1}\land\varphi_{R_2}}$ if $v_x\pa{\psi_{R_1}\land \varphi_{R_2}}=\mathsf{b}$,
\end{proof}

Now we can define the join probability vectors just by applying the measure to the sets in each row of Table~\ref{table:36-equations} in a suitable probability model for $\letkp$.
Thus, recalling Proposition~\ref{prop:completez-6}, take  $X$ and $\mathsf{M}_\mu$ such that  $\widehat{p}_{\mu}=\widehat{p}$, and let $R$ be an entry in Table~\ref{table:36-equations}. Then, $\mu(R) \in \{\mu(B^\circ_\alpha),\mu(B_\alpha),\mu(C_\alpha)\}$ for some formula $\alpha$. According to Definition~\ref{def:funciones-probabilidad-inducidas}(3), $\mu(R)$ corresponds to the first, second or third entry of $\widehat{p}_{\mu}(\alpha) = \widehat{p}(\alpha)$, depending on whether the region is $B^\circ$, $B$, or $C$, respectively. That is, $\mu(R) \in \{\widehat{p}_i(\alpha) : i=1,2,3\}$. This allows us to obtain the joint probability vectors. For example, the first join probability vector corresponds to the measure of regions 1,2,3,4,5 and 6, i.e., fixing $B^\circ_\psi$ and consider all the possibilities for $\varphi$, according to the following definition.

\begin{definition}
\textbf{(Vectors of joint probabilities)} 
Let $\widehat{p}:\forsigma \longrightarrow [0,1]^6$ be a 6-valued probability function for $\letkp$ (either semantically induced or syntactically defined), and let $\varphi, \psi \in \forsigma$, then the vectors of joint probabilities $jp_i(\psi,\varphi)$ for $1\leq i\leq 6$ are defined as:\\
\begin{itemize}
\item  $jp_1(\psi,\varphi)=\pa{
\widehat{p}_1\pa{\psi_{B^\circ}\land \varphi_{B^\circ}},
\widehat{p}_2\pa{\psi_{B^\circ}\land \varphi_{B}},
\widehat{p}_3\pa{\psi_{B^\circ}\land \varphi_{C}},
\widehat{p}_2\pa{\psi_{B^\circ}\land \varphi_{D}},
\widehat{p}_1\pa{\psi_{B^\circ}\land \varphi_{D^\circ}},
\widehat{p}_2\pa{\psi_{B^\circ}\land \varphi_{U}}
}$
    
\item $jp_2(\psi,\varphi)=\pa{
\widehat{p}_2\pa{\psi_{B}\land \varphi_{B^\circ}},
\widehat{p}_2\pa{\psi_{B}\land \varphi_{B}},
\widehat{p}_3\pa{\psi_{B}\land \varphi_{C}},
\widehat{p}_2\pa{\psi_{B}\land \varphi_{D}},
\widehat{p}_2\pa{\psi_{B}\land \varphi_{D^\circ}},
\widehat{p}_2\pa{\psi_{B}\land \varphi_{U}}
}$

\item $jp_3(\psi,\varphi)=\pa{
\widehat{p}_3\pa{\psi_{C}\land \varphi_{B^\circ}},
\widehat{p}_3\pa{\psi_{C}\land \varphi_{B}},
\widehat{p}_3\pa{\psi_{C}\land \varphi_{C}},
\widehat{p}_3\pa{\psi_{C}\land \varphi_{D}},
\widehat{p}_3\pa{\psi_{C}\land \varphi_{D^\circ}},
\widehat{p}_3\pa{\psi_{C}\land \varphi_{U}}
}$

\item $jp_4(\psi,\varphi)=\pa{
\widehat{p}_2\pa{\psi_{D}\land \varphi_{B^\circ}},
\widehat{p}_2\pa{\psi_{D}\land \varphi_{B}},
\widehat{p}_3\pa{\psi_{D}\land \varphi_{C}},
\widehat{p}_2\pa{\psi_{D}\land \varphi_{D}},
\widehat{p}_2\pa{\psi_{D}\land \varphi_{D^\circ}},
\widehat{p}_2\pa{\psi_{D}\land \varphi_{U}}
}$

\item $jp_5(\psi,\varphi)=\pa{
\widehat{p}_1\pa{\psi_{D^\circ}\land \varphi_{B^\circ}},
\widehat{p}_2\pa{\psi_{D^\circ}\land \varphi_{B}},
\widehat{p}_3\pa{\psi_{D^\circ}\land \varphi_{C}},
\widehat{p}_2\pa{\psi_{D^\circ}\land \varphi_{D}},
\widehat{p}_1\pa{\psi_{D^\circ}\land \varphi_{D^\circ}},
\widehat{p}_2\pa{\psi_{D^\circ}\land \varphi_{U}}
}$

\item $jp_6(\psi,\varphi)=\pa{
\widehat{p}_2\pa{\psi_{U}\land \varphi_{B^\circ}},
\widehat{p}_2\pa{\psi_{U}\land \varphi_{B}},
\widehat{p}_3\pa{\psi_{U}\land \varphi_{C}},
\widehat{p}_2\pa{\psi_{U}\land \varphi_{D}},
\widehat{p}_2\pa{\psi_{U}\land \varphi_{D^\circ}},
\widehat{p}_2\pa{\psi_{U}\land \varphi_{U}}
}$.
\end{itemize}
\end{definition}

\begin{definition}
\textbf{(Syntactic 6-valued Jeffrey update)} Let $\widehat{p}:\forsigma \longrightarrow [0,1]^6$ be a 6-valued probability function for $\letkp$, let $\varphi\in \forsigma$, with $\widehat{p}(\varphi) = (b^\circ_\varphi, b_\varphi, c_\varphi, d_\varphi, d^\circ_\varphi, u_\varphi)$, and let $\vec{  \uplambda}=(b^\circ,b,c,d,d^\circ,u)$ in $[0,1]^6$ be admissible for $\varphi$ given $\widehat{p}$. Given  the vector of contraction-expansion factors $\vec{  \uplambda}\oslash\widehat{p}(\varphi)$ and the vectors of joint probabilities $jp_i(\psi,\varphi)$ for $1\leq i\leq 6$.  The syntactic 6-valued Jeffrey update is a new 6-valued probability function for $\letkp$, $\widehat{p}^{\ \varphi,\vec{  \uplambda}}:\forsigma \longrightarrow [0,1]^6$, whose components are named as: 
$$\widehat{p}^{\ \varphi,\vec{  \uplambda}}\pa{\psi}=
\pa{{\widehat{p}_1\kern 0.01em}^{\varphi,\vec{  \uplambda}}(\psi), 
{\widehat{p}_2\kern 0.01em}^{\varphi,\vec{  \uplambda}}(\psi), 
{\widehat{p}_3\kern 0.01em}^{\varphi,\vec{  \uplambda}}(\psi), 
{\widehat{p}_4\kern 0.01em}^{\varphi,\vec{  \uplambda}}(\psi), 
{\widehat{p}_5\kern 0.01em}^{\varphi,\vec{  \uplambda}}(\psi), 
{\widehat{p}_6\kern 0.01em}^{\varphi,\vec{  \uplambda}}(\psi)}$$ and are given by $ {\widehat{p}_i\kern 0.01em}^{\varphi,\vec{  \uplambda}}(\psi)=jp_i(\psi,\varphi)\boldsymbol{\cdot} \vec{  \uplambda}\oslash\widehat{p}(\varphi).$
\end{definition}

Both definitions coincide, as in the case of the single-valued update. That is, we obtain the same function if, given a probability model, we first obtain the semantic 6-valued Jeffrey update and then induce the 6-valued probability function for the new model, or if, conversely, starting from the original probability model, we first induce the 6-valued probability function and then perform the 6-valued syntactic Jeffrey update. 

\begin{proposition}\label{prop:semantic=sintactic-6}
Let $\mathsf{M}_\mu=\langle X, \mu,  v \rangle$ be a probability model for $\letkp$ based on $\mathsf{M}$ and $\mu$, let $\varphi\in \forsigma$ with $\widehat{p}_\mu(\varphi) = (b^\circ_\varphi, b_\varphi, c_\varphi, d_\varphi, d^\circ_\varphi, u_\varphi)$, and let $\vec{  \uplambda}=(b^\circ,b,c,d,d^\circ,u)\in[0,1]^6$ be admissible for $\varphi$ given $\pmuhat$, then  $\widehat{p}_{\mu^{\varphi,\vec{  \uplambda}}}={\pmuhat \kern 0.01em}^{\varphi,\vec{  \uplambda}}$
\end{proposition}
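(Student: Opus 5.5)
We compare the two sides component by component, following the proof of Proposition~\ref{prop:semantic=sintactic-1}. Fix $\psi$ and $i\in\{1,\dots,6\}$. Let $R^\psi_i$ be the $i$-th element of $\mathcal{P}_\psi$ in the order $B^\circ,B,C,D,D^\circ,U$, and $R^\varphi_j$ the $j$-th element of $\mathcal{P}_\varphi$. By Definition~\ref{def:funciones-probabilidad-inducidas}(3), the $i$-th component of $\widehat{p}_{\mu^{\varphi,\vec{\uplambda}}}(\psi)$ is $\mu^{\varphi,\vec{\uplambda}}(R^\psi_i)$. Since $X$ is finite, we write this as a sum over singletons. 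Because $\mathcal{P}_\varphi$ is a partition of $X$ (Proposition~\ref{prop:6-particion}), we split the sum as
\[
\mu^{\varphi,\vec{\uplambda}}(R^\psi_i)=\sum_{j=1}^{6}\ \sum_{x\in R^\psi_i\cap R^\varphi_j}\mu^{\varphi,\vec{\uplambda}}(\{x\}).
\]
On $R^\varphi_j$, Definition~\ref{def:semantic-update-6} multiplies $\mu(\{x\})$ by the constant factor $\lambda_j/\mu(R^\varphi_j)$, where $\lambda_j$ is the $j$-th entry of $\vec{\uplambda}$. Pulling this factor out gives
\[
\mu^{\varphi,\vec{\uplambda}}(R^\psi_i)=\sum_{j=1}^6 \mu(R^\psi_i\cap R^\varphi_j)\,\frac{\lambda_j}{\mu(R^\varphi_j)} .
\]
Since $\mu(R^\varphi_j)$ is the $j$-th component of $\pmuhat(\varphi)$, the factors are exactly the entries of $\vec{\uplambda}\oslash\pmuhat(\varphi)$ from Definition~\ref{def:vector}.

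The zero-measure case needs care. If $\mu(R^\varphi_j)=0$, admissibility forces $\lambda_j=0$, and the convention $\frac00=0$ makes the factor $0$ in both definitions. In that case every $x\in R^\varphi_j$ has $\mu(\{x\})=0$, so these points contribute $0$ on the semantic side, and the syntactic term is also $0$. The two sides therefore still agree.

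The remaining step, which is the main one, is to identify $\mu(R^\psi_i\cap R^\varphi_j)$ with the $j$-th entry of $jp_i(\psi,\varphi)$ evaluated with $\widehat{p}=\pmuhat$. Proposition~\ref{prop:region-intersection} gives $R^\psi_i\cap R^\varphi_j=\mathcal{R}_{\alpha}$ with $\alpha=\psi_{R_i}\land\varphi_{R_j}$ and $\mathcal{R}\in\{B^\circ,B,C\}$, read from Table~\ref{table:36-equations}. By Definition~\ref{def:funciones-probabilidad-inducidas}(3), $\mu(\mathcal{R}_\alpha)=\widehat{p}_{\mu,k}(\alpha)$, where $k=1,2,3$ according as $\mathcal{R}=B^\circ,B,C$. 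We then check, row by row, that the index $k$ used in $jp_i$ matches the region letter in the corresponding row of Table~\ref{table:36-equations}. For instance, row $B^\circ_\psi$ reads $B^\circ,B,C,B,B^\circ,B$, which matches the indices $1,2,3,2,1,2$ in $jp_1$. This is a finite bookkeeping check, and it is where an error could hide, since the definition of $jp_i$ has to agree exactly with the table. Granting it, the $i$-th semantic component equals $jp_i(\psi,\varphi)\boldsymbol{\cdot}\vec{\uplambda}\oslash\pmuhat(\varphi)={\widehat{p}_i}^{\,\varphi,\vec{\uplambda}}(\psi)$. Since this holds for every $i$ and every $\psi$, the equality $\widehat{p}_{\mu^{\varphi,\vec{\uplambda}}}={\pmuhat}^{\varphi,\vec{\uplambda}}$ follows.
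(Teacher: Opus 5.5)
Your proof is correct and follows the paper's argument. You split $R^\psi_i$ along the partition $\mathcal{P}_\varphi$, pull out the constant factors $\lambda_j/\mu(R^\varphi_j)$, and use Proposition~\ref{prop:region-intersection} to rewrite each $\mu(R^\psi_i\cap R^\varphi_j)$ as the $B^\circ$-, $B$- or $C$-component of $\widehat{p}_\mu(\psi_{R_i}\land\varphi_{R_j})$; where the paper writes out only $i=1$ and leaves the $\frac{0}{0}$ case implicit, you treat general $i$ and the zero-measure case explicitly, and the bookkeeping check you flag does go through, since every row of Table~\ref{table:36-equations} matches the indices used in the corresponding $jp_i$.
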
 

\begin{proof}
To prove the equality it is necessary to prove that each of the six components of the functions agree. Let us check the case of the first components denoted respectively by $\widehat{p}_{\mu^{\varphi,\vec{  \uplambda}}1}$ and ${\widehat{p}_{\mu 1}\kern 0.01em}^{\varphi,\vec{  \uplambda}}$, the proof of the rest of them is analogous. The strategy is the same used in  the proof of Proposition~\ref{prop:semantic=sintactic-1} but here dividing the corresponding set into six disjoint pieces and using the results in Proposition~\ref{prop:region-intersection}.\\

$\begin{array}{ll}
\widehat{p}_{\mu^{\varphi,\vec{  \uplambda}}1}\pa{\psi} &= \mu^{\varphi,\vec{  \uplambda}}\pa{B^\circ_\psi} =\displaystyle\sum_{x\in B^\circ_\psi}\mu^{\varphi,\vec{\uplambda}}\pa{\{x\}}\\
&=
\displaystyle\sum_{x\in B^\circ_\psi\cap B^\circ_\varphi}\mu^{\varphi,\vec{\uplambda}}\pa{\{x\}}+  
\displaystyle\sum_{x\in B^\circ_\psi\cap B_\varphi}\mu^{\varphi,\vec{\uplambda}}\pa{\{x\}}+ 
\displaystyle\sum_{x\in B^\circ_\psi\cap C_\varphi}\mu^{\varphi,\vec{\uplambda}}\pa{\{x\}}+\\
&\hspace*{0.4cm}
\displaystyle\sum_{x\in B^\circ_\psi\cap D_\varphi}\mu^{\varphi,\vec{\uplambda}}\pa{\{x\}}+
\displaystyle\sum_{x\in B^\circ_\psi\cap D^\circ_\varphi}\mu^{\varphi,\vec{\uplambda}}\pa{\{x\}}+ 
\displaystyle\sum_{x\in B^\circ_\psi\cap U_\varphi}\mu^{\varphi,\vec{\uplambda}}\pa{\{x\}}\\
&=
\displaystyle\sum_{x\in B^\circ_\psi\cap B^\circ_\varphi}\mu\pa{\{x\}}\frac{b^\circ}{\mu\pa{B^\circ_\varphi}}+  
\displaystyle\sum_{x\in B^\circ_\psi\cap B_\varphi}\mu\pa{\{x\}}\frac{b}{\mu\pa{B_\varphi}}+ 
\displaystyle\sum_{x\in B^\circ_\psi\cap C_\varphi}\mu\pa{\{x\}}\frac{c}{\mu\pa{C_\varphi}}+
\\
&\hspace*{0.4cm}
\displaystyle\sum_{x\in B^\circ_\psi\cap D_\varphi}\mu\pa{\{x\}}\frac{d}{\mu\pa{D_\varphi}}+
\displaystyle\sum_{x\in B^\circ_\psi\cap D^\circ_\varphi}\mu\pa{\{x\}}\frac{d^\circ}{\mu\pa{D^\circ_\varphi}}+ 
\displaystyle\sum_{x\in B^\circ_\psi\cap U_\varphi}\mu\pa{\{x\}}\frac{u}{\mu\pa{U_\varphi}}\\
&=
\mu\pa{B^\circ_\psi\cap B^\circ_\varphi}\frac{b^\circ}{\mu\pa{B^\circ_\varphi}}+  
\mu\pa{B^\circ_\psi\cap B_\varphi}\frac{b}{\mu\pa{B_\varphi}}+ 
\mu\pa{B^\circ_\psi\cap C_\varphi}\frac{c}{\mu\pa{C_\varphi}}+
\\
&\hspace*{0.4cm}
\mu\pa{B^\circ_\psi\cap D_\varphi}\frac{d}{\mu\pa{D_\varphi}}+
\mu\pa{B^\circ_\psi\cap D^\circ_\varphi}\frac{d^\circ}{\mu\pa{D^\circ_\varphi}}+ 
\mu\pa{B^\circ_\psi\cap U_\varphi}\frac{u}{\mu\pa{U_\varphi}}\\
&=
\mu\pa{B^\circ_{\psi_{B^\circ}\land \varphi_{B^\circ}}}\frac{b^\circ}{\mu\pa{B^\circ_\varphi}}+  
\mu\pa{B_{\psi_{B^\circ}\land \varphi_{B}}}\frac{b}{\mu\pa{B_\varphi}}+
\mu\pa{ C_{\psi_{B^\circ}\land \varphi_{C}}}\frac{c}{\mu\pa{C_\varphi}}+
\\
&\hspace*{0.4cm}
\mu\pa{B_{\psi_{B^\circ}\land \varphi_{D}}}\frac{d}{\mu\pa{D_\varphi}}+
\mu\pa{ B^\circ_{\psi_{D^\circ}\land \varphi_{B^\circ}}}\frac{d^\circ}{\mu\pa{D^\circ_\varphi}}+ 
\mu\pa{ B_{\psi_{B^\circ}\land \varphi_{U}}}\frac{u}{\mu\pa{U_\varphi}}\\
&=
\widehat{p}_{\mu 1}\pa{\psi_{B^\circ}\land \varphi_{B^\circ}}\frac{b^\circ}{b^\circ_\varphi}+  
\widehat{p}_{\mu 2}\pa{\psi_{B^\circ}\land \varphi_{B}}\frac{b}{b_\varphi}+ 
\widehat{p}_{\mu 3}\pa{\psi_{B^\circ}\land \varphi_{C}}\frac{c}{c_\varphi}+\\
&\hspace*{0.4cm}
\widehat{p}_{\mu 2}\pa{\psi_{B^\circ}\land \varphi_{D}}\frac{d}{d_\varphi}+
\widehat{p}_{\mu 1}\pa{\psi_{B^\circ}\land \varphi_{D^\circ}}\frac{d^\circ}{d^\circ_\varphi}+ 
\widehat{p}_{\mu 2}\pa{\psi_{B^\circ}\land \varphi_{U}}\frac{u}{u_\varphi}\\
&=\pa{
\widehat{p}_{\mu 1}\pa{\psi_{B^\circ}\land \varphi_{B^\circ}},  
\widehat{p}_{\mu 2}\pa{\psi_{B^\circ}\land \varphi_{B}}, 
\widehat{p}_{\mu 3}\pa{\psi_{B^\circ}\land \varphi_{C}},
\widehat{p}_{\mu 2}\pa{\psi_{B^\circ}\land \varphi_{D}},
\widehat{p}_{\mu 1}\pa{\psi_{B^\circ}\land \varphi_{D^\circ}},
\widehat{p}_{\mu 2}\pa{\psi_{B^\circ}\land \varphi_{U}}}\boldsymbol{\cdot}\\
&\hspace*{0.4cm}
\displaystyle\pa{
\frac{b^\circ}{b^\circ_\varphi},
\frac{b}{b_\varphi},
\frac{c}{c_\varphi},
\frac{d}{d_\varphi},
\frac{d^\circ}{d^\circ_\varphi},
\frac{u}{u_\varphi}}
=jp_1(\psi,\varphi)\boldsymbol{\cdot} \vec{  \uplambda}\oslash\pmuhat(\varphi)
={\widehat{p}_{\mu 1}\kern 0.01em}^{\varphi,\vec{\uplambda}}(\psi).\\
\end{array}$

\end{proof}
   
Thanks to Proposition~\ref{prop:semantic=sintactic-6}, from now on we can simply refer to the 6-valued Jeffrey update without regard to whether it was obtained from the semantic or syntactic approach.

\begin{obs}
There is a probability function $\widehat{p}:\forsigma \longrightarrow [0,1]^6$ for $\letkp$, $\varphi,\psi\in \forsigma$ and $\vec{  \uplambda}_1, \vec{  \uplambda}_2\in [0,1]^6$, admissible vectors for the respective formulas and probability functions, such that $\pa{\widehat{p}^{\varphi,\vec{  \uplambda}_1}}^{\psi,\vec{  \uplambda}_2}\neq\pa{\widehat{p}^{\psi,\vec{  \uplambda}_2}}^{\varphi,\vec{  \uplambda}_1}$.  That is, 6-valued Jeffrey's update is not commutative.
\end{obs}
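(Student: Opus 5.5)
The plan is to prove the observation with an explicit counterexample. I will build a small probability model, apply the semantic 6-valued Jeffrey update of Definition~\ref{def:semantic-update-6} in both orders, and transfer the conclusion to $\widehat{p}$ through Proposition~\ref{prop:semantic=sintactic-6}. The proposition says the semantically updated model induces exactly the syntactically updated function, so it suffices to show that the two composite updates induce different 6-valued functions. I will exhibit a single formula whose first component differs.

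\emph{The model.} Take two variables $\mathsf{p},\mathsf{q}$ and put $\varphi=\mathsf{p}$, $\psi=\mathsf{q}$. Let $X=\{x_1,x_2,x_3,x_4\}$, where each state is an assignment into $\{\mathsf{T},\mathsf{F}\}$:
\begin{itemize}
\item $x_1$ gives $\mathsf{p}=\mathsf{T}$, $\mathsf{q}=\mathsf{T}$;
\item $x_2$ gives $\mathsf{p}=\mathsf{T}$, $\mathsf{q}=\mathsf{F}$;
\item $x_3$ gives $\mathsf{p}=\mathsf{F}$, $\mathsf{q}=\mathsf{T}$;
\item $x_4$ gives $\mathsf{p}=\mathsf{F}$, $\mathsf{q}=\mathsf{F}$.
\end{itemize}
Define $v(\gamma)$ coordinatewise by collecting the states whose valuation $v_x(\gamma)$ corresponds to a triple with a $1$ in the given coordinate. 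As in Lemma~\ref{lema:valuacion-formulas-a-modelos}, this is a valuation in $\mathcal{M}(\mathcal{B}_X)$.

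This gives $B^\circ_\varphi=\{x_1,x_2\}$, $D^\circ_\varphi=\{x_3,x_4\}$, $B^\circ_\psi=\{x_1,x_3\}$ and $D^\circ_\psi=\{x_2,x_4\}$; all other regions of both formulas are empty. To correlate $\varphi$ and $\psi$, take the non-uniform measure
\[
\mu(\{x_1\})=\mu(\{x_4\})=0.4, \qquad \mu(\{x_2\})=\mu(\{x_3\})=0.1 .
\]
Choose $\vec{\uplambda}_1=(3/4,0,0,0,1/4,0)$ and $\vec{\uplambda}_2=(1/2,0,0,0,1/2,0)$. Both are admissible, since the only nonzero regions are $B^\circ$ and $D^\circ$.

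\emph{Update $\psi$ first, then $\varphi$.} The final update is by $\varphi$ with $\vec{\uplambda}_1$, and the semantic update forces $\mu(B^\circ_\varphi)=3/4$ exactly. So the first component of $\varphi$ under $\bigl(\widehat{p}^{\psi,\vec{\uplambda}_2}\bigr)^{\varphi,\vec{\uplambda}_1}$ is $3/4$. One must check that $\mu(B^\circ_\varphi)$ and $\mu(D^\circ_\varphi)$ remain positive after the first update so that $\vec{\uplambda}_1$ is still admissible; here they become $0.5$ and $0.5$.

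\emph{Update $\varphi$ first, then $\psi$.} The update by $\varphi$ with $\vec{\uplambda}_1$ gives $x_1=0.6$, $x_2=0.15$, $x_3=0.05$, $x_4=0.2$. The update by $\psi$ with $\vec{\uplambda}_2$ then rescales $\{x_1,x_3\}$ (mass $0.65$) by $0.5/0.65$ and $\{x_2,x_4\}$ (mass $0.35$) by $0.5/0.35$. The new mass of $B^\circ_\varphi$ is therefore
\[
0.6\cdot\tfrac{0.5}{0.65}+0.15\cdot\tfrac{0.5}{0.35}\approx 0.676\neq 3/4 .
\]
The two composite functions thus differ at $\varphi$ in their first component, and by Proposition~\ref{prop:semantic=sintactic-6} this transfers to the syntactic update.

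\emph{Main obstacle.} The real work is picking parameters so the two orders actually disagree. With a uniform $\mu$, $\varphi$ and $\psi$ are independent and the updates commute, as I found in a first attempt. The correlation built into $\mu$ is what breaks commutativity. Beyond that, only the admissibility checks at each intermediate stage and the arithmetic need care.
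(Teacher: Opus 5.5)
Your counterexample is correct. Updating $\psi$ first with $\vec{\uplambda}_2=(1/2,0,0,0,1/2,0)$ changes nothing, because $\vec{\uplambda}_2$ equals the current $\psi$-marginal. That order therefore gives $3/4$ as the first component at $\varphi$. The other order gives $6/13+3/14=123/182\approx 0.676$.

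The paper gives no proof or example for this observation; it simply asserts it. Your argument is therefore not a reproduction of theirs but the justification the paper omits.

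A few small points would make it airtight:
\begin{itemize}
\item Cite Proposition~\ref{prop:robustez-6} to get $\widehat{p}_\mu\in\mathbb{P}_6$. For the iterated update, apply Proposition~\ref{prop:semantic=sintactic-6} a second time, to the intermediate model $\langle X,\mu^{\varphi,\vec{\uplambda}_1},v\rangle$. This shows that each composite syntactic update is the function induced by the corresponding composite semantic update.
\item Your vectors have entries summing to $1$. This is what makes each $\mu^{\varphi,\vec{\uplambda}}$ a probability measure. The paper's admissibility condition does not require it, so state it explicitly.
\item If $V$ contains variables other than $\mathsf{p},\mathsf{q}$, give them a fixed value in all four states.
\end{itemize}

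Your diagnosis of the mechanism is also right. If the two partitions are independent under $\mu$, each cell $E_i\cap F_j$ ends up with mass $\lambda_i\lambda'_j$ in either order. So the correlation you built into $\mu$ is exactly what breaks commutativity, as in classical Jeffrey conditioning.
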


\begin{obs}
Just as in the classical case, we can define 6-valued Bayesian updating as a special instance of Jeffrey updating where the information acquired is extremal. The agent learns the vector:
\begin{itemize}
\item $\vec{  \uplambda_1}=(1,0,0,0,0,0)$, i.e. she acquires full reliable belief in $\varphi$. 
\item $\vec{  \uplambda_2}=(0,1,0,0,0,0)$, i.e. she acquires full unreliable belief in $\varphi$
\item $\vec{  \uplambda_3}=(0,0,1,0,0,0)$, i.e. she acquires full conflict in $\varphi$
\item $\vec{  \uplambda_4}=(0,0,0,1,0,0)$, i.e. she acquires full unreliable disbelief in $\varphi$. 
\item $\vec{  \uplambda_5}=(0,0,0,0,1,0)$, i.e. she acquires full reliable disbelief in $\varphi$
\item $\vec{  \uplambda_6}=(0,0,0,0,0,1)$, i.e. she acquires full uncertainty in $\varphi$. 
\end{itemize}
In either case the definition of 6-valued Jeffrey updating simplifies.
\end{obs}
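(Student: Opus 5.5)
The plan is to make precise what ``simplifies'' means and then verify it both syntactically and semantically. Fix $k\in\{1,\dots,6\}$ and let $R^k_\varphi$ denote the $k$-th region of $\mathcal{P}_\varphi$ in the order $B^\circ_\varphi,B_\varphi,C_\varphi,D_\varphi,D^\circ_\varphi,U_\varphi$. Let $r_k=\widehat{p}_k(\varphi)$ be the corresponding component of $\widehat{p}(\varphi)$. The claim I would establish is that, whenever $r_k>0$, the Bayesian update by $\vec{\uplambda}_k$ is
\[
\widehat{p}_i^{\ \varphi,\vec{\uplambda}_k}(\psi)=\frac{\bigl(jp_i(\psi,\varphi)\bigr)_k}{r_k},\qquad 1\leq i\leq 6 ,
\]
where $(jp_i(\psi,\varphi))_k$ is the $k$-th entry of the joint probability vector. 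In words, the update is ordinary conditionalisation on the region $R^k_\varphi$.

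First I would settle admissibility. The vector $\vec{\uplambda}_k$ has a $1$ in position $k$ and $0$ elsewhere. Its zero entries satisfy every admissibility clause trivially. The entry $1$ is admissible exactly when $r_k\neq 0$. So $r_k>0$ is precisely the condition under which the $k$-th Bayesian update is defined, mirroring the hypothesis $p(\varphi)>0$ in the single-valued case.

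Next I would compute the vector of contraction--expansion factors from Definition~\ref{def:vector}. With the convention $\frac{0}{0}=0$, and since $0/r_j=0$ for $r_j>0$, we get $\vec{\uplambda}_k\oslash\widehat{p}(\varphi)=\frac{1}{r_k}e_k$, where $e_k$ is the $k$-th standard basis vector. The dot product in the syntactic definition then keeps only the $k$-th entry of $jp_i(\psi,\varphi)$, divided by $r_k$, which is the displayed formula. The relevant entry is read off from the $k$-th column of Table~\ref{table:36-equations}.

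To confirm the interpretation I would pass to the semantic side via Proposition~\ref{prop:semantic=sintactic-6}, taking a model with $\widehat{p}_\mu=\widehat{p}$ (Proposition~\ref{prop:completez-6}). In Definition~\ref{def:semantic-update-6} the measure $\mu^{\varphi,\vec{\uplambda}_k}(\{x\})$ equals $\mu(\{x\})/\mu(R^k_\varphi)$ on $R^k_\varphi$ and $0$ elsewhere. Hence $\mu^{\varphi,\vec{\uplambda}_k}=\mu(\,\cdot\mid R^k_\varphi)$, the classical conditional measure. In particular the updated $\widehat{p}(\varphi)$ is $\vec{\uplambda}_k$ itself, because $R^k_\varphi$ receives measure $1$ and the other five cells of $\mathcal{P}_\varphi$ receive $0$ (Proposition~\ref{prop:6-particion}). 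This justifies the readings ``full reliable belief'', ``full conflict'', and so on.

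The only delicate point I expect is bookkeeping rather than mathematics. One must check that the $k$-th entry of each $jp_i$ really is the measure of $R^i_\psi\cap R^k_\varphi$. This is where Proposition~\ref{prop:region-intersection} is used, and the check must be made against Table~\ref{table:36-equations} column by column.
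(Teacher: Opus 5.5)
Your proposal is correct and matches the paper: the paper states this Observation without any argument, and your unwinding of the definitions is exactly the simplification the remark alludes to, namely that admissibility of $\vec{\uplambda}_k$ forces $\widehat{p}_k(\varphi)>0$, the factor vector collapses to $\frac{1}{\widehat{p}_k(\varphi)}e_k$ under the $\frac{0}{0}=0$ convention, and so each component reduces to the $k$-th entry of $jp_i(\psi,\varphi)$ divided by $\widehat{p}_k(\varphi)$, i.e.\ conditioning on the $k$-th cell of $\mathcal{P}_\varphi$. The bookkeeping you flag also checks out, since the columns of Table~\ref{table:36-equations} agree with the entries of the vectors $jp_i$, so nothing further is needed.
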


\section{Final Remarks} \label{sect:final}

In this paper, we introduced the study of probability functions based on the versatile logic \letkp. This framework allowed us to consider gaps, gluts, and reliability (or classicality) of the events, extending the detailed proposal for \fde\ presented in~\cite{klein2021probabilities}. A distinctive feature of our proposal is the use of twist structure semantics, which gives rise to a natural interpretation of logical probabilities over \letkp\ in terms of the three or six regions associated with each formula by a valuation in such models. The \letkp-probability functions were defined axiomatically and semantically, obtaining soundness and completeness results, as one would expect. Finally, conditional probabilities based on $\letkp$ were also studied. Specifically, both a semantic and a syntactic characterization of Jeffrey's update over \letkp-based probabilities were proposed, showing their equivalence.

For future work, based on ideas introduced in~\cite{HGE} and later explored in~\cite{FlaGodMar11} and~\cite{bilkova2024, bilkova2025two}, we plan to develop a $[0, 1]$-valued modal logic over the infinite-valued \L ukasiewicz logic. In such logic, a fuzzy modality $P$ would have as its intended semantics the \letkp-probability functions introduced in this paper.

In the Introduction, we pointed out the close relationship between the motivations behind \letkp\ and the AGM$\circ$ framework for paraconsistent belief revision based on LFIs. As another line of future research, we plan to extend and adapt AGM$\circ$ to deal with paradefinite belief revision in the context of \letkp. Furthermore, the relationship between Jeffrey's update and belief revision will be investigated in the same setting.


\

\

\noindent
{\large \bf Acknowledgements:} Borja was supported by SECIHTI, Mexico,  Estancias Sabáticas Vinculadas a la Consolidación de Grupos de Investigación (2025/10976546).  Coniglio acknowledges support by an individual research grant from the National Council for Scientific and Technological Development (CNPq, Brazil), grant 309830/2023-0, and by the São Paulo Research Foundation (FAPESP, Brazil), thematic project
Rationality, logic and probability – RatioLog, grant 2020/16353-3.

\bibliographystyle{plain}


\end{document}